\newtheoremstyle{break}{5pt}{5pt}{\em}{}{\bfseries}{}{ }{\thmname{#1}\thmnumber{ #2}\thmnote{ (#3)}}
\theoremstyle{break}
\newtheorem{theorem}{Theorem}
\newtheorem{definition}{Definition}[section]
\newtheorem{assumption}[definition]{Assumption}
\newtheorem{proposition}[definition]{Proposition}
\newtheorem{lemma}[definition]{Lemma}
\newtheorem{corollary}[definition]{Corollary}
\newtheorem{remark}[definition]{Remark}
\newtheorem{example}[definition]{Example}
\newcommand{\R}{{\mathbb R}}
\newcommand{\C}{{\mathbb C}}
\newcommand{\Prob}{{\mathbb P}}
\newcommand{\N}{{\mathbb N}}
\newcommand{\Z}{{\mathbb Z}}
\newcommand{\E}{{\mathbb E}}
\newcommand{\eps}{{\varepsilon}}
\newcommand{\Tau}{\mathcal{T}}
\newcommand{\Beta}{\mathcal{B}}
\newcommand{\vertiii}[1]{{\left\vert\kern-0.25ex\left\vert\kern-0.25ex\left\vert #1 
    \right\vert\kern-0.25ex\right\vert\kern-0.25ex\right\vert}}
\newcommand{\anc}{\mathrm{anc}}
\newcommand{\CNOT}{\mathrm{CNOT}}
\newcommand{\SWAP}{\mathrm{SWAP}}
\newcommand*{\bigboxplus}{%
    \DOTSB  
    \mathop{%
        \mathchoice
        {\boxplus \huge}%
        {\boxplus \LARGE}%
        {\boxplus {}}%
        {\boxplus \footnotesize}%
    }
}
\begin{document}
\title{\vspace{-1.0cm} 
	Quantum Monte Carlo algorithm for option pricing\\
	and its
	complexity analysis 
}

\author{Jianjun Chen$^1$, Yongming Li$^2$, Ariel Neufeld$^3$
	\bigskip
	\\
	\small{$^1$ Division of Physics and Applied Physics, School of Physical and Mathematical Sciences,}
	\\
	\small{Nanyang Technological University, Singapore,
		e-mail: chen1554@e.ntu.edu.sg.}
    \\
    \small{$^2$ Department of Mathematics, Texas A\&M University,}
    \\
    \small{Texas, USA, 
        e-mail: liyo0008@tamu.edu}
	\\
	\small{$^3$ Division of Mathematical Sciences, School of Physical and Mathematical Sciences,}
	\\
	\small{Nanyang Technological University, Singapore,
		e-mail: ariel.neufeld@ntu.edu.sg}
}
\maketitle
\begin{abstract}
In this paper we provide a quantum Monte Carlo algorithm to solve multidimensional Black-Scholes PDEs with correlation for option pricing. The payoff function of the option is of general form and is only required to be continuous and piecewise affine, which covers most of the relevant payoff functions used in finance. We provide a rigorous error analysis and complexity analysis of our algorithm. In particular, we prove that the computational complexity of our algorithm is bounded polynomially in the space
dimension $d$ of the PDE and the reciprocal of the prescribed accuracy $\varepsilon$. 
Moreover, we show that for payoff functions which are bounded, our algorithm indeed has a speed-up compared to classical Monte Carlo methods. Furthermore, we provide numerical simulations in two dimensions using our developed package within the  \texttt{Qiskit} framework tailored to price continuous piecewise affine options with respect to the Black-Scholes model, as well as discuss the potential extension of the numerical simulations to arbitrary space dimension.

\end{abstract}

\vspace{0.5cm}
{
	\footnotesize
\tableofcontents
}
\section{Introduction}
It is well known that, under the Black-Scholes framework, the price of a financial derivative depending on 
$d$ underlying assets can be characterized as the unique solution of the corresponding 
$d$-dimensional Black-Scholes PDE, with the terminal condition given by the derivative’s payoff function \cite{black1973pricing}.
 However, in the multidimensional setting of the Black-Scholes model, 
 there are no explicit expressions for the solution of the corresponding PDE, and hence numerical methods are necessary to approximately solve these PDEs. It is crucial both from a theoretical, but especially also from a practical point of view to provide a rigorous and comprehensive error analysis and complexity analysis of these numerical methods. Indeed, while the precise error analysis allows, in the context of option pricing, to precisely quantify the exact deviation between the numerical approximation and the theoretical option price,
 a rigorous complexity analysis 
 reveals how the computational cost of an algorithm scales with respect to the  space-dimension of the PDE. Ideally, one aims to design numerical algorithms whose error and computational complexity can be characterized \textit{precisely}, and whose computational cost grows only \textit{polynomially} with the dimension~$d$ and the reciprocal of the desired accuracy~$\varepsilon$. 

Monte Carlo based algorithms have demonstrated both theoretically and practically to be efficient for pricing multidimensional options under the Black-Scholes model. In particular, the computational complexity of Monte Carlo methods typically does not 
grow  exponentially in the dimension $d$ and the reciprocal of the accuracy $\varepsilon$, see, e.g., \cite{boyle1977options}.
 Moreover, recently, deep learning based methods have been developed using neural networks which can approximately solve the multidimensional Black-Scholes PDE 
  \cite{beck2021deep,berner2020analysis,elbrachter2022dnn,grohs2018proof, han2018solving}.
 However, to date, there is no rigorous and comprehensive error and complexity analysis available for these algorithms, as the training of neural networks consists of solving non-convex optimization problems for which there is no theoretical guarantee that the optimization error can be controlled to be arbitrarily small \cite{Goodfellow-et-al-2016}.

In recent years, there has been a rapid development of numerical methods dealing with problems in
quantitative finance using quantum computers. The motivation comes from the fact that qubits, compared to classical bits, are allowed quantum mechanically to be in a state of superposition, from which one anticipates that quantum computers should be able to achieve much higher computational power than classical (super-) computers. We also refer to \cite{gonon2023universal} for a universal approximation theorem for quantum neural networks.
The applications of quantum algorithms in finance include portfolio optimization  \cite{rebentrost2018quantum}, 
the computation of risk measures such as \textit{Value at Risk (VAR)} \cite{QC6_risk_analysis},
volatility modeling \cite{assouel2022quantum}, optimal stopping \cite{doriguello2022quantum},
and option pricing,  particularly in the Black-Scholes model
\cite{chakrabarti2021threshold, fontanela2021quantum,fuchs2023hybrid, kubo2022pricing,ramos2021quantum, QC5_Patrick,QC4_optionpricing,QC16_GAN}. We also refer to the monograph \cite{jacquier2022quantum} and surveys \cite{egger2020quantum, jacquieroverview2023,  orus2019quantum} for (further) applications of quantum computing in finance. 
Moreover, \cite{ an2021quantum,  arrazola2019quantum, childs2021high, dees2023unsupervised,   QC2_Heat, montanaro2016quantum} proposed quantum algorithms to approximately solve different PDEs than the Black-Scholes PDE used in finance for option pricing. Furthermore, recently, \cite{crew2025quantum} connected the theory of path signatures to  matrix model distributions in quantum field theory, allowing them to introduce the concept of  quantum path signature feature map together with a corresponding quantum signature kernel.

Let us discuss more in detail the various quantum methodologies developed to price financial derivatives under the Black-Scholes model.
\cite{QC16_GAN} used \textit{quantum Generative Adversarial Networks} (qGANs) for learning and loading of probability distributions, including the multivariate log-normal distribution, and applied it to price financial derivatives under the Black-Scholes model. Recently,  \cite{fuchs2023hybrid} introduced \textit{Wasserstein}  qGAN (qWGAN) for learning and loading of probability distributions employing the Wasserstein loss, and numerically demonstrated the advantage over the standard qGAN methodology. 
\cite{kubo2022pricing} employed variational quantum simulation in order to solve the Black-Scholes PDE.
\cite{fontanela2021quantum} proposed an innovative hybrid quantum-classical algorithm to approximately solve the one-dimensional Black--Scholes PDE by exploiting its relation to the Schr\"odinger equation in imaginary time. More precisely, they first transform the Black-Scholes PDE into the heat equation and then interpret its solution as a wave function of the corresponding Schr\"odinger equation. To learn this wave function, they approximate it using a sequence of parametrized quantum gates, where the approximately optimal parameters are obtained by solving a suitable variational principle. 
However, as highlighted in \cite{fontanela2021quantum}, this approach requires an \textit{ansatz circuit} that should, on the one hand, accurately approximate the wave function, but on the other hand, allow for an efficient solution of the associated minimization problem that determines the parameters. Consequently, it remains unclear how to concretely construct such an \textit{ansatz circuit} in higher dimensions. Moreover, since the underlying minimization problem is non-convex, there is no theoretical guarantee of finding the optimal parameters.
Furthermore, \cite{ramos2021quantum} employed the unary basis of the asset's value to design a quantum algorithm to price under the Black-Scholes model.

Most literature uses quantum Monte Carlo methods to approximately solve the Black-Scholes PDE in order to price financial options. More precisely, these works rely on the \textit{Quantum Amplitude Estimation algorithm~(QAE)} \cite{QC1_QAE} which estimates the expected value of a random parameter (see Section~\ref{section: QAE} for a detailed discussion) based on an extension of  \textit{Grover's search algorithm} \cite{QC19_Grover_search}. Several variations of the
Quantum Amplitude Estimation algorithm have been proposed recently, see e.g.\  \cite{aaronson2019quantum,  
	fukuzawa2022modified, giurgica2022low,
	grinko2021iterative, 
	manzano2023real,
	Nakaji20,
	 plekhanov2022variational, rall2022amplitude,suzuki2020amplitude,wie2019simpler,zhao2022adaptive}. 
Quantum Monte Carlo methods can ideally achieve a quadratic speed-up \cite{herbert2022quantum}, \cite{QC_8_Montanaro_QMC} compared to classical (i.e.\ non-quantum) Monte Carlo methods. However, the quadratic speedup can only be achieved if there is a so-called \textit{oracle} quantum circuit which can correctly upload the corresponding distribution in  rotated quantum form, without any approximation errors (caused, e.g., from discretization and rotation), such that it is applicable to  a quantum amplitude estimation algorithm. This assumption however in most cases cannot be justified in practice, as highlighted, e.g., in \cite{chakrabarti2021threshold,QC16_GAN}.

In this paper, we propose a quantum Monte Carlo algorithm for solving multidimensional Black-Scholes PDEs with correlated assets and general payoff functions that are continuous and piecewise affine. This class of payoffs encompasses most of the standard financial derivatives used in practice (see also Section~\ref{subsec:CPWA}).  Our algorithm conceptually builds upon the quantum Monte Carlo methods proposed in~\cite{chakrabarti2021threshold, QC5_Patrick, QC4_optionpricing}, which first encode the multivariate log-normal distribution and the payoff function in a rotated quantum form, and then employ a Quantum Amplitude Estimation (QAE) algorithm to approximately solve the Black-Scholes PDE and compute option prices.

Our main contribution lies in a rigorous and comprehensive error analysis as well as complexity analysis of our algorithm. To that end, we first introduce quantum circuits that can perform arithmetic operations on two complement's numbers representing signed dyadic rational numbers, together with its complexity analysis. This allows us to provide a rigorous error and complexity analysis when uploading first a truncated and discretized approximation of the multivariate log-normal distribution and then uploading an approximation of the continuous piecewise affine payoff function in rotated form, where the approximation consists of truncation as well as the rounding of the coefficients of the continuous piecewise affine payoff function.
This together with a rigorous error and complexity analysis when applying the modified iterative quantum amplitude estimation algorithm \cite{fukuzawa2022modified} allows us to control the output error of our algorithm to be bounded by the pre-specified accuracy level $\varepsilon \in (0,1)$, while bounding its computational complexity; we refer to Theorem~\ref{main theorem} for the precise statement of our main result. 

In particular, we prove that the computational complexity of our algorithm only grows polynomially in the space dimension $d$ of the Black-Scholes PDE and in the (reciprocal of the) accuracy level $\varepsilon$. 
This is in line with classical (i.e.\ non-quantum) Monte Carlo methods, where the number of Monte Carlo samples required in order to approximate the solution of the Black-Scholes PDE only grows polynomially in the dimension $d$ and $\varepsilon^{-1}$, see, e.g., \cite{HutzenthalerJentzenVonWurstemberger}.
Moreover, we show that for payoff functions which are bounded, our algorithm indeed has a \textit{speed-up} compared to classical Monte Carlo methods.
To the best of our knowledge, this is the first work in the literature which 
 provides a rigorous and comprehensive mathematical error and complexity analysis for a quantum Monte Carlo algorithm which approximately solves multidimensional PDEs. We refer to Remark~\ref{rem:complexity} for a detailed discussion of the complexity analysis.

Furthermore, we provide numerical simulations in two dimensions for three different payoff functions. To that end, we developed a package we named \texttt{qfinance} within the  \texttt{Qiskit} framework tailored to price continuous piecewise affine options with respect to the Black-Scholes model. Moreover, we discuss the potential extension of the numerical simulations to arbitrary space dimension.

The rest of this paper is organized as follows. 
In Section~\ref{sec:SettingMainResults}, we introduce the main setting of this paper,  present  our algorithm, and state our main theorem, as well as provide a detailed discussion of our complexity analysis.
In Section~\ref{sec: numerics} we present our numerical simulations in two space dimensions as well as discuss their potential extension to higher dimensions.
In Section~\ref{sec:quantum_circuits}, we introduce and analyze all relevant quantum circuits we need in our quantum Monte Carlo algorithm.
In Section~\ref{section: error estimates}, we provide a detailed error analysis of the steps of our algorithm outlined in Section~\ref{section: outline of algorithm}.
Finally, in Section~\ref{sec:AlgoandProof}, we provide the proof of Theorem~\ref{main theorem}. 
\\

\noindent
\textbf{Notation.} We denote the set of real numbers and  positive real numbers by $\R$ and $\R_+:= (0,\infty)$, respectively. The set of natural numbers is denoted by $\N:= \{1,2,\ldots\}$, and we use $\N_0:= \N \cup \{0\}$. The set of complex numbers is denoted by $\C$, and we define $\mathrm{i} := \sqrt{-1}$. Moreover, we denote by $I_2$ 
the  identity matrix in $\C^{2\times 2}$,
whereas for each $n \in \N$ and matrix $M$ we denote by $M^{\otimes n}$ its $n$-fold tensor product.
Furthermore,  for each $n \in \N$ we denote by $\mathcal{U}(2^n)$  the set of unitary matrices in  $\C^{2^n \times 2^n}$, i.e.\ matrices $U \in \C^{2^n \times 2^n}$ satisfying $UU^\dagger = U^\dagger U = I_{2}^{\otimes n}$, where  $U^\dagger$ denotes the \textit{conjugate transpose} of $U$.

\section{Setting and Main result}\label{sec:SettingMainResults}
\subsection{Black-Scholes PDE for option pricing}
Let $r \in (0,\infty)$ be the risk-free interest rate, let $T \in (0,\infty)$ be a finite time horizon determining the maturity, and let $d \in \N$ be the number of assets. We consider the multiple-asset Black-Scholes PDE
\begin{equation}\label{eqn: PDE}
    \frac{\partial u}{\partial t} + \frac{1}{2} \sum_{i,j=1}^d C_{ij} x_i x_j \frac{\partial^2 u}{\partial x_i \partial x_j} + \sum_{i=1}^d r x_i \frac{\partial u}{\partial x_i} - ru = 0, \quad \text{in } [0,T) \times \R_+^d
\end{equation}
subjected to a terminal condition $u(T,\cdot) = h(\cdot)$. Here, $h: \R_+^d \to \R$ represents the payoff function and $u(t,\bm{x})$ represents the option price at time $t$ with spot price $\bm{x}$. The covariance matrix $\bm{C}=(C_{i,j})_{i,j=1}^d \in \R^{d\times d}$ is assumed to be symmetric positive definite with a Cholesky factorization $\bm{C} = \bm{\sigma} \bm{\sigma}^\top$, where $\bm{\sigma} \in \R^{d \times d}$ is the volatility matrix, 
so that there is a unique risk-neutral measure (see, e.g., \cite{campolieti2014financial}).
Note that 
the PDE \eqref{eqn: PDE} has a unique solution
whenever $h: \R_+^d \to \R$ is continuous and at most polynomially growing,
see, e.g., \cite[Proposition~2.23, Corollary 4.5]{grohs2018proof}.
 

\subsubsection{Geometric Brownian motion process for the price evolution of  multiple assets}
In the multidimensional Black-Scholes model,  the prices of the $d$ stocks under consideration are modeled by a multidimensional geometric Brownian motion (GBM) having constant growth rate and volatility, see, e.g., \cite{campolieti2014financial}. 
We briefly describe the dynamics of the geometric Brownian motion process for multiple assets.

Let $(\Omega,\mathcal{F},\Prob)$ be a probability space and let $\bm{W}=(W^1,\dots,W^d): [0,T] \times \Omega \to \R^d$ be a standard $d$-dimensional Brownian motion. For a volatility matrix $\bm{\sigma} \in \R^{d \times d}$ assumed to be invertible, let $\bm{\sigma}_1,\ldots,\bm{\sigma}_d \in \R^d$ denote the row vectors of matrix $\bm{\sigma}$, and let $\sigma_i := \lVert \bm{\sigma_i} \rVert_{\ell^2(\R^d)}$. Let $\bm{S}=(S^1,\dots,S^d):[0,T] \times \Omega \to \R_+^d$ be the stock price process governed by the following stochastic differential equation
\begin{equation}
\begin{aligned}
dS_t^{i} &= S_t^{i}\bigg(rdt + \sum_{j=1}^d \sigma_{ij} dW_t^{j}\bigg),\quad \text{for }  i=1,\ldots,d, \label{eqn: GBM SDE}
\end{aligned}
\end{equation}
with some initial spot price $\bm{S}_0 \in \R_{+}^d$. 
Here $\bm{S}_t=(S^1_t,\dots,S^d_t)$ represents the values of each stock $i=1,\dots,d$ at time $0\leq t \leq T$.
Let $\bm{R} = (R^{1},\ldots,R^{d}):[0,T] \times \Omega \to \R^d$ be the log-return process defined component-wise by $R_t^{i} = \ln(S_t^{i}/S_0^{i})$ for $i=1,\ldots,d$. It follows from It\^o's formula for all $t \in [0,T]$ that 
\begin{equation}
    d R_t^{i} = (r-\tfrac{1}{2}\sigma_i^2) dt + \sum_{j=1}^d \sigma_{ij} dW_t^{j}, \quad \text{for } i=1,\ldots,d, \label{eqn: log-transform process}
\end{equation}
with initial condition $R_0^i = 0$ for $i=1,\ldots,d$. Let $\bm{\hat{\mu}} = (\hat{\mu}_1,\ldots,\hat{\mu}_d) \in \R^d$ be a vector satisfying $\hat{\mu}_{i} = (r - \frac{1}{2}\sigma_i^2)$ for $i = 1,\ldots,d$. From equation \eqref{eqn: log-transform process}, it holds that $\bm{R}_T$ is a multivariate normal distribution with mean $T\bm{\hat{\mu}}$ and covariance $T\bm{C}$. Hence, by taking the inverse of the log transform, we observe that the law of the stock price process $\bm{S}_T = \bm{S}_0\exp(\bm{R}_T)$ is a multivariate log-normal distribution with log-mean $T\bm{\hat{\mu}}$ and log-covariance $T\bm{C}$. In general, for a given fixed initial condition $(t,x) \in [0,T] \times \R_+^d$, there is a well-known formula for the probability transition density function of $\bm{S}_T$, subjected to the condition that $\bm{S}_t = \bm{x}$. 

\begin{lemma}[Density formula]\label{lemma: density} 
Let $d \in \N$, let $\bm{x}=(x_1,\ldots,x_d) \in \R_+^d$, and let $t \in [0,T)$. Let $\bm{\mu} = (\mu_1,\ldots,\mu_d) \in \R^d$ be given by $\mu_i = \ln(x_i) + (r - \tfrac{1}{2}\sigma_i^2)(T-t)$ for $i=1,\ldots,d$.
Then, the stock price process $\bm{S}_T$ introduced by \eqref{eqn: GBM SDE} conditional on $\bm{S}_t = \bm{x}$ follows a multivariate log-normal distribution with log-mean $\bm{\mu}$ and log-covariance $(T-t)\bm{C}$, 
and the joint transition probability density function is given by
\begin{equation}
    p(\bm{y},T;\bm{x},t) := \frac{\exp( -\frac{1}{2(T-t)} (\log(\bm{y}) - \bm{\mu})^\top {\bm{C}}^{-1} (\log(\bm{y}) - \bm{\mu}))}{ (2\pi(T-t))^{d/2} (\det \bm{C})^{1/2}\prod_{i=1}^d y_i}  , \label{eqn: density formula}
\end{equation}
where for $\bm{y} = (y_1,\ldots,y_d) \in \R_+^d$, $\log(\bm{y}) \in \R^d$ is given by 
\begin{equation}
    (\log(\bm{y}))_i = \ln(y_i), \quad i = 1,\ldots,d. \label{eqn: density change variable}
\end{equation}
\begin{proof}
See, e.g., \citeauthor{campolieti2014financial} \cite[page 485-486]{campolieti2014financial}.
\end{proof}
\end{lemma}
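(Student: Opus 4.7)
The plan is to reduce the claim to a computation of the conditional distribution of $\log(\bm{S}_T)$ given $\bm{S}_t = \bm{x}$, and then to apply a change of variables to extract the transition density.

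First, I would fix $(t,\bm{x}) \in [0,T)\times \R_+^d$ and apply It\^o's formula to the conditional log-returns $\tilde{R}^i_s := \ln(S^i_s/S^i_t)$ for $s \in [t,T]$ and $i=1,\ldots,d$. Exactly as in the derivation leading to \eqref{eqn: log-transform process}, this yields
\begin{equation*}
    \tilde{R}^i_T = \bigl(r - \tfrac{1}{2}\sigma_i^2\bigr)(T-t) + \sum_{j=1}^d \sigma_{ij}\,(W^j_T - W^j_t).
\end{equation*}
Since the Brownian increment $\bm{W}_T - \bm{W}_t$ is independent of $\F_t$ and Gaussian with mean zero and covariance $(T-t) I_d$, the vector $\bm{\tilde{R}}_T$ is, conditionally on $\bm{S}_t = \bm{x}$, multivariate normal with mean vector $\bigl((r-\tfrac{1}{2}\sigma_i^2)(T-t)\bigr)_{i=1}^d$ and covariance matrix $(T-t)\,\bm{\sigma}\bm{\sigma}^\top = (T-t)\bm{C} = \bm{\Sigma}$.

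Second, I would observe that on the event $\{\bm{S}_t = \bm{x}\}$ one has $\log(\bm{S}_T) = \log(\bm{x}) + \bm{\tilde{R}}_T$, so $\log(\bm{S}_T)\mid \bm{S}_t = \bm{x}$ is multivariate Gaussian with mean $\bm{\mu}$ as defined in the statement and covariance $\bm{\Sigma}$. By definition, this means $\bm{S}_T \mid \bm{S}_t = \bm{x}$ is multivariate log-normal with log-mean $\bm{\mu}$ and log-covariance $\bm{\Sigma}$. To obtain the explicit density \eqref{eqn: density formula}, I would then change variables via $\bm{z} = \log(\bm{y})$: the componentwise exponential map has Jacobian determinant $\prod_{i=1}^d y_i$, and the positive definiteness of $\bm{C}$ together with $t<T$ ensures that $\bm{\Sigma}$ is invertible, so dividing the standard multivariate Gaussian density of $\log(\bm{S}_T)$ evaluated at $\log(\bm{y})$ by $\prod_{i=1}^d y_i$ produces exactly $p(\bm{y},T;\bm{x},t)$ as stated.

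No substantial obstacle is expected; the argument combines It\^o's formula, the independence of Brownian increments from $\F_t$, and the standard Jacobian rule for densities on $\R_+^d$. The only minor points deserving care are the invertibility of $\bm{\Sigma}$, which is guaranteed by the assumption on $\bm{C}$ and by $t<T$, and the Markov property of $\bm{S}$ under \eqref{eqn: GBM SDE}, which ensures that conditioning on the $\sigma$-algebra $\F_t$ and conditioning on the single random vector $\bm{S}_t$ produce the same conditional law for $\bm{S}_T$.
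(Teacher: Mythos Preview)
Your argument is correct and is the standard derivation one would find in the reference the paper cites; the paper itself does not give a proof but simply refers to \cite[pp.~485--486]{campolieti2014financial}. Your sketch therefore supplies exactly the content that the citation stands in for.
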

Throughout the paper we impose the following assumption on the covariance matrix $\bm{C} \equiv \bm{C}_d$ in dependence of the dimension $d$.
\begin{assumption}[Covariance matrix]\label{assumption: cov matrix} 
    There is a constant $C_1 \in [1,\infty) $ not depending on the dimension $d\in \N$ such that for every $d\in \N$ the corresponding covariance matrix $\bm{C} \equiv \bm{C}_d = ((\bm{C}_d)_{i,j})_{i,j=1}^d \in \R^{d\times d}$ defined as in \eqref{eqn: PDE} satisfies for every $i,j=1,\dots, d$ that
    \begin{equation}
        \vert (\bm{C}_d)_{i,j}\vert \leq C_1. 
    \end{equation}
\end{assumption}

\subsubsection{Continuous piecewise affine payoff functions} \label{subsec:CPWA}
For any $d \in \N$, we consider a payoff function $h: \R^d_+ \to \R$, which takes the stock prices $\bm{S}_T \in \R_+^d$ at terminal time~$T$ as input. The option price $u(t,\bm{x}) \in \R$ at time $t \in [0,T)$ given that the spot price satisfies $\bm{S}_t = \bm{x} \in \R_+^d$ is characterized by the following Feynman-Kac formula (see, e.g., \cite[Equation (13.33)]{campolieti2014financial})
\begin{equation}\label{eqn: option price formula}
    u(t,\bm{x}) = e^{-r(T-t)}\E[h(\bm{S}_T)\mid \bm{S}_t = \bm{x} ] = e^{-r(T-t)}\int_{\R_+^d} h(\bm{y}) p(\bm{y},T;\bm{x},t)\,d\bm{y},
\end{equation}
where $p(\cdot,T;\bm{x},t)$ is the transition density formula given in Lemma \ref{lemma: density}. In this paper, we consider payoff functions restricted to the class of continuous piecewise affine functions\footnote{In particular, any continuous piecewise affine function is linearly growing, see, e.g., Lemma~\ref{lemma: lin growth}. Hence the PDE \eqref{eqn: PDE} has a unique solution.}. This type of function represents most of the payoff functions seen in financial mathematics literature \cite{neufeld2020modelfree}; see also the examples below. 

\begin{definition}[continuous piecewise affine payoff]Let $d \in \N$. A function $h: \R^d_+ \to \R$ is a continuous piecewise affine function if it can be represented as 
\begin{equation}
    h(\bm{x}) = \sum_{k=1}^{K} \xi_k \max\{\bm{a}_{k,l} \cdot \bm{x} + b_{k,l}: l = 1,\ldots,I_k\}, \label{eqn: CPWA payoff}
\end{equation}
where $K ,I_k \in \N$ and $\xi_k \in \{-1,1\}$ for $k = 1,\ldots,K$, and where $\bm{a}_{k,l} \in \R^d$, $b_{k,l} \in \R$ for $k =1,\ldots,K$, $l = 1,\ldots, I_k$.
\end{definition}
Throughout the paper we impose the following assumptions on the continuous piecewise affine payoff function $h:\R^d_+ \to \R$ in dependence of the dimension $d$.
\begin{assumption}[continuous piecewise affine]\label{assumption: CPWA} 
    There is a constant $C_2 \in [1,\infty)$ not depending on the dimension $d\in \N$ such that the continuous piecewise affine function $h:\R^d_+ \to \R$ defined as in \eqref{eqn: CPWA payoff} satisfies both 
    \begin{equation}\label{eqn: bound on cpwa coefficients}
        \max\left\{\Vert \bm{a}_{k,l}\Vert_\infty, \vert b_{k,l}\vert: k=1,\ldots,K,l=1,\ldots,I_k\right\}\leq C_2
    \end{equation}
    and
    \begin{equation}
        K \cdot \max\{I_1,\ldots,I_K\}\leq C_2 d.
    \end{equation}
\end{assumption}

\begin{example}\label{example: call options}
The list below contains examples showcasing that many popular payoff functions $h: \R^d_+ \to \R$ used in finance are continuous piecewise affine, see also \cite[Appendix EC.2]{neufeld2020modelfree}. In the following, we denote $\bm{e}_i$ the $i$-th unit vector in~$\R^d$. 
\begin{enumerate}
    \item Call option on the $i$-th asset with strike $\kappa$: setting $K = 1$, $\xi_1 = 1$, $I_1 = 2$, $\bm{a}_{1,1} = \bm{e}_i$, $\bm{a}_{1,2} = \bm{0}$, $b_{1,1} = - \kappa$, $b_{1,2} = 0$, we have 
    \begin{equation}
        h(\bm{x}) = \max\{ x_i - \kappa, 0 \}.
    \end{equation}
    \item Basket call option with weights $\bm{w}$ and strike $\kappa$: setting $K = 1$, $\xi_1 = 1$, $I_1 = 2$, $\bm{a}_{1,1} = \bm{w}$, $\bm{a}_{1,2} = \bm{0}$, $b_{1,1} = - \kappa$, $b_{1,2} = 0$, we have 
    \begin{equation}
        h(\bm{x}) = \max\{ \bm{w} \cdot \bm{x} - \kappa, 0\}.
    \end{equation}
    \item Spread call option: using setting 2., but by replacing $\bm{a}_{1,1}$ with $\bm{a}_{1,1} = \sum_{i \in \mathcal{I}} \bm{e}_i - \sum_{j \in \mathcal{I}'} \bm{e}_j$ for $\mathcal{I},\mathcal{I}' \subset \{1,\ldots,d\}$ and $\mathcal{I} \cap \mathcal{I}' = \emptyset$, we have 
    \begin{equation}
        h(\bm{x}) = \max\bigg\{ \sum_{i \in \mathcal{I}} x - \sum_{j \in \mathcal{I}'} x_j - \kappa, 0\bigg\}.
    \end{equation}

    \item Call-on-max option with strike $\kappa$: setting $K = 1$, $\xi_1 = 1$, $I_1 = d+1$, $\bm{a}_{1,j} = \bm{e}_j$, $b_{1,j} = - \kappa$ for all $j = 1,\ldots,d$, $\bm{a}_{1,d+1} = \bm{0}$, $b_{1,d+1} = 0$, we have 
    \begin{equation}
        h(\bm{x}) = \max\{ x_1 - \kappa,\ldots,x_d - \kappa, 0 \}.
    \end{equation}
    
    \item Call-on-min option with strike $\kappa$: setting $K = 2$, $\xi_1 = 1$, $\xi_2 = -1$, $I_1 = d$, $I_2= d+1$, $\bm{a}_{1,j} = \bm{a}_{2,j} = - \bm{e}_j$, $b_{1,j} = b_{2,j} = \kappa$ for all $j = 1,\ldots,d$, $\bm{a}_{1,d+1} = \bm{0}$, $b_{1,d+1} = 0$, we have 
    \begin{equation}
        h(\bm{x}) = \max\{ \kappa - x_1,\ldots, \kappa - x_d, 0 \} - \max\{\kappa - x_1,\ldots, \kappa - x_d \}
        =\max\{\min\{x_1,\dots,x_d\}-\kappa,0\}.
    \end{equation}
    
    \item Best-of-call option with strikes $\kappa_1,\ldots,\kappa_d$: setting $K = 1$, $\xi_1 = 1$, $I_1 = d+1$, $\bm{a}_{1,j} = \bm{e}_j$, $b_{1,j} = - \kappa_j$ for all $j = 1,\ldots,d$, $\bm{a}_{1,d+1} = \bm{0}$, $b_{1,d+1} = 0$, we have 
    \begin{equation}
        h(\bm{x}) = \max\{x_1 - \kappa_1,\ldots,x_d - \kappa_d,0 \}.
    \end{equation}
\end{enumerate} 
We note that all of the above examples satisfy Assumption \ref{assumption: CPWA} provided that the coefficients $(\bm{a}_{k,l},b_{k,l})$ are bounded  by some constant $C_2 \in [1,\infty)$ uniformly in the dimension $d$, c.f.\ \eqref{eqn: bound on cpwa coefficients}. 
\end{example}

\subsection{Brief Introduction to Quantum Computing}
In this section, we briefly recall notions and concepts in quantum computing which we frequently use in this paper. Classic references for this subject are the textbooks by \citeauthor{jacquier2022quantum}  \cite{jacquier2022quantum} and \citeauthor{QC0_textbook} \cite{QC0_textbook}.
\\

\noindent
Let $\mathcal{H}$ be a finite dimensional complex Hilbert space. A vector $v \in \mathcal{H}$, also referred as a \textit{state}, is denoted by the \textit{ket} notation $\ket{v}$. The inner product of two vectors $v,w \in \mathcal{H}$ is denoted by the \textit{bra-ket} notation $ \braket{v}{w} := \langle v,w\rangle  \in \C$. Elements $u \in \mathcal{H}^*$ of the dual space $\mathcal{H}^*$ are denoted by the \textit{bra} notation $\bra{u}$. The action of the dual vector $u \in \mathcal{H}^*$ on a vector $v \in \mathcal{H}$ is also denoted by the \textit{bra-ket} notation $\braket{u}{v}$. The action of a linear operator $A: \mathcal{H} \to \mathcal{H}$ on a vector $\ket{v}$ is denoted by $A\ket{v}$. The operator $A$ acts on dual vectors $\bra{u} \in \mathcal{H}^*$ by the rule $(\bra{u}A) \ket{v} := \bra{u}(A\ket{v}) := \langle u,Av \rangle$ for all $v \in \mathcal{H}$ which is also denoted by $\mel{u}{A}{v}$. 

For the Hilbert space $\mathcal{H} = \C^2$, we consider the $n$-fold tensor product Hilbert space $\mathcal{H}^{\otimes n} := \mathcal{H} \otimes \cdots \otimes \mathcal{H} \simeq \C^{2^n}$. We denote a state ${\psi} \in \mathcal{H}^{\otimes n}$ by $\ket{\psi}_n$, where the subscript $n$ emphasizes the \textit{($\log_2$)-dimension} of the tensor product Hilbert space $\mathcal{H}^{\otimes n}$. We use the orthonormal basis $\Beta_n = \{\ket{i}_n : i=(i_{1},i_{2},\ldots,i_n) \in \{0,1\}^n\} \subset \C^{2^n}$, where $\ket{i}_n := \ket{i_{1}} \otimes \ket{i_{2}} \otimes \cdots \otimes\ket{i_n} := \ket{i_{1}}\ket{i_{2}}\cdots\ket{i_n} $, $i \in \{0,1\}^n$. The basis $\Beta_n$ is referred as the \textit{computational basis} in the literature. 
For example the standard orthonormal basis $\{\ket{0},\ket{1}\} \subset \mathcal{H} = \C^2$ is given by 
\begin{equation}
    \ket{0} := \begin{bmatrix} 1 \\ 0\end{bmatrix}, \quad \ket{1} := \begin{bmatrix} 0 \\ 1 \end{bmatrix}.
\end{equation} 

 The (single) \textit{qubit}  
 represents a unit of quantum information. 
An arbitrary $n$-qubit state $\ket{\psi}_n$ is represented by a normalized vector in $\C^{2^n}$ which can be described by a $\C$-linear combination in the computational basis $\Beta_n$, i.e.
\begin{equation}\label{eqn: n qubit superposition}
    \ket{\psi}_n = \sum_{i \in \{0,1\}^n} \alpha_i \ket{i}_n,\quad \text{with }\sum_{i \in \{0,1\}^n} \lvert \alpha_i \rvert^2 = 1. 
\end{equation}
The coefficients $\alpha_i \in \C$ in \eqref{eqn: n qubit superposition} are referred as \textit{probability amplitudes} (or simply \textit{amplitudes}) due to the fact that for each $\ket{i}_n \in \mathcal{B}_n$ we have that the square amplitude $\lvert \alpha_i \rvert^2 = \lvert \braket{i}{\psi}_n\rvert^2$ is the probability of observing that the state $\ket{\psi}_n$ collapses during a projective measurement to the state  $\ket{i}_n$, according to \textit{Born's rule}; see, e.g., \cite[Chapter~2.5]{marinescu2011classical}.

Qubits on a quantum computer are manipulated by a sequence of  \textit{quantum gates}. The state evolution of qubits, according to the axioms of quantum mechanics, is unitary. \textit{Elementary} quantum gates act on either one or two qubits at a time. 
A \textit{quantum circuit} is a finite sequence of  compositions (and tensor products) of quantum gates and wires (where each wire represents the identity operator $I_2 \in \C^{2\times 2}$). Thus, quantum circuits correspond to unitary operators, represented by 
\textit{unitary matrices} in $\mathcal{U}(2^n)\subset \C^{2^n\times 2^n}$, 
where $n$ is the number of qubits on which the circuit acts.

Let us now define the set of elementary quantum gates, which allows us to describe an \textit{algebraic definition} for a quantum circuit.

\begin{definition}[Elementary quantum gate set]\label{def: elementary gate set}
We call any element of the set of 1-qubit and 2-qubits quantum gates $\mathbb{G} \subset \C^{2 \times 2} \cup \C^{2^2 \times 2^2}$ defined by
\begin{equation}\label{eqn: elementary gate set}
    \mathbb{G} := \Big\{X,Y,Z,H,
    P(\phi), R_x(\theta),R_y(\theta),R_z(\theta): \phi \in (0,2\pi), \theta \in (0,4\pi) \Big\} \cup \Big\{\CNOT,\SWAP \Big\}
\end{equation}
as elementary gates, 
where $X := 
\begin{bmatrix}0 & 1 \\ 1 & 0 \end{bmatrix},\quad 
Y := 
\begin{bmatrix}0 & -\mathrm{i} \\ \mathrm{i} & 0 \end{bmatrix}, \quad 
Z := 
\begin{bmatrix}1 & 0 \\ 0 & -1 \end{bmatrix}$
denote the Pauli transformation gates, 
 $H := \frac{1}{\sqrt{2}}\begin{bmatrix}1 & 1 \\ 1 & -1 \end{bmatrix}$ denotes the Hadamard gate, 
 $P(\phi) := \begin{bmatrix}1 & 0 \\ 0 & e^{\mathrm{i}\phi} \end{bmatrix}$,  $\phi \in (0,2\pi)$, denote the phase shift gates, and 
 $R_x(\theta) := \exp(-\mathrm{i}\theta X/2)$, 
 $R_y(\theta) := \exp(-\mathrm{i}\theta Y/2)$,
 $R_z(\theta) := \exp(-\mathrm{i}\theta Z/2)$, $\theta \in (0,4\pi)$, denote the three families of rotation gates,
  whereas      $\CNOT := \begin{bmatrix} 1 & 0 & 0 & 0 \\ 0 & 1 & 0 & 0 \\ 0 & 0 & 0 & 1 \\ 0 & 0 & 1 & 0 \end{bmatrix}$ \
  and \ 
  $    \SWAP := \begin{bmatrix} 1 & 0 & 0 & 0 \\ 0 & 0 & 1 & 0 \\ 0 & 1 & 0 & 0 \\ 0 & 0 & 0 & 1 \end{bmatrix}$
  denote the Controlled-NOT gate and the swap gate, respectively.
  

\end{definition}

\begin{remark}[Universality of $\mathbb{G}$]
    We note that the set of elementary quantum gates $\mathbb{G}$ in \eqref{eqn: elementary gate set} is \textit{universal} in the sense of the Solovay-Kitaev theorem \cite[Appendix 3]{QC0_textbook}. Moreover, the set $\mathbb{G}$ consists of quantum gates used in practical quantum computing softwares, such as IBM's Qiskit \cite{Qiskit} and Google's Cirq \cite{cirq_developers_2022_7465577}.
\end{remark}

\begin{definition}[Quantum circuit]\label{def: quantum circuit}
Let $n,M,L \in \N$. A quantum circuit $\mathcal{Q}$ acting on $n$ qubits is a $2^n\times 2^n$ unitary matrix of the form\footnote{Note that for any $n \in \N$ and any $\mathcal{Q}_1\dots,\mathcal{Q}_L \in \mathcal{U}(2^n)$, 
	their product is defined by $\prod_{l=1}^L \mathcal{Q}_l :=  \mathcal{Q}_L \mathcal{Q}_{L-1}\cdots \mathcal{Q}_2\mathcal{Q}_1\in \mathcal{U}(2^n)$.}
\begin{equation}
    \mathcal{Q} = \prod_{l=1}^L \left( G_{l,1} \otimes G_{l,2} \otimes \cdots \otimes G_{l,n_l}\right) \in \mathcal{U}(2^n),
\end{equation}
where $(G_{l,1},G_{l,2},\ldots,G_{l,n_l})_{l=1}^L \subset \mathbb{G} \cup \{I_2\}$, and  $\prod_{m=1}^{n_l} \dim(G_{l,m}) = 2^n$ for all $l=1,\ldots,L$. We define the following quantum circuit complexities\footnote{The indicator function $\mathbbm{1}_{\mathbb{S}}$ for a non-empty subset ${\mathbb{S}} \subset {\mathbb{H}}$ is the unique function satisfying $\mathbbm{1}_{\mathbb{S}}:{\mathbb{H}} \to \{0,1\}$ such that $\mathbbm{1}_{\mathbb{S}}(x) = 1$ if $x \in {\mathbb{S}}$ and $\mathbbm{1}_{\mathbb{S}}(x) = 0$ if $x \not \in {\mathbb{S}}$.  }:
\begin{itemize}
    \item $M=$ number of elementary quantum gates used to construct quantum circuit $\mathcal{Q}$, i.e.
    \begin{equation}
        M := \sum_{l=1}^L \sum_{j=1}^{n_l} \mathbbm{1}_{\mathbb{G}}(G_{l,j}),
    \end{equation}
    \item $n =$ number of qubits used in quantum circuit $\mathcal{Q}$, and
    \item $L=$ depth of quantum circuit $\mathcal{Q}$.
\end{itemize}
\end{definition}

\begin{remark}[Depth]
    For any quantum circuit $\mathcal{Q}$, we may bound its depth complexity by the number of elementary gates in the quantum circuit. In this paper, we focus only on the number of elementary gates and qubits used in constructing quantum circuits. 
\end{remark}
\begin{remark}[Ancilla qubits]
In most quantum circuits, auxiliary qubits are used as additional memory to perform necessary quantum computations but may not be used for the output. One calls these qubits \textit{ancilla qubits} or simply \textit{ancillas} and  denotes them by $\ket{\anc}_\star$, where the subscript $\star$ usually indicates in the literature that the amount of ancillas used are not specified precisely.  The \textit{complexity of a quantum circuit} is usually described by the number of elementary quantum gates used, and the number of qubits and ancilla qubits used. For simplicity, we count both qubits and ancilla qubits together as the number of qubits used in a circuit. 
\end{remark}


\subsection{Quantum amplitude estimation algorithms}\label{section: QAE}
In this section, we briefly review quantum algorithms for solving the quantum amplitude estimation (QAE) problem. Given an unitary operator $\mathcal{A}$ acting on $n+1$ qubits, defined by 
\begin{equation}\label{eqn: A00}
    \mathcal{A}\ket{0}_n\ket{0} = \sqrt{1-a}\ket{\psi_0}_n \ket{0} + \sqrt{a}\ket{\psi_1}_n \ket{1},
\end{equation}
where the so-called bad state is $\ket{\psi_0}_n\ket{0}$ and the good state is $\ket{\psi_1}_n\ket{1}$, \citeauthor{QC1_QAE} introduced in \cite{QC1_QAE} the amplitude estimation problem where the goal is to estimate the unknown amplitude $a \in [0,1]$, which is the probability of measuring the good state $\ket{\psi_1}_n\ket{1}$ according to Born's rule. Let $a = \sin^2(\theta_a)$ for some $\theta_a \in [0,\tfrac{\pi}{2}]$ so that we can rewrite \eqref{eqn: A00} as 
\begin{equation}\label{eqn: A00'}
    \mathcal{A}\ket{0}_n\ket{0} = \cos(\theta_a)\ket{\psi_0}_n \ket{0} + \sin(\theta_a)\ket{\psi_1}_n\ket{1}.
\end{equation}
To achieve a quantum speed-up, they introduced in \cite{QC1_QAE} the amplitude amplification operator (also known as the Grover operator)
\begin{equation}\label{eqn: Grover operator}
    \mathcal{Q} := \mathcal{A}S_0\mathcal{A}^\dagger\mathcal{S}_{\psi_0},
\end{equation}
where $\mathcal{S}_0 := I_2^{\otimes n+1} - 2\ket{0}_{n+1}\bra{0}_{n+1}$, and $\mathcal{S}_{\psi_0} := I_2^{\otimes n}\otimes Z$. Note that for every $k\in \N$, it holds that
\begin{equation}\label{eqn: Q^kA00}
    \mathcal{Q}^k\mathcal{A}\ket{0}_n \ket{0} = \cos((2k+1)\theta_a)\ket{\psi_0}_n\ket{0} + \sin((2k+1)\theta_a)\ket{\psi_1}_n\ket{1},
\end{equation}
(c.f. \cite[Section 2]{QC1_QAE}). We observe that measuring \eqref{eqn: Q^kA00} boosts the probability of obtaining the good state to  $\sin^2((2k+1)\theta_a)$ which is larger than $\sin^2(\theta_a)$ when measuring \eqref{eqn: A00'} directly, provided $\theta_a$ is sufficiently small so that $(2k+1)\theta_a \leq \tfrac{\pi}{2}$. Using quantum Fourier transform and a number of multi-controlled operators for $\mathcal{Q}^k$, \citeauthor{QC1_QAE} designed the QAE  algorithm \cite[Algorithm(Est\_Amp)]{QC1_QAE} to estimate $a$ with high probability using only $O(\eps^{-1})$ queries of $\mathcal{A}$ (see \cite[Theorem 12]{QC1_QAE}). It is noted that \citeauthor{QC1_QAE}'s algorithm enables a quadratic speed-up for many approximation problems which are solved classically by Monte Carlo simulations under the assumption that the corresponding distribution can be uploaded in rotated form \eqref{eqn: A00'}. However, due to difficulties in implementing large number of controlled unitary operators as well as the quantum Fourier transform (QFT) operator on quantum computers, several variants of the QAE algorithm without using QFT have been proposed recently; see e.g., \cite{aaronson2019quantum, giurgica2022low, grinko2021iterative, manzano2023real,Nakaji20,plekhanov2022variational, rall2022amplitude, suzuki2020amplitude, wie2019simpler,  zhao2022adaptive}. In this paper, we use the modified iterative quantum amplitude estimation algorithm (Modified IQAE) \cite[Algorithm 1]{fukuzawa2022modified}  introduced recently by \citeauthor{fukuzawa2022modified} \cite{fukuzawa2022modified}, which is a modification of the IQAE algorithm presented by \citeauthor{suzuki2020amplitude} \cite{suzuki2020amplitude}. In brief, the Modified IQAE algorithm consists of several rounds where for each round $i$, the algorithm maintains a confidence interval $[\theta_l^{(i)},\theta_u^{(i)}]$ so that $\theta_a$ lies inside this interval with a certain probability. The confidence interval is narrowed in each subsequent round until the terminating condition $\theta_u-\theta_l <2 \eps$ for prespecified $\eps \in (0,1)$ is satisfied. The return output of the Modified IQAE algorithm is the confidence interval $[a_l,a_u]$ for $a$, where $a_l := \sin^2(\theta_l)$ and $a_u := \sin^2(\theta_u)$. The following statement is a direct rephrase of the main results in \cite{fukuzawa2022modified}.

\begin{proposition}\label{prop: m-IQAE}
    Let $\alpha, \eps \in (0,1)$ and let  $\mathcal{A}$ be an $(n+1)$-qubit quantum circuit satisfying 
    \begin{equation}\label{rotated_form}
        \mathcal{A} \ket{0}_{n}\ket{0} = \sqrt{1-a} \ket{\psi_0}_n\ket{0} + \sqrt{a} \ket{\psi_1}_n\ket{1}
    \end{equation}
    where $n \in \N$, $\ket{\psi_0}_n,\ket{\psi_1}_n$ are normalized states, $a \in [0,1]$, and where $\mathcal{A}$ can be constructed with $N_\mathcal{A} \in \N$ number of elementary gates. Then, the following holds.
    \begin{enumerate}
        \item The Modified IQAE Algorithm \cite[Algorithm 1]{fukuzawa2022modified} outputs a confidence interval $[a_l,a_u]$ that satisfies 
        \begin{equation}
           a \not \in [a_l,a_u], \quad \mbox{with probability at most } \alpha,
        \end{equation}
        where $0\leq a_u-a_l < 2\eps$. In particular, the estimator $\widehat{a} := \frac{a_u+a_l}{2}$ of $a$ satisfies
        \begin{equation}
        	\vert a - \widehat{a} \vert < \eps, \quad \mbox{with probability at least } 1-\alpha,
        \end{equation}
        \item the Modified IQAE Algorithm uses at most 
        \begin{equation}\label{eqn: bound on queries of A in mIQAE}
            \frac{62}{\eps}\ln\left(\frac{21}{\alpha}\right)
        \end{equation}
        applications of $\mathcal{A}$, and 
        \item the Modified IQAE Algorithm uses $n+1$ qubits and requires at most 
        \begin{equation}\label{eqn: number of gates for m-IQAE}
            \frac{\pi}{4\eps}(8n^2 + 23 + N_{\mathcal{A}})
        \end{equation}
        number of elementary gates.
    \end{enumerate}
    \begin{proof}
        Item 1.\ and Item 2.\ are proven in \cite[Theorem 3.1]{fukuzawa2022modified} and  \cite[Lemma 3.7]{fukuzawa2022modified}, respectively. For Item~3., we note that \cite[Algorithm 1 Modified IQAE]{fukuzawa2022modified} uses quantum circuits $\mathcal{Q}^k\mathcal{A}$ which is defined by \eqref{eqn: Q^kA00}, where $k \in \N$. Let us construct the operator $\mathcal{Q}$ (c.f. \eqref{eqn: Grover operator}), as outlined in \cite[Section 3]{rao2020quantum}. We note that the operator $\mathcal{S}_{\psi_0}$ be constructed using one $Z$ gate. 
        By direct computation (or see \cite[Figure 5]{rao2020quantum}), the operator $\mathcal{S}_0$ satisfy the identity
        \begin{equation}\label{eqn: S_0}
            \mathcal{S}_0 = X^{\otimes (n+1)} (I_2^{\otimes n} \otimes H)C^n(X)(I_2^{\otimes n} \otimes H)X^{\otimes (n+1)},
        \end{equation}
        where $C^n(X)$ is the generalized version of Toffoli gate, which uses the first $n$-qubits for control. The multi-control gate $C^n(X)$ is constructed as a quantum circuit in \cite[Theorem 2]{saeedi2013linear} using $2n^2-6n+5$  controlled $X$-rotation gates $CR_x(\theta)$, where each $CR_x(\theta)$ gate can be constructed with 4 elementary gates by the following definition
        \begin{equation}
            CR_x(\theta) = (I_2\otimes R_x(\tfrac{\theta}{2}))\CNOT (I_2\otimes R_x(\tfrac{\theta}{2})) \CNOT,
        \end{equation}
    see also the proof of Lemma \ref{lemma: controlled y-rotation}.
        Thus, by \eqref{eqn: S_0}, the operator $\mathcal{S}_{0}$ can be constructed using 
        \begin{equation}
            \begin{split}
                (n+1) + 1 + 4(2n^2-6n+5) + 1+ (n+1)\leq 8n^2 + 22
            \end{split}
        \end{equation}
        elementary gates. Finally, since the set of elementary gates is closed under  inversion, the number of elementary gates used to construct $\mathcal{A}^\dagger$ is the same as the number of elementary gates used for $\mathcal{A}$. Hence, the number of elementary gates used to construct $\mathcal{Q}$ is at most 
        \begin{equation}
            N_\mathcal{A}+1+N_\mathcal{A} + (8n^2+22) = 8n^2 + 23 + 2N_\mathcal{A}.
        \end{equation}
        Next, by \cite[Lemma 3.1]{fukuzawa2022modified}, the integer $k$ satisfies the bound $2k+1 \leq \tfrac{\pi}{4\eps}$. Thus,  the number of elementary gates used to construct the operator $\mathcal{Q}^k\mathcal{A}$ is at most 
        \begin{equation}
            \begin{split}
                &k(8n^2+23+2N_\mathcal{A})+N_\mathcal{A} \\
                &=k(8n^2+23) + (2k+1)N_\mathcal{A} \\
                &\leq (2k+1)(8n^2+23+N_\mathcal{A}) \\
                &\leq \frac{\pi}{4\eps}(8n^2+23+N_\mathcal{A}).
            \end{split}
        \end{equation}
    \end{proof}
\end{proposition}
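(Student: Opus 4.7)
\textbf{Proof plan for Proposition \ref{prop: m-IQAE}.} Items 1 and 2 are direct transcriptions of existing results in \cite{fukuzawa2022modified}, namely Theorem~3.1 (correctness together with the $1-\alpha$ coverage of the output interval) and Lemma~3.7 (the bound on the number of queries to $\mathcal{A}$), so my plan is to invoke them verbatim, merely translating the notation to match ours. The substantive work therefore lies in Item~3, which I would prove by a direct gate-count analysis of the circuits that the Modified IQAE routine actually prepares.

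For Item~3, the starting observation is that each round of the Modified IQAE only prepares states of the form $\mathcal{Q}^k \mathcal{A}\ket{0}_n\ket{0}$ (see \eqref{eqn: Q^kA00}) for some non-negative integer $k$ depending on the round. Therefore it suffices to bound (i) the cost of a single $\mathcal{Q}$ in terms of $n$ and $N_\mathcal{A}$, and (ii) the maximal value of $k$ in terms of $\varepsilon$; multiplying these and adding one outer $\mathcal{A}$ then yields the stated bound. The qubit count is immediate: $\mathcal{A}$ acts on $n+1$ qubits, and all the ingredients of $\mathcal{Q}$ act on the same register, so no additional ancillas are needed.

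To bound $\mathcal{Q} = \mathcal{A}\mathcal{S}_0 \mathcal{A}^\dagger \mathcal{S}_{\psi_0}$ from \eqref{eqn: Grover operator}, I would handle each factor separately. The oracle $\mathcal{S}_{\psi_0} = I_2^{\otimes n} \otimes Z$ costs one elementary gate, and $\mathcal{A}^\dagger$ costs $N_\mathcal{A}$ gates since the elementary set $\mathbb{G}$ is closed under inversion (every factor $G \in \mathbb{G}$ is replaced by $G^\dagger \in \mathbb{G}$ and the order is reversed). The real bottleneck is $\mathcal{S}_0$, the reflection about $\ket{0}_{n+1}$. Here I would invoke the standard identity (as in \cite{rao2020quantum}) which rewrites $\mathcal{S}_0$ as a layer of $X$ gates on every wire, a Hadamard on the last wire, a generalized Toffoli $C^n(X)$ with $n$ controls, another Hadamard, and a closing layer of $X$ gates. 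The multi-controlled $X$ is the delicate piece: I would plug in the construction from \cite{saeedi2013linear} producing $C^n(X)$ out of $2n^2 - 6n + 5$ controlled $R_x$ rotations, and then expand each $CR_x(\theta)$ into four elementary gates using two CNOTs and two $R_x$ rotations (analogous to Lemma~\ref{lemma: controlled y-rotation}). A short bookkeeping then yields a bound of $8n^2 + 22$ elementary gates for $\mathcal{S}_0$, so $\mathcal{Q}$ costs at most $8n^2 + 23 + 2 N_\mathcal{A}$ elementary gates in total.

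To conclude I would invoke \cite[Lemma 3.1]{fukuzawa2022modified}, which bounds the iteration parameter by $2k+1 \leq \pi/(4\varepsilon)$, and assemble: the full circuit $\mathcal{Q}^k \mathcal{A}$ uses at most $k(8n^2 + 23 + 2 N_\mathcal{A}) + N_\mathcal{A} \leq (2k+1)(8n^2 + 23 + N_\mathcal{A}) \leq \tfrac{\pi}{4\varepsilon}(8n^2 + 23 + N_\mathcal{A})$ elementary gates, matching the claim. The step I expect to be the main obstacle is controlling the $n$-scaling of $\mathcal{S}_0$: a naive decomposition of $C^n(X)$ can easily blow up super-quadratically, so the key is to import a quadratic-in-$n$ scheme such as \cite{saeedi2013linear} — without it, the final bound would no longer scale as $O(n^2/\varepsilon)$ and the downstream complexity analysis of Theorem~\ref{main theorem} would suffer.
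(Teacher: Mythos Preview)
Your proposal is correct and follows essentially the same approach as the paper: you invoke \cite[Theorem~3.1]{fukuzawa2022modified} and \cite[Lemma~3.7]{fukuzawa2022modified} for Items~1 and~2, and for Item~3 you decompose $\mathcal{Q}$ exactly as the paper does (one $Z$ for $\mathcal{S}_{\psi_0}$, the $X$/$H$/$C^n(X)$/$H$/$X$ identity for $\mathcal{S}_0$, the quadratic $C^n(X)$ construction from \cite{saeedi2013linear} with four-gate $CR_x$ decompositions), arriving at the same $8n^2+22$ and $8n^2+23+2N_\mathcal{A}$ counts and combining with the bound $2k+1\le \pi/(4\varepsilon)$ from \cite[Lemma~3.1]{fukuzawa2022modified}.
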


\begin{remark}\label{remark: circuit Q^kA}
  Let us remark the following on the Modified IQAE algorithm \cite{fukuzawa2022modified}. 
    \begin{enumerate}
        \item The total number of rounds $t \in \N$ in \cite[Algorithm 1 Modified IQAE]{fukuzawa2022modified} is bounded by $\log_3(\tfrac{\pi}{4\eps})$, see \cite[Section 3.1]{fukuzawa2022modified}. For each round $i=1,\ldots,t$, the quantum circuit $\mathcal{Q}^{k_i}\mathcal{A}$ is prepared on a quantum computer, where each $k_i \in \N$ are found recursively by using the subroutine \cite[Algorithm 2, FindNextK]{fukuzawa2022modified}. Note that each of these quantum circuits require the same number of qubits as with the quantum circuit $\mathcal{A}$. Moreover, each $k_i$ satisfy $2k_i+1 \leq \tfrac{\pi}{4\eps}$, \cite[Lemma 3.1]{fukuzawa2022modified}.
        \item Since the number $k_t$ is the maximum among the $k_i$'s, we infer that \cite[Algorithm 1, Modified IQAE]{fukuzawa2022modified} requires the number of elementary gates used to construct the quantum circuit $\mathcal{Q}^{k_t}\mathcal{A}$ on a quantum computer in order to run the Modified IQAE algorithm, which can be bounded by \eqref{eqn: number of gates for m-IQAE}.
        
        \item The query complexity (i.e.\ number of applications) of $\mathcal{A}$ in Proposition \ref{prop: m-IQAE} is defined to be the number of times the operator $\mathcal{Q}$ is applied in the algorithm, which is 
        \begin{equation}
            \sum_{i=1}^t k_i N_i, 
        \end{equation}
        where $N_i$ is the number of measurements made on $\mathcal{Q}^{k_i}\mathcal{A}\ket{0}_n\ket{0}$ in round $i$. Hence, the query complexity of $\mathcal{A}$ can be interpreted as the computational running time for the Modified IQAE algorithm. It was shown in \cite[Lemma 3.7]{fukuzawa2022modified} that this number is bounded by \eqref{eqn: bound on queries of A in mIQAE}. 
        \item We emphasize that other versions of the QAE (or IQAE) algorithm offer essentially the same query complexity (i.e.\ $O(\frac{1}{\eps}\ln(\frac{1}{\alpha}))$ up to logarithmic factors of $\eps^{-1}$). In this paper, we chose the Modified IQAE for the quantum ampltitude estimation subroutine in Algorithm~\ref{quantum algorithm} since the bounds on the query complexities in \cite{fukuzawa2022modified} were explicit. We refer the reader to Section 3.2 in \cite{manzano2023} for a detailed comparison for the query complexities of the different QAE algorithms that are available in the literature. 
    \end{enumerate}
\end{remark}

\subsection{Algorithm~\ref{quantum algorithm} and main result}\label{subsec:mainresult}

In this section, we first present our quantum Monte Carlo algorithm named Algorithm~\ref{quantum algorithm} to solve  Black-Scholes PDEs \eqref{eqn: PDE} with corresponding continuous piecewise affine payoff function \eqref{eqn: CPWA payoff}. Moreover, we then outline Algorithm~\ref{quantum algorithm} and present our main result in Theorem~\ref{main theorem}, namely a convergence and complexity analysis of our algorithm.

\vspace*{0.2cm}
\begin{algorithm}[H]
	\KwIn{ $\eps \in (0,1)$, $\alpha \in (0,1)$, $d \in \N$, $r,T \in (0,\infty)$, $(t,\bm{x}) \in [0,T) \times \R_+^d$, covariance matrix $\bm{C}_d \in \R^{d \times d}$, and continuous piecewise affine function 
		\[\R_+^d \ni \bm{x} \mapsto h(\bm{x}) = \sum_{k=1}^{K} \xi_k \max\{\bm{a}_{k,l} \cdot \bm{x} + b_{k,l}: l = 1,\ldots,I_k\} \in \R\]}
	\KwOut{$\widetilde{U}_{t,\bm{x}} \in \R$}
	\nl Set $C_1,C_2,C_3 \in [1,\infty)$ to be the constants given by Assumption \ref{assumption: cov matrix}, Assumption \ref{assumption: CPWA}, and Assumption \ref{assumption: distribution loading}, respectively.\\
	\nl Set $$
	n_1 :=  \lceil n_{1,d,\eps}\rceil,
	\quad
	n_2 := 1 + \lceil \log_2(C_2) \rceil, 
	\quad
	m_1 := \lceil m_{1,d,\eps}\rceil,
	\quad
	m_2 := \lceil m_{2,d,\eps} \rceil,$$ where $n_{1,d,\eps}, m_{1,d,\eps}, m_{2,d,\eps}$ are defined in \eqref{eqn: step6-n_1}-\eqref{eqn: step6-m_2} in Proposition \ref{proposition: final error estimate}. \\
	\nl Set
	$$N = \eqref{eqn: def N, q_k}, \quad \gamma = \eqref{eqn: step6-gamma}, \quad \mathfrak{s} = \eqref{eqn: step6-s},  \quad  \text{and} \quad \widetilde{\bm{a}}_{n_2,m_2,k,l},\ \widetilde{b}_{n_2,m_2,k,l} = \eqref{eqn: approx payoff1} \ \text{ for } \ k=1,\dots K,\ l=1,\dots I_k.$$\\
	\nl Construct probability distribution quantum circuit $\mathcal{P} \equiv \mathcal{P}_{d,\eps}$ using Assumption \ref{assumption: distribution loading} (with $n \leftarrow n_1$, $m \leftarrow m_1$, $\eps \leftarrow \frac{\eps}{6C_2^2 d^2 2^{n_1+1}}$ in the notation of Assumption \ref{assumption: distribution loading}).\\
	\nl Construct continuous piecewise affine payoff with rotation quantum circuit $\mathcal{R}_{h}$ given by Proposition \ref{prop: loading payoff circuit} (with $s \leftarrow \mathfrak{s}$, $a_{k,l,j} \leftarrow \mathrm{E}_{n_2,m_2}(\widetilde{\bm{a}}_{n_2,m_2,k,l,j})$, $b_{k,l} \leftarrow \mathrm{E}_{n_2,m_2}(\widetilde{b}_{n_2,m_2,k,l})$ for $k=1,\ldots,K$, $l=1,\ldots,I_k$, $j=1,\ldots,d$ in the notation of Proposition \ref{prop: loading payoff circuit}).\\
	\nl Construct the quantum circuit $\mathcal{A} = \mathcal{R}_h(\mathcal{P}\otimes I_2^{\otimes (N-d(n_1+m_1))})$ using the quantum circuits $\mathcal{R}_h$ and $\mathcal{P}$.\\
	\nl Set $\widehat{a} = \tfrac{a_u+a_l}{2}$ using the output $[a_l,a_u]$ from the modified iterative quantum amplitude estimation algorithm \cite[Algorithm 1 Modified IQAE]{fukuzawa2022modified} (with $\eps \leftarrow \tfrac{\eps\mathfrak{s}}{12}$, $\alpha \leftarrow \alpha$, $N_{\text{shots}} \leftarrow 1$, and  $\mathcal{A} \leftarrow \mathcal{A}$ in the notation of \cite[Algorithm 1]{fukuzawa2022modified}).\\
	\nl Return $\widetilde{U}_{t,\bm{x}} := \mathfrak{s}^{-1}\gamma e^{-r(T-t)}(2\widehat{a} - 1)$.
	\caption{Quantum algorithm for  solving Black-Scholes PDEs
		with continuous piecewise affine payoff functions}
	\label{quantum algorithm}
\end{algorithm} 
\subsubsection{Outline of Algorithm~\ref{quantum algorithm}}\label{section: outline of algorithm}
\begin{figure}[t]
	\centering
	\includegraphics[width=0.9\linewidth]{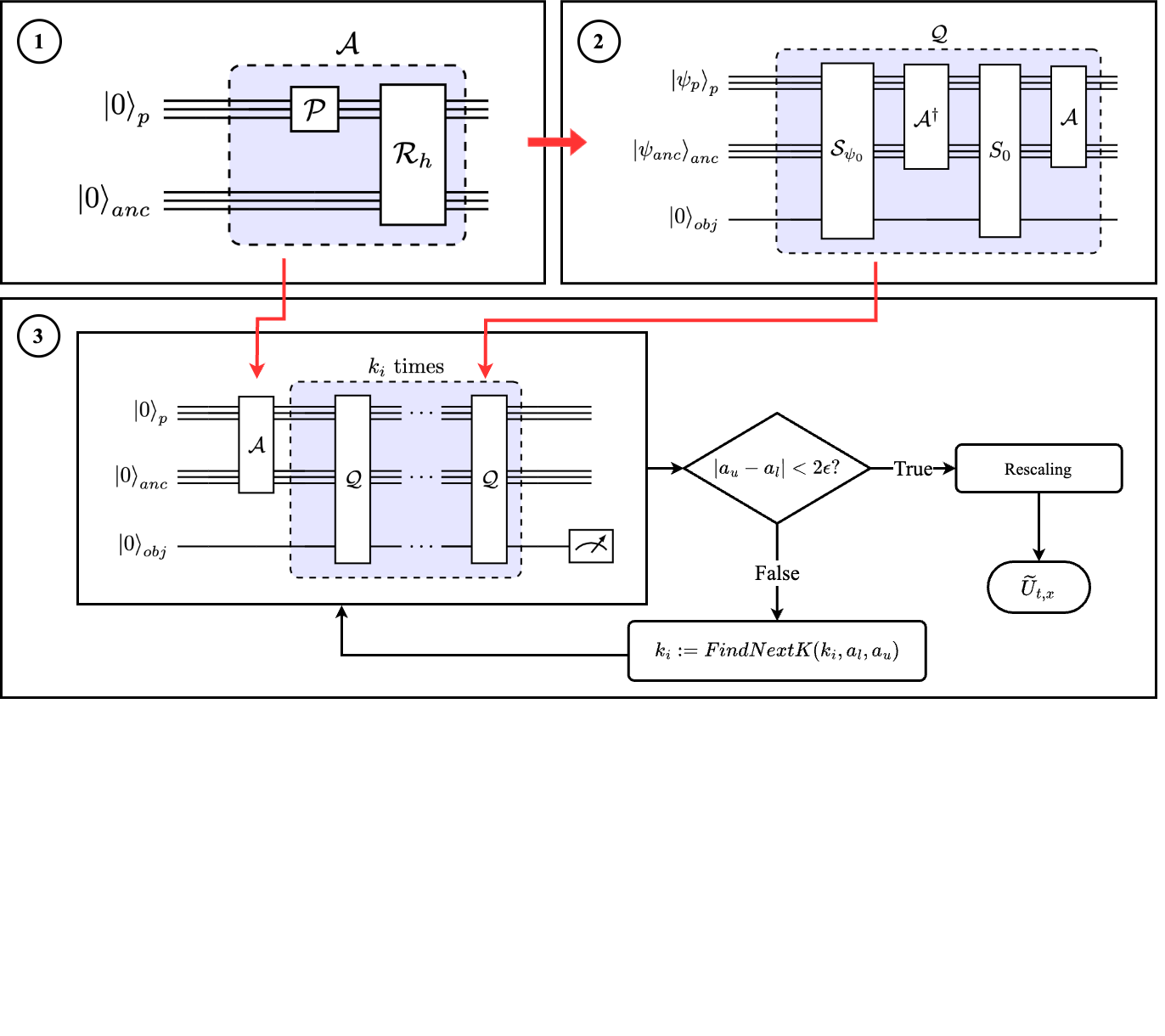}
	\vspace*{-4.2cm}
	\caption{Flowchart of Algorithm 1. (Top left) (1) Construction of the operator $\mathcal{A}$ using probability distribution operator $\mathcal{P}$ and continuous piecewise affine payoff operator $\mathcal{R}_h$. (Top right) (2) Construction of the Grover operator using the operator $\mathcal{A}$, oracle operator $\mathcal{S}_{\psi_0}$ and phase flip operator $\mathcal{S}_0$. (Bottom) (3) Illustration of the Modified Iterative Quantum Amplitude Estimation algorithm to produce the final estimate $\widetilde{U}_{t,\bm{x}}$.}
	\label{fig:diagram_of_algorithm_1}
\end{figure}

The steps of Algorithm~\ref{quantum algorithm} can be briefly described into four parts as follows:
\begin{enumerate}
    \item upload the transition probability function $p(\cdot,T;\bm{x},t)$ given in \eqref{eqn: density formula}, 
    \item upload the continuous piecewise affine payoff function $h:\R^d_+ \to \R$ given in \eqref{eqn: CPWA payoff}, 
    \item apply the Modified IQAE algorithm \cite[Algorithm 1]{fukuzawa2022modified} to obtain an estimated amplitude $\widehat{a} \in [0,1]$,
    \item rescale the estimated amplitude $\widehat{a}$ to output $\widetilde{U}_{t,\bm{x}}$  which approximates $u(t,\bm{x})$ defined in \eqref{eqn: option price formula}. 
\end{enumerate}
More precisely, in order to apply the Modified IQAE algorithm in step 3, we first need to upload both the probability distribution function and the payoff function as quantum circuits. For the first step, we need to truncate and discretize the distribution function on a grid. The corresponding parameters $n_1, m_1 \in \N$ encodes the grid $([-2^{n_1-1},2^{n_1-1}-2^{-m_1}]\cap 2^{-m_1}\Z)^d$, which is the support of the discretized distribution. Then, the truncated, discretized distribution can be uploaded approximately on the quantum computer using the quantum circuit $\mathcal{P}$, introduced in Section \ref{section: distribution quantum circuit}. The corresponding parameter $\gamma$ is needed to renormalize the amplitude coefficients in $\mathcal{P}\ket{0}_{d(n_1+m_1)}$ to $[0,1]$. We note that step 1 accounts for the truncation error, quadrature error, and distribution loading error in Section \ref{section: error estimates}, see Proposition \ref{prop: truncation error},  Proposition \ref{prop: quadrature error}, and Proposition \ref{prop: loading errors}, respectively. 

In step 2, we discretize the coefficients $(\bm{a}_{k,l},b_{k,l})$ of the continuous piecewise affine payoff function \eqref{eqn: CPWA payoff} onto a grid. The corresponding parameter $n_2\in \N$ encodes the bounds on the coefficients while the parameter $m_2 \in \N$ encodes the rounding off accuracy level for the coefficients.  We construct the quantum circuit $\mathcal{R}_h$ that encodes the approximated continuous piecewise affine payoff in a phase amplitude using controlled $Y$-rotation gates. The corresponding parameter $\mathfrak{s} \equiv s$ indicates the scaling parameter for the rotation circuit $\mathcal{R}_h$, see Proposition \ref{prop: loading payoff circuit}. We account for the errors for approximating the payoff function $h:\R^d_+ \to \R$ and the rotation errors in Proposition \ref{prop: quad-payoff} and Proposition \ref{prop: rotation error}. The quantum circuit $\mathcal{A}$ is then defined as the combination (i.e. composition) of the two quantum circuits $\mathcal{P}$ and $\mathcal{R}_h$ which enables us to apply quantum amplitude estimation (QAE) algorithms.

In step 3, we apply the Modified IQAE algorithm\footnote{We emphasize that the other QAE algorithms can also be used since the query complexities are essentially similar to \cite{fukuzawa2022modified}, c.f. Remark~\ref{remark: circuit Q^kA} Item 4.} \cite{fukuzawa2022modified} with $\mathcal{A}$ to get an output $\widehat{a} \in [0,1]$.  Then in the last step, since the estimated amplitude $\widehat{a}$ given by the Modified IQAE algorithm is between $0$ and $1$, we need to rescale this number to approximate the option price $u(t,\bm{x})$. We account for the QAE error in Proposition \ref{proposition: final error estimate}. 

\subsubsection{Main Theorem}

\begin{theorem} \label{main theorem}
    Let $\eps \in (0,1)$, $\alpha \in (0,1)$, $d \in \N$, $r,T \in (0,\infty)$, $(t,\bm{x}) \in [0,T) \times \R_+^d$, and covariance matrix $\bm{C}_d \in \R^{d \times d}$ be the input of Algorithm~\ref{quantum algorithm}. Let $u(t,\bm{x}) \in \R$ be the option price given by \eqref{eqn: option price formula} with continuous piecewise affine payoff $h:\R^d_+ \to \R$ given by \eqref{eqn: CPWA payoff}, let Assumption \ref{assumption: cov matrix}, Assumption \ref{assumption: CPWA}, and Assumption \ref{assumption: distribution loading}  hold with respective constants $C_1,C_2,C_3 \in [1,\infty)$, and let $\mathfrak{c},\mathfrak{C}_1,\mathfrak{C}_2,\mathfrak{C}_3 \in[2,\infty)$ be constants defined by
    \begin{align}
        &\mathfrak{c} := 2C_2^2 e^{4C_1^2T^2}e^{2rT}\max_{i=1,\ldots,d}\{1,x_i^2\},\\
        &\mathfrak{C}_1 := 648 C_2 \log_2(\mathfrak{c}),\label{eqn: constant frakC_1}\\
        &\mathfrak{C}_2 :=(1.6\times 10^8)C_2^4 C_3\big(27\log_2(\mathfrak{c})\big)^{\max\{3,2C_3\}},\label{eqn: constant frakC_2}\\
        &\mathfrak{C}_3 :=(6.1\times 10^5) C_2\mathfrak{c}^{\tfrac{3}{2}} \label{eqn: constant frakC_3}.                
    \end{align}
    Then, Algorithm~\ref{quantum algorithm} outputs $\widetilde{U}_{t,\bm{x}} \in \R$ which satisfies
    \begin{equation}\label{eqn: theorem estimate}
        \vert u(t,\bm{x}) - \widetilde{U}_{t,\bm{x}} \vert \leq \eps, \quad \text{with probability at least $1-\alpha$},
    \end{equation}
    where the number of qubits used in Algorithm~\ref{quantum algorithm} is at most
    \begin{equation}\label{eqn: bound on no. of qubits}
        \mathfrak{C}_1 d^2 (1+\log_2(d\eps^{-1})),
    \end{equation}
    the number of elementary gates used in Algorithm~\ref{quantum algorithm} is at most
    \begin{equation}\label{eqn: bound on no. of gates}
        \mathfrak{C}_2 d^{\max\{10.75,4.75+C_3\}}\eps^{-3} (1+\log_2(d\eps^{-1}))^{\max\{3,2C_3\}},
    \end{equation}
    and the number of applications\footnote{c.f. Remark \ref{remark: circuit Q^kA} Item 3. for the precise meaning of \textit{number of applications}.} of quantum circuit $\mathcal{A}$ in Algorithm~\ref{quantum algorithm} is at most 
    \begin{equation}\label{eqn: bound on queries of A in algorithm 1}
        \mathfrak{C}_3 d^{4.75}\eps^{-3}\ln(\tfrac{21}{\alpha}).
    \end{equation}

\end{theorem}

\begin{remark}\label{rem:complexity}
Let us remark the following on the complexity of Algorithm~\ref{quantum algorithm}.
\begin{enumerate}
    \item The bounds \eqref{eqn: bound on no. of qubits} and \eqref{eqn: bound on no. of gates} on the number of qubits and the number of elementary gates in Algorithm~\ref{quantum algorithm} specify the requirements on the quantum computer needed to run Algorithm~\ref{quantum algorithm}. The bound \eqref{eqn: bound on queries of A in algorithm 1} on the number of applications of quantum circuit $\mathcal{A}$ can be interpreted as the computational running time for Algorithm~\ref{quantum algorithm}.
    \item The $O(\eps^{-3})$ running time complexity in \eqref{eqn: bound on queries of A in algorithm 1} can be attributed as follows. 
    \begin{enumerate}[(i)]
        \item The truncation of the integral \eqref{eqn: option price formula} from $\R_+^d$ to the cube $[0,M]^d$ requires $M \sim 2^{n_1} \sim O(\eps^{-1})$, see Proposition \ref{prop: truncation error}. 
        \item Since the payoff function $h$ grows linearly,  $\Vert h \Vert_{L^\infty((0,M)^d)}$ grows of order $O(\eps^{-1})$. Hence we require the scaling parameter $\mathfrak{s}$ to satisfy $\mathfrak{s} \sim O\left(\left(\tfrac{\eps}{\Vert h \Vert_{L^\infty((0,M)^d)}^3}\right)^{1/2}\right) \sim O(\eps^{2})$, see Proposition \ref{prop: rotation error}.
        \item We use the Modified IQAE to output $\widehat{a}$ with accuracy $\eps\mathfrak{s} \sim O(\eps^3)$  to obtain the estimate \eqref{eqn: theorem estimate}, see Proposition \ref{proposition: final error estimate}. This implies the query complexity bound \eqref{eqn: bound on queries of A in algorithm 1}.
    \end{enumerate}
    In the case where the payoff function $h:\R^d_+ \to \R$ is bounded uniformly in $\bm{x} \in \R^d_+$, then the scaling parameter~$\mathfrak{s}$ requires only $O(\eps^{\tfrac{1}{2}})$, see (ii) above. This implies that the number of applications of $\mathcal{A}$ can be reduced to $O(\eps^{-\tfrac{3}{2}})$, which is a speed-up compared to  classical Monte Carlo methods and recovers the complexity observed in \cite{QC4_optionpricing}. We highlight that the fact that an unbounded payoff function can lead to a higher complexity has been already briefly outlined in \cite[Equation~(36)]{chakrabarti2021threshold}. Moreover, we highlight that one cannot expect to obtain $O(\eps^{-1})$, which would have meant to have a quadratic speed-up over classical Monte Carlo methods, since one cannot expect to have an oracle which can perfectly upload the distribution and payoff function in rotated form (see, e.g., \eqref{rotated_form} or \cite[Equation (16)]{chakrabarti2021threshold}), as already pointed out, e.g., in \cite{herbert2022quantum}.
    
    \item We note from \eqref{eqn: theorem estimate}--\eqref{eqn: bound on queries of A in algorithm 1} that the number of qubits and elementary gates used in Algorithm \ref{quantum algorithm} as well as the number of applications of the quantum circuit $\mathcal{A}$ grow only polynomially\footnote{under the additional assumption that $\max_{i=1,\ldots,d}\{\vert x_i \vert \}$ is uniformly bounded in $d \in \N$. This assumption is naturally fulfilled in practice, as $x_i^2$ corresponds to the (squared) spot price of the $i$-th asset.} in $d$ and $\eps^{-1}$.  
   This is in line with classical (i.e.\ non-quantum) Monte Carlo methods, where the number of Monte Carlo samples required in order to approximate the solution of the Black-Scholes PDE $u(t,\bm{x})$  only grows polynomially in the dimension $d$ and $\varepsilon^{-1}$, see, e.g., \cite{HutzenthalerJentzenVonWurstemberger}.
    
    Indeed, let $N \in \N$ be the number of classical Monte Carlo samples, let $\bm{Y}^{(n)}, n=1,\ldots,N$ be $N$ independent samples of the stock price process $\bm{S}_T$  given that $\bm{S}_t=\bm{x}$ (c.f.\ Lemma~\ref{lemma: density}), and let $ u_{\mathrm{MC}}^N(t,\bm{x}) := \frac{e^{-r(T-t)}}{N}\sum_{n=1}^N h(\bm{Y}^{(n)})$ be the classical Monte Carlo approximation of $u(t,\bm{x})$. By using Markov inequality, the linear growth property of the payoff function $h$  under Assumption~\ref{assumption: CPWA}  (see Lemma~\ref{lemma: lin growth}), and second moment estimates for the multi-variate lognormal distribution under Assumption~\ref{assumption: cov matrix}  (see \eqref{second-mom-multi-log-normal}), we see  for any prescribed accuracy $\eps \in (0,1)$ that
    \begin{equation}
    	\begin{split}
    		\mathbb{P}\left( |u(t,\bm{x}) - u_{\mathrm{MC}}^N(t,\bm{x})| \geq  \eps \right)
    		&= 	\mathbb{P}\left( \left| \E[h(\bm{S}_T)\,|\,\bm{S}_t = \bm{x}] - \frac{1}{N}\sum_{n=1}^N h(\bm{Y}^{(n)})\right| \geq  \eps e^{r(T-t)} \right)\\
    		& \leq \frac{\mathbb{E}\left[ \left|\E[h(\bm{S}_T)\,|\,\bm{S}_t = \bm{x}] - \frac{1}{N}\sum_{n=1}^N h(\bm{Y}^{(n)}) \right|^2\right]}{\eps^2 e^{2r(T-t)}}\\
    		&\leq \frac{2C_2^4 d^{3} \left( 1 + de^{2C_1^2T^2} e^{2rT}\max_{i=1,\ldots,d}\{1,x_i^2\} \right)}{N\eps^2 e^{2r(T-t)}}.
    	\end{split}
    \end{equation}
    Therefore, for any confidence level $\alpha \in (0,1)$,  by choosing
    \begin{equation}
    	N \geq \frac{2C_2^4 d^{3} \left( 1 + de^{2C_1^2T^2} e^{2rT}\max_{i=1,\ldots,d}\{1,x_i^2\} \right)}{\alpha\eps^2 e^{2r(T-t)}},
    \end{equation}
    we can ensure that 
    \begin{equation}
    	\mathbb{P}\left( |u(t,\bm{x}) - u_{\mathrm{MC}}^N(t,\bm{x})| \geq  \eps \right) \leq \alpha
    \end{equation}
    as required. Since each sample of $\bm{Y}^{(n)}$ requires $d$ copies of standard Gaussian random variables, we conclude that the total computational cost for a classical Monte Carlo method to approximately solve the multi-variate Black-Scholes PDE~\eqref{eqn: PDE} is about $Nd = \mathcal{O}(d^5\eps^{-2})$. 	
    \item The explicit constants \eqref{eqn: constant frakC_1}--\eqref{eqn: constant frakC_3} are not likely to be sharp since we did not optimize every inequality when bounding the number of elementary gates used to construct the quantum circuits in Section \ref{section: loading CPWA circuits}.
\end{enumerate}
\end{remark}

\section{Numerical Simulations}\label{sec: numerics}
In this section, we discuss the implementation of the proposed quantum algorithm, and illustrate its numerical performance on three concrete European options introduced in Example~\ref{example: call options} in dimension two. 
We have  developed  a package we named \texttt{qfinance} using the \texttt{Qiskit} framework to implement our proposed algorithm. The \texttt{OptionPricing} class within this package enables the user to input  the stock parameters 
(i.e.\ volatility $\sigma$, correlation $\rho$, interest rate $r$, current time $t$, spot price $\bm{S}_t$, and maturity $T$), select from any of the six classes of continuous piecewise affine payoff functions $h$ presented in Example~\ref{example: call options}, together with its corresponding parameters (i.e.\ weights and strikes), as well as specify the error tolerance $\varepsilon$ in order to obtain an approximated value for the solution $u(t,\bm{x})$ of the PDE \eqref{eqn: PDE} with payoff function $h$ which corresponds to the fair price of the option $h$ at time $t$ given spot price~$\bm{S}_t=\bm{x}$.

In our presented numerical simulations, we implement Algorithm~\ref{quantum algorithm} to approximate the expected option payoff $u(t,\bm{x})$ at time $t=0$, see \eqref{eqn: option price formula} for the analytic expression.
We used two assets with each asset having initial spot price of $x= S_0 =2.0$, volatility of $\sigma=0.4$, an annual risk-free interest rate of $r=0.04$, and a time to maturity of $T=40$ days,
and where the assets are correlated to each other with a correlation coefficient of $\rho=0.2$. These asset parameters are processed to produce an
	expected future price of $\mu_{S_T}:=\mathbb{E}[S_T] \approx 2.00879$ with a standard deviation $\sigma_{S_T}\approx 0.267168$. 

All numerical experiments presented in this section were implemented using IBM's quantum computing toolkit \texttt{Qiskit} \cite{Qiskit}. We utilized the \texttt{LogNormalDistribution} class from \texttt{Qiskit} to load the multivariate log-normal distribution on the quantum computer. We discretize each asset price $S_T$ using three qubits, 
hence into eight distinct values on equally spaced points in the interval $[\max\{\mu_{S_T}- 3\sigma_{S_T},0\}, \mu_{S_T}+3\sigma_{S_T}]$. Namely, $\ket{000}$ is mapped to $\max\{\mu_{S_T}-3\sigma_{S_T},0\} $ and $\ket{111}$ is mapped to $\mu_{S_T}+3\sigma_{S_T}$ in the natural binary order.

The integration of the distribution loading quantum circuit and the payoff quantum circuit forms the quantum circuit $\mathcal{A}$, which is used as the input for the Quantum Amplitude Estimation (QAE) algorithm. We employ the Modified Iterative Quantum Amplitude Estimation (Modified IQAE) in \cite[Algorithm 1]{fukuzawa2022modified}, with the parameters $\alpha = 0.005$ and $\epsilon = 0.001$; see Proposition~\ref{prop: m-IQAE}. For the numerical results, \textit{Estimated (mid)} corresponds to  $\widetilde{U}_{t,\bm{x}}$ (see line 8 of Algorithm~\ref{quantum algorithm} for the precise rescaling of $\widehat{a}$) whereas \textit{Estimated (high)} and \textit{Estimated (low)} correspond to the same scaling as line~8 of Algorithm~\ref{quantum algorithm} but with respect to $a_u$ and $a_\ell$, respectively.    All numerical experiments in this section were implemented in \texttt{Python} using \texttt{Qiskit} on an Ubuntu 22.04 machine with a AMD Ryzen 9 5950X @ 3.875GHz CPU with 64GB of RAM and Nvidia GeForce RTX3090 GPU. The source codes are available at \url{https://github.com/jianjun-dot/quantum-finance}.

\subsection{Basket Call Option}\label{sec: basket call example}
We consider  the basket call option with two assets 
and strike $\kappa$
where the weight of each asset is set to one, i.e.,
\begin{align}
	h(x_1,x_2) = \max\{x_1+x_2-\kappa, 0\}.
\end{align}
The results of the algorithm across a range of strike prices are shown in Figure~\ref{fig:basket_call_estimates}. We see that the estimated expected payoffs align closely with the reference values, which are computed based on the probabilities in \texttt{LogNormalDistribution}.
\begin{figure}[h!]
	\centering
	\includegraphics[width=0.5\linewidth]{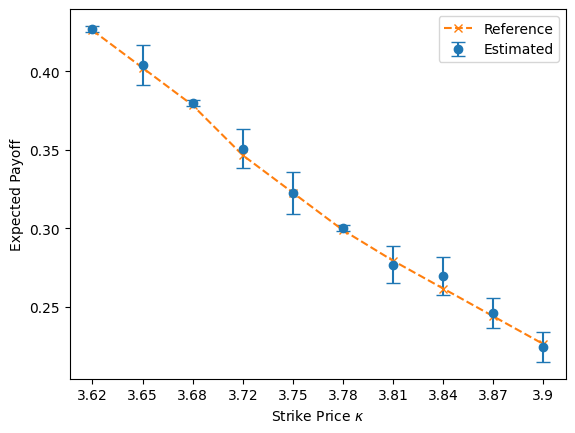}
	\caption{Expected payoff estimates from our proposed algorithm for the basket call option across a range of strike prices, compared with the 
		 reference 
		expected payoff. The tested strike prices are labeled on the horizontal axis.}
	\label{fig:basket_call_estimates}
\end{figure}
\begin{table}[h!]
	\centering
	\begin{tabular}{|c|c|c|c|c|}
		\hline
		Strike Price $\kappa $ & Reference & Estimated (mid) & Estimated (high) & Estimated (low)\\
		\hline
		3.62&0.426180&0.427173&0.428978&0.425367\\
		3.65&0.402294&0.404174&0.416717&0.391631\\
		3.68&0.378408&0.379875&0.381912&0.377837\\
		3.72&0.346561&0.350753&0.363412&0.338095\\
		3.75&0.322675&0.322466&0.335617&0.309314\\
		3.78&0.298789&0.300295&0.302123&0.298467\\
		3.81&0.279313&0.276816&0.288819&0.264814\\
		3.84&0.261602&0.269334&0.281401&0.257267\\
		3.87&0.243890&0.246027&0.255811&0.236242\\
		3.90&0.226178&0.224180&0.233560&0.214800\\
		\hline
	\end{tabular}
	\caption{Numerical results for basket call options}
	\label{tab:basket_call_options_table}
\end{table}
\FloatBarrier

\subsection{Call-on-min Option}\label{sec: callonmin example}
We consider the two-asset call-on-min option with the following payoff function 
\begin{align}
	h(x_1,x_2) = \max\{\min\{x_1, x_2\} - \kappa,0\}.
\end{align}
 The performance of our proposed algorithm is evaluated across a range of strike prices, with the results shown in Figure \ref{fig:call_on_min_estimate}. 
\begin{figure}[h!]
	\centering
	\includegraphics[width=0.5\linewidth]{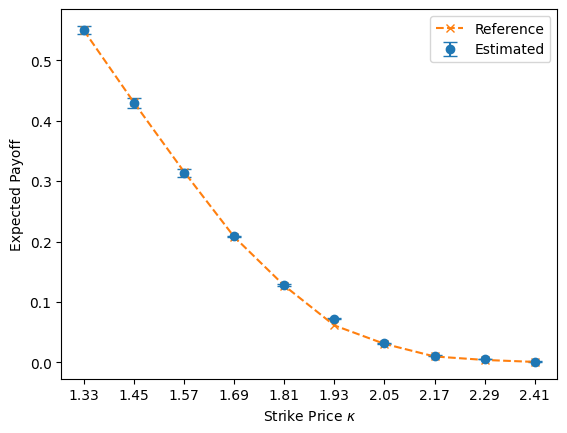}
	\caption{Expected payoff estimates from the algorithm for the call-on-min option across a range of strike prices, compared with the 
		reference 
		expected payoff. The tested strike prices are  labeled on the horizontal axis.}
	\label{fig:call_on_min_estimate}
\end{figure}
\begin{table}[h!]
	\centering
	\begin{tabular}{|c|c|c|c|c|}
		\hline
		Strike Price $\kappa $ & Reference & Estimated (mid) & Estimated (high) & Estimated (low)\\
		\hline
		1.33&0.549242&0.549779&0.556598&0.542961\\
		1.45&0.429959&0.429376&0.437922&0.420830\\
		1.57&0.315390&0.313495&0.319649&0.307340\\
		1.69&0.207789&0.208521&0.209452&0.207590\\
		1.81&0.127052&0.128095&0.129009&0.127182\\
		1.93&0.061264&0.072553&0.073112&0.071993\\
		2.05&0.030753&0.031998&0.032863&0.031132\\
		2.17&0.009879&0.011256&0.012077&0.010435\\
		2.29&0.004124&0.005356&0.005689&0.005022\\
		2.41&0.000863&0.001378&0.001902&0.000854\\
		\hline
	\end{tabular}
	\caption{Numerical results for call-on-min options}
	\label{tab:call_on_min_options_table}
\end{table}
\FloatBarrier
\subsection{Best-of-call Option}\label{sec: bestofcall example}
The best-of-call option is the most complex option among the examples discussed, as it involves multiple assets with multiple strike prices. 
Here, we consider a two assets, two strike prices best-of-call option, with the following payoff function
\begin{align}
	h(x_1,x_2) = \max\{x_1-\kappa_1, x_2-\kappa_2,0\}.
\end{align}
We tested our proposed algorithm across a range of strike prices, with the results shown in Figure \ref{fig:best_of_call_estimates}. 

\begin{figure}[h!]
	\centering
	\includegraphics[width=0.5\linewidth]{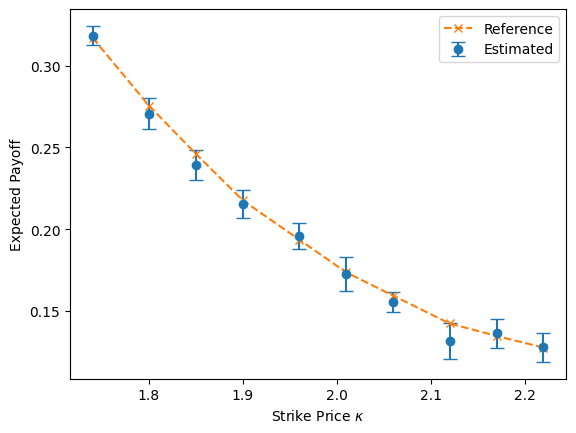}
	\caption{Expected payoff estimates from the algorithm for the best-of-call option across a range of strike prices, compared with the reference expected payoff. The strike price of the first asset is fixed, while the strike price of the second assets is varied. }
	\label{fig:best_of_call_estimates}
\end{figure}
\begin{table}[h!]
	\centering
	\begin{tabular}{|c|c|c|c|c|}
		\hline
		\multicolumn{5}{|c|}{Second strike price fixed at $\kappa_2=2.01$}\\
		\hline
		Strike Price $\kappa_1 $ & Reference & Estimated (mid) & Estimated (high) & Estimated (low)\\
		\hline
		1.74&0.317163&0.318514&0.324408&0.312619\\
		1.80&0.275635&0.270604&0.280077&0.261131\\
		1.85&0.246069&0.239256&0.248482&0.230031\\
		1.90&0.217595&0.215355&0.223858&0.206853\\
		1.96&0.193578&0.195811&0.203952&0.187671\\
		2.01&0.173564&0.172656&0.183125&0.162187\\
		2.06&0.159330&0.155583&0.161787&0.149379\\
		2.12&0.142249&0.131429&0.142335&0.120523\\
		2.17&0.134635&0.136207&0.145155&0.127260\\
		2.22&0.127487&0.127613&0.136500&0.118725\\
		\hline
	\end{tabular}
	\caption{Numerical results for best-of-call options}
	\label{tab:best_of_call_options_table}
\end{table}
\subsection{Discussion on the numerical simulations and their possible extension to higher dimensions}
We see from the numerical results that the algorithm meets our performance expectations. Compared to the results in \cite{QC4_optionpricing,QC6_risk_analysis}, our work extends the application to cover more complex call options, including 
multi-asset call options such as 
call-on-min and best-of-call options. New circuits are constructed to handle these options, incorporating components such as subtraction subroutines and comparison subroutines.

Currently, our implementation is limited to options involving two variables. Nevertheless, the framework can be easily extended to multiple variables through the integration of multiple $\mathcal{Q}_{(\mathrm{comp})}$ subroutines (later introduced in Lemma~\ref{lemma: quantum comp}) for variable comparisons. To discuss the involving steps more in detail, let us consider, e.g., the extension of the best-of-call option from two variables to three variables. The payoff function of the three variable case is defined as:
\begin{align}
	h(x_1,x_2,x_3)=\max\{x_1-\kappa_1, x_2-\kappa_2, x_3 - \kappa_3, 0\}
\end{align}
which can be decomposed into multiple nested two-variable comparisons:
\begin{align}
	h(x_1,x_2,x_3) = \max\{\max\{\max\{x_1-\kappa_1, x_2-\kappa_2\},  x_3-\kappa_3\}, 0\}.
\end{align}
This approach systematically extends the circuit from the two-variable case. For each variable, an ancilla register is added to load the corresponding strike price. Three different subtraction subroutines are applied on the corresponding variable and its ancilla, resulting in three different registers with that stores $\ket{x_1-\kappa_1}$, $\ket{x_2-\kappa_2}$, $\ket{x_3-\kappa_3}$. To determine the maximum of three registers, we can use two comparison operators. The first $\mathcal{Q}_{(\mathrm{comp})}$ operation compares registers $\ket{x_1-\kappa_1}$ and $\ket{x_2-\kappa_2}$, storing the comparison results in the ancilla register $\ket{c_1}$. By controlling on the results of the ancilla register $\ket{c_1}$, a second $\mathcal{Q}_{(\mathrm{comp})}(\max\{x_1-\kappa_1,x_2-\kappa_2\}, x_3-\kappa_3)$ operation then compares the result $\max\{x_1-\kappa_1,x_2-\kappa_2\}$ from the first comparison with $x_3-\kappa_3$, and stores the result in another ancilla register $\ket{c_2}$. Using the outcomes of the two comparison operators $\mathcal{Q}_{(\mathrm{comp})}(x_1-\kappa_1,x_2-\kappa_2)$ and $\mathcal{Q}_{(\mathrm{comp})}(\max\{x_1-\kappa_1,x_2-\kappa_2\}, x_3-\kappa_3)$, the largest value can be identified as outlined Table \ref{tab:ancilla_results_two_comparison}. Accordingly, by controlling the circuit based on the comparison results, the appropriate register can be selected to compute the expected payoff. By iteratively applying this nested $\mathcal{Q}_{(\mathrm{comp})}$ strategy, the algorithm can be generalized to calculate the payoff for any $d$ variables.

 For general continuous piecewise affine payoff functions, we expect a quadratic increase (ignoring logarithmic factors) with respect to the dimension $d$ in the number of qubits resource requirement due to the following two reasons -- the number of comparison operations required is $O(d)$ under Assumption~\ref{assumption: CPWA} and the computation for each affine sum (i.e. $\bm{x} \mapsto \bm{a}_{k,l}\cdot \bm{x}+b_{k,l}$ in \eqref{eqn: CPWA payoff}) requires $O(d)$ arithmetic operations and storage. We refer to equation~\eqref{eqn: bound on N} for the amount $N$ of qubits needed to upload the (discretized version of the) continuous piecewise affine payoff function, which under Assumption~\ref{assumption: CPWA} satisfies $N=O(d^2)$. This and line 6 of Algorithm~\ref{quantum algorithm} hence imply that the amount of qubits needed for Algorithm~\ref{quantum algorithm} scales quadratically in the dimension $d$ of the PDE. 

We highlight that the presented steps involving nested $\mathcal{Q}_{(\mathrm{comp})}$ can be similarly applied to other options presented in Example~\ref{example: call options}.

However, current quantum computing hardware are still too nascent to handle complex computations involving multiple variables. 
Thus, given the computational constraints, only two variables systems are implemented and tested in the package. As compute power increases, we can use the method described to extend the circuit to accommodate more variables.

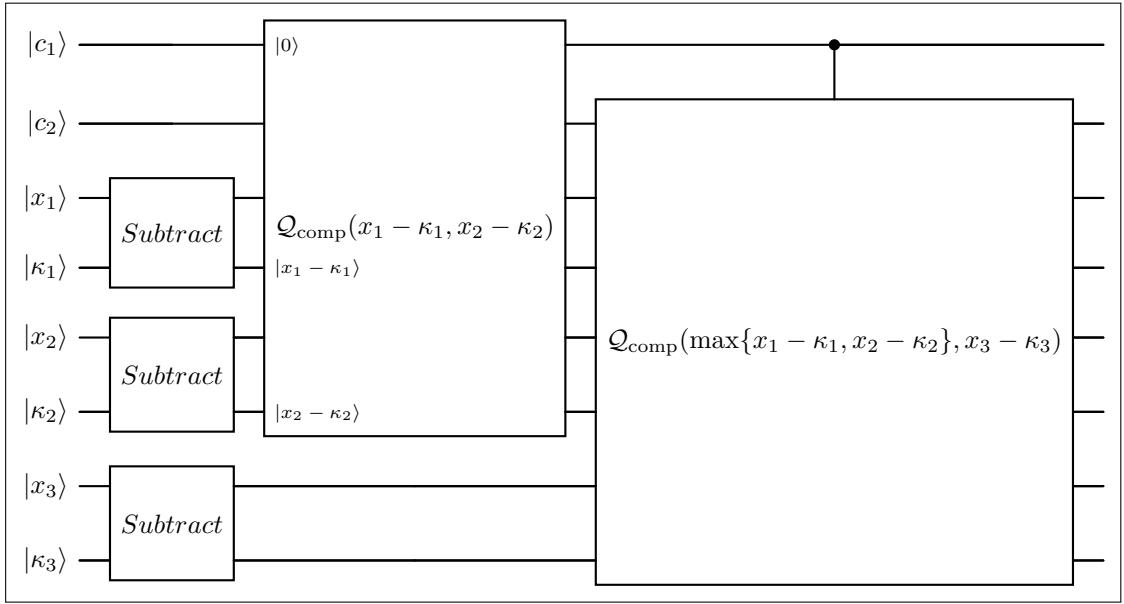
\begin{figure}[h]
	\centering
	\boxed{
		\begin{quantikz}[transparent, column sep = 0.4cm, row sep = 0.4cm]
			\lstick{$\ket{c_1}$}\qw&\qw& \gate[wires=6]{\mathcal{Q}_{(\mathrm{comp})}(x_1-\kappa_1,x_2-\kappa_2)}\gateinput{$\ket{0}$}& \ctrl{1} &\qw\\
			\lstick{$\ket{c_2}$}\qw&\qw& &\gate[wires=7]{\mathcal{Q}_{(\mathrm{comp})}(\max\{x_1-\kappa_1,x_2-\kappa_2\}, x_3-\kappa_3)}&\qw\\
			\lstick{$\ket{x_1}$} \qw& \gate[wires=2]{Subtract} & \qw&\qw&\qw\\
			\lstick{$\ket{\kappa_1}$} \qw& \qw& \qw\gateinput{$\ket{x_1-\kappa_1}$}&\qw&\qw\\
			\lstick{$\ket{x_2}$} \qw&\gate[wires=2]{Subtract} & \qw&\qw&\qw\\
			\lstick{$\ket{\kappa_2}$} \qw& \qw& \qw\gateinput{$\ket{x_2-\kappa_2}$}&\qw&\qw\\
			\lstick{$\ket{x_3}$} \qw&\gate[wires=2]{Subtract} & \qw&\qw&\qw \\
			\lstick{$\ket{\kappa_3}$} \qw& \qw& \qw&\qw&\qw
		\end{quantikz}
	}
	\caption{Description of the circuit to compare three variables. The first $\mathcal{Q}_{(\mathrm{comp})}(x_1-\kappa_1,x_2-\kappa_2)$ compares the subtraction result stored in the register $\ket{\kappa_1}$ and $\ket{\kappa_2}$, and the result of the comparison is store in register $\ket{c_1}$. The second comparison $\mathcal{Q}_{(\mathrm{comp})}(\max\{x_1-\kappa_1,x_2-\kappa_2\}, x_3-\kappa_3)$ is then controlled on the results from the first comparison, thereby selecting the correct register for the second comparison. The result of the second comparison is then stored in $\ket{c_2}$.}
\end{figure}

\begin{table}[h]
	\centering
	\begin{tabular}{|c|c|c|} 
		\hline
		Measurement results for $\ket{c_1}_3$ & Measurement results for $\ket{c_2}_3$ & Maximal value \\
		\hline
		100 & 100 & $x_3-\kappa_3$ \\ 
		100 & 010 & $x_2-\kappa_2$ \\ 
		100 & 001 & $x_2-\kappa_2$ \\
		010 & 100 & $x_3-\kappa_3$ \\
		010 & 010 & $x_1-\kappa_1$ \\
		010 & 001 & $x_1-\kappa_1$ \\
		001 & 100 & $x_3-\kappa_3$ \\
		001 & 010 & $x_1-\kappa_1$ \\
		001 & 001 & $x_1-\kappa_1$ \\
		\hline
	\end{tabular}
	\caption{The measurement results for the corresponding $\mathcal{Q}_{(\mathrm{comp})}$ operations, and the identified maximal value among the three variables. The ancilla register $\ket{c_1}$ holds the result for the first comparison $\mathcal{Q}_{(\mathrm{comp})}(x_1-\kappa_1,x_2-\kappa_2)$ between $x_1-\kappa_1$ and $x_2-\kappa_2$, whereas $\ket{c_2}$ holds the result of the second comparison $\mathcal{Q}_{(\mathrm{comp})}(\max\{x_1-\kappa_1,x_2-\kappa_2\}, x_3-\kappa_3)$. For cases where the two variables are equal, the first variable is selected as the maximal value. The ancilla results can locate the maximal value for computing the payoff in the subsequent parts of the circuit.}
	\label{tab:ancilla_results_two_comparison}
\end{table}
\FloatBarrier

\section{Quantum Circuits}\label{sec:quantum_circuits}
In Section \ref{section: two complement's}, we introduce the so-called \textit{two's complement method} for representing signed dyadic rational numbers on a bounded interval using binary strings of finite length. These binary strings correspond to the grid points on the truncated interval. The binary strings are represented by the qubits on the quantum computer, in the form of linear combinations of the computational basis states $\ket{i}_n = \ket{i_1}\ket{i_2}\ldots\ket{i_n}$, where $i = (i_1,\ldots,i_n) \in \{0,1\}^n$. In Section \ref{section: quantum arithmetic circuits}, we describe how quantum circuits are constructed to perform arithmetic operations on two complement's numbers. In Section \ref{section: distribution quantum circuit}, we assume that the discretized multivariate log-normal distribution can be loaded on the quantum computer with complexities comparable to the estimates in \cite{chakrabarti2021threshold}. In Section \ref{section: loading CPWA circuits},  the approximate option payoff function is loaded, by using quantum circuits that perform arithmetic operations on numbers represented by the two's complement method. 

\subsection{Representing signed dyadic rationals using the two's complement method}\label{section: two complement's}
The two's complement method is a way of representing signed integers on a computer using binary strings, see e.g. Chapter 2.2 \cite{bryant2003computer} for an introduction to the subject. We first describe the representation of signed integers using the two's complement method. For a given $n \in \N$ and for an integer $x \in [-2^{n-1},2^{n-1}-1] \cap \Z$, we encode $x$ in the two's complement method by a $n$-bit string denoted by $(x_{n-1},x_{n-2}, \cdots, x_0) \in \{0,1\}^n$. The value of the integer $x$ is converted from the bit string $(x_{n-1},\cdots, x_0)$ by the following formula
\begin{equation}
    x = - x_{n-1} 2^{n-1} + \sum_{k=0}^{n-2} x_k 2^k. \label{eqn: integer conversion}
\end{equation}
The most significant bit (MSB) is the bit $x_{n-1} \in \{0,1\}$ which determines the sign of $x$. There are classical computer algorithms for performing arithmetic operations (such as addition and multiplication) in the two's complement representation, such as the \textit{carry adder algorithm} and \textit{Booth's multiplication algorithm} \cite{bryant2003computer}. Numbers with a fractional part can also be represented using the two's complement method. This can be done by introducing the \textit{radix point} (commonly referred as the decimal point in decimal expansion) to separate the integer part and fractional part. The additional bits are mapped to the dyadics $2^{-m}, m \in \N$ to represent the fractional part of a number. 

\begin{definition}[Two's complement representation]\label{def: two complement}
Let $n\in \N$, $m \in \N_0$. For $m \geq 1$, we define the following set of $(n,m)$-bit strings by 
\begin{equation}
    \mathbb{F}_{n,m} := \{0,1\}^n \times \{0,1\}^m := \{((x_{n-1},x_{n-2},\ldots,x_0),(x_{-1},\ldots,x_{-m})) \in \{0,1\}^n \times \{0,1\}^m\},
\end{equation}
and the set of dyadic rational numbers on a closed interval by
\begin{equation}
    \mathbb{K}_{n,m} := [-2^{n-1},2^{n-1} - 2^{-m}]\cap 2^{-m}\Z := \{-2^{n-1},-2^{n-1}+2^{-m},-2^{n-1}+2\cdot 2^{-m},\ldots,2^{n-1}-2^{-m}\}.
\end{equation}
Further, we denote 
\begin{equation}
	\mathbb{K}_{n,m,+} := \mathbb{K}_{n,m} \cap [0,\infty), \quad \text{and} \quad \mathbb{K}_{n,m,-} := \mathbb{K}_{n,m} \cap (-\infty,0),
\end{equation}
and we denote 
\begin{equation}
	\begin{split}
		&\mathbb{F}_{n,m,+} := \{((0,x_{n-2},\ldots, x_0),(x_{-1},\ldots,x_{-m})): x_{n-2},\ldots,x_{-m} \in \{0,1\}\}, \quad \text{and}\\
		&\mathbb{F}_{n,m,-} := \{((1,x_{n-2},\ldots, x_0),(x_{-1},\ldots,x_{-m})): x_{n-2},\ldots,x_{-m} \in \{0,1\}\}.
	\end{split}
\end{equation}
If $m=0$, we then use the usual signed integers $\mathbb{K}_{n,0} := [-2^{n-1},2^{n-1}-1] \cap \Z$ and the set of $n$-bit strings $\mathbb{F}_{n,0} := \{(x_{n-1},\cdots,x_0) \in \{0,1\}^n\}$, and define $\mathbb{K}_{n,0,\pm}$ and $\mathbb{F}_{n,0,\pm}$ analogously. 
\end{definition}
\begin{definition}[Encoder and decoder maps]\label{def: encoder-decoder}
Let $n \in \N$, $m \in \N_0$. We define the encoder function which maps the rational numbers to bit strings by 
\begin{equation}
\begin{split}
    \mathrm{E}_{n,m}:\ &\mathbb{K}_{n,m} \longrightarrow \mathbb{F}_{n,m}\\
    &y \mapsto ((x_{n-1},x_{n-2},\ldots,x_0),(x_{-1},\ldots,x_{-m}))
\end{split}
\end{equation}
where we define $\mathrm{E}_{n,m}(y) = ((x_{n-1},x_{n-2},\ldots,x_0),(x_{-1},\ldots,x_{-m}))$ recursively by
\begin{equation}
\begin{split}
    &x_{n-1} = \begin{cases} 1,\quad \text{if $y < 0$,} \\ 0,\quad \text{if $y \geq 0$,}\end{cases}\\
    &x_{n-2} = \begin{cases}1,\quad \text{if $-x_{n-1}2^{n-1} + 2^{n-2} \leq y$},\\ 0, \quad \text{if $-x_{n-1}2^{n-1} + 2^{n-2} > y$}, \end{cases}\\
\end{split}
\end{equation}    
and for $k=n-3,n-4,\ldots,-m$,
\begin{equation}
    x_k = \begin{cases}1,\quad \text{if $-x_{n-1}2^{n-1} + \displaystyle\sum_{j=k+1}^{n-2} x_j 2^j + 2^k \leq y$},\\
        0,\quad \text{if $-x_{n-1}2^{n-1} + \displaystyle\sum_{j=k+1}^{n-2} x_j 2^j + 2^k > y$}.\end{cases} 
\end{equation}
We define the decoder function which maps the bit strings to the rational numbers by
\begin{equation}
\begin{split}
    \mathrm{D}_{n,m}:\ &\mathbb{F}_{n,m} \longrightarrow \mathbb{K}_{n,m} \\
    &((x_{n-1},x_{n-2},\ldots,x_0),(x_{-1},\ldots,x_{-m})) \mapsto -x_{n-1} 2^{n-1} + \sum_{k=-m}^{n-2} x_k 2^k.
\end{split}
\end{equation}
\end{definition}
The sets $\mathbb{F}_{n,m}$ and $\mathbb{K}_{n,m}$ are equivalent in the following sense.
\begin{proposition}[Bijection between $(n,m)$-bit strings and dyadics]\label{prop: isomorphism for TC}
Let $n \in \N$, $m \in \N_0$. The sets $\mathbb{F}_{n,m}$ and $\mathbb{K}_{n,m}$ (c.f. Definition \ref{def: two complement}) have the same finite cardinality, and the encoder and decoder functions $\mathrm{E}_{n,m}:\mathbb{K}_{n,m} \longrightarrow \mathbb{F}_{n,m}$ and $\mathrm{D}_{n,m}:\mathbb{F}_{n,m} \longrightarrow \mathbb{K}_{n,m}$ (c.f. Definition \ref{def: encoder-decoder}) are bijective and inverses of the other. 

\begin{proof}
For the first part of the statement, by observing that
\begin{equation}
    \mathbb{K}_{n,m} = \{j \cdot 2^{-m}: j=0,1,\ldots,2^{n+m-1}-1 \} \cup \{-j\cdot 2^{-m}: j = 1,\ldots, 2^{n+m-1}\},
\end{equation}
it follows that
\begin{align}
    \#\mathbb{K}_{n,m} =  2 \cdot 2^{n+m-1} = 2^n \cdot 2^{m} =\#(\{0,1\}^n \times \{0,1\}^m) = \#\mathbb{F}_{n,m},
\end{align}
where we denote by $\#A$ the cardinality of a set $A$. This shows that the two sets $\mathbb{F}_{n,m}$ and $\mathbb{K}_{n,m}$ have the same cardinality. Injectivity is clear from Definition \ref{def: encoder-decoder}, and bijectivity follows from injectivity since both sets have same finite cardinality.
\end{proof}

\end{proposition}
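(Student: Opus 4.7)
The plan is to verify the cardinality claim by direct counting, then establish bijectivity of $\mathrm{D}_{n,m}$ via injectivity (which upgrades to bijectivity since both sets are finite and equinumerous), and finally check that $\mathrm{E}_{n,m}$ is a two-sided inverse via an invariant-based induction on the recursive definition.

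For cardinality, $\#\mathbb{F}_{n,m} = 2^{n+m}$ is immediate, while the elements of $\mathbb{K}_{n,m}$ form an arithmetic progression from $-2^{n-1}$ to $2^{n-1} - 2^{-m}$ with common difference $2^{-m}$, so $\#\mathbb{K}_{n,m} = (2^n - 2^{-m})/2^{-m} + 1 = 2^{n+m}$. Next, I would verify that $\mathrm{D}_{n,m}$ maps into $\mathbb{K}_{n,m}$: each summand lies in $2^{-m}\Z$, and the extremal bit strings $(1,0,\dots,0)$ and $(0,1,\dots,1)$ attain the endpoints $-2^{n-1}$ and $2^{n-1}-2^{-m}$. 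For injectivity of $\mathrm{D}_{n,m}$, split on the sign bit $x_{n-1}$: distinct sign bits place the two images in the disjoint intervals $[-2^{n-1}, -2^{-m}]$ and $[0, 2^{n-1}-2^{-m}]$, so equality is impossible; equal sign bits reduce the question to uniqueness of binary expansion of a non-negative dyadic in $[0, 2^{n-1}-2^{-m}]$, which is a standard induction on the leading bit. Equal cardinality then upgrades injectivity to bijectivity of $\mathrm{D}_{n,m}$.

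It remains to show $\mathrm{D}_{n,m}\circ \mathrm{E}_{n,m} = \mathrm{id}_{\mathbb{K}_{n,m}}$. Fix $y \in \mathbb{K}_{n,m}$ and denote by
\begin{equation*}
S_k := -x_{n-1}2^{n-1} + \sum_{j=k}^{n-2} x_j 2^j
\end{equation*}
the partial sum produced by the recursion after assigning bits $x_{n-1},\dots,x_k$. I would prove by downward induction on $k \in \{n-1, n-2, \dots, -m\}$ the invariant
\begin{equation*}
0 \le y - S_k < 2^k.
\end{equation*}
The base case $k = n-1$ splits on the sign of $y$ and uses the range $y \in [-2^{n-1}, 2^{n-1}-2^{-m}]$, and the inductive step follows immediately from the greedy rule that sets $x_k = 1$ precisely when $S_{k+1} + 2^k \le y$. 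Evaluating the invariant at $k = -m$ yields $0 \le y - S_{-m} < 2^{-m}$, and since $y, S_{-m} \in 2^{-m}\Z$ this forces $S_{-m} = y$, which is exactly $\mathrm{D}_{n,m}(\mathrm{E}_{n,m}(y))$. Combined with bijectivity of $\mathrm{D}_{n,m}$, this gives $\mathrm{E}_{n,m} = \mathrm{D}_{n,m}^{-1}$. The main obstacle is identifying the correct invariant $0 \le y - S_k < 2^k$; once it is in place, both the injectivity argument for $\mathrm{D}_{n,m}$ and the correctness of the greedy encoder reduce to routine induction.
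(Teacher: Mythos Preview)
Your proposal is correct and follows the same core strategy as the paper---count both sets, establish injectivity of one of the maps, and invoke finite equinumerosity to conclude bijectivity. The paper's own proof is considerably more terse: it counts $\#\mathbb{K}_{n,m}$ by splitting into nonnegative and negative parts (you use the arithmetic-progression formula instead, equivalently), and then simply asserts that ``injectivity is clear from Definition~\ref{def: encoder-decoder}'' without saying which map or why, and without explicitly verifying the inverse relation. Your proof fills in exactly the parts the paper leaves implicit: you check that $\mathrm{D}_{n,m}$ lands in $\mathbb{K}_{n,m}$, you give an actual argument for its injectivity via the sign-bit split and uniqueness of binary expansion, and---most substantively---you prove $\mathrm{D}_{n,m}\circ\mathrm{E}_{n,m}=\mathrm{id}$ by the greedy invariant $0\le y-S_k<2^k$, which is the right way to make the recursive definition of $\mathrm{E}_{n,m}$ rigorous. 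So the two proofs are the same in spirit, but yours actually discharges the ``inverses of each other'' claim in the statement rather than leaving it to the reader.
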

In the later sections, we will use the notation $\mathbb{F}_{n,m}^k := \underbrace{\mathbb{F}_{n,m}\times\cdots\times \mathbb{F}_{n,m}}_{\text{$k$-times}} $ for any $k \in \N$. 
The arithmetic algorithms in two's complement (TC) representation for signed rational numbers can be extended from the arithmetic algorithms on the two's complement signed integers. The modifying process is done by shifting the fractional bits to the integer bits, applying the integer arithmetic algorithms, then shifting the integer bits back to fractional bits. The proofs in the two following lemmas provide the extension procedure.

\begin{lemma}[Addition in two's complement]\label{lemma: addition in TC}
Let $n_1,n_2 \in \N$, and let $n := \max\{n_1,n_2\}$. Let $\boxplus:\mathbb{F}_{n_1,0} \times \mathbb{F}_{n_2,0} \to \mathbb{F}_{n+1,0}$ be the addition algorithm for integers represented in the two's complement method. Then, for any $m_1,m_2 \in \N_0$ with $m:= \max\{m_1,m_2\}$, there is a natural extension of the addition algorithm to the rational numbers represented in the two's complement method where ${\boxplus}:\mathbb{F}_{n_1,m_1} \times \mathbb{F}_{n_2,m_2} \to \mathbb{F}_{n+1,m}$, such that for any $x\in \mathbb{F}_{n_1,m_1}$, $y\in \mathbb{F}_{n_2,m_2}$, there is an unique element $x \ {\boxplus}\ y \in \mathbb{F}_{n+1,m}$ that satisfies 
\begin{equation}\label{boxplus}
    x\ {\boxplus}\ y = \mathrm{E}_{n+1,m}(\mathrm{D}_{n_1,m_1}(x) + \mathrm{D}_{n_2,m_2}(y)).
\end{equation}
\begin{proof}
First, consider the case where both $m_1$ and $m_2$ are positive. Let $x = ((x_{n_1 -1},\ldots,x_0),(x_{-1},\ldots,x_{-m_1})) \in \mathbb{F}_{n_1,m_1}$ and $y = ((y_{n_2 -1},\ldots,y_0),(y_{-1},\ldots,y_{-m_2})) \in \mathbb{F}_{n_2,m_2}$ be given. For every $p,q,r \in \N$ with $r\geq q$, define a left-shift operator $\tau_r: \mathbb{F}_{p,q} \to \mathbb{F}_{p+r,0}$ defined by
\begin{equation}\label{eqn: shift operator}
    \tau_r: ((z_{p-1},\ldots,z_0),(z_{-1},\ldots,z_{-q})) \mapsto (z_{p-1},\ldots,z_0,z_{-1},\ldots,z_{-q},\underbrace{0,\ldots,0}_{\text{$(r-q)$-times}}) .
\end{equation}
Then, it holds that
\begin{equation}
    \tau_m(x) = (x_{n_1-1},\ldots,x_0,x_{-1},\ldots,x_{-m_1},\ldots,x_{-m}) \in \mathbb{F}_{n_1+m,0},
\end{equation}
\begin{equation}
    \tau_m(y) = (y_{n_2-1},\ldots,y_0,y_{-1},\ldots,y_{-m_2},\ldots,y_{-m}) \in \mathbb{F}_{n_2+m,0},
\end{equation}
where $x_{-k} = 0$ for $k=m_1 +  1,\ldots,m$ if $m_1 < m$ and $y_{-l}= 0$ for $l=m_2 + 1,\ldots,m$ if $m_2 < m$. Hence, we may apply the integer addition algorithm and get an output $\tau_m(x) \boxplus \tau_m(y) \in \mathbb{F}_{n+1+m,0}$. Then, for every $p,r \in \N$ with $r \leq p$, we define a right-shift operator $\tau_{-r}: \mathbb{F}_{p,0} \to \mathbb{F}_{p-r,r}$ defined by
\begin{equation}
    \tau_{-r}:(z_{p-1},\ldots,z_0) \mapsto \left((z_{p-1},\ldots,z_r),(z_{r-1},\ldots,z_0)\right).
\end{equation}
Let $\tau_m(x) \boxplus \tau_m(y) = z = (z_{n+m},z_{n+m-1},\ldots,z_0)$, for some bit string $z \in \mathbb{F}_{n+m+1,0}$. Then, we have
\begin{equation}
    \tau_{-m}(\tau_m(x) \boxplus \tau_m(y)) = ((z_{n+m},z_{n+m-1},\ldots,z_m),(z_{m-1},\ldots,z_{0})) \in \mathbb{F}_{n+1,m}.
\end{equation}
Furthermore, it holds that $\mathrm{D}_{n_1+m,0}(\tau_m(x)) = 2^m\mathrm{D}_{n_1,m_1}(x)$ and $\mathrm{D}_{n_2+m,0}(\tau_m(x)) = 2^m\mathrm{D}_{n_2,m_2}(y)$. This implies that
\begin{equation}
    x\ {\boxplus}\ y  := \tau_{-m}(\tau_m(x) \boxplus \tau_m(y)) = \tau_{-m} \mathrm{E}_{n+m+1,0}(2^m(\mathrm{D}_{n_1,m_1}(x)+ \mathrm{D}_{n_2,m_2}(y))) = \mathrm{E}_{n+1,m}(\mathrm{D}_{n_1,m_1}(x) + \mathrm{D}_{n_2,m_2}(y)).
\end{equation}
Hence we have shown \eqref{boxplus}.
The cases where $m_1$ and/or $m_2$ equals to zero follow analogously.

\end{proof}
\end{lemma}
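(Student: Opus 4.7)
The plan is to reduce addition of signed dyadic rationals to the integer addition algorithm $\boxplus$ already provided on two's complement integers, by temporarily treating fractional bits as integer bits through a shift. Concretely, given inputs $x \in \mathbb{F}_{n_1, m_1}$ and $y \in \mathbb{F}_{n_2, m_2}$, I would first pad whichever of the two fractional parts is shorter with trailing zeros so that both inputs have exactly $m := \max\{m_1, m_2\}$ fractional bits; this padding preserves the decoded value because the additional bits contribute $0 \cdot 2^{-k}$. Next I would define a left-shift $\tau_m:\mathbb{F}_{p,q}\to \mathbb{F}_{p+m,0}$ that simply relabels the $m$ low-order fractional bits as integer bits (appending zero bits if $q<m$); by direct computation from Definition \ref{def: encoder-decoder}, its defining property is
\[
\mathrm{D}_{p+m,0}(\tau_m(z)) \;=\; 2^m\,\mathrm{D}_{p,q}(z).
\]

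Given this shift, I would set
\[
x \boxplus y \;:=\; \tau_{-m}\bigl(\tau_m(x) \boxplus \tau_m(y)\bigr),
\]
where the inner $\boxplus$ is the given integer addition, producing an element of $\mathbb{F}_{n+1+m,0}$, and $\tau_{-m}:\mathbb{F}_{n+1+m,0}\to \mathbb{F}_{n+1,m}$ simply moves the last $m$ bits back into the fractional register. Correctness then follows by chaining decoder identities: the hypothesis on integer $\boxplus$ gives a bit string that decodes to
\[
2^m\,\mathrm{D}_{n_1,m_1}(x) + 2^m\,\mathrm{D}_{n_2,m_2}(y) \;=\; 2^m\bigl(\mathrm{D}_{n_1,m_1}(x) + \mathrm{D}_{n_2,m_2}(y)\bigr),
\]
and since $\tau_{-m}$ undoes the scaling by $2^m$, the final element of $\mathbb{F}_{n+1,m}$ decodes to $\mathrm{D}_{n_1,m_1}(x) + \mathrm{D}_{n_2,m_2}(y)$. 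Uniqueness of the element $x \boxplus y$ satisfying the stated identity is then immediate from Proposition \ref{prop: isomorphism for TC}, since $\mathrm{E}_{n+1,m}$ is a bijection onto $\mathbb{F}_{n+1,m}$.

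The only real concern is bookkeeping: I have to check that the output of integer $\boxplus$ fits in $\mathbb{F}_{n+1+m,0}$ so that $\tau_{-m}$ produces an element of $\mathbb{F}_{n+1,m}$ with the right shape. This is where the enlargement from $n = \max\{n_1,n_2\}$ to $n+1$ absorbs the possible carry: $\mathrm{D}_{n_1,m_1}(x) + \mathrm{D}_{n_2,m_2}(y) \in \mathbb{K}_{n+1,m}$ by a direct bound, so after multiplication by $2^m$ the sum lies in $\mathbb{K}_{n+1+m,0}$, which is exactly the range handled by the integer adder. The boundary cases $m_1=0$ and/or $m_2=0$ are handled by interpreting an empty fractional register as the empty string, so the same shift-add-shift argument goes through without modification, which is presumably why the author treats the general case and relegates the edge cases to a parenthetical remark.
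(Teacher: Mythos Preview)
Your proposal is correct and follows essentially the same approach as the paper: define left- and right-shift operators $\tau_m$ and $\tau_{-m}$, set $x\boxplus y := \tau_{-m}(\tau_m(x)\boxplus\tau_m(y))$ using the given integer adder, and verify the identity via $\mathrm{D}_{p+m,0}(\tau_m(z)) = 2^m\,\mathrm{D}_{p,q}(z)$. Your write-up is in fact slightly more explicit than the paper's in that you note the uniqueness via Proposition~\ref{prop: isomorphism for TC} and check that the carry fits in $\mathbb{K}_{n+1+m,0}$, points the paper leaves implicit.
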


\begin{lemma}[Multiplication in two's complement]\label{lemma: multiplication in TC}
Let $n_1,n_2 \in \N$, and let $n := n_1+n_2$. Let ${\boxdot}:\mathbb{F}_{n_1,0} \times \mathbb{F}_{n_2,0} \to \mathbb{F}_{n,0}$ be the multiplication algorithm for integers represented in the two's complement method. Then, for any $m_1,m_2 \in \N_0$ with $m:=m_1+m_2$, there is a natural extension of the multiplication algorithm to the rational numbers represented in the two's complement method where ${\boxdot}:\mathbb{F}_{n_1,m_1} \times \mathbb{F}_{n_2,m_2} \to \mathbb{F}_{n,m}$ such that for any $x\in \mathbb{F}_{n_1,m_1}$, $y\in \mathbb{F}_{n_2,m_2}$, there is an unique element $x \ {\boxdot}\  y \in \mathbb{F}_{n,m}$, where
\begin{equation}\label{boxdot}
    x\ {\boxdot}\ y = \mathrm{E}_{n,m}(\mathrm{D}_{n_1,m_1}(x) \cdot \mathrm{D}_{n_2,m_2}(y)).
\end{equation}
\begin{proof}
	First, consider the case where both $m_1$ and $m_2$ are positive. 
Let $x = ((x_{n_1},\ldots,x_0),(x_{-1},\ldots,x_{-m_1})) \in \mathbb{F}_{n_1,m_1}$ and $y = ((y_{n_2},\ldots,y_0),(y_{-1},\ldots,y_{-m_2})) \in \mathbb{F}_{n_2,m_2}$ be given. Then, with the left-shift operator $\tau_m$ defined in \eqref{eqn: shift operator} in the proof of the previous lemma, it holds that $\tau_m(x) \in \mathbb{F}_{n_1+m,0}$ and $\tau_m(y) \in \mathbb{F}_{n_2+m,0}$. Hence, applying the multiplication algorithm on integers we have $\tau_m(x) \boxdot \tau_m(y) \in \mathbb{F}_{n+m,0}$. This implies that $\tau_{-m}(\tau_m(x) \boxdot \tau_m(y)) \in \mathbb{F}_{n,m}$. We verify that
\begin{equation}
    x\ {\boxdot}\ y := \tau_{-m}(\tau_m(x) \boxdot \tau_m(y))  = \tau_{-m}\left(\mathrm{E}_{n+m,0}(2^m \mathrm{D}_{n_1,m_1}(x) \cdot \mathrm{D}_{n_2,m_2}(y))\right) = \mathrm{E}_{n,m}(\mathrm{D}_{n_1,m_1}(x) \cdot \mathrm{D}_{n_2,m_2}(y)).
\end{equation}
Hence we have shown \eqref{boxdot}.
The cases where $m_1$ and/or $m_2$ equals to zero follow analogously.
\end{proof}
\end{lemma}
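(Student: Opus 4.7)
My plan is to reduce the multiplication of signed dyadic rationals to the integer multiplication algorithm $\boxdot$ on $\mathbb{F}_{n_1,0} \times \mathbb{F}_{n_2,0}$ by means of the shift operators $\tau_r$ and $\tau_{-r}$ introduced in the proof of Lemma \ref{lemma: addition in TC}. The guiding observation is that an element $x \in \mathbb{F}_{n_1,m_1}$ represents a rational of the form (integer)$/2^{m_1}$, and this scaling is realized at the bit-string level by $\tau_{m_1}$, which simply erases the radix point while preserving the sign bit in the top position.

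First I would apply $\tau_{m_1}$ to $x$ and $\tau_{m_2}$ to $y$ \emph{separately}, obtaining $\tau_{m_1}(x) \in \mathbb{F}_{n_1 + m_1,\,0}$ and $\tau_{m_2}(y) \in \mathbb{F}_{n_2 + m_2,\,0}$. Expanding the decoder formula from Definition~\ref{def: encoder-decoder} gives $\mathrm{D}_{n_1+m_1,0}(\tau_{m_1}(x)) = 2^{m_1}\mathrm{D}_{n_1,m_1}(x)$ and analogously for $\tau_{m_2}(y)$. Invoking the given integer algorithm then yields
\begin{equation*}
\tau_{m_1}(x)\,\boxdot\,\tau_{m_2}(y) \;\in\; \mathbb{F}_{n_1+m_1+n_2+m_2,\,0} \;=\; \mathbb{F}_{n+m,\,0},
\end{equation*}
whose decoded value is $2^{m_1+m_2}\,\mathrm{D}_{n_1,m_1}(x)\,\mathrm{D}_{n_2,m_2}(y) = 2^m\,\mathrm{D}_{n_1,m_1}(x)\,\mathrm{D}_{n_2,m_2}(y)$. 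Overflow of the integer product is ruled out because the operands have magnitudes bounded by $2^{n_1+m_1-1}$ and $2^{n_2+m_2-1}$, so their product has magnitude at most $2^{n+m-2}$, well inside the representable range of $\mathbb{F}_{n+m,0}$.

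The final step is to apply the right-shift $\tau_{-m}:\mathbb{F}_{n+m,0}\to\mathbb{F}_{n,m}$ to reinstate the radix point at the correct location, and define $x\,{\boxdot}\,y := \tau_{-m}\bigl(\tau_{m_1}(x)\,\boxdot\,\tau_{m_2}(y)\bigr)$. Using the identity $\tau_{-m}\circ\mathrm{E}_{n+m,0}(2^m\,\cdot\,) = \mathrm{E}_{n,m}(\cdot)$, which is a direct consequence of how $\mathrm{E}$ distributes bits across the radix point on $2^{-m}\mathbb{Z}$-valued inputs, one reads off \eqref{boxdot}. Uniqueness of $x\,{\boxdot}\,y$ is immediate from the bijectivity of $\mathrm{E}_{n,m}$ established in Proposition~\ref{prop: isomorphism for TC}. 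The degenerate cases $m_1=0$ and/or $m_2=0$ are subsumed by interpreting the corresponding $\tau_{m_i}$ as the identity map on the appropriate integer space.

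The only real difficulty is the index bookkeeping: one must verify that $\tau_{m_i}$ genuinely produces the two's-complement representation of $2^{m_i}\mathrm{D}_{n_i,m_i}(x)$ (in particular that the sign bit remains the most significant bit after shifting), and that shifting back by exactly $m=m_1+m_2$ — rather than by any other amount — recovers the target space $\mathbb{F}_{n,m}$. In spirit the argument mirrors the proof of Lemma~\ref{lemma: addition in TC}, but the maximum-based bit-width accounting there is replaced by additive accounting for both $n$ and $m$, reflecting that multiplication sums the fractional precisions rather than taking their maximum.
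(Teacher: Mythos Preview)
Your proof is correct and follows the same shift--multiply--unshift strategy as the paper. The one substantive difference is that the paper applies the common shift $\tau_m$ (with $m=m_1+m_2$) to \emph{both} operands, whereas you shift $x$ by $m_1$ and $y$ by $m_2$ separately. Your version is the cleaner of the two: with separate shifts the integer product lands directly in $\mathbb{F}_{(n_1+m_1)+(n_2+m_2),0}=\mathbb{F}_{n+m,0}$ with decoded value $2^{m}\mathrm{D}_{n_1,m_1}(x)\,\mathrm{D}_{n_2,m_2}(y)$, so a single $\tau_{-m}$ recovers $\mathbb{F}_{n,m}$. With the paper's common shift by $m$, the integer operands live in $\mathbb{F}_{n_1+m,0}$ and $\mathbb{F}_{n_2+m,0}$, so the raw product sits in $\mathbb{F}_{n+2m,0}$ with decoded value $2^{2m}\mathrm{D}_{n_1,m_1}(x)\,\mathrm{D}_{n_2,m_2}(y)$; the paper's displayed bookkeeping (product in $\mathbb{F}_{n+m,0}$, factor $2^m$) is therefore off by one power of $m$ unless one first strips the $m$ guaranteed trailing zeros. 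Your separate-shift argument avoids this entirely.
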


\subsection{Quantum circuits for elementary arithmetic operations}\label{section: quantum arithmetic circuits}
We now describe quantum circuits for arithmetic and elementary operations (such as addition, multiplication, comparison, absolute value), on qubit registers representing numbers in two's complement method. There are many quantum circuits for performing arithmetic operations with its quantum circuit complexities available in the literature, see, e.g., \cite{QC14_arithmetic4, QC10_adder,  QC15_arithmetic5, QC13_arithmetic3, QC12_arithmetic2, QC11_arithmetic}. We first introduce in Lemma \ref{lemma: permutation circuit} an important quantum circuit to perform permutations, which will be necessary for arithmetic computations in the later parts. We have included quantum circuit diagrams in this section to visualize the construction of the quantum circuits for their 
	better understanding. 
	For Lemma~\ref{lemma: quantum addition}, Lemma~\ref{lemma: quantum mult1}, and Lemma~\ref{lemma: quantum comp}, we refer the reader to \cite{QC12_arithmetic2} for its quantum circuit diagrams.

\begin{definition}[Cycle](\cite[Section 1.3, pg 29]{dummitalgebra})\label{def: cycle}
    Let $n,m \in \N$ satisfy $2 \leq m \leq n$, and let $\{a_1,\ldots,a_m\} \subset \{1,2,\ldots,n\}$ be distinct numbers. A cycle $C := (a_1 a_2 \cdots a_m)$ is a permutation $ \sigma:\{1,2,\ldots,n\} \to \{1,2,\ldots,n\}$ such that
    \begin{equation}
        \sigma(j) = \begin{cases}
            a_{i+1}, &\quad \text{if $j = a_i$ for $1 \leq i \leq m-1$}, \\
            a_{1}, &\quad \text{if $j = a_m$}, \\
            j, &\quad \text{if $j \not \in \{a_1,\ldots,a_m\}$}.
        \end{cases}
    \end{equation}
\end{definition}

\begin{proposition}[Cycle decomposition theorem](\cite[Section 4.1, pg 115]{dummitalgebra})\label{prop: cycle decomp}
    Let $n \in \N$ and $\pi:\{1,2,\ldots,n\} \to \{1,2,\ldots,n\}$ be a permutation. Then $\pi$ can be written as a composition of disjoint cycles\footnote{We adopt the convention that cycles of length 1 will not be written.}
            \begin{equation}
        \pi = C_1 C_2 \cdots C_k = (a_1 a_2 \cdots a_{m_1})(a_{m_1 + 1} a_{m_1 + 2} \cdots a_{m_2}) \cdots (a_{m_{k - 1}+1} a_{m_{k-1}+2} \cdots a_{m_{k}}),
    \end{equation}
    where $k$ is the number of cycles, and $\{a_j: j=1,\ldots,m_k\} \subset \{1,\ldots,n\}$ are distinct integers. Moreover, the  cycle decomposition above is unique up to a rearrangement of the cycles and up to a cyclic permutation of the integers within each cycle.     
\end{proposition}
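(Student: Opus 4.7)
The plan is to prove existence via a partition argument on the orbits of $\pi$, and to prove uniqueness by observing that any disjoint cycle decomposition is forced to respect the orbit structure of $\pi$.

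For existence, I would first define the relation $i \sim j$ on $\{1,\ldots,n\}$ by $i \sim j$ iff $j = \pi^k(i)$ for some $k \in \Z$. This is an equivalence relation: reflexivity via $k=0$, symmetry via the existence of $\pi^{-1}$ (so $\pi^{-k}$ is well-defined), and transitivity via composition of integer powers. The equivalence classes (the orbits of $\pi$) therefore partition $\{1,\ldots,n\}$. For each orbit $O$ I would pick any $a \in O$ and consider the forward sequence $a, \pi(a), \pi^2(a), \ldots$. Since $\{1,\ldots,n\}$ is finite there is a smallest $m \geq 1$ with $\pi^m(a) = \pi^j(a)$ for some $0 \leq j < m$; applying $\pi^{-j}$ and invoking the minimality of $m$ forces $j=0$, so $\pi^m(a) = a$ and the iterates $a, \pi(a), \ldots, \pi^{m-1}(a)$ are pairwise distinct. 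A reduction modulo $m$ then shows that this list exhausts $O$, so $|O| = m$. I would then define the cycle $C_O := (a\ \pi(a)\ \cdots\ \pi^{m-1}(a))$ for each orbit of size $m \geq 2$ (orbits of length $1$ correspond to fixed points of $\pi$ and are suppressed by the footnote convention). Composing these cycles over the distinct nontrivial orbits yields a permutation that agrees with $\pi$ on each orbit and on each fixed point, so it equals $\pi$; the cycles act on disjoint subsets, so the order of composition is irrelevant to this verification.

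For uniqueness, I would argue that in any disjoint cycle decomposition $\pi = C_1 C_2 \cdots C_k$ the support of each cycle $C_i$ is exactly a $\pi$-orbit of size at least $2$: if $a$ appears in $C_i$, then because the cycles are disjoint, repeated application of $\pi$ to $a$ stays inside $C_i$ and traces out its cyclic order. Conversely, every $\pi$-orbit of size at least $2$ must occur as the support of some $C_i$, since otherwise its elements would be fixed by the product. This yields a bijection between the cycles appearing in any decomposition and the nontrivial orbits of $\pi$, proving that the cycles are determined up to rearrangement, which is permitted because disjoint cycles commute. Within each cycle, the cyclic order of the listed elements is forced by $\pi$ up to the choice of starting element, which corresponds precisely to the stated cyclic-permutation freedom.

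I do not anticipate a serious obstacle. The delicate point is the passage from the abstract orbit $O$ to the concrete cycle, namely the argument that $a, \pi(a), \ldots, \pi^{m-1}(a)$ are distinct and exhaust $O$; this is the one place where the minimality of $m$ and the bijectivity of $\pi$ both play a role. Everything else is bookkeeping, relying on the facts that disjoint cycles commute and that fixed points can be suppressed.
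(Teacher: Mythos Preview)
Your proof is correct and follows the standard orbit-based argument. Note, however, that the paper does not actually prove this proposition: it is simply cited from Dummit and Foote's \emph{Abstract Algebra}, and no proof appears in the paper itself. Your argument is essentially the one found in that reference (and in most algebra textbooks), so there is nothing to compare against.
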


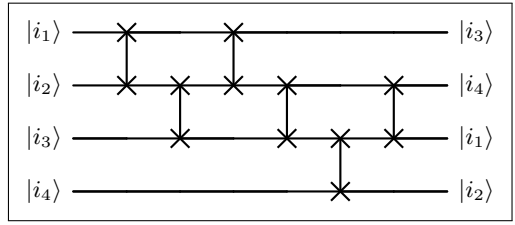
\begin{figure}
	\centering 
\boxed{
	\begin{quantikz}[row sep={20pt,between origins},column sep=20pt,font=\small]
		\lstick{$\ket{i_1}$} & \swap{1} &\qw & \swap{1} &\qw &\qw &\qw &\qw \rstick{$\ket{i_3}$}\\		
		\lstick{$\ket{i_2}$} 	& \targX{} & \swap{1} & \targX{} & \swap{1} &\qw  &\swap{1} &\qw \rstick{$\ket{i_4}$}\\
		\lstick{$\ket{i_3}$} &\qw   &\targX{} &\qw & \targX{} &\swap{1} & \targX{} &\qw \rstick{$\ket{i_1}$}\\
		\lstick{$\ket{i_4}$} 	&\qw  &\qw &\qw &\qw & \targX{} &\qw  &\qw \rstick{$\ket{i_2}$} 
	\end{quantikz}		
}
\caption{An example for the quantum circuit for permutation in Lemma~\ref{lemma: permutation circuit}, with $\pi=(13)(24)$.}
\label{fig: permutation circuit example}
\end{figure}

\begin{lemma}[Quantum circuit for permutation]\label{lemma: permutation circuit}
    \begin{enumerate}
        \item Let $n \in \N$, and let $\pi: \{1,\ldots,n\} \to \{1,\ldots,n\}$ be a permutation. Then, there is a quantum circuit $\mathcal{T}_{\pi}\in U(2^n)$ on $n$ qubits such that for every $\ket{i}_n = \ket{i_1}\cdots\ket{i_n} \in \{0,1\}^n$, it holds that 
        \begin{equation}\label{eqn: permutation circuit}
            \mathcal{T}_{\pi}\ket{i}_n = \mathcal{T}_{\pi}\ket{i_1}\cdots\ket{i_n} =  \ket{i_{\pi(1)}}\cdots\ket{i_{\pi(n)}},
        \end{equation}
        and that $\mathcal{T}_\pi$ uses at most $2n^2$ swap gates.
        
        See Figure~\ref{fig: permutation circuit example} for an example.
        \item Let $\mathcal{Q}\in  U(2^n)$ be a given quantum circuit such that for every $\ket{i}_n \in \{0,1\}^n$,
        \begin{equation}
            \mathcal{Q}\ket{i}_n = \sum_{j = (j_1,\ldots,j_n) \in \{0,1\}^n} \alpha_{i,j} \ket{j_{1}}\cdots\ket{j_n},
        \end{equation}
        where $\alpha_{i,j} \in \C$. Then, $\Tau_\pi \mathcal{Q}$ is also a quantum circuit such that for every
        $\ket{i}_n \in \{0,1\}^n$, 
        \begin{equation}
            \Tau_{\pi}\mathcal{Q}\ket{i}_n = \sum_{j \in \{0,1\}^n} \alpha_{i,j} \ket{j_{\pi(1)}}\cdots\ket{j_{\pi(n)}}.
        \end{equation}
    \end{enumerate}

    \begin{proof}
        Firstly, we show, for any $j,k \in \{1,2,\ldots,n\}$, $j < k$, and for any $n$-qubit $\ket{i}_n = \ket{i_1}\cdots\ket{i_n} \in \{0,1\}^n$, that there exists a quantum circuit $\mathcal{T}_{j \leftrightarrow k}$ consisting of $2(k-j) -1$ swap gates
          where it holds that 
        \begin{equation}\label{eqn: swap 2 position circuit}
            \begin{split}
                \mathcal{T}_{j \leftrightarrow k}:& \ket{i}_n = \ket{i_1}\cdots \ket{i_{j-1}}\ket{i_j}\ket{i_{j+1}}\cdots \ket{i_{k-1}}\ket{i_k}\ket{i_{k+1}}\cdots\ket{i_n}\\
                &\mapsto \ket{i_1}\cdots \ket{i_{j-1}}\ket{i_k}\ket{i_{j+1}}\cdots \ket{i_{k-1}}\ket{i_j}\ket{i_{k+1}}\cdots\ket{i_n}.
            \end{split}
        \end{equation}
        Denote by $\mathcal{S} := \SWAP \in U(2^2)$ the swap gate
         which satisfy for all $\ket{i_1}\ket{i_2} \in \{0,1\}^2$ that 
        \begin{equation}
            \mathcal{S}\ket{i_1}\ket{i_2} = \ket{i_2}\ket{i_1}.
        \end{equation}
        If $k - j = 1$, (i.e. $k = j+1$) then we simply set
        \begin{equation}\label{eqn: swap 2 position circuit pt2}
            T_{j \leftrightarrow j+1} = I_2^{\otimes j-1} \otimes \mathcal{S} \otimes I_2^{\otimes n- j - 1}.
        \end{equation}
        If $k-j \geq 2$, then proceeding inductively set 
        \begin{equation}\label{eqn: swap 2 position circuit pt3}
            \mathcal{T}_{j \leftrightarrow k} = \prod_{l=1}^{k-j-1} (I_2^{\otimes k - 2 - l} \otimes \mathcal{S}\otimes I_2^{\otimes n - k + l})  \prod_{l=0}^{k-j-1}( I_2^{\otimes j - 1 + l} \otimes \mathcal{S} \otimes I_2^{\otimes n - j - 1 - l }),
        \end{equation} 
    (c.f. Definition \ref{def: quantum circuit}), where we use the usual convention that $I_2^{\otimes 0} = 1 \in \C$ and $A \otimes 1 = A = 1 \otimes A$ for any $A \in U(2^{m})$, $m \in \N$. By direct verification, we note that $\mathcal{T}_{j \leftrightarrow k}$ satisfy \eqref{eqn: swap 2 position circuit} for all $\ket{i}_n \in \{0,1\}^n$ and that only $2(k-j)-1$ swap gates were required in its construction. 

        Secondly, by the cycle decomposition theorem (Proposition \ref{prop: cycle decomp}), the given permutation $\pi$ can be written as a composition of disjoint cycles (c.f. Definition \ref{def: cycle}, Proposition \ref{prop: cycle decomp}) as 
        \begin{equation}\label{eqn: cycle decomp}
            \pi = C_1 C_2 \cdots C_k = (a_1 a_2 \cdots a_{m_1})(a_{m_1 + 1} a_{m_1 + 2} \cdots a_{m_2}) \cdots (a_{m_{k - 1}+1} a_{m_{k-1}+2} \cdots a_{m_{k}}),
        \end{equation}
        where $k$ is the number of cycles, and $a_1,\ldots,a_{m_k} \in \{1,\ldots,n\}$ are distinct numbers. Note  by convention that each of these cycles has length $m_l \geq 2$. For each of these cycles $C_l = (a_{m_{l-1}+1}\cdots a_{m_{l}})$, $l=1\ldots,k$, with $m_0 := 0$ we construct the quantum circuits $\mathcal{T}_{C_l}$, $l=1,\ldots,k$, via 
        \begin{equation}\label{eqn: cycle circuit}
            \mathcal{T}_{C_l} = \prod_{i=m_{l-1}}^{m_{l}-1} \mathcal{T}_{a_i \leftrightarrow a_{i+1}} = \mathcal{T}_{a_{m_{l}-1} \leftrightarrow a_{m_{l}}} \cdots \mathcal{T}_{a_{m_{l-1}+1} \leftrightarrow a_{m_{l-1} + 2}}\mathcal{T}_{a_{m_{l-1}} \leftrightarrow a_{m_{l-1} + 1}},
        \end{equation}
        where the quantum circuits $\mathcal{T}_{a_i \leftrightarrow a_{i+1}}$ are constructed based on the first step of the proof (c.f. \eqref{eqn: swap 2 position circuit}, \eqref{eqn: swap 2 position circuit pt2}, \eqref{eqn: swap 2 position circuit pt3}). Finally, we construct the quantum circuit $\mathcal{T}_\pi$ via
        \begin{equation}\label{eqn: product cycle circuit}
            \mathcal{T}_\pi = \prod_{l=1}^k \mathcal{T}_{C_l} = \mathcal{T}_{C_k} \cdots \mathcal{T}_{C_2} \mathcal{T}_{C_1}.
        \end{equation} 
       Thus, \eqref{eqn: swap 2 position circuit}, \eqref{eqn: cycle decomp}, \eqref{eqn: cycle circuit}, and \eqref{eqn: product cycle circuit} imply that the quantum circuit $\mathcal{T}_\pi$ satisfies \eqref{eqn: permutation circuit}. Moreover, we note that the total number of quantum circuits of the form $\mathcal{T}_{j \leftrightarrow k}$ (c.f. \eqref{eqn: swap 2 position circuit}) is $m_k \leq n$, where each of these quantum circuits requires $2 \vert a_i - a_{i+1} \vert - 1 \leq (2n-1)$ swap gates. Hence, the total number of swap gates used to construct $\mathcal{T}_{\pi}$ is at most $n\cdot (2n-1) \leq 2n^2$.  Thus, we have proved the first statement of the lemma. The second statement of the lemma follows directly from the fact that $\mathcal{T}_\pi$ is a linear operator. 
    \end{proof}

\end{lemma}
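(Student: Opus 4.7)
The plan is to reduce the problem to implementing a single transposition of two qubit positions, then assemble an arbitrary permutation from disjoint cycles, each decomposed into transpositions. First I would observe that a SWAP gate only swaps two adjacent qubits, so to implement a transposition $\mathcal{T}_{j \leftrightarrow k}$ with $j < k$ one has to "bubble" the qubit at position $j$ rightward to position $k$ via $k-j$ adjacent SWAPs, and simultaneously return the qubit originally at position $k$ back to position $j$ with $k-j-1$ additional adjacent SWAPs. A direct check on the computational basis shows that this construction realises \eqref{eqn: swap 2 position circuit} with exactly $2(k-j)-1$ elementary SWAP gates, and since $k-j \leq n-1$ each transposition uses at most $2n-3$ SWAPs.

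Second, I would apply the cycle decomposition theorem (Proposition \ref{prop: cycle decomp}) to write $\pi$ as a product of disjoint cycles $C_1 \cdots C_k$, and then decompose each cycle $C_l=(a_{m_{l-1}+1}\, a_{m_{l-1}+2}\, \cdots\, a_{m_l})$ as a product of transpositions of the form $\mathcal{T}_{a_i \leftrightarrow a_{i+1}}$, reading the composition from right to left to match how the cycle permutes indices. Composing these per-cycle circuits yields the full circuit $\mathcal{T}_\pi$, and its correctness on an arbitrary basis state $\ket{i_1}\cdots\ket{i_n}$ follows from verifying that the composite action reproduces $\ket{i_{\pi(1)}}\cdots\ket{i_{\pi(n)}}$, which reduces to the one-cycle case by disjointness.

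Third, I would count gates. The total number of transpositions across all cycles is bounded by the sum of cycle lengths, which is at most $n$ since the cycles are disjoint subsets of $\{1,\ldots,n\}$. Since each transposition uses at most $2n-3 \leq 2n-1$ SWAP gates, the overall count is at most $n(2n-1) \leq 2n^2$, giving the stated complexity bound. For part~2 of the lemma, linearity of $\mathcal{T}_\pi$ and part~1 immediately yield the action on $\mathcal{Q}\ket{i}_n$ when expanded in the computational basis, so nothing beyond a one-line argument is required.

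The main obstacle, though a mild one, is bookkeeping: one must be careful with the direction of composition (quantum circuits act left-to-right on ket states but products are written right-to-left), and with the cycle-to-transposition decomposition ensuring that the adjacent-swap routing inside $\mathcal{T}_{j \leftrightarrow k}$ indeed swaps only positions $j$ and $k$ while leaving all intermediate qubits unchanged. I would verify this by tracking a pair of basis kets through the adjacent SWAPs in both the "forward" and "return" passes, after which the rest is straightforward counting.
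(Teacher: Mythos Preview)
Your proposal is correct and follows essentially the same route as the paper: build a non-adjacent transposition $\mathcal{T}_{j\leftrightarrow k}$ from $2(k-j)-1$ adjacent SWAPs, invoke the cycle decomposition of $\pi$, write each cycle as a product of such transpositions, and bound the total by $n\cdot(2n-1)\leq 2n^2$, with part~2 following from linearity. The only cosmetic difference is that you note the slightly sharper per-transposition bound $2n-3$ before relaxing it to $2n-1$; the paper goes straight to $2n-1$.
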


\begin{lemma}[Quantum circuit for addition](\cite[Section 3.1, QNMAdd]{QC12_arithmetic2})\label{lemma: quantum addition}
Let $n_1,n_2 \in \N$, with $n_1 \geq n_2$. Then, there is a quantum circuit $\mathcal{Q}_{(+)}$ on $(n_1+n_2+1)$ qubits such that for any $a \in \mathbb{F}_{n_1,0}, b \in \mathbb{F}_{n_2,0}$,
\begin{equation}
    \mathcal{Q}_{(+)}: \ket{0} \ket{a}_{n_1} \ket{b}_{n_2}  \mapsto \ket{a \boxplus b}_{n_1+1}\ket{b}_{n_2}.
\end{equation}
The quantum circuit $\mathcal{Q}_{(+)}$ requires $n_1^2 + 3n_1 + 18 + \frac{1}{2}(n_2(2n_1-n_2+3))$ elementary gates.
\end{lemma}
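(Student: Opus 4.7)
The plan is to build $\mathcal{Q}_{(+)}$ by reducing two's complement addition to $(n_1{+}1)$-bit modular addition and then invoking an established in-place reversible ripple-carry adder. The key identity, immediate from Definition~\ref{def: encoder-decoder} combined with Lemma~\ref{lemma: addition in TC}, is that $a \boxplus b \in \mathbb{F}_{n_1+1,0}$ coincides bit-for-bit with the $(n_1{+}1)$-bit unsigned sum, taken modulo $2^{n_1+1}$, of the sign-extensions of $a$ and $b$ to $n_1{+}1$ bits. Thus a single unsigned adder suffices for signed arithmetic once both operands are extended to a common width, and $\mathcal{Q}_{(+)}$ can be built by (i) extending the operands, (ii) adding in place, and (iii) uncomputing any temporary copies.

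First, I would sign-extend $a$ from $n_1$ to $n_1{+}1$ bits by a single CNOT with control on the sign qubit $\ket{a_{n_1-1}}$ and target the leading $\ket{0}$ qubit, which then serves as the new most-significant bit of the $a$-register. Second, since only $n_1+n_2+1$ qubits are available and no fresh ancillas are allowed for extending $b$, the sign-extension of $b$ from $n_2$ to $n_1{+}1$ bits must be absorbed directly into the ripple-carry structure, so that for each bit position $i \in \{n_2,\ldots,n_1\}$ the adder cell reads $\ket{b_{n_2-1}}$ as its effective $b$-input rather than a copied qubit. Third, I would invoke the in-place reversible ripple-carry adder (QNMAdd) constructed in \cite[Section 3.1]{QC12_arithmetic2}, which writes $a \boxplus b$ into the leading $n_1{+}1$ qubits while leaving the $b$-register intact, as required by the stated action of $\mathcal{Q}_{(+)}$.

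To verify the gate count, I would partition the circuit into $n_2$ full adder cells at bit positions $0,\ldots,n_2{-}1$, each using a genuine $b$-qubit, and $n_1-n_2+1$ reduced adder cells at positions $n_2,\ldots,n_1$ that share $\ket{b_{n_2-1}}$ as a classical control. Each cell contributes a fixed number of Toffoli and CNOT gates; decomposing these into the elementary gate set $\mathbb{G}$ as in Example~\ref{example: toffoli gate} and summing, together with the carry-initialization and cleanup overhead, should reproduce the closed-form bound $n_1^2 + 3n_1 + 18 + \tfrac{1}{2}n_2(2n_1-n_2+3)$; the $n_1^2 + 3n_1 + 18$ term captures the baseline $(n_1{+}1)$-bit adder and the term $\tfrac{1}{2}n_2(2n_1-n_2+3) = n_1 n_2 - \tfrac{1}{2}n_2^2 + \tfrac{3}{2}n_2$ captures the cost asymmetry between full and reduced cells.

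The main technical obstacle is the ancilla-free handling of the sign extension of $b$: one must check that reusing the single qubit $\ket{b_{n_2-1}}$ as a shared control across all $n_1-n_2+1$ reduced cells interleaves correctly with the carry propagation, and that the underlying ripple-carry construction uncomputes all intermediate carry ancillas so that the output is the clean tensor-product state $\ket{a \boxplus b}_{n_1+1}\ket{b}_{n_2}$. Once this bookkeeping is discharged via \cite[Section 3.1]{QC12_arithmetic2}, correctness on every computational-basis input $\ket{0}\ket{a}_{n_1}\ket{b}_{n_2}$ follows from the two's complement identity above, and linearity of quantum circuits extends the conclusion to arbitrary superpositions, completing the proof.
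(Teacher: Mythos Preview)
The paper does not supply a proof for this lemma: it is stated as a citation of the construction \texttt{QNMAdd} from \cite[Section~3.1]{QC12_arithmetic2}, with the gate count quoted directly from that source. There is no argument in the paper to compare against.

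Your sketch is a plausible reconstruction of how such an adder might be built, and the high-level reduction of two's complement addition to unsigned modular addition via sign extension is correct. However, since the lemma is imported wholesale from an external reference, the exact gate count $n_1^2 + 3n_1 + 18 + \tfrac{1}{2}n_2(2n_1-n_2+3)$ is an artifact of the specific construction in \cite{QC12_arithmetic2}, not something derivable from first principles here. Your partition into full and reduced cells is reasonable, but whether it reproduces precisely this formula (including the constant $18$) depends on implementation details of \texttt{QNMAdd} that are not in the present paper; you would need to consult the cited source to verify that your decomposition matches theirs cell-by-cell. For the purposes of this paper, no proof is expected beyond the citation.
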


\begin{figure}
	\centering 
	\boxed{
		\begin{quantikz}[transparent, row sep={20pt,between origins},column sep=20pt,font=\small]
\lstick[wires=3]{$\ket{a}_{n_1+m_1}$}  & \gate[7]{\mathcal{T}_{\pi_1}} & \gate[7][3cm]{\mathcal{Q}_{(+)}}\gateinput{$\ket{0}$}\gateoutput[4]{$\ket{a \boxplus b}$} & \gate[7]{\mathcal{T}_{\pi_2}} & \qw \rstick[3]{$\ket{b}_{n_2+m_2}$}\\
& &\gateinput[3]{$\ket{a}_{n_1+m_1}$}  & & \qw \\
& & & & \qw \\
	\lstick[3]{$\ket{b}_{n_2+m_2}$} &  & & &\qw \rstick[4]{$\ket{a \boxplus b}_{n_1+m_1+1}$} \\
& &\gateinput[3]{$\ket{b}_{n_2+m_2}$} \gateoutput[3]{$\ket{b}_{n_2+m_2}$} & & \qw \\
& & & &  \qw\\
\lstick{$\ket{0}$} & & & & \qw
		\end{quantikz}		
	} 
	\caption{Circuit diagram for $\widetilde{\mathcal{Q}}_{(+)}$ in Corollary~\ref{corollary: add circuit}.}
	\label{fig: add circuit diagram}
\end{figure}

\begin{corollary}[Quantum circuit for addition with fractional part]\label{corollary: add circuit}
    Let $n_1,n_2,m_1,m_2 \in \N$, with $n_1+m_1 \geq n_2 + m_2$. Let $n = n_1+n_2$, and $m = m_1+m_2$. Then, there is a quantum circuit $\widetilde{\mathcal{Q}}_{(+)}$ on  $(n+ m+1)$ qubits such that for any $a \in \mathbb{F}_{n_1,m_1}$, $b \in \mathbb{F}_{n_2,m_2}$,
        \begin{equation}\label{eqn: cor addition}
            \widetilde{\mathcal{Q}}_{(+)}:\ket{a}_{n_1+m_1}\ket{b}_{n_2+m_2}\ket{0} \mapsto \ket{b}_{n_2+m_2}\ket{a \boxplus b}_{n_1+m_1+1}.
        \end{equation}
        The quantum circuit $\widetilde{\mathcal{Q}}_{(+)}$ requires at most $29(n+m+1)^2$ elementary gates. See Figure~\ref{fig: add circuit diagram} for the circuit diagram.
\begin{proof}
    By Lemma \ref{lemma: permutation circuit}, there is a quantum circuit $\mathcal{T}_{\pi_1}$ with at most $2(n_1+m_1+n_2+m_2+1)^2 = 2(n+m+1)^2$ swap gates satisfying 
    \begin{equation}\label{eqn: pf addition circuit 1}
        \mathcal{T}_{\pi_1}: \ket{a}_{n_1+m_1}\ket{b}_{n_2+m_2}\ket{0}  \mapsto \ket{0} \ket{a}_{n_1+m_1}\ket{b}_{n_2+m_2}.
    \end{equation}
    Note that by Lemma \ref{lemma: addition in TC},  we can extend the addition operation $\boxplus: \mathbb{F}_{n_1+m_1,0} \times \mathbb{F}_{n_2+m_2,0} \to \mathbb{F}_{n+m+1,0}$ to $\boxplus: \mathbb{F}_{n_1,m_1} \times \mathbb{F}_{n_2,m_2} \to \mathbb{F}_{\tilde{n}+1,\tilde{m}}$ where $\tilde{n} = \max\{n_1,n_2\}$ and $\tilde{m} = \max\{m_1,m_2\}$. This, the hypothesis that $n_1+m_1 \geq n_2 +m_2$, and Lemma \ref{lemma: quantum addition} (with $n_1 \leftarrow n_1 + m_1$, $n_2 \leftarrow n_2+m_2$ in the notation of Lemma \ref{lemma: quantum addition}) imply that there exists a quantum circuit $\mathcal{Q}_{(+)}$ such that for any $a \in \mathbb{F}_{n_1,m_1}$, $b \in \mathbb{F}_{n_2,m_2}$ that
    \begin{equation}\label{eqn: pf addition circuit 2}
        \mathcal{Q}_{(+)}:\ket{0} \ket{a}_{n_1+m_1}\ket{b}_{n_2+m_2} \mapsto \ket{a \boxplus b}_{n_1+m_1+1}\ket{b}_{n_2+m_2},
    \end{equation}
    and that the number of elementary gates required to construct $\mathcal{Q}_{(+)}$ is
    \begin{equation} 
        (n_1+m_1)^2 + 3(n_1+m_1) + 18 + \tfrac{1}{2}(n_2+m_2)(2(n_1+m_1) - (n_2+m_2) + 3).
    \end{equation}
    Moreover, by Lemma \ref{lemma: permutation circuit}, there is a quantum circuit $\mathcal{T}_{\pi_2}$ such that
    \begin{equation}\label{eqn: pf addition circuit 3}
         \ket{a \boxplus b}_{n_1+m_1+1}\ket{b}_{n_2+m_2} \mapsto  \ket{b}_{n_2+m_2}\ket{a \boxplus b}_{n_1+m_1+1},
    \end{equation}
    which uses at most $2(n+m+1)^2$ swap gates.

    Define the quantum circuit $\widetilde{\mathcal{Q}}_{(+)} := \mathcal{T}_{\pi_2}\mathcal{Q}_{(+)}\mathcal{T}_{\pi_1}$. Observe that \eqref{eqn: pf addition circuit 1}, \eqref{eqn: pf addition circuit 2}, and \eqref{eqn: pf addition circuit 3} shows that  $\widetilde{\mathcal{Q}}_{(+)}$ satisfies \eqref{eqn: cor addition}, and that the total number of elementary gates required to construct $\widetilde{\mathcal{Q}}_{(+)}$ is at most
    \begin{equation}
        \begin{split}
            &2(n + m+ 1)^2 + (n_1+m_1)^2 + 3(n_1+m_1) + 18 + \tfrac{1}{2}(n_2+m_2)(2(n_1+m_1) - (n_2+m_2) + 3) +2(n + m+ 1)^2 \\
            &\leq 2(n + m+ 1)^2 + (n_1+m_1)^2 + 3(n_1+m_1) + 18 +  (n+m+1)^2 + \frac{3}{2}(n+m+1)^2 + 2(n + m+ 1)^2\\
            &\leq (2+1+3+18+1+2+2)(n+m+1)^2\\
            &= 29(n+m+1)^2.
        \end{split}
    \end{equation}
%
%
%
\end{proof}
\end{corollary}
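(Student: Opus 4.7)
The plan is to reduce the fractional two's complement addition to the integer adder $\mathcal{Q}_{(+)}$ from Lemma~\ref{lemma: quantum addition} by reinterpreting the registers. The key observation, already implicit in the proof of Lemma~\ref{lemma: addition in TC}, is that at the bit level the qubit state $\ket{a}_{n_1+m_1}$ encoding $a \in \mathbb{F}_{n_1,m_1}$ is indistinguishable from the state $\ket{\tau_{m_1}(a)}_{n_1+m_1}$ encoding the left-shifted integer $\tau_{m_1}(a) \in \mathbb{F}_{n_1+m_1,0}$, and analogously for $b$. Thus applying the integer adder directly to these registers and then reading the output bit string through the canonical right-shift reinterpretation $\tau_{-m}$ yields exactly the bit pattern of $a \boxplus b$ in the fractional sense.

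Concretely, I would construct $\widetilde{\mathcal{Q}}_{(+)}$ in three stages. First, invoke Lemma~\ref{lemma: permutation circuit} to build a permutation circuit $\mathcal{T}_{\pi_1}$ on $n+m+1$ qubits that moves the ancilla to the leftmost position,
\begin{equation}
\mathcal{T}_{\pi_1}:\ket{a}_{n_1+m_1}\ket{b}_{n_2+m_2}\ket{0} \mapsto \ket{0}\ket{a}_{n_1+m_1}\ket{b}_{n_2+m_2}.
\end{equation}
Second, since the hypothesis $n_1+m_1 \geq n_2+m_2$ allows me to apply Lemma~\ref{lemma: quantum addition} with the substitutions $n_1 \leftarrow n_1+m_1$ and $n_2 \leftarrow n_2+m_2$, I obtain an adder circuit $\mathcal{Q}_{(+)}$ acting as
\begin{equation}
\mathcal{Q}_{(+)}:\ket{0}\ket{a}_{n_1+m_1}\ket{b}_{n_2+m_2} \mapsto \ket{a \boxplus b}_{n_1+m_1+1}\ket{b}_{n_2+m_2}.
\end{equation}
Third, a further permutation circuit $\mathcal{T}_{\pi_2}$ from Lemma~\ref{lemma: permutation circuit} swaps the two output registers to match the order in \eqref{eqn: cor addition}, and I set $\widetilde{\mathcal{Q}}_{(+)} := \mathcal{T}_{\pi_2}\,\mathcal{Q}_{(+)}\,\mathcal{T}_{\pi_1}$.

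The gate count will be a straightforward sum: Lemma~\ref{lemma: permutation circuit} bounds each of $\mathcal{T}_{\pi_1}, \mathcal{T}_{\pi_2}$ by at most $2(n+m+1)^2$ swap gates, while Lemma~\ref{lemma: quantum addition} contributes $(n_1+m_1)^2 + 3(n_1+m_1) + 18 + \tfrac{1}{2}(n_2+m_2)\bigl(2(n_1+m_1)-(n_2+m_2)+3\bigr)$ elementary gates, which is again $O\bigl((n+m+1)^2\bigr)$. Coarsely collecting constants using $n_1+m_1\leq n+m$ and $n_2+m_2\leq n+m$ should yield the stated bound $29(n+m+1)^2$ (clearly not tight). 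The only real obstacle is purely notational: one must ensure that the same qubit register is consistently identified both with an element of $\mathbb{F}_{n_1+m_1,0}$ (for the purposes of invoking the integer adder) and with an element of $\mathbb{F}_{n_1,m_1}$ (for the statement of the corollary), an identification licensed precisely by the shift-operator argument in the proof of Lemma~\ref{lemma: addition in TC}.
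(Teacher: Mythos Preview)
Your proposal is correct and follows essentially the same approach as the paper: sandwich the integer adder of Lemma~\ref{lemma: quantum addition} (with the substitutions $n_1 \leftarrow n_1+m_1$, $n_2 \leftarrow n_2+m_2$) between two permutation circuits from Lemma~\ref{lemma: permutation circuit}, and justify the fractional reinterpretation via the shift operators from Lemma~\ref{lemma: addition in TC}. The paper's gate-count calculation is written out explicitly but is just the coarse bounding you describe, arriving at the same constant $29$.
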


\begin{lemma}[Quantum circuit for multiplication](\cite[Section 3.5, QNMMul]{QC12_arithmetic2})\label{lemma: quantum mult1}
    Let $n_1,n_2 \in \N$, with $n_1 \geq n_2$. Then, there is a quantum circuit $\mathcal{Q}_{(\times)}$ on $(2n_1+3n_2+3)$ qubits such that for any $a \in \mathbb{F}_{n_1,0}, b \in \mathbb{F}_{n_2,0}$, 
    \begin{equation}
        \mathcal{Q}_{(\times)}:\ket{0}\ket{0}_{n_1+n_2}\ket{a}_{n_1} \ket{b}_{n_2} \ket{0}_{n_2} \ket{0}_{2}\mapsto \ket{\anc} \ket{a \boxdot b}_{n_1+n_2}\ket{a}_{n_1}\ket{b}_{n_2} \ket{\anc}_{n_2+2}.
    \end{equation}
    The quantum circuit $\mathcal{Q}_{(\times)}$ requires $(\frac{1}{2}(5n_1^2 + n_1) + 4n_2^2 + 4n_1n_2 + 6n_2 + 7)$ elementary gates. 
\end{lemma}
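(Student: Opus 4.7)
The plan is to invoke the explicit construction from \cite[Section 3.5, QNMMul]{QC12_arithmetic2}, but let me sketch the underlying idea. The two's complement product $a \boxdot b$ is computed by the standard schoolbook accumulation
\begin{equation}
\mathrm{D}_{n_1,0}(a) \cdot \mathrm{D}_{n_2,0}(b) \;=\; \sum_{j=0}^{n_2-2} b_j \, (2^j \mathrm{D}_{n_1,0}(a)) \;-\; b_{n_2-1}\, (2^{n_2-1}\mathrm{D}_{n_1,0}(a)),
\end{equation}
where the minus sign on the last term reflects the MSB convention of Definition~\ref{def: two complement}. The quantum implementation initializes the $(n_1+n_2)$-qubit accumulator register to $\ket{0}_{n_1+n_2}$ and then performs $n_2$ successive controlled additions into the accumulator, each controlled on a bit $b_j$ of the register $\ket{b}_{n_2}$ and adding the appropriately shifted copy of $a$ (with sign flipped for $j=n_2-1$).

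The key reversible primitive is a \emph{controlled} version of the adder from Lemma~\ref{lemma: quantum addition}, obtained in the usual way by making each elementary gate of the adder controlled on the relevant $b_j$. Bit shifts $2^j$ cost no gates: they are implemented by relabeling which qubit lines of the $\ket{a}_{n_1}$ register are routed to which positions of the adder. The $n_2{+}2$ ancilla qubits in the output register serve as scratch space for the carry bits of the integer adder and for temporarily storing the negated (two's complement) shifted copy of $a$ during the final step (obtained via the identity $-x = \bar{x}\boxplus 1$). After all $n_2$ controlled additions, the accumulator holds $a\boxdot b \in \mathbb{F}_{n_1+n_2,0}$ while the input registers $\ket{a}_{n_1}$ and $\ket{b}_{n_2}$ are returned unchanged, as required.

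The gate count follows by tallying the cost of the $n_2$ controlled additions: using the bound of Lemma~\ref{lemma: quantum addition} applied at each step $j$ (with operand sizes at most $n_1+n_2$ on the accumulator side and $n_1$ on the shifted-$a$ side), summing over $j=0,\ldots,n_2-1$, and adding the $O(n_1+n_2)$ overhead for sign-conversion of $a$ in the final step, yields the stated count $\tfrac{1}{2}(5n_1^2 + n_1) + 4n_2^2 + 4n_1 n_2 + 6n_2 + 7$, as verified explicitly in \cite[QNMMul]{QC12_arithmetic2}. The main obstacle in producing a fully self-contained proof would be the careful bookkeeping of sign propagation in two's complement and the precise allocation of ancillas so that the circuit remains unitary and matches the qubit-layout prescribed in the statement; since this is already handled in the cited reference, we simply invoke its construction.
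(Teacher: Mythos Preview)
Your proposal is correct and in fact more detailed than the paper's own treatment: the paper states this lemma without proof, simply citing \cite[Section 3.5, QNMMul]{QC12_arithmetic2} for the construction and gate count. Your sketch of the schoolbook accumulation via controlled additions and the final citation to the same reference is entirely consistent with that.
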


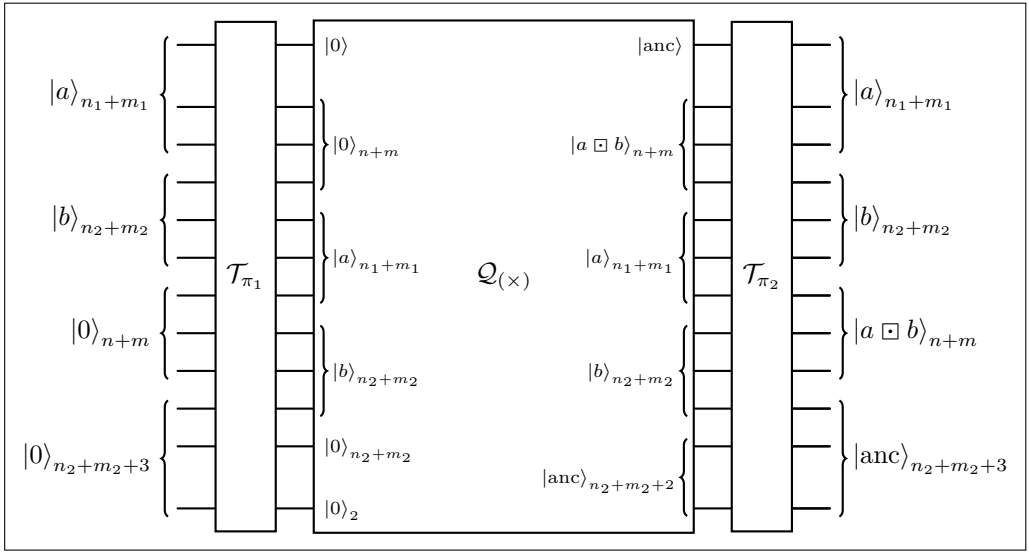
\begin{figure}
	\centering 
	\boxed{
	\begin{quantikz}
		\lstick[wires=3]{$\ket{a}_{n_1+m_1}$}  & \gate[12]{\mathcal{T}_{\pi_1}} & \gate[12][5cm]{\mathcal{Q}_{(\times)}}\gateinput{$\ket{0}$}\gateoutput{$\ket{\anc}$} &\gate[12]{\mathcal{T}_{\pi_2}} &\qw  \rstick[3]{$\ket{a}_{n_1+m_1}$}\\
		& &\gateinput[3]{$\ket{0}_{n+m}$}\gateoutput[3]{$\ket{a \boxdot b}_{n+m}$} & & \qw \\
		& & & & \qw \\
		\lstick[3]{$\ket{b}_{n_2+m_2}$} &  & & &\qw \rstick[3]{$\ket{b}_{n_2+m_2}$} \\
		& &\gateinput[3]{$\ket{a}_{n_1+m_1}$}\gateoutput[3]{$\ket{a }_{n_1+m_1}$} & & \qw \\
		& & & & \qw \\
		\lstick[3]{$\ket{0}_{n + m}$} & & & &\qw \rstick[3]{$\ket{a \boxdot b}_{n+m}$} \\
		& &\gateinput[3]{$\ket{b}_{n_2+m_2}$}\gateoutput[3]{$\ket{b}_{n_2+m_2}$} & & \qw \\
		& & & & \qw \\	
		\lstick[3]{$\ket{0}_{n_2 + m_2 + 3}$} & & & &\qw \rstick[3]{$\ket{\mathrm{anc}}_{n_2+m_2+3}$} \\
		& &\gateinput{$\ket{0}_{n_2+m_2}$}\gateoutput[2]{$\ket{\anc}_{n_2+m_2+2}$} & & \qw \\
		& &\gateinput{$\ket{0}_2$} & & \qw 
	\end{quantikz}	
	} 
	\caption{Circuit diagram for $\widetilde{\mathcal{Q}}_{(\times)}$ in Corollary~\ref{corollary: mult circuit}.} 
	\label{fig: mult circuit diagram}
\end{figure}
\begin{corollary}[Quantum circuit for multiplication with fractional part]\label{corollary: mult circuit}
    Let $n_1,n_2,m_1,m_2 \in \N$, with $n_1+m_1 \geq n_2 + m_2$. Let $n := n_1+n_2$ and $m := m_1 + m_2$. Then, there is a quantum circuit $\widetilde{\mathcal{Q}}_{(\times)}$ on $(2n+2m + n_2+m_2 +3)$ qubits such that for any $a \in \mathbb{F}_{n_1,m_1}$, $b \in \mathbb{F}_{n_2,m_2}$,
    \begin{equation}\label{eqn: cor mult}
        \widetilde{\mathcal{Q}}_{(\times)}: \ket{a}_{n_1+m_1}\ket{b}_{n_2+m_2}\ket{0}_{n+m}\ket{0}_{n_2+m_2+3} \mapsto \ket{a}_{n_1+m_1}\ket{b}_{n_2+m_2}\ket{a \boxdot b}_{n+m}\ket{\anc}_{n_2+m_2+3}.
    \end{equation}
        The quantum circuit $\widetilde{\mathcal{Q}}_{(\times)}$ requires at most $61 (n+m+1)^2$ elementary gates. See Figure~\ref{fig: mult circuit diagram} for the circuit diagram.
\begin{proof}
    By Lemma \ref{lemma: permutation circuit}, there is a quantum circuit $\mathcal{T}_{\pi}$ with at most $2(2n+2m + n_2+m_2 +3)^2$ swap gates satisfying for any $a \in \mathbb{F}_{n_1,m_1}$, $b \in \mathbb{F}_{n_2,m_2}$ that 
    \begin{equation}\label{eqn: cor mult eqn 1}
        \mathcal{T}_{\pi}: \ket{a}_{n_1+m_1}\ket{b}_{n_2+m_2}\ket{0}_{n+m}\ket{0}_{n_2+m_2+3}  \mapsto \ket{0}\ket{0}_{n+m} \ket{a}_{n_1+m_1}\ket{b}_{n_2+m_2}\ket{0}_{n_2+m_2}\ket{0}_2.
    \end{equation}
    Note that by Lemma \ref{lemma: multiplication in TC}, we can extend the multiplication operation $\boxdot: \mathbb{F}_{n_1+m_1,0} \times \mathbb{F}_{n_2+m_2,0} \to \mathbb{F}_{n+m,0}$ to $\boxdot: \mathbb{F}_{n_1,m_1} \times \mathbb{F}_{n_2,m_2} \to \mathbb{F}_{n,m}$. This, the condition that $n_1+m_1 \geq n_2+m_2$, and Lemma \ref{lemma: quantum mult1} (with $n_1 \leftarrow n_1 + m_1, n_2 \leftarrow n_2 + m_2$ in the notation of Lemma \ref{lemma: quantum mult1}) imply that there exists a quantum circuit $\mathcal{Q}_{(\times)}$ such that for any $a \in \mathbb{F}_{n_1,m_1}$, $b \in \mathbb{F}_{n_2,m_2}$ 
    \begin{equation}\label{eqn: cor mult eqn 2}
        \mathcal{Q}_{(\times)}:\ket{0}\ket{0}_{n+m}\ket{a}_{n_1+m_1} \ket{b}_{n_2+m_2} \ket{0}_{n_2+m_2} \ket{0}_{2}\mapsto \ket{\anc} \ket{a \boxdot b}_{n+m}\ket{a}_{n_1+m_1} \ket{b}_{n_2+m_2} \ket{\anc}_{n_2+m_2+2},
    \end{equation}
    and that the number of elementary gates required to construct $\mathcal{Q}_{(\times)}$ is at most
    \begin{equation}
        (\frac{1}{2}(5(n_1+m_1)^2 + (n_1+m_1)) + 4(n_2+m_2)^2 + 4(n_1+m_1)(n_2+m_2) + 6(n_2+m_2) + 7).
    \end{equation}
    By another application of Lemma \ref{lemma: permutation circuit}, there is a quantum circuit $\mathcal{T}_{\pi'}$ with at most 
    $2(2n+2m + n_2+m_2 +3)^2$ swap gates satisfying
    \begin{equation}\label{eqn: cor mult eqn 3}
        \mathcal{T}_{\pi'}:\ket{\anc} \ket{a \boxdot b}_{n+m}\ket{a}_{n_1+m_1} \ket{b}_{n_2+m_2} \ket{\anc}_{n_2+m_2+2} \mapsto \ket{a \boxdot b}_{n+m}\ket{a}_{n_1+m_1} \ket{b}_{n_2+m_2} \ket{\anc}_{n_2+m_2+3}.
    \end{equation}
    We define the quantum circuit $\widetilde{\mathcal{Q}}_{(\times)}=\mathcal{T}_{\pi'}\mathcal{Q}_{(\times)}\mathcal{T}_{\pi}$. Hence, \eqref{eqn: cor mult eqn 1}, \eqref{eqn: cor mult eqn 2}, and  \eqref{eqn: cor mult eqn 3} imply that the quantum circuit $\widetilde{\mathcal{Q}}_{(\times)}$ satisfy \eqref{eqn: cor mult} for all $a \in \mathbb{F}_{n_1,m_1}$, $b \in \mathbb{F}_{n_2,m_2}$. Moreover, the number of elementary gates required to construct $\widetilde{\mathcal{Q}}_{(\times)}$ is at most
    \begin{equation}
        \begin{split}
    &  2(2n+2m + n_2+m_2 +3)^2 + \frac{1}{2}(5(n_1+m_1)^2 + (n_1+m_1)) + 4(n_2+m_2)^2 + 4(n_1+m_1)(n_2+m_2) \\
    &\qquad + 6(n_2+m_2) + 7 + 2(2n+2m + n_2+m_2 +3)^2\\
    &\leq 2\cdot 3^2 (n+m+1)^2 + \frac{5}{2}(n+m+1)^2 + \frac{1}{2}(n+m+1) + 4(n+m+1)^2 + 4(n+m+1)^2\\
    &\qquad + 6(n+m+1) + 7(n+m+1) + 2\cdot3^2 (n+m+1)^2\\
    &\leq (18 + 3 + 1 + 4 + 4 + 6 + 7 + 18)(n+m+1)^2\\
    &= 61 (n+m+1)^2.
        \end{split}
    \end{equation}
    
\end{proof}
\end{corollary}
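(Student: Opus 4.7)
The plan is to reduce this corollary directly to Lemma~\ref{lemma: quantum mult1} (integer multiplication) in the same spirit as the proof of Corollary~\ref{corollary: add circuit} was reduced to Lemma~\ref{lemma: quantum addition}. The key observation is that Lemma~\ref{lemma: multiplication in TC} shows the fractional multiplication $\boxdot: \mathbb{F}_{n_1,m_1}\times \mathbb{F}_{n_2,m_2}\to \mathbb{F}_{n,m}$ is realized by the integer multiplication $\boxdot: \mathbb{F}_{n_1+m_1,0}\times \mathbb{F}_{n_2+m_2,0}\to \mathbb{F}_{n+m,0}$ applied to the same underlying bit strings, up to a regrouping of which bits are designated as integer vs.\ fractional. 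Thus the fractional multiplication circuit is literally the integer multiplication circuit, sandwiched between two permutation circuits that rearrange qubit positions so that the registers presented to Lemma~\ref{lemma: quantum mult1} are in the right order, and then rearrange the outputs into the layout demanded by \eqref{eqn: cor mult}.

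Concretely, I would first construct a permutation $\pi$ via Lemma~\ref{lemma: permutation circuit} that sends
\[
    \ket{a}_{n_1+m_1}\ket{b}_{n_2+m_2}\ket{0}_{n+m}\ket{0}_{n_2+m_2+3}
    \;\longmapsto\;
    \ket{0}\ket{0}_{n+m}\ket{a}_{n_1+m_1}\ket{b}_{n_2+m_2}\ket{0}_{n_2+m_2}\ket{0}_2,
\]
matching the input layout of Lemma~\ref{lemma: quantum mult1} with parameters $n_1\leftarrow n_1+m_1$, $n_2\leftarrow n_2+m_2$ (this is valid since the hypothesis $n_1+m_1 \geq n_2+m_2$ holds by assumption). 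I would then apply that integer multiplication circuit $\mathcal{Q}_{(\times)}$ to obtain the product register $\ket{a\boxdot b}_{n+m}$ together with a block of ancilla qubits, where by Lemma~\ref{lemma: multiplication in TC} the bit-string output is exactly the fractional $\boxdot$. Finally, a second permutation $\pi'$ from Lemma~\ref{lemma: permutation circuit} moves $\ket{a}_{n_1+m_1}$ and $\ket{b}_{n_2+m_2}$ to the front, places $\ket{a\boxdot b}_{n+m}$ after them, and consolidates all remaining qubits as the $(n_2+m_2+3)$-qubit ancilla block, yielding~\eqref{eqn: cor mult}. Setting $\widetilde{\mathcal{Q}}_{(\times)}:=\mathcal{T}_{\pi'}\mathcal{Q}_{(\times)}\mathcal{T}_{\pi}$ gives the desired circuit on $(2n+2m+n_2+m_2+3)$ qubits.

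For the gate count, each of the two permutation circuits acts on the full register of size $N:=2n+2m+n_2+m_2+3$, so by Lemma~\ref{lemma: permutation circuit} each uses at most $2N^2 \leq 2\cdot(3(n+m+1))^2 = 18(n+m+1)^2$ swap gates. The integer multiplication circuit, by Lemma~\ref{lemma: quantum mult1}, contributes at most
\[
\tfrac{1}{2}\bigl(5(n_1+m_1)^2+(n_1+m_1)\bigr) + 4(n_2+m_2)^2 + 4(n_1+m_1)(n_2+m_2) + 6(n_2+m_2) + 7,
\]
and I would bound each term crudely by a multiple of $(n+m+1)^2$: the quadratic $(n_1+m_1)^2$ and $(n_2+m_2)^2$ terms contribute at most $(\tfrac{5}{2}+4+4)(n+m+1)^2$, while the linear and constant terms are absorbed into additional $(n+m+1)^2$ summands. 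Adding the two permutation contributions, the sum fits under $18+18+\tfrac{5}{2}+4+4+\text{(slack)} \leq 61$ coefficients of $(n+m+1)^2$.

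The main obstacle is purely bookkeeping: carefully verifying which qubit positions the permutations need to send where (particularly tracking the single MSB qubit that Lemma~\ref{lemma: quantum mult1} places before the $(n+m)$-qubit zero register), and ensuring the loose constants in each of the quadratic, linear, and constant terms from Lemma~\ref{lemma: quantum mult1} really do sum to a coefficient at most $61$ of $(n+m+1)^2$. No new mathematical ideas are needed beyond the two's complement shift trick already captured in Lemma~\ref{lemma: multiplication in TC}.
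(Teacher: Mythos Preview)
Your proposal is correct and follows essentially the same approach as the paper's proof: two permutation circuits from Lemma~\ref{lemma: permutation circuit} sandwiching the integer multiplication circuit of Lemma~\ref{lemma: quantum mult1} (with parameters $n_1\leftarrow n_1+m_1$, $n_2\leftarrow n_2+m_2$), justified by Lemma~\ref{lemma: multiplication in TC}, and the same bookkeeping to reach the bound $61(n+m+1)^2$. The paper carries out exactly this construction and the term-by-term estimate you sketch.
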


\begin{lemma}[Quantum circuit for integer comparison](\cite[Section 3.4, QComp]{QC12_arithmetic2})\label{lemma: quantum comp}
Let $n_1,n_2 \in \N$, with $n_1 \geq n_2$. Let $n = n_1+n_2$. Then, there is a quantum circuit $\mathcal{Q}_{(\mathrm{comp})}$ on $(n_1+n_2+4)$ qubits such that for any $a \in \mathbb{F}_{n_1,0}, b \in \mathbb{F}_{n_2,0}$,
\begin{equation}
    \mathcal{Q}_{(\mathrm{comp})}:\ket{0}\ket{a}_{n_1}\ket{b}_{n_2}\ket{0}\ket{0}\ket{0} \mapsto \ket{a\boxminus b}_{n_1+1}\ket{b}_{n_2}\ket{c_1}\ket{c_2}\ket{c_3},
\end{equation}
where\footnote{The notation $a \boxminus b $ here refers to subtraction for a pair of two complement numbers, i.e., for any $a,b \in \mathbb{F}_{n,m}$, $a \boxminus b  \in \mathbb{F}_{n+1,m}$ is defined by $\mathrm{E}_{n+1,m}(\mathrm{D}_{n,m}(a)-\mathrm{D}_{n,m}(b))$.}
\begin{equation}
    \ket{c_1}\ket{c_2}\ket{c_3} = \begin{cases}
        \ket{1}\ket{0}\ket{0}, \quad \text{if $\mathrm{D}_{n_1,0}(a)>\mathrm{D}_{n_2,0}(b)$}, \\ 
        \ket{0}\ket{1}\ket{0}, \quad \text{if $\mathrm{D}_{n_1,0}(a)<\mathrm{D}_{n_2,0}(b)$},\\
        \ket{0}\ket{0}\ket{1}, \quad \text{if $\mathrm{D}_{n_1,0}(a)=\mathrm{D}_{n_2,0}(b)$}.
    \end{cases}
\end{equation}
The quantum circuit $\mathcal{Q}_{(\mathrm{comp})}$ uses $(n_1^2 + 3n_1 + 41 + n_2(2n_1-n_2+3)/2)$ elementary gates.
\end{lemma}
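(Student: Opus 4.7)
The plan is to reduce integer comparison to two's complement subtraction, reusing the quantum adder from Lemma \ref{lemma: quantum addition}. Given registers $\ket{a}_{n_1}$ and $\ket{b}_{n_2}$, I first prepare an $(n_1+1)$-qubit register initialized to $\ket{0}$ (to hold the overflow bit), and compute the difference $a-b$ by invoking the adder on $a$ and the two's complement negation $-b$. The negation can be produced by applying $n_2$ bitwise $X$ gates on a copy of $b$ together with an "add one" step, or equivalently by modifying the carry-in of the adder so that no explicit negation register is needed; this is the standard trick for reusing an adder as a subtractor. The result is an $(n_1+1)$-bit two's complement representation of $a-b$ whose most significant bit (MSB) encodes the sign.

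Next, I extract the three comparison flags from the difference. The flag $\ket{c_2}$ (``$a<b$'') is obtained by copying the MSB of $a-b$ onto a fresh ancilla via a single \text{CNOT}. The equality flag $\ket{c_3}$ (``$a=b$'') is obtained by initializing an ancilla to $\ket{1}$, applying an $X$ gate to each bit of the difference register, and then performing a multi-controlled Toffoli (expanded as a chain of Toffolis with $O(n_1+n_2)$ auxiliary operations) that keeps the ancilla at $\ket{1}$ exactly when every bit of $a-b$ is $0$; the $X$ gates are then undone. Finally, $\ket{c_1}$ is produced from $\ket{c_2},\ket{c_3}$ via a single Toffoli-like combination implementing $c_1 = \neg c_2 \wedge \neg c_3$. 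After the flags have been written, the difference register and all intermediate carry/ancilla qubits are uncomputed by running the adder (and the negation step) in reverse, which leaves $\ket{a}_{n_1}\ket{b}_{n_2}$ untouched and the four ancilla qubits (three flags plus one scratch) in the claimed final form.

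For the gate-count bookkeeping, the dominant contribution $n_1^2 + 3n_1 + 18 + n_2(2n_1-n_2+3)/2$ comes from the single invocation of the adder from Lemma \ref{lemma: quantum addition}; its inverse for uncomputation contributes no asymptotic overhead because the paper only tracks elementary gate count, and the inverse of each elementary gate is again elementary and already factored into the same bound (one does not re-add the adder cost since uncomputation of the ancilla registers can be absorbed by an appropriate clean-computation scheme as in \cite{QC12_arithmetic2}). The remaining constant $41-18 = 23$ elementary gates must cover the negation trick, the MSB copy, the zero-test chain, and the synthesis of $c_1,c_2,c_3$, all of which are of constant cost (independent of $n_1,n_2$) once the adder is in place.

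The main obstacle I expect is the precise accounting: one must verify that the subtractor-via-adder construction, the zero-detection, and the final logic synthesis for $(c_1,c_2,c_3)$ together fit into the stated $+23$ overhead, while also ensuring that every auxiliary qubit introduced during the zero-test is cleanly uncomputed. Since the detailed circuit is constructed in \cite{QC12_arithmetic2}, the role of the proof sketch here is mainly to justify that such a $\mathcal{Q}_{(\mathrm{comp})}$ exists with the claimed input-output specification and that the resource bounds stated match those established in that reference.
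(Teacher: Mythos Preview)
The paper does not supply a proof of this lemma: it is stated with a citation to \cite[Section 3.4, QComp]{QC12_arithmetic2} and used as a black box, so there is nothing in the paper to compare against beyond the cited reference itself. Your final paragraph already acknowledges this.

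That said, your sketch has gate-counting problems that would not survive scrutiny. First, a zero-test on the $(n_1+1)$-bit difference register is \emph{not} constant cost: detecting whether all bits vanish requires a multi-controlled gate (or a Toffoli cascade) whose size grows linearly in $n_1$, so it cannot fit into a fixed budget of $23$ gates independent of $n_1,n_2$. Second, your uncomputation argument is incoherent: running the adder backwards to clean the difference register costs the same number of elementary gates as running it forwards, and this is certainly \emph{not} ``already factored into the same bound''; invoking an unspecified ``clean-computation scheme'' does not make the cost disappear. If the comparator genuinely computed $a-b$, zero-tested it, and then uncomputed, the leading term would be at least $2(n_1^2+3n_1+18)$, not $n_1^2+3n_1+41$.

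The actual construction in \cite{QC12_arithmetic2} does not follow the subtract-then-zero-test route you describe. It interleaves the comparison logic with the ripple-carry structure of the adder itself, so that the sign and equality information is extracted during a single forward pass rather than by computing a full difference and then post-processing. This is why the overhead over the bare adder is only a constant $23$ gates. Your high-level intuition (``comparison $\approx$ subtraction'') is correct, but the specific decomposition you propose would not meet the stated gate bound.
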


\begin{figure}
	\centering 
	\boxed{
		\begin{quantikz}[row sep={20pt,between origins},column sep=20pt,font=\small]
\lstick[3]{$\ket{a}_{n_1+m_1}$} & \gate[10]{\mathcal{T}_{\pi}} & \qw & \gate[7][3cm]{\mathcal{Q}_{(+)}}\gateinput{$\ket{0}$}\gateoutput[4]{$\ket{(a \boxminus b) \boxplus b}$} & \gate[10]{\mathcal{T}_{\pi}} &\qw \rstick[3]{$\ket{a}_{n_1+m_1}$}\\
&\qw &\gate[9][5cm]{\mathcal{Q}_{(\mathrm{comp})}}\gateinput{$\ket{0}$}\gateoutput[3]{$\ket{a \boxminus b}_{n_1+m_1+1}$} &\qw\gateinput[3]{$\ket{a \boxminus b}$} &\qw &\qw \\
&\qw &\qw\gateinput[3]{$\ket{a}_{n_1+m_1}$} &\qw &\qw &\qw \\
\lstick[3]{$\ket{b}_{n_2+m_2}$} &\qw &\qw &\qw &\qw &\qw  \rstick[3]{$\ket{b}_{n_2+m_2}$}\\
&\qw &\qw\gateoutput[3]{$\ket{b}_{n_2+m_2}$} &\qw\gateinput[3]{$\ket{b}$}\gateoutput[3]{$\ket{b}$} &\qw &\qw \\
&\qw &\qw\gateinput[3]{$\ket{b}_{n_2+m_2}$} &\qw &\qw &\qw \\
\lstick{$\ket{0}$} &\qw &\qw &\qw &\qw &\qw \rstick{$\ket{c_1}$}\\
\lstick{$\ket{0}$} &\qw &\qw\gateoutput{$\ket{c_1}$} &\qw &\qw &\qw \rstick{$\ket{c_2}$}\\
\lstick{$\ket{0}$} &\qw &\qw\gateinput{$\ket{0}$}\gateoutput{$\ket{c_2}$} &\qw &\qw &\qw \rstick{$\ket{c_3}$}\\
\lstick{$\ket{0}_2$} &\qw &\qw\gateinput{$\ket{0}$}\gateoutput{$\ket{c_3}$} &\qw &\qw &\qw \rstick{$\ket{\mathrm{anc}}_2$}
		\end{quantikz}		
	} 
	\caption{Circuit diagram for ${\widetilde{\mathcal{Q}}}_{(\mathrm{comp})}$ in Corollary~\ref{corollary: comp circuit}.} 
	\label{fig: comp circuit diagram}
\end{figure}
\begin{corollary}[Quantum circuit for fractional comparison]\label{corollary: comp circuit}
    
        Let $n_1,n_2,m_1,m_2 \in \N$, with $n_1+m_1 \geq n_2 + m_2$. Let $n = n_1+n_2$, and $m=m_1+m_2$. Then, there is a quantum circuit $\widetilde{\mathcal{Q}}_{(\mathrm{comp})}$ on $(n+m+5)$ qubits such that for any $a \in \mathbb{F}_{n_1,m_1}$, $b \in \mathbb{F}_{n_2,m_2}$,
        \begin{equation}\label{eqn: comp1}
            \widetilde{\mathcal{Q}}_{(\mathrm{comp})}: \ket{a}_{n_1+m_1}\ket{b}_{n_2+m_2}\ket{0}_{5} \mapsto \ket{a}_{n_1+m_1}\ket{b}_{n_2+m_2}\ket{c_1}\ket{c_2}\ket{c_3}\ket{\anc}_2,
        \end{equation}
    where
        \begin{equation}\label{eqn: comp2}
        \ket{c_1}\ket{c_2}\ket{c_3} = \begin{cases}
            \ket{1}\ket{0}\ket{0}, \quad \text{if $\mathrm{D}_{n_1,m_1}(a)>\mathrm{D}_{n_2,m_2}(b)$}, \\ 
            \ket{0}\ket{1}\ket{0}, \quad \text{if $\mathrm{D}_{n_1,m_1}(a)<\mathrm{D}_{n_2,m_2}(b)$},\\
            \ket{0}\ket{0}\ket{1}, \quad \text{if $\mathrm{D}_{n_1,m_1}(a)=\mathrm{D}_{n_2,m_2}(b)$}.
        \end{cases}
    \end{equation}
        The quantum circuit $\widetilde{\mathcal{Q}}_{(\mathrm{comp})}$ requires at most $175(n+m+1)^2$ elementary gates. See Figure~\ref{fig: comp circuit diagram} for the circuit diagram.
        \begin{proof}
            The construction of the quantum circuit $\widetilde{\mathcal{Q}}_{(\mathrm{comp})}$ consists of the following steps.
            
            \begin{enumerate}
                \item We first employ the permutation circuit $\Tau_\pi$ from  Lemma \ref{lemma: permutation circuit} so that 
                \begin{equation}\label{eqn: proof comp3}
                    \Tau_\pi: \ket{a}_{n_1+m_1}\ket{b}_{n_2+m_2}\ket{0}_{5} \mapsto \ket{0}\ket{0} \ket{a}_{n_1+m_1}\ket{b}_{n_2+m_2}\ket{0} \ket{0}\ket{0}.
                \end{equation}
                In this step, the number of elementary gates used is at most $2(n+m+5)^2$.
                
                \item Next, we use the comparison quantum circuit $\mathcal{Q}_{(\mathrm{comp})}$ from Lemma \ref{lemma: quantum comp} (with $n_1 \leftarrow n_1+m_1$, $n_2 \leftarrow n_2 + m_2$ in the notation of Lemma \ref{lemma: quantum comp}) to obtain 
                \begin{equation}\label{eqn: proof comp4}
                    I_2 \otimes \mathcal{Q}_{(\mathrm{comp})}:\ket{0}\ket{0}\ket{a}_{n_1+m_1}\ket{b}_{n_2+m_2}\ket{0}\ket{0}\ket{0} \mapsto \ket{0}\ket{a \boxminus b}_{n_1+m_1+1}\ket{b}_{n_2+m_2}\ket{c_1}\ket{c_2}\ket{c_3},
                \end{equation}
                where $\ket{c_1}\ket{c_2}\ket{c_3}$ satisfy \eqref{eqn: comp2}. In this step, the number of elementary gates used is 
                \begin{equation}
                    (n_1+m_1)^2 + 3(n_1+m_1) + 41 + (n_2+m_2)\big(2(n_1+m_1)-(n_2+m_2)+3\big)/2.
                \end{equation}
            
                \item We use the adder quantum circuit $Q_{(+)}$ from Lemma \ref{lemma: quantum addition} (with $n_1 \leftarrow n_1 + m_1+1$, $n_2 \leftarrow n_2+m_2$ in the notation of Lemma \ref{lemma: quantum addition}) to obtain 
                \begin{equation}
                    Q_{(+)}\otimes I_2^{\otimes 3}: \ket{0}\ket{a \boxminus b}_{n_1+m_1+1}\ket{b}_{n_2+m_2}\ket{c_1}\ket{c_2}\ket{c_3} \mapsto \ket{(a \boxminus b)\boxplus b}_{n_1+m_1+2}\ket{b}_{n_2+m_2}\ket{c_1}\ket{c_2}\ket{c_3}.
                \end{equation}            
                In this step, the number of elementary gates used is 
                \begin{equation}
                    (n_1+m_1+1)^2+3(n_1+m_1+1)+18+\frac{1}{2}\big((n_2+m_2)(2(n_1+m_1+1)-(n_2+m_2)+3)\big).
                \end{equation}
                \item As elements of $\mathbb{F}_{n_1+2,m_2}$, it can be directly checked that 
                \begin{equation}
                    (a \boxminus b)\boxplus b = \begin{cases}
                        ((1,1,a_{n_1-1},\ldots,a_0),(a_{-1},\ldots,a_{-m_1})) & \text{if $\mathrm{D}_{n_1,m_1}(a)<0$},\\
                       ((0,0,a_{n_1-1},\ldots,a_0),(a_{-1},\ldots,a_{-m_1})) & \text{if $\mathrm{D}_{n_1,m_1}(a)\geq0$}.
                    \end{cases}
                \end{equation}
                In the above expression, we treat the leftmost two bits as ancilla qubits and rewrite $\ket{(a \boxminus b)\boxplus b}_{n_1+m_1+2} = \ket{\anc}_2\ket{a}_{n_1+m_1}$. We apply another permutation circuit $\Tau_\pi$ from  Lemma \ref{lemma: permutation circuit} to obtain 
                \begin{equation}
                    \Tau_\pi: \ket{(a \boxminus b)\boxplus b}_{n_1+m_1+2}\ket{b}_{n_2+m_2}\ket{c_1}\ket{c_2}\ket{c_3} \mapsto \ket{a}_{n_1+m_1}\ket{b}_{n_2+m_2}\ket{c_1}\ket{c_2}\ket{c_3}\ket{\anc}_2.
                \end{equation}
            In this step, the number of elementary gates used is at most $2(n+m+5)^2$.
            \end{enumerate}
            The desired quantum circuit $\widetilde{Q}_{(\mathrm{comp})}$ is constructed from the above steps.  We note that the number of elementary gates used to construct $\widetilde{Q}_{(\mathrm{comp})}$ is at most
            \begin{equation}
                \begin{split}
                    &2(n+m+5)^2 + [(n_1+m_1)^2 + 3(n_1+m_1) + 41 + (n_2+m_2)(2(n_1+m_1)-(n_2+m_2)+3)/2]\\
                    &\quad +[(n_1+m_1+1)^2+3(n_1+m_1+1)+18+\frac{1}{2}\big((n_2+m_2)(2(n_1+m_1+1)-(n_2+m_2)+3)\big)] + 2(n+m+5)^2\\
                    &\leq 2\cdot5^2(n+m+1)^2+ [(n+m+1)^2+ 3(n+m+1) + 41 + (n+m+1)^2 + \frac{3}{2}(n+m+1)]\\
                    &\quad + [(n+m+1)^2 + 3(n+m+1) + 18 + (n+m+1)^2 + \frac{3}{2}(n+m+1)^2 + 2(n+m+1)^2] + 2\cdot5^2(n+m+1)^2\\
                    &\leq (50 + 48 + 27+ 50)(n+m+1)^2\\
                    &= 175(n+m+1)^2.
                \end{split}
            \end{equation}
            
        \end{proof}
\end{corollary}

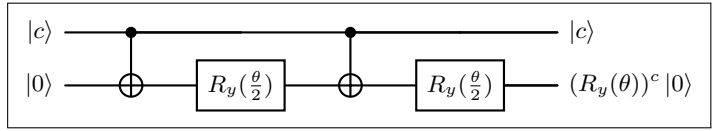
\begin{figure}
	\centering 
	\boxed{
		\begin{quantikz}[row sep={20pt,between origins},column sep=20pt,font=\small]
	\lstick{$\ket{c}$} & \ctrl{1} &\qw & \ctrl{1} &\qw &\qw \rstick{$\ket{c}$}\\
	\lstick{$\ket{0}$} & \targ{} &\gate{R_y(\tfrac{\theta}{2})}	& \targ{}  &\gate{R_y(\tfrac{\theta}{2})}	&\qw \rstick{$(R_y(\theta))^c\ket{0}$}
		\end{quantikz}		
	} 
	\caption{Circuit diagram for $CR_y(\theta)$ in Lemma~\ref{lemma: controlled y-rotation}.} 
	\label{fig: controlled y-rotation diagram}
\end{figure}
\begin{lemma} [Controlled $Y$-rotations]\label{lemma: controlled y-rotation}
For any $\theta \in (0,4\pi)$, there is a controlled $Y$-rotation gate acting on two qubits that performs the following operation
\begin{equation}
    CR_y(\theta):\ket{c}\ket{0} \mapsto \ket{c} (R_y(\theta))^{c}\ket{0} = \begin{cases} \ket{c}\ket{0}, \quad \text{if $c = 0$},\\ \ket{c}(\cos(\theta/2)\ket{0} + \sin(\theta/2)\ket{1}), \quad \text{if $c = 1$}. \end{cases}
\end{equation}
The quantum circuit to construct $CR_y(\theta)$ requires two $R_y(\theta/2)$ gates 
and two $\CNOT$ gates.
See Figure~\ref{fig: controlled y-rotation diagram} for the circuit diagram.
\begin{proof}
    The quantum circuit can be constructed by the following definition
    \begin{equation}
        CR_y(\theta) = (I_2\otimes R_y(\theta/2))(\CNOT)(I_2\otimes R_y(\theta/2))(\CNOT).
    \end{equation}
\end{proof}
\end{lemma}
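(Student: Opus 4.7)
The plan is to give an explicit two-qubit circuit and verify by direct computation on the computational basis that its action on any input of the form $\ket{c}\ket{0}$, $c\in\{0,1\}$, agrees with the stated definition of $CR_y(\theta)$. Since $CR_y(\theta)$ is only prescribed on the two basis vectors $\ket{0}\ket{0},\ket{1}\ket{0}$, it suffices to check these two cases; the gate count (two CNOTs and two single-qubit $Y$-rotations) is then read off from the construction.

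More precisely, I would take the candidate circuit
\[
 U \;:=\; \bigl(I_2\otimes R_y(\theta/2)\bigr)\,\mathrm{CNOT}\,\bigl(I_2\otimes R_y(\theta/2)\bigr)\,\mathrm{CNOT},
\]
as written in the statement (with the understanding, standard in the literature, that the two rotations may need to appear with opposite signs; see the next paragraph). Using the matrix forms of $R_y(\vartheta)$ from Example \ref{example: rotation gates} and of $\mathrm{CNOT}$ from Example \ref{example: cnot gate}, I would evaluate $U\ket{0}\ket{0}$ and $U\ket{1}\ket{0}$ one gate at a time. For $c=0$ the CNOT gates act as the identity, so the target qubit simply accumulates both $R_y(\theta/2)$ rotations and the trigonometric identities
\[
 R_y(\alpha)R_y(\beta)=R_y(\alpha+\beta),\qquad \cos^2(\tfrac{\theta}{4})-\sin^2(\tfrac{\theta}{4})=\cos(\tfrac{\theta}{2}),\qquad 2\sin(\tfrac{\theta}{4})\cos(\tfrac{\theta}{4})=\sin(\tfrac{\theta}{2})
\]
are used to bring the result to the required form. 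For $c=1$, after the first CNOT the target is flipped to $\ket{1}$, the middle rotation then produces $\sin(\theta/4)\ket{0}+\cos(\theta/4)\ket{1}$ (using $R_y(\theta/2)\ket{1}=-\sin(\theta/4)\ket{0}+\cos(\theta/4)\ket{1}$ if the intended sign is positive, or the corresponding mirror identity if negative), the second CNOT swaps the two amplitudes back, and the final rotation together with the same double-angle identities yields $\cos(\theta/2)\ket{0}+\sin(\theta/2)\ket{1}$.

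The only genuine subtlety is a sign convention: a careful evaluation of $U\ket{0}\ket{0}$ with both rotations equal to $R_y(\theta/2)$ produces $\ket{0}\bigl(\cos(\theta/2)\ket{0}+\sin(\theta/2)\ket{1}\bigr)$ rather than $\ket{0}\ket{0}$, because the two identical rotations compose to $R_y(\theta)$ when the control is $0$. The standard fix, and the one I would adopt and spell out explicitly in the proof, is to replace the middle rotation by $R_y(-\theta/2)$, so that for $c=0$ the two rotations cancel, while for $c=1$ the intervening CNOTs conjugate $R_y(-\theta/2)$ into $R_y(\theta/2)$, leading to the desired total rotation $R_y(\theta)$ on the target. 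Thus the main obstacle is not computational but cosmetic: stating the decomposition with the correct sign. Once that is fixed, the circuit uses exactly two $R_y(\theta/2)$-type single-qubit gates (both counted in $\mathbb{G}$, see Definition \ref{def: elementary gate set}) and two CNOT gates, giving the claimed gate count of $4$ elementary gates and completing the proof.
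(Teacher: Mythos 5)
Your proposal is correct and follows essentially the same route as the paper: exhibit the explicit four-gate decomposition and verify its action on $\ket{c}\ket{0}$ by direct computation (the paper merely states the decomposition without carrying out the check). Your sign concern is genuine: the paper's displayed identity with two identical $R_y(\theta/2)$ factors maps $\ket{0}\ket{0}$ to $\ket{0}\,R_y(\theta)\ket{0}\neq\ket{0}\ket{0}$, so one of the two rotations must indeed be $R_y(-\theta/2)$ (equivalently $R_y(4\pi-\theta/2)$, which still lies in the elementary gate set $\mathbb{G}$), exactly as you correct it; with this fix the control-$0$ case cancels, the control-$1$ case composes to $R_y(\theta)$ via CNOT conjugation, and the claimed count of two rotation gates plus two $\text{CNOT}$ gates is unaffected.
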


\subsection{Distribution loading}\label{section: distribution quantum circuit}
The task of loading an arbitrary $n$-qubit state on a quantum computer is known generally to be a hard problem, as highlighted, e.g., in \cite{QC17_knill}. However, in some cases, the problem of loading states  representing certain probability distributions on a quantum computer have been shown to be polynomially tractable. Grover and Rudolph have shown an efficient method to load a discrete approximation of any log-concave probability distributions \cite{QC18_Grover_distribution}. 
%
Zoufal et al.\ have employed the so-called \textit{quantum Generative Adversarial Networks} (qGANs) for learning and loading of probability distributions such as the uniform, normal, or log-normal distributions, including their multivariate versions \cite{QC16_GAN}. For these distributions, it has been shown empirically that the qGANs can well approximate the truncated and discretized distributions, and the gate complexity of the qGANs circuits scale only polynomially in the number of input qubits. Moreover, recently, \cite{fuchs2023hybrid} introduced \textit{Wasserstein}  qGAN (qWGAN) employing the Wasserstein loss, and numerically demonstrated the advantage over the standard qGAN methodology.
 In \cite{chakrabarti2021threshold}, Chakrabarti et al.\ constructed a quantum circuit for uploading the discretized multivariate log-normal distributions, where they used the Variational Quantum Eigensolvers (VQE) approach \cite{peruzzo2014variational} to upload quantum circuits for approximating the cumulative log-return process $R_t^{i}$, defined in \eqref{eqn: log-transform process}. It was estimated that loading the discretized multivariate log-normal distribution requires $O(Ld^2n\log_2(\eps^{-1}))$ gates, where $L\in \N$ is the depth of each variational quantum circuit for approximating the Gaussian distribution 
and $n$ is the number of qubits used in each quantum circuit for approximating the Gaussian distribution, see \cite[Appendix E]{chakrabarti2021threshold}. Justified by the above examples in the literature, we make the following assumption. 
\begin{assumption}[Loading of discretized multivariate log-normal distribution]\label{assumption: distribution loading}
Let $n,m \in \N$, and let $T > 0$. For every $d\in \N$ and $(t,\bm{x}) \in [0,T) \times \R_+^d$ let $p_d(\cdot,T;\bm{x},t):\R_+^d \to \R_+$ be the log-normal transition density given by \eqref{eqn: density formula}. Then, we assume that there exists a constant $C_3 \in [1,\infty)$ such that for every $d \in \N$ and $\varepsilon>0$ there exists a quantum circuit $\mathcal{P}_{d,\varepsilon}$ on $d(n+m)$ qubits such that the number of elementary gates used to construct $\mathcal{P}_{d,\varepsilon}$ is at most  
\begin{equation}\label{eqn: bound on P_d,eps}
C_3 d^{C_3}(n+m)^{C_3}(\log_2(\eps^{-1}))^{C_3}
\end{equation}
and that $\mathcal{P}_{d,\varepsilon}$ satisfies
\begin{equation}
\mathcal{P}_{d,\varepsilon}\ket{0}_{d(n+m)} = \sum_{\bm{i}=(i_1,\ldots,i_d) \in \mathbb{F}_{n,m,+}^d} \sqrt{\widetilde{p}_{\bm{i}}}\ket{i_1}_{n+m}\cdots\ket{i_d}_{n+m},
\end{equation}
with coefficients 
$\widetilde{p}_{\bm{i}}\in [0,1]$ satisfying
\begin{equation}
	\sum_{\bm{i} \in \mathbb{F}_{n,m,+}^d} \widetilde{p}_{\bm{i}}=1
\end{equation}
and
\begin{equation}
\sum_{\bm{i} \in \mathbb{F}_{n,m,+}^d}\big\vert \widetilde{p}_{\bm{i}} - \gamma^{-1}  p_{\bm{i},m}\big\vert\leq \varepsilon,
\end{equation}
where 
\begin{equation}
	p_{\bm{i},m} := \int_{Q_{\bm{i},m}} p_d(\bm{y},T;\bm{x},t)\,d\bm{y}, \quad Q_{\bm{i},m} := [\mathrm{D}_{n,m}(i_1),\mathrm{D}_{n,m}(i_1) + 2^{-m})\times \cdots \times [\mathrm{D}_{n,m}(i_d),\mathrm{D}_{n,m}(i_d)+2^{-m}),
\end{equation}
and
\begin{equation}
	\gamma := \sum_{\bm{i} \in \mathbb{F}_{n,m,+}^d} p_{\bm{i},m} \in (0,1)
\end{equation}
is a normalization constant.
\end{assumption} 
\begin{remark}\label{rem:exampleP}
	In case one uses the quantum circuit constructed in \cite[Appendix~E]{chakrabarti2021threshold} to upload the discretized multivariate log-normal distribution, the corresponding constant $C_3$ defined in Assumption~\ref{assumption: distribution loading} can be chosen to be $C_3:=\max\{2,L\}$, where $L\in \N$ is the depth of each variational quantum circuit involved in \cite[Appendix~E]{chakrabarti2021threshold} for approximating the involved Gaussian distributions; we refer to  \cite[Appendix~E]{chakrabarti2021threshold} for the precise construction of their quantum circuit. 
\end{remark}

\begin{remark}
For general probability distributions, loading of its discretized probability density function (PDF) remains one of the main problems in quantum computing. In the quantum computing literature, this step is also referred to as quantum state preparation, and it is an important initialization step for many quantum algorithms for pricing options. Recently, there has been new approaches to the quantum state preparation problem in the literature, that are not related to the qGAN or VQE methods reviewed above. In \cite{iaconis2023quantum}, the authors considered the quantum state preparation problem for probability distribution with smooth differentiable density functions, such as the normal distribution, where they proposed an algorithm based on the matrix product state (MPS) approximation method, and provide an error analysis and numerical convergence for the single-variate normal distribution. In \cite{pracht2023pricing}, the author proposed a quantum binomial tree algorithm to approximate the option prices in a discrete time setting. We refer the reader to \cite{chang2023novel} for a similar random walk based algorithm, and to  \cite{de2023quantum} for a hybrid classical quantum approach based on deconvolution methods for the quantum state preparation problem. However, to the best of our knowledge, there seems not to be any result in the literature that provides rigorous upper bounds on the quantum circuit complexities and as well as convergence for general multi-variate distributions.
\end{remark}

\subsection{Loading continuous piecewise affine payoff functions}\label{section: loading CPWA circuits}
The goal of this section is to upload (an approximation of) the payoff function $h:\R^d \to \R$ given in \eqref{eqn: CPWA payoff} to a quantum circuit. To that end, let $K \in \N$ be the number of component functions of the payoff function $h$ given in \eqref{eqn: CPWA payoff}, and for $k=1,\ldots,K$, let $h_k: [0,M]^d \to \R$ be (up to the sign) the corresponding $k$-th component of $h$ given by 
\begin{equation}
    h_k(\bm{x}) = \max\{\bm{a}_{k,l} \cdot \bm{x} + b_{k,l}: l = 1,\ldots,I_k\}, \label{eqn: normalized payoff}
\end{equation}
where $\bm{a}_{k,l}\in \R^d$, $b_{k,l} \in \R$ for $l =  1,\ldots,I_k$. The parameters $(\bm{a}_{k,l}, b_{k,l})$ are approximated by the two's complement method with binary strings of a suitable length. These binary strings are loaded on a qubit register using quantum circuits with $X$-gates, see Lemma \ref{lemma: quantum affine sum}. Using the arithmetic quantum circuits that we have constructed in the Section \ref{section: quantum arithmetic circuits}, we  construct a quantum circuit which computes the two's complement-discretized version of the payoff function $h_k(x)$. The discrete payoff function is then loaded by a controlled $Y$-rotation circuit, see Lemma \ref{lemma: quantum linear rotation} and Proposition \ref{prop: loading payoff circuit}. 
	We have also included quantum circuit diagrams in this section for the ease of understanding of the involved quantum circuits.

\begin{figure}[h!]
	\centering 
	\boxed{
\begin{quantikz} [row sep={20pt,between origins},column sep=20pt,font=\small]
\lstick{$\ket{i_1}_{n_1+m_1}$}	&\qw &\gate[6,nwires=3]{\mathcal{T}_\pi}  &\gate[6,nwires=3]{\bigotimes_{k=1}^d \mathcal{Q}_{(\times)}^{(k)}} &\gate[8,nwires=3]{\mathcal{T}_\pi} &\qw &\qw  &\qw\rstick{$\ket{i_1}_{n_1+m_1}$} \\
\lstick{$\ket{i_2}_{n_1+m_1}$} &\qw &\qw &\qw &\qw &\qw&\qw&\qw\rstick{$\ket{i_2}_{n_1+m_1}$}\\
\lstick{$\vdots$\qquad }  \\
\lstick{$\ket{i_d}_{n_1+m_1}$} &\qw &\qw & \qw &\qw &\qw&\qw&\qw \rstick{$\ket{i_d}_{n_1+m_1}$}\\
\lstick{$\ket{0}_{(d+1)(n_2+m_2)}$} &\gate[wires=3]{{\mathcal{X}}_{\bm{a},b}} &\qw &\qw &\qw &\gate[4]{\mathcal{Q}_{(+)} } &\gate[4]{\mathcal{T}_{\pi} }&\qw \rstick[3]{$\ket{\big(\boxplus_{l=1}^d a_l \boxdot i_l\big)\boxplus b}_{n+m+d} $}\\
\lstick{$\ket{0}_{d(n+m)}$} &\linethrough &\qw &\qw&\qw&\qw&\qw&\qw\\
\lstick{$\ket{0}_{d(n_2+m_2+3)}$} &\qw &\qw &\qw&\qw&\qw&\qw&\qw\\
\lstick{$\ket{0}_{d}$} &\qw &\qw &\qw&\qw&\qw&\qw&\qw\rstick{$\ket{\mathrm{anc}}$}
\end{quantikz}
} 
	\caption{Circuit diagram for $\mathcal{Q}_{+}^{d,n,m}$ in Lemma~\ref{lemma: quantum affine sum}.} 
	\label{fig: affine sum circuit diagram}
\end{figure}
\begin{lemma}[Quantum circuit for affine sums]\label{lemma: quantum affine sum}
Let $d,n_1,n_2 \in \N$, $m_1,m_2 \in \N_0$. Let $n := n_1+n_2$, and $m := m_1 + m_2$. Let $a_1,\ldots,a_d,b \in \mathbb{F}_{n_2,m_2}$. Then, there is a quantum circuit $\mathcal{Q}_{+}^{d,n,m}$ on $N$ qubits, where
\begin{equation}\label{eqn: lemma affine def N}
    N := d(n_1+m_1) + (d+1)(n_2+m_2) + d(n+m) + d(n_2+m_2+3) + d 
\end{equation}
such that for any $i_1,\ldots,i_d \in \mathbb{F}_{n_1,m_1}$,
\begin{equation}
\begin{split}
    \mathcal{Q}_{+}^{d,n,m}:&\ket{i_1}_{n_1+m_1}\cdots\ket{i_d}_{n_1+m_1}\ket{0}_{(d+1)(n_2+m_2)} \ket{0}_{ d(n+m)}\ket{0}_{d(n_2+m_2+3)}\ket{0}_{d} \\
    &\mapsto \ket{i_1}_{n_1+m_1}\cdots\ket{i_d}_{n_1+m_1}\ket{(\bigboxplus_{k=1}^d( a_k \boxdot i_k)) \boxplus b}_{n+m+d}\ket{\anc}_{p}, \label{eqn: affine sum}
\end{split}
\end{equation}
with $p := d(2n_2+2m_2+3) + (n_2+m_2)+ (d-1)(n+m)$, and where $(\displaystyle\boxplus_{k=1}^d( a_k \boxdot i_k)) \boxplus b \in \mathbb{F}_{d+n,m}$ is the two's complement binary string representing the affine sum 
\begin{equation}
\left(    \sum_{k=1}^d (\mathrm{D}_{n_2,m_2} (a_k) \cdot \mathrm{D}_{n_1,m_1}(i_k)) \right)+ \mathrm{D}_{n_2,m_2}(b) \in \mathbb{K}_{n+d,m},
\end{equation}
(c.f. Definition \ref{def: encoder-decoder}). The quantum circuit $\mathcal{Q}_{+}^{d,n,m}$ uses at most $563d^3(n+m+1)^2$ elementary gates. See Figure~\ref{fig: affine sum circuit diagram} for the circuit diagram.
\end{lemma}
\begin{proof}
The construction of this circuit involves the following steps:
\begin{enumerate}
    \item We first load the given two's complement binary strings $a_1,\ldots,a_d,b \in \mathbb{F}_{n_2,m_2}$ on the qubit register $\ket{0}_{(d+1)(n_2+m_2)} = \ket{0}_{n_2+m_2}\cdots\ket{0}_{n_2+m_2}$. To that end, we use the Pauli $X$ gate 
    to flip the bit $0$ to $1$ according the binary strings $a_1,\ldots,a_d,b$ if necessary, to obtain the state
\begin{equation}
    \mathcal{X}_{\bm{a},b}: \ket{0}_{n_2+m_2}\cdots\ket{0}_{n_2+m_2} \mapsto \ket{a_1}_{n_2+m_2}\cdots\ket{a_d}_{n_2+m_2}\ket{b}_{n_2+m_2},
\end{equation}
where we define 
\begin{equation}
    \mathcal{X}_{\bm{a},b} := \left (\bigotimes_{k=1}^d \bigotimes_{l=-m_2}^{n_2-1} X^{a_k(l)}\right ) \otimes \left(\bigotimes_{l=-m_2}^{n_2-1} X^{b(l)}\right),
\end{equation}
given the binary strings 
$a_k = ((a_k(l))_{l=0}^{n_2-1},(a_k(l))_{l=-m_2}^{-1}) 
=((a_k(n_2-1),\dots,a_k(0)),(a_k(-1),\dots,a_k(-m_2)))$
$\in \mathbb{F}_{n_2,m_2}$
 and  $b=((b(n_2-1),\dots,b(0)),(b(-1),\dots,b(-m_2)))\in \mathbb{F}_{n_2,m_2}$. Note that we use the convention of $X^0 = I_2$ for any unitary matrix $X$. We define the quantum circuit
\begin{equation}
    \widetilde{\mathcal{X}}_{\bm{a},b} := I^{\otimes d(n_1+m_1)}_2 \otimes \mathcal{X}_{\bm{a},b} \otimes I^{\otimes (d(n+m)+d(n_2+m_2+3)+d)}_2.
\end{equation} 
Hence, for any $i_1,\ldots,i_d \in \mathbb{F}_{n_1,m_1}$, we have 
\begin{equation}
    \begin{split}
        \widetilde{\mathcal{X}}_{\bm{a},b}:&\ket{i_1}_{n_1+m_1}\cdots\ket{i_d}_{n_1+m_1}\ket{0}_{(d+1)(n_2+m_2)} \ket{0}_{ d(n+m)}\ket{0}_{d(n_2+m_2+3)}\ket{0}_{d}\\    &\mapsto\ket{i_1}_{n_1+m_1}\cdots\ket{i_d}_{n_1+m_1}\ket{a_1}_{n_2+m_2}\cdots\ket{a_d}_{n_2+m_2}\ket{b}_{n_2+m_2}\ket{0}_{ d(n+m)}\ket{0}_{d(n_2+m_2+3)}\ket{0}_{d},
    \end{split}
\end{equation}
and the number of Pauli $X$ gates used to construct the quantum circuit $\widetilde{\mathcal{X}}_{\bm{a},b}$ is at most
\begin{equation}
    (d+1)(n_2+m_2).
\end{equation}

    \item Next, we apply the permutation quantum circuit $\Tau_{\pi}$ from Lemma \ref{lemma: permutation circuit} to prepare for the upcoming $d$ multiplications so that
    \begin{equation}\begin{split}
            \Tau_{\pi}:&\ket{i_1}_{n_1+m_1}\cdots\ket{i_d}_{n_1+m_1}\ket{a_1}_{n_2+m_2}\cdots\ket{a_d}_{n_2+m_2}\ket{b}_{n_2+m_2}\ket{0}_{ d(n+m)}\ket{0}_{d(n_2+m_2+3)}\ket{0}_{d} \\
            &\mapsto \bigotimes_{k=1}^d \big(\ket{i_k}_{n_1+m_1}\ket{a_k}_{n_2+m_2}\ket{0}_{n+m}\ket{0}_{n_2+m_2+3}\big) \otimes \ket{b}_{n_2+m_2}\ket{0}_d\\
            &=  \ket{i_1}_{n_1+m_1}\ket{a_1}_{n_2+m_2}\ket{0}_{n+m}\ket{0}_{n_2+m_2+3}\cdots \ket{i_d}_{n_1+m_1}\ket{a_d}_{n_2+m_2}\ket{0}_{n+m}\ket{0}_{n_2+m_2+3}\ket{b}_{n_2+m_2}\ket{0}_{d}.
        \end{split} 
    \end{equation}
    The number of swap gates used to construct $\Tau_{\pi}$ in this step is at most $2N^2$.
    
    \item Next, for each $k=1,\ldots,d$, we apply the multiplication quantum circuit $\mathcal{Q}_{(\times)}^{(k)} := \widetilde{\mathcal{Q}}_{(\times)}$ from Corollary \ref{corollary: mult circuit} (with $ n_1 \leftarrow n_1, n_2 \leftarrow n_2, m_1 \leftarrow m_1, m_2 \leftarrow m_2, a \leftarrow i_k, b \leftarrow a_k$ in the notation of Corollary \ref{corollary: mult circuit}) on each component $(\ket{i_k}_{n_1+m_1}\ket{a_k}_{n_2+m_2}\ket{0}_{n+m}\ket{0}_{n_2+m_2+3})$ such that  
\begin{equation}\begin{split}
    \bigotimes_{k=1}^d \mathcal{Q}_{(\times)}^{(k)}:&\bigotimes_{k=1}^d \big(\ket{i_k}_{n_1+m_1}\ket{a_k}_{n_2+m_2}\ket{0}_{n+m}\ket{0}_{n_2+m_2+3}\big) \otimes \ket{b}_{n_2+m_2}\ket{0}_d\\
    &\mapsto \bigotimes_{k=1}^d \big(\ket{i_k}_{n_1+m_1}\ket{a_k}_{n_2+m_2}\ket{a_k \boxdot i_k}_{n+m}\ket{\anc}_{n_2+m_2+3}\big) \otimes \ket{b}_{n_2+m_2}\ket{0}_d\\
    &= \ket{i_1}_{n_1+m_1}\ket{a_1}_{n_2+m_2}\ket{a_1 \boxdot i_1}_{n+m}\ket{\anc}_{n_2+m_2+3}\cdots\\
    &\quad \cdots\ket{i_d}_{n_1+m_1}\ket{a_d}_{n_2+m_2}\ket{a_d \boxdot i_d}_{n+m}\ket{\anc}_{n_2+m_2+3} \ket{b}_{n_2+m_2}\ket{0}_{d}.
    \end{split}
\end{equation}
    The number of elementary gates used in this step is at most 
    \begin{equation}
        d \cdot 61(n + m + 1)^2.
    \end{equation}
    
    \item We apply the permutation quantum circuit $\Tau_{\pi}$ from Lemma \ref{lemma: permutation circuit} to prepare for the upcoming $d$ additions so that 
    \begin{equation}
        \begin{split}
            \Tau_{\pi}:&\bigotimes_{k=1}^d \big(\ket{i_k}_{n_1+m_1}\ket{a_k}_{n_2+m_2}\ket{a_k \boxdot i_k}_{n+m}\ket{\anc}_{n_2+m_2+3}\big) \otimes \ket{b}_{n_2+m_2}\ket{0}_d\\
            &\mapsto \ket{i_1}_{n_1+m_1}\cdots \ket{i_d}_{n_1+m_1}\ket{a_1 \boxdot i_1}_{n+m}\ket{a_2\boxdot i_2}_{n+m}\ket{0}\ket{a_3\boxdot i_3}_{n+m}\ket{0}\cdots \\
            &\quad \cdots \ket{a_d\boxdot i_d}_{n+m}\ket{0} \ket{b}_{n_2+m_2}\ket{0} \ket{\anc}_{d(2n_2+2m_2+3)}.
        \end{split}
    \end{equation}
Here, we consolidate the qubits $\ket{a_1}_{n_2+m_2},\ldots,\ket{a_d}_{n_2+m_2}$ in the ancilla qubit placeholder $\ket{\anc}_{d(2n_2+2m_2+3)}$ as we do not need them in the later computations. The number of elementary gates used for this step is at most $2N^2$.

    \item We perform the following addition inductively on the sums for $k=1,\ldots,d-1$
    \begin{equation}
        \boxplus: \mathbb{F}_{n+k-1,m} \times \mathbb{F}_{n,m} \to \mathbb{F}_{n+k,m}, \quad 
        \left( \bigboxplus_{l=1}^k (a_l \boxdot i_l),  (a_{k+1} \boxdot i_{k+1}) \right) \mapsto \bigboxplus_{l=1}^{k+1} (a_l \boxdot i_l),
    \end{equation}
    and the addition 
    \begin{equation}
        \boxplus: \mathbb{F}_{n+d-1,m} \times \mathbb{F}_{n_2,m_2} \to \mathbb{F}_{n+d,m}, \quad 
        \left(\bigboxplus_{l=1}^{d} (a_l \boxdot i_l), b \right)\mapsto (\bigboxplus_{l=1}^{d} (a_l \boxdot i_l)) \boxplus b,
    \end{equation}
    (c.f. Lemma \ref{lemma: addition in TC} for definition of $\boxplus$). That is, we apply the quantum circuit $\mathcal{Q}_{(+)}^{(k)} := \mathcal{Q}_{(+)}$ from Corollary \ref{corollary: add circuit} inductively for $k=1,\ldots,d-1$ (with $n_1 \leftarrow n+k-1$, $n_2 \leftarrow n$, $m_1 \leftarrow m$, $m_2 \leftarrow m$, $a \leftarrow \displaystyle \bigboxplus_{l=1}^k (a_l \boxdot i_l)$, $b \leftarrow a_{k+1} \boxdot i_{k+1}$ in the notation of Corollary \ref{corollary: add circuit}), and we apply the quantum circuit $\mathcal{Q}_{(+)}^{(b)}  := \mathcal{Q}_{(+)}$ (with $n_1 \leftarrow n+d-1$, $n_2 \leftarrow n_2$, $m_1 \leftarrow m$, $m_2 \leftarrow m_2$, $a \leftarrow \displaystyle \bigboxplus_{l=1}^{d} (a_l \boxdot i_l)$, $b \leftarrow b$ in the notation of Corollary \ref{corollary: add circuit})  so that 
    \begin{equation}
        \begin{split}
            &\ket{i_1}_{n_1+m_1}\cdots \ket{i_d}_{n_1+m_1}\ket{a_1 \boxdot i_1}_{n+m}\ket{a_2\boxdot i_2}_{n+m}\ket{0}\ket{a_3\boxdot i_3}_{n+m}\ket{0}\cdots \\
            &\quad \cdots \ket{a_d\boxdot i_d}_{n+m}\ket{0} \ket{b}_{n_2+m_2}\ket{0} \ket{\anc}_{d(2n_2+2m_2+3)}\\
            &\xmapsto{\mathcal{Q}_{(+)}^{(1)}}\ket{i_1}_{n_1+m_1}\cdots \ket{i_d}_{n_1+m_1}\ket{a_2 \boxdot i_2}_{n+m}\ket{\bigboxplus_{l=1}^2 a_l\boxdot i_l}_{n+m+1}\ket{a_3\boxdot i_3}_{n+m}\ket{0}\cdots \\
            &\quad \cdots \ket{a_d\boxdot i_d}_{n+m}\ket{0} \ket{b}_{n_2+m_2}\ket{0} \ket{\anc}_{d(2n_2+2m_2+3)}\\
            &\xmapsto{\mathcal{Q}_{(+)}^{(2)}}\ket{i_1}_{n_1+m_1}\cdots \ket{i_d}_{n_1+m_1}\ket{a_2 \boxdot i_2}_{n+m}\ket{a_3\boxdot i_3}_{n+m}\ket{\bigboxplus_{l=1}^3 a_l\boxdot i_l}_{n+m+2}\cdots \\
            &\quad \cdots \ket{a_d\boxdot i_d}_{n+m}\ket{0} \ket{b}_{n_2+m_2}\ket{0} \ket{\anc}_{d(2n_2+2m_2+3)}\\
            &\qquad\qquad\qquad\qquad\vdots\qquad\qquad\qquad\qquad\vdots \\
            &\xmapsto{\mathcal{Q}_{(+)}^{(d-1)}}\ket{i_1}_{n_1+m_1}\cdots \ket{i_d}_{n_1+m_1}\ket{a_2 \boxdot i_2}_{n+m}\ket{a_3 \boxdot i_3}_{n+m} \cdots  \\
            &\quad \cdots \ket{a_{d} \boxdot i_{d}}_{n+m}\ket{\bigboxplus_{l=1}^{d} a_l\boxdot i_l}_{n+m+d-1} \ket{b}_{n_2+m_2}\ket{0} \ket{\anc}_{d(2n_2+2m_2+3)}\\
            &\xmapsto{\mathcal{Q}_{(+)}^{(b)}}\ket{i_1}_{n_1+m_1}\cdots \ket{i_d}_{n_1+m_1}\ket{a_2 \boxdot i_2}_{n+m}\ket{a_3 \boxdot i_3}_{n+m} \cdots \\
            &\quad \cdots\ket{a_{d} \boxdot i_{d}}_{n+m}\ket{b}_{n_2+m_2} \ket{(\bigboxplus_{l=1}^{d} a_l\boxdot i_l)\boxplus b}_{n+m+d} \ket{\anc}_{d(2n_2+2m_2+3)}.
        \end{split}
    \end{equation}
    The number of elementary gates used for this step is at most
    \begin{equation}
        \begin{split}
            &\sum_{k=1}^{d-1} 29[(n+m+k-1) + (n+m)+1]^2 + 29[(n+m+d-1)+(n_2+m_2)+1]^2\\
            &\qquad \leq 29d \cdot 4d^2(n+m+1)^2\\
            &\qquad = 116d^3 (n+m+1)^2,
        \end{split}
    \end{equation}
        where we use the fact that  $[(n+m+k-1) + (n+m)+1]^2\leq (2n+2m+d)^2\leq 4d^2 (n+m+1)^2$ when $k \leq d$. 
    \item We consolidate the ancillary qubits by combining the qubits (labeled $\ket{a_2\boxdot i_2},\ldots,\ket{a_d \boxdot i_d},\ket{b})$ under ancilla qubits $\ket{\anc}_\star$. The permutation circuit $\Tau_\pi$ from Lemma \ref{lemma: permutation circuit} performs the following operation
    \begin{equation}
        \begin{split}
           \Tau_{\pi}:&\ket{i_1}_{n_1+m_1}\cdots \ket{i_d}_{n_1+m_1}\ket{a_2 \boxdot i_2}_{n+m} \cdots \ket{a_{d} \boxdot i_{d}}_{n+m}\ket{b}_{n_2+m_2} \ket{(\bigboxplus_{l=1}^{d} a_l\boxdot i_l)\boxplus b}_{n+m+d} \ket{\anc}_{d(2n_2+2m_2+3)}\\
           & \mapsto \ket{i_1}_{n_1+m_1}\cdots\ket{i_d}_{n_1+m_1}\ket{(\bigboxplus_{l=1}^{d} a_l\boxdot i_l)\boxplus b}_{n+m+d} \ket{\anc}_{d(2n_2+2m_2+3) + (n_2+m_2)+ (d-1)(n+m)}.
        \end{split}
    \end{equation}
The number of elementary gates used for this step is at most $2N^2$.
\end{enumerate}
The resulting quantum circuit $\mathcal{Q}_{+}^{d,n,m}$ is a composition of the quantum circuits from each of the above steps. To deduce its gate complexity, we sum up the number of elementary gates used in each step. We note from the definition of $n,m \in \N$ and the definition of $N$ in \eqref{eqn: lemma affine def N} that
\begin{equation}
    \begin{split}
         N &:= d(n_1+m_1) + (d+1)(n_2+m_2) + d(n+m) + d(n_2+m_2+3) + d\\
         &\leq [d + 2d + d + 3d + d](n+m+1)\\
         &= 8d(n+m+1).
    \end{split}
\end{equation}
Summing the number of elementary gates used in each step, we find that the number of elementary gates used in total is at most
\begin{equation}
    \begin{split}
        &(d+1)(n_2+m_2) + 2N^2 + 61d(n+m+1)^2 + 2N^2 + 116d^3(n+m+1)^2+ 2N^2\\
        &\leq 2d(n+m+1) + 3\cdot 2(8d(n+m+1))^2 + 61d(n+m+1)^2 + 116d^3(n+m+1)^2\\
        &\leq (2+ 3\cdot2\cdot8^2 + 61 + 116) d^3(n+m+1)^2\\
        &= 563d^3(n+m+1)^2.
    \end{split}
\end{equation}
\end{proof}

\begin{figure}[t]
	\centering 
	\boxed{
		\begin{quantikz}[row sep={20pt,between origins},column sep=20pt,font=\small]
	\lstick{$\ket{a}_{n+m}$} & \gate[3]{\widetilde{\mathcal{Q}}_{(\mathrm{comp})}} & \gate[4]{\mathcal{X}_{c_1,a}} & \gate[4]{\mathcal{X}_{c_2,b}} & \gate[4]{\mathcal{X}_{c_3,b}}&\qw \rstick{$\ket{a}_{n+m}$} \\
	\lstick{$\ket{b}_{n+m}$} &\qw&\qw&\qw&\qw&\qw  \rstick{$\ket{b}_{n+m}$}\\
	\lstick{$\ket{0}_4$} &\qw&\qw&\qw&\qw&\qw  \rstick{$\ket{\mathrm{anc}}_{4}$}\\
	\lstick{$\ket{0}_{n+m}$} &\qw&\qw&\qw&\qw&\qw \rstick{$\ket{\mathrm{M}_{n,m}(a,b)}_{n+m}$} 
		\end{quantikz}		
	} 
	\caption{Circuit diagram for $\mathcal{Q}_{(\max)}^{n,m}$ in Lemma~\ref{lemma: circuit maximum 2}.} 
	\label{fig: max circuit diagram}
\end{figure}
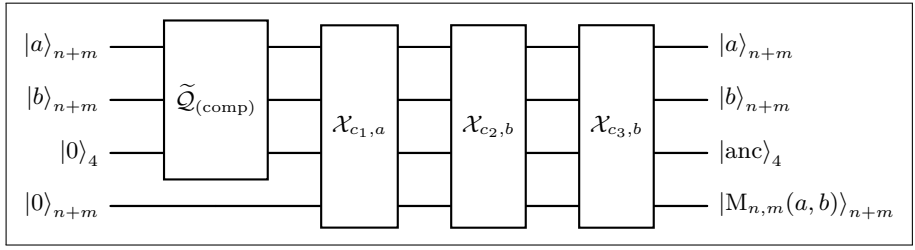

\begin{lemma}[Quantum circuit for maximum of two numbers]\label{lemma: circuit maximum 2}
    Let $n,m \in \N$, and let $\mathrm{M}_{n,m}:\mathbb{F}_{n,m} \times \mathbb{F}_{n,m}  \to \mathbb{F}_{n,m}$ be a function defined by 
    \begin{equation}
        \mathrm{M}_{n,m}(a,b) := \mathrm{E}_{n,m}(\max\{\mathrm{D}_{n,m}(a),\mathrm{D}_{n,m}(b)\}), \quad \forall a,b \in \mathbb{F}_{n,m}.
    \end{equation}
(c.f. Definition \ref{def: encoder-decoder}). Then, there is a quantum circuit $\mathcal{Q}_{(\max)}^{n,m}$ on $3(n+m) + 5$ qubits such that for any $a,b \in \mathbb{F}_{n,m}$,
\begin{equation}\label{eqn: max2 circuit}
    \begin{split}
        \mathcal{Q}_{(\max)}^{n,m}:&\ket{a}_{n+m}\ket{b}_{n+m}\ket{0}_{5}\ket{0}_{n+m}  \mapsto \ket{a}_{n+m}\ket{b}_{n+m}\ket{\anc}_{5}\ket{\mathrm{M}_{n,m}(a,b)}_{n+m},
    \end{split}
\end{equation}
which uses at most $1045(n+m+1)^3$ elementary gates. See Figure~\ref{fig: max circuit diagram} for the circuit diagram.
\begin{proof} The construction of this quantum circuit involves the following steps:
    \begin{enumerate}
        \item We use the comparison quantum circuit $\widetilde{\mathcal{Q}}_{(\mathrm{comp})}$ in Corollary \ref{corollary: comp circuit} (with $n_1 \leftarrow n ,n_2 \leftarrow n$, $m_1\leftarrow m,m_2 \leftarrow m$, $a \leftarrow a$ and $b \leftarrow b$ in the notation of Corollary \ref{corollary: comp circuit}) to obtain 
        \begin{equation}\label{eqn: max2 circuit1}
          \widetilde{\mathcal{Q}}_{(\mathrm{comp})} \otimes I^{\otimes n+m}_2: \ket{a}_{n+m}\ket{b}_{n+m}\ket{0}_5\ket{0}_{n+m} \mapsto \ket{a}_{n+m}\ket{b}_{n+m}\ket{c_1}\ket{c_2}\ket{c_3}\ket{\anc}_2\ket{0}_{n+m},
        \end{equation}
    where
    \begin{equation}\label{eqn: cases of c1-3}
        \ket{c_1}\ket{c_2}\ket{c_3} = \begin{cases}
            \ket{1}\ket{0}\ket{0}, \quad \text{if $\mathrm{D}_{n,m}(a)>\mathrm{D}_{n,m}(b)$}, \\ 
            \ket{0}\ket{1}\ket{0}, \quad \text{if $\mathrm{D}_{n,m}(a)<\mathrm{D}_{n,m}(b)$},\\
            \ket{0}\ket{0}\ket{1}, \quad \text{if $\mathrm{D}_{n,m}(a)=\mathrm{D}_{n,m}(b)$}.
        \end{cases}
    \end{equation}
    The number of gates used in this step is at most $175(2n+2m+1)^2$ elementary gates.
    \item Recall the Toffoli gate (also referred as the $\mathrm{CCNOT}$ gate) acting on three qubits such that for $a,b,c \in \{0,1\}$ 
    \begin{equation}\label{CCNOT-def}
    	\mathrm{CCNOT}: \ket{a}\ket{b}\ket{c} \mapsto \ket{a}\ket{b}\ket{c + a b \mod{2}}.
    \end{equation}
    Here one refers $\ket{a}$ as the control qubit, $\ket{b}$ the target qubit, and $\ket{c}$ the output qubit. 
    
    We apply $(n+m)$ Toffoli gates
     on the control qubit $\ket{c_1}$,  target qubits $\ket{a}_{n+m}$, and output qubits $\ket{0}_{n+m}$. More precisely, for $a = ((a_{n-1},\ldots,a_0),(a_{-1},\ldots,a_{-m})) \in \mathbb{F}_{n,m}$, we apply for each $j=-m,\ldots,n-1$ a Toffoli gate $\mathcal{C}_{c_1,a_j} =  \mathrm{CCNOT}$ (with $a \leftarrow c_1$, $b\leftarrow a_j$, and $c \leftarrow 0$ in the notation of \eqref{CCNOT-def})
    \begin{equation}
        \mathcal{C}_{c_1,a_j}: \ket{c_1}\ket{a_j}\ket{0} \mapsto \ket{c_1}\ket{a_j}X^{c_1 a_j}\ket{0} = \begin{cases}
            \ket{c_1}\ket{a_j}\ket{1}, &\quad \text{if $c_1=1$ and $a_j=1$},\\
            \ket{c_1}\ket{a_j}\ket{0}, &\quad \text{otherwise}.
        \end{cases}
    \end{equation}
    For each $j=-m,\ldots,n-1$, we apply the permutation quantum circuit $\mathcal{\Tau}_j$ before and after each Toffoli gate $\mathcal{C}_{c_1,a_j}$. We define  the quantum circuit $\mathcal{X}_{c_1,j}$  by
    \begin{equation}
        \mathcal{X}_{c_1,j} := \Tau_j \left(I^{\otimes n+m-1}_2\otimes I^{\otimes n+m}_2\otimes I^{\otimes 4}_2\otimes I^{\otimes n - j - 1}_2\otimes \mathcal{C}_{c_1,a_j} \otimes I^{\otimes m+j}_2 \right)\Tau_j,
    \end{equation}
    which computes from \eqref{eqn: max2 circuit1} the following
    \begin{equation}
        \begin{split}
            &\ket{a}_{n+m}\ket{b}_{n+m}\ket{c_1}\ket{c_2}\ket{c_3}\ket{\anc}_2\ket{0}_{n+m}\\
            &\xmapsto{\Tau_{j}} \ket{\hat{a}^{j}}_{n+m-1}\ket{b}_{n+m}\ket{c_2}\ket{c_3}\ket{\anc}_2\ket{0}_{n-j-1}\ket{c_1}\ket{a_{j}}\ket{0}\ket{0}_{m+j}\\
            &\xmapsto{\mathcal{C}_{c_1,a_j}}\ket{\hat{a}^{j}}_{n+m-1}\ket{b}_{n+m}\ket{c_2}\ket{c_3}\ket{\anc}_2\ket{0}_{n-j-1}\ket{c_1}\ket{a_{j}}X^{c_1 a_{j}}\ket{0}\ket{0}_{m+j}\\
            &\xmapsto{\Tau_{j}} \ket{a}_{n+m}\ket{b}_{n+m}\ket{c_1}\ket{c_2}\ket{c_3}\ket{\anc}_2\ket{0}_{n-j-1}X^{c_1 a_{j}}\ket{0}\ket{0}_{m+j},
        \end{split}
    \end{equation}
    where $\ket{\hat{a}^j}_{n+m-1} := \ket{a_{n-1}}\cdots\ket{a_{j+1}}\ket{a_{j-1}}\cdots\ket{a_{-m}}$. Finally, we define the quantum circuit $\mathcal{X}_{c_1,a}$ by 
    \begin{equation}
        \mathcal{X}_{c_1,a} := \prod_{j=-m}^{n-1} \mathcal{X}_{c_1,j},
    \end{equation}
    and we compute that 
    \begin{equation}
        \begin{split}
            &\mathcal{X}_{c_1,a}:\ket{a}_{n+m}\ket{b}_{n+m}\ket{c_1}\ket{c_2}\ket{c_3}\ket{\anc}_2\ket{0}_{n+m}\\
            &\qquad \mapsto \ket{a}_{n+m}\ket{b}_{n+m}\ket{c_1}\ket{c_2}\ket{c_3}\ket{\anc}_2\left(\bigotimes_{j=-m}^{n-1}X^{c_1a_j}\ket{0}_{n+m}\right).
        \end{split}
    \end{equation}
    Note that the number of elementary gates used in this step is at most $(n+m)[15 + 2\cdot2(3(n+m) + 5)^2]$, since each Toffoli gate circuit requires elementary $15$ gates (see, e.g., \cite[Section~6.3.5]{jacquier2022quantum}), and each permutation circuit uses $2(3(n+m) + 5)^2$ elementary gates. 
    \item We repeat step 2 but by instead using the control qubit $\ket{c_2}$ with target qubits $\ket{b}_{n+m}$. We define the quantum circuits $\mathcal{X}_{c_2,b}$ similarly, and we compute that 
\begin{equation}
    \begin{split}
        &\mathcal{X}_{c_2,b}:\ket{a}_{n+m}\ket{b}_{n+m}\ket{c_1}\ket{c_2}\ket{c_3}\ket{\anc}_2\left(\bigotimes_{j=-m}^{n-1}X^{c_1a_j}\ket{0}_{n+m}\right)\\
        &\qquad \mapsto \ket{a}_{n+m}\ket{b}_{n+m}\ket{c_1}\ket{c_2}\ket{c_3}\ket{\anc}_2\left(\bigotimes_{j=-m}^{n-1}X^{c_2b_j}X^{c_1a_j}\ket{0}_{n+m}\right).
    \end{split}
\end{equation}
    \item We repeat step 2 but by instead using the control qubit $\ket{c_3}$ with target qubits $\ket{b}_{n+m}$. We define the quantum circuits $\mathcal{X}_{c_3,b}$ similarly, and we compute that 
    \begin{equation}
        \begin{split}
            &\mathcal{X}_{c_3,b}:\ket{a}_{n+m}\ket{b}_{n+m}\ket{c_1}\ket{c_2}\ket{c_3}\ket{\anc}_2\left(\bigotimes_{j=-m}^{n-1}X^{c_2b_j}X^{c_1a_j}\ket{0}_{n+m}\right)\\
            &\qquad \mapsto \ket{a}_{n+m}\ket{b}_{n+m}\ket{c_1}\ket{c_2}\ket{c_3}\ket{\anc}_2\left(\bigotimes_{j=-m}^{n-1}X^{c_3b_j}X^{c_2b_j}X^{c_1a_j}\ket{0}_{n+m}\right).
        \end{split}
    \end{equation}
    \end{enumerate}
    Since the qubits $\ket{c_1}\ket{c_2}\ket{c_3}$ may only take one of the three possible values as in \eqref{eqn: cases of c1-3}, it holds
     for each $j=-m,\ldots,n-1$ that
     \begin{equation}
         X^{c_3b_j}X^{c_2b_j}X^{c_1a_j} = \begin{cases} X^{a_j}, &\quad \text{if $\mathrm{D}_{n,m}(a)>\mathrm{D}_{n,m}(b)$},\\
             X^{b_j}, &\quad \text{if $\mathrm{D}_{n,m}(a)\leq\mathrm{D}_{n,m}(b)$}.
         \end{cases}
     \end{equation}
    Thus, we have
    \begin{equation}
        \begin{split}
            &\ket{a}_{n+m}\ket{b}_{n+m}\ket{c_1}\ket{c_2}\ket{c_3}\ket{\anc}_2\left(\bigotimes_{j=-m}^{n-1}X^{c_3b_j}X^{c_2b_j}X^{c_1a_j}\ket{0}_{n+m}\right)\\
            &= \begin{cases}
                \ket{a}_{n+m}\ket{b}_{n+m}\ket{c_1}\ket{c_2}\ket{c_3}\ket{\anc}_2\left(\bigotimes_{j=-m}^{n-1}X^{a_j}\ket{0}_{n+m}\right),&\quad \text{if $\mathrm{D}_{n,m}(a)>\mathrm{D}_{n,m}(b)$},\\
                \ket{a}_{n+m}\ket{b}_{n+m}\ket{c_1}\ket{c_2}\ket{c_3}\ket{\anc}_2\left(\bigotimes_{j=-m}^{n-1}X^{b_j}\ket{0}_{n+m}\right),&\quad \text{if $\mathrm{D}_{n,m}(a)\leq\mathrm{D}_{n,m}(b)$},\\
            \end{cases}\\
            &= \begin{cases}
                \ket{a}_{n+m}\ket{b}_{n+m}\ket{c_1}\ket{c_2}\ket{c_3}\ket{\anc}_2\ket{a}_{n+m}, \quad \mathrm{D}_{n,m}(a)>\mathrm{D}_{n,m}(b),  \\
                \ket{a}_{n+m}\ket{b}_{n+m}\ket{c_1}\ket{c_2}\ket{c_3}\ket{\anc}_2\ket{b}_{n+m}, \quad \mathrm{D}_{n,m}(a)\leq\mathrm{D}_{n,m}(b),
            \end{cases}\\
            &= \ket{a}_{n+m}\ket{b}_{n+m}\ket{c_1}\ket{c_2}\ket{c_3}\ket{\anc}_2\ket{\mathrm{M}_{n,m}(a,b)}_{n+m}.
        \end{split}
    \end{equation}
    The above output is equivalent \eqref{eqn: max2 circuit} where we treat $\ket{c_1}\ket{c_2}\ket{c_3}$ as ancilla qubits. We find that the total number of elementary gates used is at most 
    \begin{equation}
        \begin{split}
            &175(2n+2m+1)^2 + 3\cdot (n+m)[15 + 2\cdot2(3(n+m) + 5)^2] \\
            &\leq 175\cdot 2^2 (n+m+1)^2 + 3\cdot 15 (n+m) + 3(n+m) \cdot 2 \cdot 2 \cdot 5^2(n+m+1)^2\\
            &\leq (700 + 45 + 300)(n+m+1)^3\\
            &= 1045(n+m+1)^3.
        \end{split}
    \end{equation}
\end{proof}

\end{lemma}

\begin{figure}
	\centering 
	\boxed{
	\begin{quantikz} [row sep={20pt,between origins},column sep=10pt,font=\small]
\lstick{$\ket{i_1}_{n+m}$} & \gate[11]{\mathcal{T}_{\pi}}\\
\lstick{$\ket{i_2}_{n+m}$} &\qw\\
&\qw\\
& \\
& \\
\lstick{$\vdots$\qquad }& \\
& \\
& \\
& \\
\lstick{$\ket{i_I}_{n+m}$}&\qw\\
\lstick{$\ket{0}_{(I-1)(n+m+4)}$}&\qw
\end{quantikz}
\begin{quantikz} [row sep={20pt,between origins},column sep=10pt,font=\small]
\lstick{$\ket{i_1}_{n+m}$} &\gate[4]{\mathcal{Q}_{(\mathrm{max})}^{n,m}}  &\qw&\qw&\qw&\gate[11,nwires=7]{\mathcal{T}_{\pi}}&\qw \rstick{$\ket{i_1}_{n+m}$} \\
\lstick{$\ket{i_2}_{n+m}$}  &\qw&\qw&\qw&\qw&\qw&\qw \rstick{$\ket{i_2}_{n+m}$}  \\
\lstick{$\ket{0}_{4}$} &\qw&\qw&\qw&\qw&\qw&\qw \quad \vdots\\
\lstick{$\ket{0}_{n+m}$} &\qw&\gate[4]{\mathcal{Q}_{(\mathrm{max})}^{n,m}}&\qw&\qw&\qw&\qw \quad \vdots \\
\lstick{$\ket{i_3}_{n+m}$} &\qw&\qw&\qw&\qw&\qw&\qw\rstick{$\ket{i_I}_{n+m}$} \\
\lstick{$\ket{0}_{4}$} &\qw&\qw&\qw&\qw&\qw&\qw\rstick[5]{$\ket{\mathrm{anc}}_{(I-2)(n+m)+4(I-1)}$} \\
\lstick{$\ket{0}_{n+m}$} &\qw&\qw&\qw&\qw&\qw&\qw  \\
\lstick{$\vdots$\quad } &\qw&\qw&\qw \vdots\ & \gate[4,nwires=1]{\mathcal{Q}_{(\mathrm{max})}^{n,m}}\qw &\qw &\qw\\
\lstick{$\ket{i_I}_{n+m}$} &\qw &\qw&\qw&\qw&\qw&\qw\\
\lstick{$\ket{0}_{4}$} &\qw &\qw&\qw&\qw&\qw&\qw\\
\lstick{$\ket{0}_{n+m}$} &\qw &\qw&\qw&\qw&\qw&\qw\rstick{$\ket{\mathrm{M}_{I,n,m}(i_1,\ldots,i_I)}_{n+m}$}
\end{quantikz}
	} 
	\caption{Circuit diagram for $\mathcal{Q}_{(\max)}^{I,n,m}$ in Corollary~\ref{lemma: quantum maximum}.} 
	\label{fig: I max circuit diagram}
\end{figure}
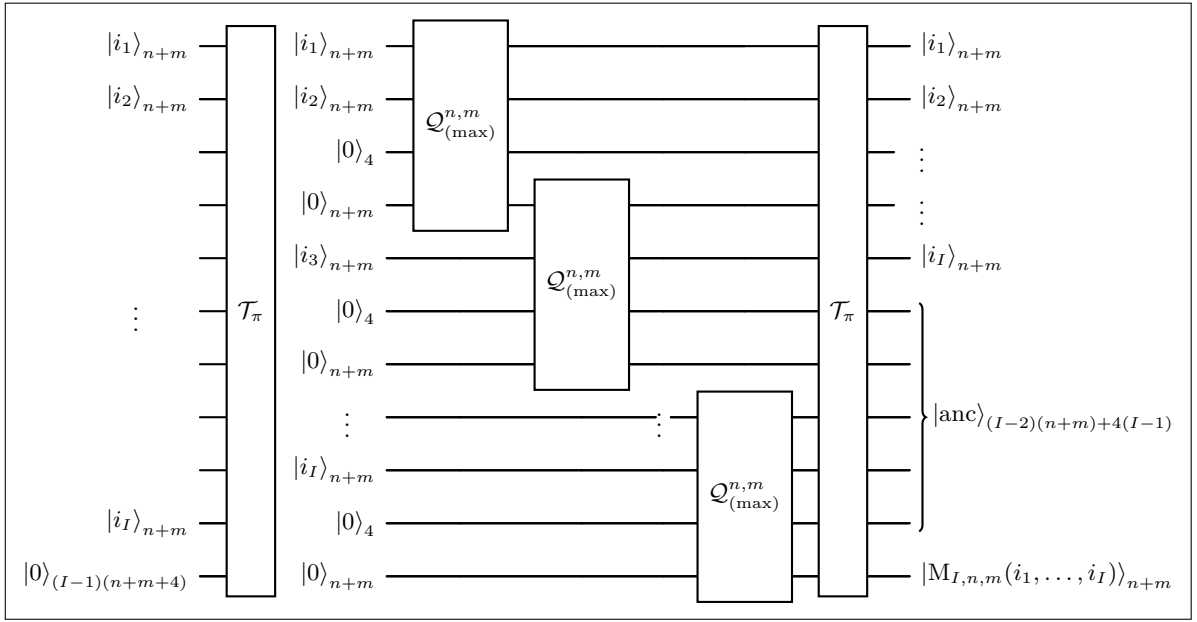
\begin{corollary}[Quantum circuit for maximum of $I$ numbers] \label{lemma: quantum maximum}
Let $I, n, m \in \N$. Let $\mathrm{M}_{I,n,m}:\mathbb{F}_{n,m}^I \to \mathbb{F}_{n,m}$ be a function defined by
\begin{equation}
    \mathrm{M}_{I,n,m}(i_1,\ldots,i_I) = \mathrm{E}_{n,m}(\max\{\mathrm{D}_{n,m}(i_1),\ldots,\mathrm{D}_{n,m}(i_I)\}), \quad \forall (i_1,\ldots,i_I) \in \mathbb{F}_{n,m}^I.
\end{equation}
(c.f. Definition \ref{def: encoder-decoder}). Then, there is a quantum circuit $\mathcal{Q}_{(\max)}^{I,n,m}$ on $N$ qubits, where
\begin{equation}
    N := I(n+m) + (I-1)(n+m+5)
\end{equation}
such that for any $i_1,\ldots,i_I \in \mathbb{F}_{n,m}$,
\begin{equation}\begin{split}
        \mathcal{Q}_{(\max)}^{I,n,m}:&\ket{i_1}_{n+m}\cdots\ket{i_I}_{n+m}\ket{0}_{(I-1)(n+m+5)}\\
        &\mapsto \ket{i_1}_{n+m}\cdots\ket{i_I}_{n+m}\ket{\anc}_{(I-2)(n+m) + 5(I-1)}\ket{\mathrm{M}_{I,n,m}(i_1,\ldots,i_I)}_{n+m}, \label{eqn: max circuit}
    \end{split}
\end{equation}
which uses at most $ 1189 I^2(n+m+1)^3$ elementary gates. See Figure~\ref{fig: I max circuit diagram} for the circuit diagram.

\begin{proof}For any $i_1,\ldots,i_I \in \mathbb{F}_{n,m}$, observe that the function $\mathrm{M}_{I,n,m}$ can be written recursively by setting
    \begin{equation}
        \begin{split}
            &\mathfrak{m}_1 := i_1,\\
            &\mathfrak{m}_2 := \mathrm{M}_{2,n,m}(i_1,i_2),\\
            &\mathfrak{m}_3 := \mathrm{M}_{2,n,m}(\mathfrak{m}_2,i_3) = \mathrm{M}_{3,n,m}(i_1,i_2,i_3),\\ 
            &\quad \vdots \\
            & \mathfrak{m}_I := \mathrm{M}_{2,n,m}(\mathfrak{m}_{I-1},i_I) = \cdots = \mathrm{M}_{I,n,m}(i_1,\ldots,i_I).\\           
        \end{split}
    \end{equation} With this setup in mind, we construct the quantum circuit $\mathcal{Q}_{(\max)}^{I,n,m}$ as follows.
\begin{enumerate}
    \item We first apply the permutation quantum circuit $\Tau_\pi$ from Lemma \ref{lemma: permutation circuit} to obtain
    \begin{equation}
        \begin{split}
            \Tau_\pi: 
            &\ket{i_1}_{n+m}\cdots\ket{i_I}_{n+m}\ket{0}_{(I-1)(n+m+5)}  \\
            &\mapsto 
            \ket{i_1}_{n+m}\ket{i_2}_{n+m} \ket{0}_5\ket{0}_{n+m} \ket{i_3}_{n+m}\ket{0}_5\ket{0}_{n+m}\ket{i_4}_{n+m}\cdots\ket{i_I}_{n+m}\ket{0}_5\ket{0}_{n+m}.
        \end{split}
    \end{equation}
    The number of elementary gates used is at most $2N^2$. 
   \item We apply inductively the maximum circuit for two TC numbers  $\mathcal{Q}_{(\max)}^{(k)}$ from Lemma \ref{lemma: circuit maximum 2} for $k\!=1,\ldots,I-1$ (with $n \leftarrow n$, $m \leftarrow m$, $a \leftarrow \mathfrak{m}_k$, $b \leftarrow i_{k+1}$ in the notation of Lemma \ref{lemma: circuit maximum 2}) so that
   
        \begin{equation}\begin{split}
        & \ket{i_1}_{n+m}\ket{i_2}_{n+m} \ket{0}_5\ket{0}_{n+m} \ket{i_3}_{n+m}\ket{0}_5\ket{0}_{n+m}\ket{i_4}_{n+m}\cdots\ket{i_I}_{n+m}\ket{0}_5\ket{0}_{n+m} \\
        &\xmapsto{\mathcal{Q}_{(\max)}^{(1)}} \ket{i_1}_{n+m}\ket{i_2}_{n+m} \ket{\anc}_5\ket{\mathrm{M}_{2,n,m}(i_1,i_2)}_{n+m} \ket{i_3}_{n+m}\ket{0}_5\ket{0}_{n+m}\cdots\ket{i_I}_{n+m}\ket{0}_{5}\ket{0}_{n+m}\\
        &\quad = \ket{i_1}_{n+m}\ket{i_2}_{n+m} \ket{\anc}_5\ket{\mathfrak{m}_2}_{n+m} \ket{i_3}_{n+m}\ket{0}_5\ket{0}_{n+m}\cdots\ket{i_I}_{n+m}\ket{0}_{5}\ket{0}_{n+m}\\
        &\xmapsto{\mathcal{Q}_{(\max)}^{(2)}} \ket{i_1}_{n+m}\ket{i_2}_{n+m} \ket{\anc}_5\ket{\mathfrak{m}_2}_{n+m} \ket{i_3}_{n+m}\ket{\anc}_5\ket{\mathrm{M}_{2,n,m}(\mathfrak{m}_2,i_3)}_{n+m} \ket{i_4}_{n+m}\ket{0}_5\ket{0}_{n+m}\\
        &\quad \quad \cdots\ket{i_I}_{n+m}\ket{0}_{5}\ket{0}_{n+m}\\
        &\quad =: \ket{i_1}_{n+m}\ket{i_2}_{n+m} \ket{\anc}_5\ket{\mathfrak{m}_2}_{n+m} \ket{i_3}_{n+m}\ket{\anc}_5\ket{\mathfrak{m}_3}_{n+m} \ket{i_4}_{n+m}\ket{0}_5\ket{0}_{n+m}\\
        &\quad \quad \cdots\ket{i_I}_{n+m}\ket{0}_{5}\ket{0}_{n+m}\\
        &\qquad\qquad\qquad\qquad\vdots\qquad\qquad\qquad\qquad\vdots \\
        &\xmapsto{\mathcal{Q}_{(\max)}^{(I)}} \ket{i_1}_{n+m}\ket{i_2}_{n+m} \ket{\anc}_5\ket{\mathfrak{m}_2}_{n+m} \ket{i_3}_{n+m}\cdots\ket{\mathfrak{m}_{I-1}}_{n+m} \ket{i_I}_{n+m} \ket{\anc}_5\ket{\mathrm{M}_{2,n,m}(\mathfrak{m}_{I-1},i_I)}_{n+m}\\
        &\quad =: \ket{i_1}_{n+m}\ket{i_2}_{n+m} \ket{\anc}_5\ket{\mathfrak{m}_2}_{n+m} \ket{i_3}_{n+m}\cdots\ket{\mathfrak{m}_{I-1}}_{n+m} \ket{i_I}_{n+m} \ket{\anc}_5\ket{\mathfrak{m}_{I}}_{n+m}.
        \end{split}
    \end{equation}
    
    The number of elementary gates used in this step is at most 
    \begin{equation}
        I \cdot 1045(n+m+1)^3.
    \end{equation}
    
    \item We consolidate the ancillary qubits by combining the following qubits: $(I-1)$ times of $\ket{\anc}_4$, and $(I-1)$ times of $\ket{\mathfrak{m}_2}_{n+m}$, $\cdots$,  $\ket{\mathfrak{m}_{I-1}}_{n+m}$ under the placeholder qubit $\ket{\anc}_{(I-2)(n+m) + 4(I-1)}$. The permutation quantum circuit $\Tau_\pi$ from Lemma \ref{lemma: permutation circuit} performs the following operation
    \begin{equation}
        \begin{split}
        \Tau_\pi:& \ket{i_1}_{n+m}\ket{i_2}_{n+m} \ket{\anc}_5\ket{\mathfrak{m}_2}_{n+m} \ket{i_3}_{n+m}\cdots\ket{\mathfrak{m}_{I-1}}_{n+m} \ket{i_I}_{n+m} \ket{\anc}_5\ket{\mathfrak{m}_{I}}\\
        & \mapsto \ket{i_1}_{n+m}\ket{i_2}_{n+m}\cdots\ket{i_I}_{n+m}\ket{\anc}_{(I-2)(n+m) + 5(I-1)}\ket{\mathfrak{m}_I}_{n+m}.
        \end{split}
    \end{equation}
The number of elementary gates used in this step is at most $2N^2$.
\end{enumerate}
The resulting quantum circuit $\mathcal{Q}_{(\max)}^{I,n,m}$ is a composition of the quantum circuits from each of the above steps. The number of elementary gates used in total is the sum of the number of gates used in each step which is at most
\begin{equation}
    \begin{split}
        &2N^2 + 1045I(n+m+1)^3 + 2N^2 \\
        &= 1045I(n+m+1)^3 + 4[I(n+m)+(I-1)(n+m+5)]^2 \\
        &\leq 1045I(n+m+1)^3 + 4[I(n+m+1) + 5I(n+m+1)]^2 \\
        &\leq 1045I(n+m+1)^3 + 4 \cdot 6^2 (I(n+m+1))^2 \\
        &\leq 1189 I^2(n+m+1)^3.
    \end{split}
\end{equation}
\end{proof}
\end{corollary}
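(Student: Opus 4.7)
The plan is to reduce the $I$-ary maximum to a chain of $I-1$ pairwise maxima, each implemented by the binary-max circuit $\mathcal{Q}_{(\max)}^{n,m}$ from Lemma~\ref{lemma: circuit maximum 2}. Define partial maxima by $\mathfrak{m}_1 := i_1$ and $\mathfrak{m}_{k+1} := \mathrm{M}_{n,m}(\mathfrak{m}_k, i_{k+1})$ for $k=1,\ldots,I-1$; then $\mathfrak{m}_I = \mathrm{M}_{I,n,m}(i_1,\ldots,i_I)$ by a straightforward induction on $I$. On the qubit side, allocate the $I$ input registers of size $n+m$ (totalling $I(n+m)$ qubits), plus $I-1$ fresh blocks of size $n+m+4$; each such block will supply the $4$ comparison ancillas together with the $(n+m)$-qubit output slot that one invocation of Lemma~\ref{lemma: circuit maximum 2} requires. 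This gives exactly the qubit count $N = I(n+m) + (I-1)(n+m+4)$ claimed.

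The circuit proceeds in three phases. First, I would apply a permutation $\mathcal{T}_{\pi_1}$ from Lemma~\ref{lemma: permutation circuit} that re-lays the registers into the interleaved pattern
\[
\ket{i_1}_{n+m}\ket{i_2}_{n+m}\ket{0}_4\ket{0}_{n+m}\,\ket{i_3}_{n+m}\ket{0}_4\ket{0}_{n+m}\,\cdots\,\ket{i_I}_{n+m}\ket{0}_4\ket{0}_{n+m},
\]
so that each successive max-of-two call will find its inputs and its two zero-filled output/ancilla blocks directly adjacent. Second, apply $\mathcal{Q}_{(\max)}^{n,m}$ sequentially $I-1$ times: the $k$-th call acts on $\ket{\mathfrak{m}_k}_{n+m}\ket{i_{k+1}}_{n+m}\ket{0}_4\ket{0}_{n+m}$ and, by Lemma~\ref{lemma: circuit maximum 2}, produces $\ket{\mathfrak{m}_k}_{n+m}\ket{i_{k+1}}_{n+m}\ket{\text{anc}}_4\ket{\mathfrak{m}_{k+1}}_{n+m}$, so that the freshly produced $\mathfrak{m}_{k+1}$ is in position to serve as the first input of the next call. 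Third, apply a second permutation $\mathcal{T}_{\pi_2}$ which moves the original inputs $\ket{i_1}_{n+m}\cdots\ket{i_I}_{n+m}$ to the front, collects the intermediate $\mathfrak{m}_2,\ldots,\mathfrak{m}_{I-1}$ and the $I-1$ four-qubit comparison garbage blocks into the single ancilla register $\ket{\text{anc}}_{(I-2)(n+m)+4(I-1)}$, and leaves $\ket{\mathfrak{m}_I}_{n+m} = \ket{\mathrm{M}_{I,n,m}(i_1,\ldots,i_I)}_{n+m}$ at the end, matching \eqref{eqn: max circuit}.

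For the gate count, each of the $I-1$ binary-max calls costs at most $557(n+m+1)^3$ elementary gates by Lemma~\ref{lemma: circuit maximum 2}, contributing at most $557I(n+m+1)^3$. Each permutation costs at most $2N^2$ swap gates by Lemma~\ref{lemma: permutation circuit}. Bounding $N = I(n+m) + (I-1)(n+m+4) \le 5I(n+m+1)$ yields $2N^2 \le 50I^2(n+m+1)^2$, so the two permutations add at most $100 I^2(n+m+1)^2 \le 100 I^2(n+m+1)^3$. Summing and using $I \le I^2$ gives a total of at most $(557+100)I^2(n+m+1)^3 = 657 I^2(n+m+1)^3$, as claimed.

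The only real obstacle is the bookkeeping: one must verify that the permutation $\mathcal{T}_{\pi_1}$ truly produces the interleaved layout demanded by Lemma~\ref{lemma: circuit maximum 2} and that, after the $k$-th binary-max call, the used $4$-qubit comparison block and the register now holding $\mathfrak{m}_k$ are not revisited by the $(k+1)$-th call. Once the register layout is fixed, everything else---correctness, qubit count, and gate count---follows mechanically from the inductive identity $\mathfrak{m}_{k+1}=\mathrm{M}_{n,m}(\mathfrak{m}_k,i_{k+1})$ and from the bounds already proved for permutation and binary-max circuits.
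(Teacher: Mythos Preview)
Your proposal is correct and essentially identical to the paper's proof: the same recursive decomposition $\mathfrak{m}_{k+1}=\mathrm{M}_{n,m}(\mathfrak{m}_k,i_{k+1})$, the same three-phase construction (permute into interleaved layout, apply $I-1$ binary-max circuits from Lemma~\ref{lemma: circuit maximum 2}, permute to consolidate), and the same gate-count arithmetic via $N\le 5I(n+m+1)$ leading to $557+100=657$. Your closing remark about the register bookkeeping is well taken but routine, and the paper treats it the same way.
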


\begin{figure}
	\centering 
\boxed{
\begin{quantikz}[row sep={20pt,between origins},column sep=20pt,font=\small]
\lstick{$\ket{i_1}_{n_1+m_1}$}&\gate[8][7cm]{\Tau_{\pi}^I\circ\mathcal{Q}_{+,I}^{d,n,m}\circ\cdots\circ\Tau_{\pi}^1\circ\mathcal{Q}_{+,1}^{d,n,m}}\gateoutput{$\ket{i_1}$}&\qw&\gate[8]{\Tau_\pi}&\qw\rstick{$\ket{i_1}_{n_1+m_1}$} \\
\lstick{$\vdots\ $}\qw&\gateoutput{$\vdots$}&\qw&\qw&\qw\rstick{$\vdots$}\\
\lstick{$\ket{i_d}_{n_1+m_1}$}&\gateoutput{$\ket{i_d}$}&\qw&\qw&\qw\rstick{$\ket{i_d}_{n_1+m_1}$}\\
\lstick[3]{$\ket{0}_\star$}&\gateoutput{$\ket{h_I(\bm{i})} $} &\gate[5][3cm]{\mathcal{Q}_{(\max)}^{I,n+d,m}}\gateoutput{$\ket{h_I(\bm{i})}$}&\qw&\qw\rstick[4]{$\ket{\anc}_{\star}$}\\
&\gateoutput{$\vdots$}&\qw\gateoutput{$\vdots$}&\qw&\qw\\
&\gateoutput{$\ket{h_1(\bm{i})} $}&\qw\gateoutput{$\ket{h_1(\bm{i})}$}&\qw&\qw\\
\lstick[2]{$\ket{0}_{n+m+d}$}&\gateoutput{$\ket{0}$}&\qw\gateoutput{$\ket{\anc}$}&\qw&\qw\\
&\gateoutput{$\ket{\anc}$}&\qw\gateoutput{$\ket{h(\bm{i})}$}&\qw&\qw\rstick{$\ket{h(\bm{i})}_{n+m+d}$}
\end{quantikz}		
} 
	\caption{Circuit diagram for $\mathcal{Q}_h$ in Proposition~\ref{prop: CPWA component circuit}.} 
	\label{fig: CPWA component diagram}
\end{figure}
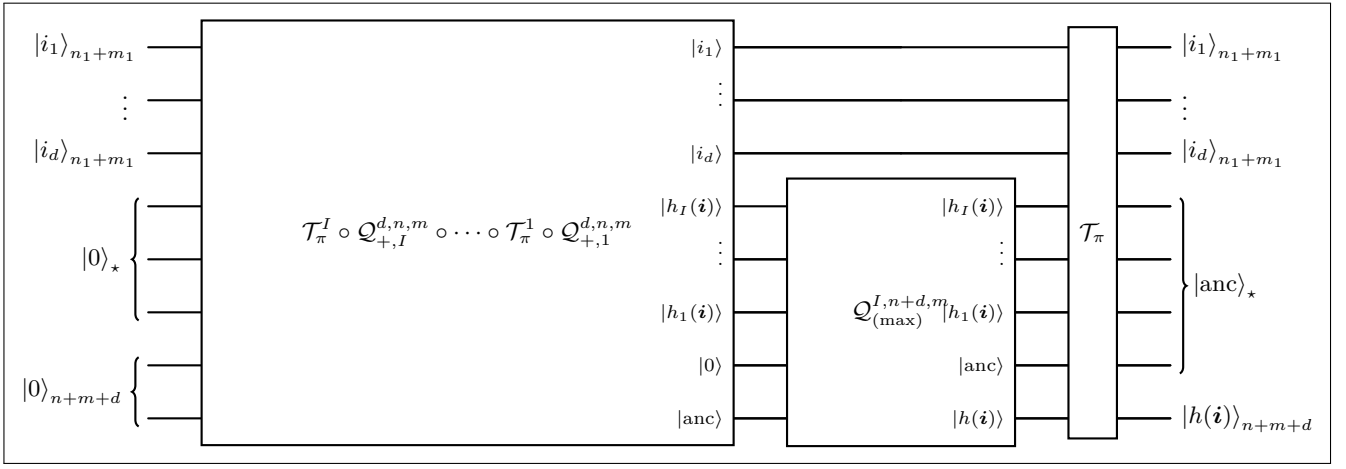
\begin{proposition}[Quantum circuit for loading continuous piecewise affine component functions]\label{prop: CPWA component circuit}
    Let $I,d,n_1,n_2,m_1,m_2 \in \N$. Define  $n:= n_1 + n_2$, $m := m_1 + m_2$, and $p := d(2n_2+2m_2+3)+(n_2+m_2)+(d-1)(n+m)$. Let $\{a_{l,j}\}_{l=1,\ldots,I;j=1,\ldots,d}, \{b_{l}\}_{l=1,\ldots,I} \subset \mathbb{F}_{n_2,m_2}$. Let $h_l: \mathbb{F}_{n_1,m_1}^d \to \mathbb{F}_{n+d,m}$, $l=1,\ldots,I$ be functions defined by 
    \begin{equation}\label{eqn: prop affine sum}
        h_l(i_1,\ldots,i_d) = \bigboxplus_{j=1}^d (a_{l,j} \boxdot i_j) \boxplus b_{l}, \quad \forall  (i_1,\ldots,i_d) \in \mathbb{F}_{n_1,m_1}^d,
    \end{equation}
    let $M_{I,n+d,m}:\mathbb{F}_{n+d,m}^I \to \mathbb{F}_{n+d,m}$ be a function defined by
    \begin{equation}
        M_{I,n+d,m}(i_1,\ldots,i_I) = \mathrm{E}_{n+d,m}(\max\{\mathrm{D}_{n+d,m}(i_1),\ldots,\mathrm{D}_{n+d,m}(i_I)\}), \quad \forall (i_1,\ldots,i_I) \in \mathbb{F}_{n+d,m}^I.
    \end{equation}
    Define $h: \mathbb{F}_{n_1,m_1}^d \to \mathbb{F}_{n+d,m}$ by
    \begin{equation}\label{eqn: prop payoff function}
        h(\bm{i}) :=M_{I,n+d,m}\big(h_1(\bm{i}),\ldots,h_I(\bm{i})\big), \quad \forall \bm{i}= (i_1,\ldots,i_d) \in \mathbb{F}_{n_1,m_1}^d.
    \end{equation}
    Then, there is a quantum circuit $\mathcal{Q}_h$ on $N$ qubits, where
    \begin{equation}
        N := d(n_1+m_1)  + I(n+m+d+p)+ (I-1)(n+m+d+5),
    \end{equation}
    such that for any $\bm{i} = (i_1,\ldots,i_d) \in \mathbb{F}_{n_1,m_1}^d$,
\begin{equation}\label{eqn: component payoff state}
    \begin{split}
        \mathcal{Q}_h :& \ket{i_1}_{n_1+m_1}\cdots\ket{i_d}_{n_1+m_1}\ket{0}_{I(n+m+d+p)+ (I-2)(n+m+d) + 5(I-1)}\ket{0}_{n+m+d}\\ &\mapsto \ket{i_1}_{n_1+m_1}\cdots\ket{i_d}_{n_1+m_1}\ket{\anc}_{I(n+m+d+p)+ (I-2)(n+m+d) + 5(I-1)}\ket{h(\bm{i})}_{n+m+d} ,
    \end{split}
\end{equation}    
which uses at most $10651 I^3d^3(n+m+1)^3$ elementary gates. See Figure~\ref{fig: CPWA component diagram} for the circuit diagram.
\begin{proof}
The construction of the quantum circuit $\mathcal{Q}_h$ involves the following steps:
\begin{enumerate}
    \item We first prepare the $I$ affine sums $h_l(i_1,\ldots,i_d)$ from \eqref{eqn: prop affine sum}, using Lemma \ref{lemma: quantum affine sum}. \\
    For $l=1,\ldots,I$, we apply the quantum circuits $(\mathcal{Q}_{+,l}^{d,n,m})_{l=1,\ldots,I}$ of Lemma \ref{lemma: quantum affine sum} \\
    (with $(d,n_1,n_2,m_1,m_2,a_1,\ldots,a_d,b)\leftarrow (d,n_1,n_2,m_1,m_2,a_{l,1},\ldots,a_{l,d},b_l)$  in the notation of Lemma \ref{lemma: quantum affine sum}) followed by an application of the permutation circuit $\Tau_\pi$ from Lemma \ref{lemma: permutation circuit}, where we compute 
    \begin{equation}
        \begin{split}
            &\ket{i_1}_{n_1+m_1}\cdots\ket{i_d}_{n_1+m_1}\ket{0}_{I(n+m+d+p)}\ket{0}_{(I-1)(n+m+d+5)}\\
            &\xmapsto{\mathcal{Q}_{+,1}^{d,n,m}}\ket{i_1}_{n_1+m_1}\cdots\ket{i_d}_{n_1+m_1}\ket{h_1(\bm{i})}_{n+m +d}\ket{\anc}_p\ket{0}_{(I-1)(n+m+d+p)} \ket{0}_{(I-1)(n+m+d+5)}\\
            &\xmapsto{\Tau_\pi^1} \ket{i_1}_{n_1+m_1}\cdots\ket{i_d}_{n_1+m_1}\ket{0}_{n+m+d+p}\ket{h_1(\bm{i})}_{n+m +d}\ket{\anc}_p \ket{0}_{(I-2)(n+m+d+p)} \ket{0}_{(I-1)(n+m+d+5)}\\
            &\xmapsto{\mathcal{Q}_{+,2}^{d,n,m}}\ket{i_1}_{n_1+m_1}\cdots\ket{i_d}_{n_1+m_1}\ket{h_2(\bm{i})}_{n+m +d}\ket{\anc}_p\ket{h_1(\bm{i})}_{n+m +d}\ket{\anc}_p \\
            &\hspace{100pt}\cdot \ket{0}_{(I-2)(n+m+d+p)} \ket{0}_{(I-1)(n+m+d+5)}\\
            &\xmapsto{\Tau_\pi^2}\ket{i_1}_{n_1+m_1}\cdots\ket{i_d}_{n_1+m_1}\ket{0}_{n+m+d+p}\ket{h_2(\bm{i})}_{n+m +d}\ket{h_1(\bm{i})}_{n+m +d}\ket{\anc}_{2p} \\
            &\hspace{100pt}\cdot \ket{0}_{(I-3)(n+m+d+p)} \ket{0}_{(I-1)(n+m+d+5)}\\
            &\hspace{100pt}\vdots \hspace{100pt} \vdots\\
            &\xmapsto{\mathcal{Q}_{+,I}^{d,n,m}}\ket{i_1}_{n_1+m_1}\cdots\ket{i_d}_{n_1+m_1}\ket{h_I(\bm{i})}_{n+m +d}\ket{\anc}_p\\
            &\hspace{100pt} \cdots \ket{h_2(\bm{i})}_{n+m +d}\ket{h_1(\bm{i})}_{n+m +d}\ket{\anc}_{(I-1)p} \ket{0}_{(I-1)(n+m+d+5)}\\
            &\xmapsto{\Tau_\pi^I}\ket{i_1}_{n_1+m_1}\cdots\ket{i_d}_{n_1+m_1}\ket{h_I(\bm{i})}_{n+m +d}\cdots \ket{h_2(\bm{i})}_{n+m +d}\ket{h_1(\bm{i})}_{n+m +d}\\
            &\hspace{100pt}\cdot \ket{0}_{(I-1)(n+m+d+5)}\ket{\anc}_{Ip}.
        \end{split}
    \end{equation}
    In this step, the number of elementary gates used is at most
    \begin{equation}
        I[2N^2 + 563d^3(n+m+1)^2].
    \end{equation}

    \item Next, we compute the maximum value amongst the affine sums $h_1,\ldots,h_I$. 
    We apply the quantum circuit $\mathcal{Q}_{(\max)}^{I,n+d,m}$ of Corollary \ref{lemma: quantum maximum} (with $I \leftarrow I$, $n \leftarrow d+n$, $m \leftarrow m$, $i_1,\ldots,i_I \leftarrow h_1(\bm{i}),\ldots,h_I(\bm{i})$ in the notation of Corollary \ref{lemma: quantum maximum}) where we have
    \begin{equation}
        \begin{split}
            &\ket{i_1}_{n_1+m_1}\cdots\ket{i_d}_{n_1+m_1}\ket{h_I(\bm{i})}_{n+m +d}\cdots \ket{h_2(\bm{i})}_{n+m +d}\ket{h_1(\bm{i})}_{n+m +d}\\
            &\hspace{100pt}\cdot \ket{0}_{(I-1)(n+m+d+5)}\ket{\anc}_{Ip}\\
            &\xmapsto{\mathcal{Q}_{(\max)}^{I,n+d,m}}\ket{i_1}_{n_1+m_1}\cdots\ket{i_d}_{n_1+m_1}\ket{h_I(\bm{i})}_{n+m +d}\cdots \ket{h_2(\bm{i})}_{n+m +d}\ket{h_1(\bm{i})}_{n+m +d}\\
            &\hspace{100pt} \ket{\anc}_{(I-2)(n+m+d)+5(I-1)} \ket{h(\bm{i})}_{n+m+d}\ket{\anc}_{Ip}
        \end{split}
    \end{equation}
    
    In this step, the number of elementary gates used is at most
    \begin{equation}
        1189 I^2(n+m+d+1)^3.
    \end{equation}
    
    \item We use the permutation quantum circuit $\Tau_\pi$ from Lemma \ref{lemma: permutation circuit} and we put the qubits $\ket{h_I(\bm{i})}_{n+m +d}\cdots \ket{h_1(\bm{i})}_{n+m +d}$ under $\ket{\anc}$, so that we have 
    \begin{equation}
        \begin{split}
            &\ket{i_1}_{n_1+m_1}\cdots\ket{i_d}_{n_1+m_1}\ket{h_I(\bm{i})}_{n+m +d}\cdots \ket{h_1(\bm{i})}_{n+m +d} \\
            &\hspace{120pt} \cdot \ket{\anc}_{(I-2)(n+m+d)+5(I-1)} \ket{h(\bm{i})}_{n+m+d}\ket{\anc}_{Ip} \\ &\xmapsto{\Tau_\pi} \ket{i_1}_{n_1+m_1}\cdots\ket{i_d}_{n_1+m_1} \ket{\anc}_{I(n+m+d+p) + (I-2)(n+m+d)+5(I-1)} \ket{h(\bm{i})}_{n+m+d}. 
        \end{split}
    \end{equation}
    We hence reach the desired state \eqref{eqn: component payoff state}. In this step, the number of elementary gates used is at most \begin{equation}
        2N^2.
    \end{equation}
\end{enumerate}
     We note that
\begin{equation}\label{eqn: bound on p}
    p = d(2n_2+2m_2+3)+(n_2+m_2)+(d-1)(n+m) \leq 4d(n+m+1).
\end{equation}
Hence, we obtain that
\begin{equation}
    \begin{split}
        N &=  d(n_1+m_1) + I(n+m+d+p)+ (I-1)(n+m+d+5) \\
        &\leq d(n+m+1) + Id(n+m+1) + Ip + (d+5)I(n+m+1)\\
        &\leq (1+1+4+1+5)Id(n+m+1)\\
        &= 12Id(n+m+1).
    \end{split}
\end{equation}
Thus, the total number of elementary gates used is at most
\begin{equation}
    \begin{split}
        &I[2N^2+563d^3(n+m+1)^2]+ 1189 I^2(n+m+d+1)^3 + 2N^2\\
        &= (I+1)2N^2 + 563Id^3(n+m+1)^2 + 1189 I^2(n+m+d+1)^3\\
        &\leq 4IN^2 + 563Id^3(n+m+1)^2 + 1189 I^2(2d)^3(n+m+1)^3\\
        &\leq 4I(12 Id(n+m+1))^2 + 563Id^3(n+m+1)^2 + 1189 I^2(2d)^3(n+m+1)^3\\
        &\leq (4\cdot12^2 + 563 + 1189 \cdot2^3)I^3d^3(n+m+1)^3\\
        &= 10651 I^3d^3(n+m+1)^3.
    \end{split}
\end{equation}

    \end{proof}
\end{proposition}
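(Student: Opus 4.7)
The plan is to build $\mathcal{Q}_h$ as a composition of three conceptual stages, each of which is handled by a quantum circuit that has already been analyzed earlier in the section: first compute all $I$ affine sums $h_l(\bm{i})$ in parallel (or sequentially) into distinct registers, then apply a single max-of-$I$ circuit to those registers to obtain $h(\bm{i}) = M_{I,n+d,m}(h_1(\bm{i}),\ldots,h_I(\bm{i}))$, and finally rearrange qubits so that the output register is at the end and all intermediate data is swept into an ancilla block.

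For Step (a), I would use Lemma~\ref{lemma: quantum affine sum} with the parameter choice $(a_1,\dots,a_d,b)\leftarrow (a_{l,1},\dots,a_{l,d},b_l)$ for each $l=1,\ldots,I$. For each $l$, the circuit $\mathcal{Q}_{+,l}^{d,n,m}$ writes $h_l(\bm{i}) \in \mathbb{F}_{n+d,m}$ into a fresh $(n+m+d)$-qubit register while reading (and preserving) the $d$ input registers $\ket{i_1}\cdots\ket{i_d}$; everything else it touches is wrapped into ancillas of length $p=d(2n_2+2m_2+3)+(n_2+m_2)+(d-1)(n+m)$. Between consecutive applications I would insert a permutation $\Tau_\pi$ from Lemma~\ref{lemma: permutation circuit} to route each fresh copy of $\mathcal{Q}_{+,l}^{d,n,m}$ onto its own block of $n+m+d+p$ fresh zero-initialized qubits, while bringing the previously-produced $h_{l'}(\bm{i})$ registers out of the way. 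The net effect after $I$ iterations is the state
\begin{equation*}
\ket{i_1}_{n_1+m_1}\cdots\ket{i_d}_{n_1+m_1}\ket{h_I(\bm{i})}_{n+m+d}\cdots\ket{h_1(\bm{i})}_{n+m+d}\ket{0}_{(I-1)(n+m+d+4)}\ket{\text{anc}}_{Ip}.
\end{equation*}

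For Step (b), I would invoke Corollary~\ref{lemma: quantum maximum} with $I\leftarrow I$, $n\leftarrow n+d$, $m\leftarrow m$, and inputs $h_1(\bm{i}),\ldots,h_I(\bm{i})$. Its output register carries $M_{I,n+d,m}(h_1(\bm{i}),\ldots,h_I(\bm{i})) = h(\bm{i})$ on fresh zero-initialized qubits; the $(I-1)(n+m+d+4)$-block of zeros reserved in Step (a) exactly matches the scratch space it requires. Finally, a single closing permutation circuit $\Tau_\pi$ sweeps the intermediate $h_l(\bm{i})$ registers and all other scratch qubits into the big ancilla block of size $I(n+m+d+p) + (I-2)(n+m+d) + 4(I-1)$, leaving $\ket{h(\bm{i})}_{n+m+d}$ at the end in the position required by \eqref{eqn: component payoff state}. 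Correctness of the output follows immediately from the specifications of the component circuits together with the definition \eqref{eqn: prop payoff function} of $h$.

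For the gate count, I would add the contributions of the three stages. Stage (a) uses $I$ copies of $\mathcal{Q}_{+}^{d,n,m}$ (each at most $563\,d^3(n+m+1)^2$ elementary gates by Lemma~\ref{lemma: quantum affine sum}) plus $I$ routing permutations, each bounded by $2N^2$ swap gates by Lemma~\ref{lemma: permutation circuit}. Stage (b) contributes at most $657\,I^2(n+m+d+1)^3$ by Corollary~\ref{lemma: quantum maximum}, and the closing permutation contributes another $2N^2$. The main routine step is estimating $N$: since $p\leq 4d(n+m+1)$, one gets $N\leq 32\,Id^2(n+m+1)^2$, so $N^2\leq 2^{10} I^2 d^4(n+m+1)^4$. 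Plugging in and using $(n+m+d+1)\leq 2d(n+m+1)$ in Stage (b), every term collapses into a constant multiple of $I^3 d^4 (n+m+1)^4$, and a direct sum of the numerical prefactors gives the claimed bound $9915\,I^3 d^4(n+m+1)^4$. The only delicate point is the bookkeeping of ancilla sizes across the permutations — making sure the target register of each successive $\mathcal{Q}_{+,l}^{d,n,m}$ lands on genuinely fresh zero qubits and that the $(I-1)(n+m+d+4)$ scratch block reserved for the max circuit is never overwritten — which is why the explicit permutation circuits are used rather than left implicit.
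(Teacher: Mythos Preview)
Your proposal is correct and follows essentially the same approach as the paper's proof: the same three-stage decomposition (affine sums via Lemma~\ref{lemma: quantum affine sum} with interleaved permutations, then the max circuit of Corollary~\ref{lemma: quantum maximum}, then a final permutation), the same gate-count breakdown $I[2N^2+563d^3(n+m+1)^2]+657I^2(n+m+d+1)^3+2N^2$, and the same estimates $p\leq 4d(n+m+1)$, $N\leq 32Id^2(n+m+1)^2$, $(n+m+d+1)\leq 2d(n+m+1)$ leading to the final bound.
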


\begin{figure}
	\centering 
	\boxed{
\begin{quantikz}[row sep={20pt,between origins},column sep=10pt,font=\footnotesize]
\lstick{$\ket{i_{-m}}$}&\qw&\qw&\qw&\qw&\qw\ \ldots \ &\qw&\gate[6,nwires={2,4}]{\Tau_\pi}&\rstick{$\ket{i_{-m}}$}\\
\lstick{$\vdots\ $}&&&&&&&&\rstick{$\vdots$}\\
\lstick{$\ket{i_{0}}$}&\qw&\qw&\qw&\qw&\qw\ \ldots \ &\qw&\qw&\rstick{$\ket{i_{0}}$}\\
\lstick{$\vdots\ $}&&&&&&&&\rstick{$\vdots$}\\
\lstick{$\ket{i_{n-1}}$}&\qw &\ctrl{1}&\gate[2]{\Tau_{\pi}}&\ctrl{1}&\qw\ \ldots \ &\ctrl{1}&\qw&\rstick{$\ket{i_{n-1}}$}\\
\lstick{$\ket{0}$}&\gate{R_y(a_0)} &\gate{CR_y(-a_1  2^{n-1})}&\qw&\gate{CR_y(-a_1  2^{n-2})}&\qw \ \ldots \ &\gate{CR_y(-a_1  2^{-m})}&\qw&\rstick{$\cos(\tfrac{\bar{f}(i)}{2})\ket{0} + \sin(\tfrac{\bar{f}(i)}{2})\ket{1}$}
\end{quantikz}		
 	} 
	\caption{Circuit diagram for $\mathcal{R}_{f}$ in Lemma~\ref{lemma: quantum linear rotation}.}
	\label{fig: Y rotation circuit diagram}
\end{figure}
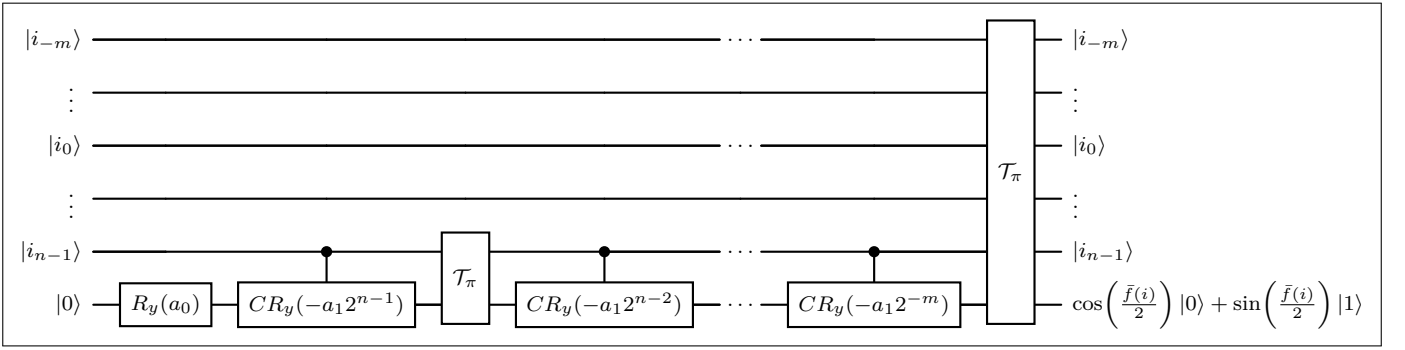

\begin{lemma}[Quantum circuit for $Y$-rotation] \label{lemma: quantum linear rotation}
	Let $n\in \N$, $m \in \N_0$, $a_0,a_1 \in \R$, and let $f(x) = a_1 x + a_0$ for $x \in \R$. Define
	\begin{equation}
		\bar{f}(i) := f\circ \mathrm{D}_{n,m}(i), \quad \forall i \in \mathbb{F}_{n,m}.
	\end{equation}
	Then, there is a quantum circuit $\mathcal{R}_f$ on $(n+m+1)$ qubits such that for any $i \in \mathbb{F}_{n,m}$,
	\begin{equation}
		\mathcal{R}_f:\ket{i}_{n+m}\ket{0} \mapsto \ket{i}_{n+m}[\cos(\bar{f}(i)/2) \ket{0} + \sin(\bar{f}(i)/2) \ket{1}],
	\end{equation}
	which uses $13(n+m+1)^3$ elementary gates. See Figure~\ref{fig: Y rotation circuit diagram} for the circuit diagram.
	
	\begin{proof}
		The quantum circuit $\mathcal{R}_f$ is constructed from the following steps:
		\begin{enumerate}
			\item We apply the $Y$-rotation gate $R_y(\theta)$ 
			with parameter $\theta \leftarrow a_0$
			  to obtain the state
			\begin{equation}
				I^{\otimes n+m}_2\otimes R_y(a_0):\ket{i}_{n+m}\ket{0} \mapsto \ket{i}_{n+m}(\cos(a_0/2)\ket{0} + \sin(a_0/2)\ket{1}).
			\end{equation}
			\item We apply the controlled $Y$-rotation gate $CR_y(\theta)$ of Lemma \ref{lemma: controlled y-rotation} on the qubit $\ket{i_{n-1}}$ (with parameter $\theta \leftarrow -a_1\cdot 2^{n-1}$, control qubit $c \leftarrow i_{n-1}$ and target qubit $0 \leftarrow (\cos(a_0/2)\ket{0} + \sin(a_0/2)\ket{1})$ in the notation of Lemma \ref{lemma: controlled y-rotation}) to obtain
			\begin{equation}
				\begin{split}
					&I^{\otimes n+m-1}_2\otimes CR_y(-a_1\cdot 2^{n-1}): \ket{i_{-m}}\cdots\ket{i_0}\cdots\ket{i_{n-1}} (\cos(a_0/2)\ket{0} + \sin(a_0/2)\ket{1}) \\
					&\mapsto  \ket{i_{-m}}\cdots\ket{i_0}\cdots\ket{i_{n-1}}(\cos((- a_1 2^{n-1}i_{n-1} + a_0)/2)\ket{0} + \sin((-a_1 2^{n-1}i_{n-1} + a_0)/2)\ket{1}) \\
					&\quad = \begin{cases}
						\ket{i}_{n+m}(\cos(a_0/2)\ket{0} + \sin(a_0/2)\ket{1}), &\quad \text{if $i_{n-1}=0$},\\
						\ket{i}_{n+m}(\cos((-a_1 2^{n-1} + a_0)/2)\ket{0} + \sin((-a_1 2^{n-1} + a_0)/2)\ket{1}), &\quad \text{if $i_{n-1}=1$}.
					\end{cases}
				\end{split}
			\end{equation}
			\item Similarly, we inductively apply for $k=n-2,\ldots,0,\ldots,-m$ the permutation quantum circuit $\Tau_{k \leftrightarrow n-1}$ from Lemma \ref{lemma: permutation circuit}, the controlled $Y$-rotation gate $CR_y(\theta)$ (with parameter $\theta \leftarrow a_1 2^{k}$ and control qubit $c \leftarrow i_{k}$ in the notation of Lemma \ref{lemma: controlled y-rotation}), and another permutation circuit $\Tau_{k\leftrightarrow n-1}$ to obtain
			\begin{equation}
				\begin{split}
					&\ket{i_{-m}}\cdots\ket{i_0}\cdots\ket{i_{n-1}}(\cos((- a_1 2^{n-1}i_{n-1} + a_0)/2)\ket{0} + \sin((-a_1 2^{n-1}i_{n-1} + a_0)/2)\ket{1})\\
					&\xmapsto{\Tau_{n-2 \leftrightarrow n-1}}\ket{i_{-m}}\cdots\ket{i_0}\cdots\ket{i_{n-1}}\ket{i_{n-2}}(\cos((- a_1 2^{n-1}i_{n-1} + a_0)/2)\ket{0} + \sin((-a_1 2^{n-1}i_{n-1} + a_0)/2)\ket{1})\\
					&\xmapsto{ CR_y(a_12^{n-2}) }\ket{i_{-m}}\cdots\ket{i_0}\cdots\ket{i_{n-1}}\ket{i_{n-2}} \bigg(\cos((a_1(2^{n-2}i_{n-2} -2^{n-1}i_{n-1}) + a_0)/2)\ket{0} \\
					&\hspace{100pt}+ \sin((a_1(2^{n-2}i_{n-2} -2^{n-1}i_{n-1}) + a_0)/2)\ket{1}\bigg)\\
					&\xmapsto{\Tau_{n-2 \leftrightarrow n-1}}\ket{i_{-m}}\cdots\ket{i_0}\cdots\ket{i_{n-2}}\ket{i_{n-1}} \bigg(\cos((a_1(2^{n-2}i_{n-2} -2^{n-1}i_{n-1}) + a_0)/2)\ket{0} \\
					&\hspace{100pt}+ \sin((a_1(2^{n-2}i_{n-2} -2^{n-1}i_{n-1}) + a_0)/2)\ket{1}\bigg)\\
					&\qquad\qquad\qquad\qquad\vdots\qquad\qquad\qquad\qquad\vdots \\ 
					&\mapsto \ket{i}_{n+m}(\cos((a_1 (-2^{n-1}i_{n-1} + \sum_{k=-m}^{n-2}2^ki_k) + a_0)/2)\ket{0} + \sin((a_1 (-2^{n-1}i_{n-1} + \sum_{k=-m}^{n-2}2^ki_k) + a_0)/2)\ket{1})\\
					&= \ket{i}_{n+m}[\cos(\bar{f}(i)/2) \ket{0} + \sin(\bar{f}(i)/2) \ket{1}]
				\end{split}
			\end{equation}
		\end{enumerate}
		The permutation circuits $\Tau_\pi$ from Lemma \ref{lemma: permutation circuit} requires $2(n+m+1)^2$ gates and the controlled $Y$-rotation gate $CR_y(\theta)$ requires 4 elementary gates (c.f. Lemma \ref{lemma: controlled y-rotation}). Hence, the total number of gates required for circuit $\mathcal{R}_f$ is 
		\begin{equation}
			1 + 4 + (n+m-1)[2\cdot 2(n+m+1)^2 + 4]  \leq 13(n+m+1)^3.
		\end{equation}
		Thus, we conclude the proof of the lemma.
	\end{proof}
\end{lemma}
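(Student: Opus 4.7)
The plan is to exploit the fact that the target phase $\bar{f}(i)/2 = (a_0 + a_1 \mathrm{D}_{n,m}(i))/2$ is affine in the bits of $i$, while $Y$-rotations compose additively via $R_y(\alpha)R_y(\beta) = R_y(\alpha+\beta)$. This suggests decomposing the total rotation on the ancilla qubit into one unconditional piece contributing $a_0$, plus one controlled piece per input bit contributing $a_1 c_k 2^k \cdot i_k$, where $c_k = +1$ for $k \in \{-m,\ldots,n-2\}$ and $c_{n-1} = -1$ to reflect the two's complement decoding formula $\mathrm{D}_{n,m}(i) = -i_{n-1}2^{n-1} + \sum_{k=-m}^{n-2} i_k 2^k$ from Definition \ref{def: encoder-decoder}.

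Concretely, I would first apply an $R_y(a_0)$ to the ancilla qubit initialized at $\ket{0}$, and then for each $k \in \{-m,\ldots,n-1\}$ apply a controlled rotation $CR_y(\theta_k)$ (Lemma \ref{lemma: controlled y-rotation}) with bit $\ket{i_k}$ as control, setting $\theta_k = a_1 \cdot 2^k$ for $k \le n-2$ and $\theta_{n-1} = -a_1 \cdot 2^{n-1}$. Since all controlled rotations share the same target qubit and $Y$-rotations compose additively, the net rotation angle accumulated on the ancilla is
\begin{equation*}
a_0 + a_1\Bigl(-i_{n-1}2^{n-1} + \sum_{k=-m}^{n-2} i_k 2^k\Bigr) = \bar{f}(i),
\end{equation*}
yielding the desired state $\cos(\bar{f}(i)/2)\ket{0} + \sin(\bar{f}(i)/2)\ket{1}$ on the ancilla while leaving the input register unchanged, exactly as required.

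To fit this construction into the formalism of Definition \ref{def: quantum circuit}, which allows two-qubit gates only between adjacent tensor factors, each $CR_y(\theta_k)$ must be conjugated by a permutation circuit from Lemma \ref{lemma: permutation circuit} that temporarily moves $\ket{i_k}$ next to the ancilla and then restores the original ordering. By Lemma \ref{lemma: permutation circuit} each such conjugation pair uses at most $2 \cdot 2(n+m+1)^2$ swap gates, and by Lemma \ref{lemma: controlled y-rotation} each $CR_y(\theta_k)$ costs $4$ elementary gates. Counting the initial $R_y(a_0)$ as $1$ gate, and arranging the ancilla adjacent to one of the control bits so that at least one controlled rotation incurs no permutation overhead, the total gate count is at most $1 + 4 + (n+m-1)\bigl(4(n+m+1)^2 + 4\bigr) \le 13(n+m+1)^3$, matching the stated bound.

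The main obstacle here is less conceptual than bookkeeping: the rotation-by-affine-function construction follows the standard template of \cite{QC6_risk_analysis}, but one must keep the sign convention of the MSB in two's complement correct so that the accumulated angle is exactly $\bar{f}(i)$ rather than the angle associated to the unsigned reading of the bit string, and one must carefully track the permutation overhead demanded by the adjacency convention in Definition \ref{def: quantum circuit} in order to arrive at the cubic $13(n+m+1)^3$ bound on elementary gates rather than a looser quartic estimate.
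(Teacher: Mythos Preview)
Your proposal is correct and follows essentially the same approach as the paper: decompose the rotation into one unconditional $R_y(a_0)$ plus $n+m$ controlled rotations $CR_y(\theta_k)$ with the two's-complement sign on the MSB, conjugating each non-adjacent control by a permutation circuit from Lemma~\ref{lemma: permutation circuit}, and arriving at the identical gate count $1+4+(n+m-1)\bigl(4(n+m+1)^2+4\bigr)\le 13(n+m+1)^3$.
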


\begin{proposition}[Quantum circuit for continuous piecewise affine payoff function with $Y$-rotation]\label{prop: loading payoff circuit}
Let  $d\in \N$, $n_1,n_2,m_1$, $m_2\in \N$, $K \in \N$, $I_1,\cdots,I_K \in \N$, $\xi_1,\ldots,\xi_K \in \{-1,1\}$, and $s \in (0,1)$. Define $n:= n_1 + n_2$, $m = m_1 + m_2$, and $p := d(2n_2+2m_2+3)+(n_2+m_2)+(d-1)(n+m)$. Let $\{a_{k,l,j}\}_{k=1,\ldots,K;l=1,\ldots,I_k;j=1,\ldots,d}$, $\{b_{k,l}\}_{k=1,\ldots,K;l=1,\ldots,I_k} \subset \mathbb{F}_{n_2,m_2}$. For $k=1,\ldots,K$, $l=1,\ldots,I_k$, let $h_{k,l}: \mathbb{F}_{n_1,m_1}^d \to \mathbb{F}_{n+d,m}$ be functions defined by 
\begin{equation}\label{eqn: prop affine sum2}
    h_{k,l}(\bm{i}) := \bigboxplus_{j=1}^d (a_{k,l,j} \boxdot i_j) \boxplus b_{k,l}, \quad \forall \bm{i}= (i_1,\ldots,i_d) \in \mathbb{F}_{n_1,m_1}^d.
\end{equation}
For $k=1,\ldots,K$, let $\overline{\xi}_k \in \mathbb{F}_{2,0}$ be defined by 
\begin{equation}
    \overline{\xi}_k := \mathrm{E}_{2,0}(\xi_k),
\end{equation}
let $M_{I_k,n+d,m}:\mathbb{F}_{n+d,m}^{I_k} \to \mathbb{F}_{n+d,m}$ be defined by
\begin{equation}
    M_{{I_k},n+d,m}(i_1,\ldots,i_{I_k}) := \mathrm{E}_{n+d,m}(\max\{\mathrm{D}_{n+d,m}(i_1),\ldots,\mathrm{D}_{n+d,m}(i_{I_k})\}), \quad \forall (i_1,\ldots,i_{I_k}) \in \mathbb{F}_{n+d,m}^{I_k},
\end{equation}
and let $\bm{h}_k: \mathbb{F}_{n_1,m_1}^d \to \mathbb{F}_{n+d,m}$ be defined by
\begin{equation}\label{eqn: prop payoff function2}
    \bm{h}_k(\bm{i}) :=M_{I_k,n+d,m}\big(h_{k,1}(\bm{i}),\ldots,h_{k,I_k}(\bm{i})\big), \quad \forall \bm{i}= (i_1,\ldots,i_d) \in \mathbb{F}_{n_1,m_1}^d.
\end{equation}
Let $\bm{h}:\mathbb{F}_{n_1,m_1}^d \to \mathbb{F}_{n+d+K+1,m}$ be defined by 
\begin{equation}
    \bm{h}(\bm{i}) := \bigboxplus_{k=1}^K (\overline{\xi}_k \boxdot \bm{h}_k(\bm{i})), \quad \forall \bm{i} = (i_1,\ldots,i_d) \in \mathbb{F}_{n_1,m_1}^d. \label{eqn: encoded payoff}
\end{equation}
Let $f:\R \to \R$ be a function defined by $f(x) = sx + \frac{\pi}{2}$, and define $\bar{f}:\mathbb{F}_{n +d +K + 1,m} \to \R$ by 
\begin{equation}
    \bar{f}(i) := f \circ \mathrm{D}_{n+d+K + 1,m}(i), \quad \forall i\in \mathbb{F}_{n +d+K + 1,m}.
\end{equation}
Then, there is a quantum circuit $\mathcal{R}_{\bm{h}} $ on $N$ qubits, where
\begin{equation}\label{eqn: def N, q_k}
    \begin{split}
        &N := d(n_1+m_1) + \sum_{k=1}^K q_k +  2K(n+m+d+5),\\
        &q_k := I_k(n+m+d+p) + (I_k-2)(n+m+d) + 5(I_k-1), \quad k=1,\ldots,K,
    \end{split}
\end{equation} 
such that for any $\bm{i} = (i_1,\ldots,i_d) \in \mathbb{F}_{n_1,m_1}^d$,
\begin{equation}
    \begin{split}
        \mathcal{R}_{\bm{h}}:&\ket{i_1}_{n_1+m_1}\cdots\ket{i_d}_{n_1+m_1}\ket{0}_{q_1+\cdots+q_K}\ket{0}_{2K(n+m+d+5)} \\
        &\mapsto \ket{i_1}_{n_1+m_1}\cdots\ket{i_d}_{n_1+m_1}\ket{\anc}_{q_1+\cdots+q_K + 2K(n+m+d+5)-1} \left[ \cos(\bar{f}(\bm{h}(\bm{i}))/2)\ket{0} + \sin(\bar{f}(\bm{h}(\bm{i}))/2)\ket{1} \right], \label{eqn: grover state}
    \end{split}
\end{equation}
which uses at most $16186 K^3\left(\max_{k=1,\ldots,K}\{I_k\}\right)^3d^3(n+m+1)^3$ elementary gates. See Figure~\ref{fig: payoff circuit diagram1} and Figure~\ref{fig: payoff circuit diagram2} for the circuit diagram.
\end{proposition}

\begin{figure}[t]
	\centering 
	\boxed{
		\begin{quantikz}[row sep={20pt,between origins},column sep=10pt,font=\small]
			\lstick{$\ket{i_1}_{n_1+m_1}$}&\gate[5]{\mathcal{Q}_{\bm{h}_1}}&\gate[10]{\mathcal{T}_{\pi_1}}&\gate[5]{\mathcal{Q}_{\bm{h}_2}}&\gate[12,nwires=13]{\mathcal{T}_{\pi_2}}&\qw \ \ldots\ &\gate[5]{\mathcal{Q}_{\bm{h}_K}}&\gate[16,nwires=13]{\mathcal{T}_{\pi_K}}&\qw&\qw&\qw\rstick{$\ket{i_1}_{n_1+m_1}$}\\
			\lstick{$\vdots$}&\qw&\qw&\qw&\qw&\qw \ \ldots\ &\qw&\qw&\qw&\qw&\qw\rstick{$\vdots$}\\
			\lstick{$\ket{i_d}_{n_1+m_1}$}&\qw&\qw&\qw&\qw&\qw \ \ldots\ &\qw&\qw&\qw&\qw&\qw\rstick{$\ket{i_d}_{n_1+m_1}$}\\ 
			\lstick[15]{$\ket{0}_{\star}$}&\qw&\qw&\qw&\qw&\qw \ \ldots\ &\qw&\qw&\qw&\qw&\qw\rstick{$\ket{\anc}_{q_1+\cdots+q_K}$}\\ 
			&\qw&\qw&\qw&\qw&\qw \ \ldots\ &\qw&\qw&\qw&\gate[4]{\widetilde{\mathcal{Q}_{(\times)}^{(1)}}}&\qw\rstick{$\ket{\bm{h}_1(\bm{i})}_{n+m+d}$}\\ 
			&\qw&\qw&\qw&\qw&\qw \ \ldots\ &\qw&\qw&\gate{X^{\overline{\xi}_1}}&\qw&\qw\rstick{$\ket{{\overline{\xi}_1}}_2$}\\ 
			&\qw&\qw&\qw&\qw&\qw \ \ldots\ &\qw&\qw&\qw&\qw&\qw\rstick{$\ket{\bm{h}_1(\bm{i}) \boxdot \overline{\xi}_1}_{n+m+d+2}$}\\ 
			&\qw&\qw&\qw&\qw&\qw \ \ldots\ &\qw&\qw&\qw&\qw&\qw\rstick{$\ket{\anc}_{5}$}\\ 
			&\qw&\qw&\qw&\qw&\qw \ \ldots\ &\qw&\qw&\qw&\gate[4]{\widetilde{\mathcal{Q}_{(\times)}^{(2)}}}&\qw\rstick{$\ket{\bm{h}_{2}(\bm{i})}_{n+m+d}$}\\ 
			&\qw&\qw&\qw&\qw&\qw \ \ldots\ &\qw&\qw&\gate{X^{\overline{\xi}_2}}&\qw&\qw\rstick{$\ket{\overline{\xi}_2}_2$}\\ 
			&\qw&\qw&\qw&\qw&\qw \ \ldots\ &\qw&\qw&\qw&\qw&\qw\rstick{${\ket{\bm{h}_2(\bm{i}) \boxdot \overline{\xi}_2}_{n+m+d+2}}$}\\ 
			&\qw&\qw&\qw&\qw&\qw \ \ldots\ &\qw&\qw&\qw&\qw&\qw\rstick{$\ket{\anc}_5$}\\ 
			&&&&& \ \ldots\ &&&&\vdots&\rstick{$\vdots$}\\ 
			&\qw&\qw&\qw&\qw&\qw \ \ldots\ &\qw&\qw&\qw&\gate[4]{\widetilde{\mathcal{Q}_{(\times)}^{(K)}}}&\qw\rstick{$\ket{\bm{h}_{K}(\bm{i})}_{n+m+d}$}\\ 
			&\qw&\qw&\qw&\qw&\qw \ \ldots\ &\qw&\qw&\gate{X^{\overline{\xi}_K}}&\qw&\qw\rstick{$\ket{\overline{\xi}_K}_2$} \\
			&\qw&\qw&\qw&\qw&\qw \ \ldots\ &\qw&\qw&\qw&\qw&\qw\rstick{${\ket{\bm{h}_K(\bm{i}) \boxdot \overline{\xi}_K}_{n+m+d+2}}$}\\ 
			&\qw&\qw&\qw&\qw&\qw \ \ldots\ &\qw&\qw&\qw&\qw&\qw\rstick{$\ket{\anc}_5$} \\
			&\qw&\qw&\qw&\qw&\qw \ \ldots\ &\qw&\qw&\qw&\qw&\qw\rstick{$\ket{0}_K$} 
		\end{quantikz}		
	} 
	\caption{Circuit diagram for $\mathcal{R}_{\bm{h}}$ in Proposition~\ref{prop: loading payoff circuit} (Steps 1--3).}
	\label{fig: payoff circuit diagram1}
\end{figure}

\begin{figure}[t]
	\centering 
	\boxed{
		\begin{quantikz}[row sep={20pt,between origins},column sep=10pt,font=\small]
			\lstick{$\ket{i_1}$}&\gate[18]{\mathcal{T}_{\pi}}\\    
			\lstick{$\vdots$}&\\
			\lstick{$\ket{i_d}$}&\\
			\lstick{$\ket{\anc}$}&\\
			\lstick{$\ket{\bm{h}_1(\bm{i})}$}&\\
			\lstick{$\ket{\overline{\xi}_1}$}&\\
			\lstick{${\ket{\bm{h}_1(\bm{i}) \boxdot \overline{\xi}_1}}$}&\\
			\lstick{$\ket{\anc}$}&\\
			\lstick{$\ket{\bm{h}_2(\bm{i})}$}&\\
			\lstick{$\ket{\overline{\xi}_2}$}&\\
			\lstick{${\ket{\bm{h}_2(\bm{i}) \boxdot \overline{\xi}_2}}$}&\\
			\lstick{$\ket{\anc}$}&\\
			\lstick{$\vdots$}&\\
			\lstick{$\ket{\bm{h}_K(\bm{i})}$}&\\
			\lstick{$\ket{\overline{\xi}_K}$}&\\
			\lstick{${\ket{\bm{h}_K(\bm{i}) \boxdot \overline{\xi}_K}}$}&\\
			\lstick{$\ket{\anc}$}&\\
			\lstick{$\ket{0}_K$}&
		\end{quantikz}
		\text{$\Longrightarrow$}\hspace{2mm}
			\begin{quantikz}[row sep={20pt,between origins},column sep=10pt,font=\small]
				\lstick{$\ket{i_1}$}&\qw&\qw&\qw\ \ldots \ &\qw&\qw&\qw\rstick{$\ket{i_1}_{n_1+m_1}$}\\    
				\lstick{$\vdots$}&\qw&\qw&\qw\ \ldots \ &\qw&\qw&\qw\rstick{$\vdots$}\\
				\lstick{$\ket{i_d}$}&\qw&\qw&\qw\ \ldots \ &\qw&\qw&\qw\rstick{$\ket{i_d}_{n_1+m_1}$}\\
				\lstick{$\ket{\anc}$}&\qw&\qw&\qw\ \ldots \ &\qw&\qw&\qw\rstick[8]{$\ket{\anc}_\star$}\\
				\lstick{${\ket{\bm{h}_1(\bm{i}) \boxdot \overline{\xi}_1}}$}&\gate[3]{\mathcal{Q}_{(+)}^{(1)}}&\qw&\qw\ \ldots \ &\qw&\qw&\qw\\
				\lstick{${\ket{\bm{h}_2(\bm{i}) \boxdot \overline{\xi}_2}}$}&\qw&\qw&\qw\ \ldots \ &\qw&\qw&\qw\\
				\lstick{$\ket{0}$}&\qw&\gate[3]{\mathcal{Q}_{(+)}^{(2)}}\qw&\qw\ \ldots \ &\qw&\qw&\qw\\
				\lstick{${\ket{\bm{h}_3(\bm{i}) \boxdot \overline{\xi}_3}}$}&\qw&\qw&\qw\ \ldots \ &\qw&\qw&\qw\\
				\lstick{$\ket{0}$}&\qw&\qw&\qw\ \ldots \ &\qw&\qw&\qw\\
				\lstick{$\vdots$}\qw&\qw&\qw&\qw\ \ldots \ &\gate[3,nwires=1]{\mathcal{Q}_{(+)}^{(K)}}\qw&\qw&\qw\\
				\lstick{${\ket{\bm{h}_K(\bm{i}) \boxdot \overline{\xi}_K}}$}&\qw&\qw&\qw\ \ldots \ &\qw&\qw&\qw\\
				\lstick{$\ket{0}$}&\qw&\qw&\qw\ \ldots \ &\qw&\gate[2]{\mathcal{R}_f}&\qw\rstick{$\ket{\bm{h}(\bm{i})}_{n+m+d+K+1}$}\\
				\lstick{$\ket{0}$}&\qw&\qw&\qw\ \ldots \ &\qw&\qw&\qw\rstick{$[\cos(\bar{f}({\bm{h}(\bm{i})})/2)\ket{0}$}\\
				\setwiretype{n} &&&&&&\rstick{$\ +\sin(\bar{f}({\bm{h}(\bm{i})})/2)\ket{1}]$}
			\end{quantikz}
	} 
		\caption{Circuit diagram for $\mathcal{R}_{\bm{h}}$ in Proposition~\ref{prop: loading payoff circuit} (Steps 4--6).}
		\label{fig: payoff circuit diagram2}
\end{figure}

\begin{proof}    
The construction of the quantum circuit $\mathcal{R}_{\bm{h}}$ involves the following steps:
\begin{enumerate}
    \item We first prepare the $K$ component functions $\bm{h}_{k}$ using Proposition \ref{prop: CPWA component circuit}. For $k=1,\ldots,K$, we apply the quantum circuits $(\mathcal{Q}_{h_k})_{k=1,\ldots,K}$ of Proposition \ref{prop: CPWA component circuit} (with $I,d,n_1,m_1,n_2,m_2 \leftarrow I_k,d,n_1,m_1,n_2,m_2$, $a_{l,j} \leftarrow a_{k,l,j}$, and $b_{l} \leftarrow b_{k,l}$ in the notation of Proposition \ref{prop: CPWA component circuit}) followed by an application of the permutation circuit $\Tau_\pi$ from Lemma \ref{lemma: permutation circuit}, where we compute
    \begin{equation}
        \begin{split}
            &\ket{i_1}_{n_1+m_1}\cdots\ket{i_d}_{n_1+m_1}\ket{0}_{q_1+\cdots+q_K}\ket{0}_{2K(n+m+d+5)} \\
            &= \ket{i_1}_{n_1+m_1}\cdots\ket{i_d}_{n_1+m_1}\ket{0}_{q_1}\ket{0}_{n+m+d}\ket{0}_{q_2}\ket{0}_{n+m+d}\cdots\ket{0}_{q_K}\ket{0}_{n+m+d}\ket{0}_{K(n+m+d+10)}\\
            &\xmapsto{\mathcal{Q}_{\bm{h}_1}}\ket{i_1}_{n_1+m_1}\cdots\ket{i_d}_{n_1+m_1}\ket{\anc}_{q_1}\ket{\bm{h}_1(\bm{i})}_{n+m+d}\ket{0}_{q_2}\ket{0}_{n+m+d}\cdots\ket{0}_{q_K}\ket{0}_{n+m+d}\ket{0}_{K(n+m+d+10)}\\
            &\xmapsto{\Tau_{\pi_1}}\ket{i_1}_{n_1+m_1}\cdots\ket{i_d}_{n_1+m_1}\ket{0}_{q_2}\ket{0}_{n+m+d}\ket{\anc}_{q_1}\ket{\bm{h}_1(\bm{i})}_{n+m+d}\cdots\ket{0}_{q_K}\ket{0}_{n+m+d}\ket{0}_{K(n+m+d+10)}\\
            &\xmapsto{\mathcal{Q}_{\bm{h}_2}}\ket{i_1}_{n_1+m_1}\cdots\ket{i_d}_{n_1+m_1}\ket{\anc}_{q_2}\ket{\bm{h}_2(\bm{i})}_{n+m+d}\ket{\anc}_{q_1}\ket{\bm{h}_1(\bm{i})}_{n+m+d}\cdots\ket{0}_{q_K}\ket{0}_{n+m+d}\ket{0}_{K(n+m+d+10)}\\
            &\hspace{100pt}\vdots \hspace{100pt} \vdots\\ &\xmapsto{\mathcal{Q}_{\bm{h}_K}}\ket{i_1}_{n_1+m_1}\cdots\ket{i_d}_{n_1+m_1}\ket{\anc}_{q_K}\ket{\bm{h}_K(\bm{i})}_{n+m+d}\cdots\ket{\anc}_{q_1}\ket{\bm{h}_{1}(\bm{i})}_{n+m+d}\ket{0}_{K(n+m+d+10)}\\
            &\xmapsto{\Tau_{\pi_K}}\ket{i_1}_{n_1+m_1}\cdots\ket{i_d}_{n_1+m_1}\ket{\anc}_{q_1+\cdots+q_K}\ket{\bm{h}_1(\bm{i})}_{n+m+d}\ket{0}_2\ket{0}_{n+m+d+2}\ket{0}_5\\
            &\hspace{100pt} \cdot\ket{\bm{h}_2(\bm{i})}_{n+m+d}\ket{0}_2\ket{0}_{n+m+d+2}\ket{0}_5\cdots\ket{\bm{h}_K(\bm{i})}_{n+m+d}\ket{0}_2\ket{0}_{n+m+d+2}\ket{0}_5\ket{0}_{K} 
        \end{split}
    \end{equation}
    In this step, the number of elementary gates used is at most 
    \begin{equation}
        K\cdot 2N^2 + \sum_{k=1}^K 10651 I_k^3d^3(n+m+1)^3
    \end{equation}
\item Next, we load the strings $\overline{\xi}_k \in \mathbb{F}_{2,0}$ into the qubits $\ket{0}_2$ for $k=1,\ldots,K$ using  Pauli $X$ gates
\begin{equation}
\begin{split}
    &\ket{i_1}_{n_1+m_1}\cdots\ket{i_d}_{n_1+m_1}\ket{\anc}_{q_1+\cdots+q_K}\ket{\bm{h}_1(\bm{i})}_{n+m+d}\ket{0}_2\ket{0}_{n+m+d+2}\ket{0}_5\\
    &\hspace{100pt} \cdot\ket{\bm{h}_2(\bm{i})}_{n+m+d}\ket{0}_2\ket{0}_{n+m+d+2}\ket{0}_5\cdots\ket{\bm{h}_K(\bm{i})}_{n+m+d}\ket{0}_2\ket{0}_{n+m+d+2}\ket{0}_5\ket{0}_{K}\\
    &\xmapsto{}\ket{i_1}_{n_1+m_1}\cdots\ket{i_d}_{n_1+m_1}\ket{\anc}_{q_1+\cdots+q_K}\ket{\bm{h}_1(\bm{i})}_{n+m+d}\ket{\overline{\xi}_1}_2\ket{0}_{n+m+d+2}\ket{0}_5\\
    &\hspace{100pt} \cdot\ket{\bm{h}_2(\bm{i})}_{n+m+d}\ket{\overline{\xi}_2}_2\ket{0}_{n+m+d+2}\ket{0}_5\cdots\ket{\bm{h}_K(\bm{i})}_{n+m+d}\ket{\overline{\xi}_K}_2\ket{0}_{n+m+d+2}\ket{0}_5\ket{0}_{K}.
\end{split}
\end{equation}
The number of elementary gates used in this step is at most
\begin{equation}
    2K.
\end{equation}
    \item We perform $K$ multiplication using the quantum circuit $\widetilde{\mathcal{Q}}_{(\times)}^{(k)} = \widetilde{\mathcal{Q}}_{(\times)}$ from Corollary \ref{corollary: mult circuit} (with $n_1 \leftarrow n+d$, $m_1 \leftarrow m$, $n_2 \leftarrow 2$, $m_2 \leftarrow 0$, $a \leftarrow \bm{h}_k(\bm{i})$, and $b \leftarrow \overline{\xi}_k$ in the notation of Corollary \ref{corollary: mult circuit} for $k=1,\ldots,K$)
    \begin{equation}
        \begin{split}
            &\ket{i_1}_{n_1+m_1}\cdots\ket{i_d}_{n_1+m_1}\ket{\anc}_{q_1+\cdots+q_K}\ket{\bm{h}_1(\bm{i})}_{n+m+d}\ket{\overline{\xi}_1}_2\ket{0}_{n+m+d+2}\ket{0}_5\\
            &\hspace{100pt} \cdot\ket{\bm{h}_2(\bm{i})}_{n+m+d}\ket{\overline{\xi}_2}_2\ket{0}_{n+m+d+2}\ket{0}_5\cdots\ket{\bm{h}_K(\bm{i})}_{n+m+d}\ket{\overline{\xi}_K}_2\ket{0}_{n+m+d+2}\ket{0}_5\ket{0}_{K}\\
            &\xmapsto{\widetilde{\mathcal{Q}}_{(\times)}^{(1)}}\ket{i_1}_{n_1+m_1}\cdots\ket{i_d}_{n_1+m_1}\ket{\anc}_{q_1+\cdots+q_K}\ket{\bm{h}_1(\bm{i})}_{n+m+d}\ket{\overline{\xi}_1}_2\ket{\bm{h}_1(\bm{i}) \boxdot \overline{\xi}_1}_{n+m+d+2}\ket{\anc}_5\\
            &\hspace{100pt} \cdot\ket{\bm{h}_2(\bm{i})}_{n+m+d}\ket{\overline{\xi}_2}_2\ket{0}_{n+m+d+2}\ket{0}_5\cdots\ket{\bm{h}_K(\bm{i})}_{n+m+d}\ket{\overline{\xi}_K}_2\ket{0}_{n+m+d+2}\ket{0}_5\ket{0}_{K}\\
            &\hspace{100pt}\vdots \hspace{100pt} \vdots\\
            &\xmapsto{\widetilde{\mathcal{Q}}_{(\times)}^{(K)}}\ket{i_1}_{n_1+m_1}\cdots\ket{i_d}_{n_1+m_1}\ket{\anc}_{q_1+\cdots+q_K}\ket{\bm{h}_1(\bm{i})}_{n+m+d}\ket{\overline{\xi}_1}_2\ket{\bm{h}_1(\bm{i}) \boxdot \overline{\xi}_1}_{n+m+d+2}\ket{\anc}_5\\
            &\hspace{100pt} \cdots\ket{\bm{h}_K(\bm{i})}_{n+m+d}\ket{\overline{\xi}_K}_2\ket{\bm{h}_K(\bm{i}) \boxdot \overline{\xi}_K}_{n+m+d+2}\ket{\anc}_5\ket{0}_{K}.\\
        \end{split}
    \end{equation}
    The number of elementary gates used in this step is at most
    \begin{equation}
        K\cdot 61(n+m+d+3)^2.
    \end{equation}
    \item We use the permutation circuit $\Tau_\pi$ to reorder the qubits for addition, where 
    \begin{equation}
        \begin{split}
            &\ket{i_1}_{n_1+m_1}\cdots\ket{i_d}_{n_1+m_1}\ket{\anc}_{q_1+\cdots+q_K}\ket{\bm{h}_1(\bm{i})}_{n+m+d}\ket{\overline{\xi}_1}_2\ket{\bm{h}_1(\bm{i}) \boxdot \overline{\xi}_1}_{n+m+d+2}\ket{\anc}_5\\
            &\hspace{100pt} \cdots\ket{\bm{h}_K(\bm{i})}_{n+m+d}\ket{\overline{\xi}_K}_2\ket{\bm{h}_K(\bm{i}) \boxdot \overline{\xi}_K}_{n+m+d+2}\ket{\anc}_5\ket{0}_{K}\\
            &\xmapsto{\Tau_\pi}\ket{i_1}_{n_1+m_1}\cdots\ket{i_d}_{n_1+m_1}\ket{\anc}_{q_1+\cdots+q_K + K(n+m+d+7)}\ket{\bm{h}_1(\bm{i}) \boxdot \overline{\xi}_1}_{n+m+d+2}\ket{\bm{h}_2(\bm{i}) \boxdot \overline{\xi}_2}_{n+m+d+2}\ket{0}\\
            &\hspace{100pt} \ket{\bm{h}_3(\bm{i}) \boxdot \overline{\xi}_3}_{n+m+d+2}\ket{0}\cdots\ket{\bm{h}_K(\bm{i}) \boxdot \overline{\xi}_K}_{n+m+d+2}\ket{0}\ket{0}.
        \end{split}
    \end{equation}
    The number of elementary gates used in this step is at most 
    \begin{equation}
        2N^2.
    \end{equation}
    \item We perform the following addition inductively on the sums for $k=1,\ldots,K-1$
    \begin{equation}
        \boxplus: \mathbb{F}_{n+d+2+k-1,m} \times \mathbb{F}_{n+d+2,m} \to \mathbb{F}_{n+d+2+k,m}, \quad (\bigboxplus_{l=1}^k (\bm{h}_{l}(\bm{i}) \boxdot \overline{\xi}_l),\bm{h}_{k+1}(\bm{i}) \boxdot \overline{\xi}_{k+1}) \mapsto \bigboxplus_{l=1}^{k+1}(\bm{h}_{l}(\bm{i}) \boxdot \overline{\xi}_l),
    \end{equation}
    where we apply the quantum circuit $\mathcal{Q}_{(+)}$ from Corollary \ref{corollary: add circuit} inductively for $k=1,\ldots,K-1$ (with $n_1 \leftarrow n+d+2+k-1$, $m_1 \leftarrow m$, $n_2 \leftarrow n+d+2$, $m_2 \leftarrow m$ in the notation of Corollary \ref{corollary: add circuit}) so that
    \begin{equation}
        \begin{split}
            &\ket{i_1}_{n_1+m_1}\cdots\ket{i_d}_{n_1+m_1}\ket{\anc}_{q_1+\cdots+q_K + K(n+m+d+7)}\ket{\bm{h}_1(\bm{i}) \boxdot \overline{\xi}_1}_{n+m+d+2}\ket{\bm{h}_2(\bm{i}) \boxdot \overline{\xi}_2}_{n+m+d+2}\ket{0}\\
            &\hspace{100pt} \ket{\bm{h}_3(\bm{i}) \boxdot \overline{\xi}_3}_{n+m+d+2}\ket{0}\cdots\ket{\bm{h}_K(\bm{i}) \boxdot \overline{\xi}_K}_{n+m+d+2}\ket{0}\ket{0}\\
            &\xmapsto{\mathcal{Q}_{(+)}^{(1)}}\ket{i_1}_{n_1+m_1}\cdots\ket{i_d}_{n_1+m_1}\ket{\anc}_{q_1+\cdots+q_K + K(n+m+d+7)}\ket{\bm{h}_2(\bm{i}) \boxdot \overline{\xi}_2}_{n+m+d+2}\ket{\bigboxplus_{l=1}^2 (\bm{h}_{l}(\bm{i}) \boxdot \overline{\xi}_l)}_{n+m+d+3}\\
            &\hspace{100pt} \ket{\bm{h}_3(\bm{i}) \boxdot \overline{\xi}_3}_{n+m+d+2}\ket{0}\cdots\ket{\bm{h}_K(\bm{i}) \boxdot \overline{\xi}_K}_{n+m+d+2}\ket{0}\ket{0}\\
            &\hspace{100pt}\vdots \hspace{100pt} \vdots\\
            &\xmapsto{\mathcal{Q}_{(+)}^{(K)}}\ket{i_1}_{n_1+m_1}\cdots\ket{i_d}_{n_1+m_1}\ket{\anc}_{q_1+\cdots+q_K + K(n+m+d+7)}\ket{\bm{h}_2(\bm{i}) \boxdot \overline{\xi}_2}_{n+m+d+2}\ket{\bm{h}_3(\bm{i}) \boxdot \overline{\xi}_3}_{n+m+d+2}\\
            &\hspace{100pt} \cdots\ket{\bm{h}_K(\bm{i}) \boxdot \overline{\xi}_K}_{n+m+d+2}\ket{\bigboxplus_{l=1}^K (\bm{h}_{l}(\bm{i}) \boxdot \overline{\xi}_l)}_{n+m+d+K+1}\ket{0}\\
            &=: \ket{i_1}_{n_1+m_1}\cdots\ket{i_d}_{n_1+m_1}\ket{\anc}_{q_1+\cdots+q_k + K(n+m+d+7)+(K-1)(n+m+d+2)}\ket{\bm{h}(\bm{i})}_{n+m+d+K+1}\ket{0}.
        \end{split}
    \end{equation}
    The  number of elementary gates used in this step is at most
    \begin{equation}
        \sum_{k=1}^{K-1} 29[(n+m+d+2+k-1) + (n+m+d+2) +1]^2
    \end{equation}
    \item Lastly, we apply the quantum circuit $\mathcal{R}_{f}$ from Lemma \ref{lemma: quantum linear rotation} (with $a_1 \leftarrow s$, $a_0 \leftarrow \pi/2$, $n \leftarrow n+d+K+1$, $m\leftarrow m$ in the notation of Lemma \ref{lemma: quantum linear rotation})
    \begin{equation}
        \begin{split}
           &\ket{i_1}_{n_1+m_1}\cdots\ket{i_d}_{n_1+m_1}\ket{\anc}_{q_1+\cdots+q_k + K(n+m+d+7)+(K-1)(n+m+d+2)}\ket{\bm{h}(\bm{i})}_{n+m+d+K+1}\ket{0}\\
           &\xmapsto{\mathcal{R}_f}\ket{i_1}_{n_1+m_1}\cdots\ket{i_d}_{n_1+m_1}\ket{\anc}_{q_1+\cdots+q_k + K(n+m+d+7)+(K-1)(n+m+d+2)}\ket{\bm{h}(\bm{i})}_{n+m+d+K+1}\\
           &\hspace{100pt}\left[ \cos(\bar{f}(\bm{h}(\bm{i}))/2)\ket{0} + \sin(\bar{f}(\bm{h}(\bm{i}))/2)\ket{1} \right]\\
           &=: \ket{i_1}_{n_1+m_1}\cdots\ket{i_d}_{n_1+m_1}\ket{\anc}_{q_1+\cdots+q_K + 2K(n+m+d+5)-1}\left[ \cos(\bar{f}(\bm{h}(\bm{i}))/2)\ket{0} + \sin(\bar{f}(\bm{h}(\bm{i}))/2)\ket{1} \right].
        \end{split}
    \end{equation}
This is the desired state \eqref{eqn: grover state}. The number of elementary gates used in this step is at most
\begin{equation}
    13(n+d+K+1+m+1)^3.
\end{equation}
\end{enumerate}

Note that $p\leq 4d(n+m+1)$ (c.f. \eqref{eqn: bound on p}), hence, for each $k=1,\ldots,K$, 
\begin{equation}
    \begin{split}
        q_k &= I_k(n+m+d+p) + (I_k-2)(n+m+d) + 5(I_k-1)\\
        &\leq I_k(n+m+d+ 4d(n+m+1)) +  (I_k-2)(n+m+d) + 5(I_k-1)\\
        &\leq I_k(d(n+m+1) + 4d(n+m+1)) + 6I_kd(n+m+1)\\
        &= 11I_kd(n+m+1).
    \end{split}
\end{equation}
Hence, 
\begin{equation}\label{eqn: bound on N}
    \begin{split}
        N &= d(n_1+m_1) + \sum_{k=1}^K q_k +  2K(n+m+d+5)\\
        &\leq d(n+m+1) + K\cdot\max_{k=1,\ldots,K}\{I_k\}\cdot 11d(n+m+1) + 12Kd(n+m+1)\\
        &\leq 24K\cdot\max_{k=1,\ldots,K}\{I_k\}\cdot d(n+m+1).
    \end{split}
\end{equation}
Thus, the total number of elementary gates used is at most
\begin{equation}
    \begin{split}
        &K\cdot2N^2 + \sum_{k=1}^K  10651 I_k^3d^3(n+m+1)^3 + 2K \\
        &\qquad + K\cdot 61(n+m+d+3)^2 + 2N^2 \\
        &\qquad + \sum_{k=1}^{K-1}29[(n+m+d+2+k-1) + (n+m+d+2) + 1]^2 + 13(n+d+K+1+m+1)^3\\
        &\leq 2\cdot 24^2K^3\left(\max_{k=1,\ldots,K}\{I_k\}\right)^2 d^2 (n+m+1)^2 + 10651 K \left(\max_{k=1,\ldots,K}\{I_k\}\right)^3 d^3(n+m+1)^3 + 2K\\
        &\qquad + 61\cdot4^2Kd^2(n+m+1)^2 +  2\cdot 24^2K^3\left(\max_{k=1,\ldots,K}\{I_k\}\right)^2 d^2 (n+m+1)^2 \\
        &\qquad + 29K[3Kd(n+m+1)+ 3d(n+m+1) + 1]^2 + 13(4Kd(n+m+1))^3\\
        &\leq 2\cdot 24^2K^3\left(\max_{k=1,\ldots,K}\{I_k\}\right)^2 d^2 (n+m+1)^2 + 10651 K \left(\max_{k=1,\ldots,K}\{I_k\}\right)^3 d^3(n+m+1)^3 + 2K\\
        &\qquad + 61\cdot4^2Kd^2(n+m+1)^2 +  2\cdot 24^2K^3\left(\max_{k=1,\ldots,K}\{I_k\}\right)^2 d^2 (n+m+1)^2 \\
        &\qquad + 29K[7Kd(n+m+1)]^2 + 13\cdot4^3K^3d^3(n+m+1)^3\\
        &\leq (2 \cdot 24^2 + 10651 + 2 + 61 \cdot 4^2 + 2 \cdot 24^2 + 29\cdot7^2 + 13 \cdot 4^3)K^3\left(\max_{k=1,\ldots,K}\{I_k\}\right)^3d^3(n+m+1)^3\\
        &= 16186 K^3\left(\max_{k=1,\ldots,K}\{I_k\}\right)^3d^3(n+m+1)^3.
    \end{split}
\end{equation}
\end{proof}

\section{Error analysis}\label{section: error estimates}
In this section, we provide the detailed error analysis of the steps of Algorithm~\ref{quantum algorithm} outlined in Section \ref{section: outline of algorithm}. We begin the error analysis with a few basic lemmas.
\begin{lemma}[Lipschitz constant of continuous piecewise affine functions]\label{lemma: lipschitz constant}
	Let $h:\R^d_+ \to \R$ be a continuous piecewise affine function given by \eqref{eqn: CPWA payoff}. Let Assumption \ref{assumption: CPWA} hold with corresponding constant $C_2 \in [1,\infty)$. Then, $h:\R^d_+ \to \R$ is Lipschitz continuous with Lipschitz constant 
	\begin{equation}
		L := \sum_{k=1}^{K} \max_{1 \leq l \leq I_k}\{ \lVert \bm{a}_{k,l}  \rVert_\infty\}\sqrt{d} \leq C_2^2 d^{\frac{3}{2}},\label{eqn: Lipchitz constant}
	\end{equation}
    i.e.
    \begin{equation}\label{eqn: lin growth1}
        \forall \bm{x},\bm{y} \in \R^d_+: \quad \vert h(\bm{x}) - h(\bm{y}) \vert \leq C_2^2 d^{\frac{3}{2}} \Vert \bm{x} - \bm{y} \Vert_{2}.
    \end{equation}
	\begin{proof}
		Following the proof of Lemma 3.3 in \cite{neufeld2020modelfree}, the continuous piecewise affine function $h:\R^d_+ \to \R$ admits the following representation:
		\begin{equation}
			h(\bm{x}) = \begin{cases} \bm{a}_1' \cdot \bm{x} + b_1', \quad \text{if $\bm{x} \in \Omega_1$}, \\ \qquad \vdots  \\
				\bm{a}_J' \cdot \bm{x} + b_J', \quad \text{if $\bm{x} \in \Omega_J$},    \end{cases}
		\end{equation}
		where $J := \prod_{k=1}^K I_k \in \N$ and the coefficients $\{\bm{a}_j',b_j': j=1,\ldots,J\}$ are of the form
        \begin{equation}\label{eqn: proof of lipschitz constant1}
            \bm{a}_j' := \sum_{k=1}^K \xi_k \bm{a}_{k,l_k^*}, \quad b_{j}' := \sum_{k=1}^K \xi_k b_{k,l_k^*},
        \end{equation}
        for some $l_k^* \in \{1,2,\ldots, I_k\}$ (\textit{the specific choice of index $l_k^*$ can be found in the proof of Lemma 3.3 in \cite{neufeld2020modelfree}}), and where $\Omega_1,\ldots,\Omega_J$ are  polyhedrons whose union is $\R_+^d$. Note that some of these sets $\Omega_j$ can be empty. We claim that 
		\begin{equation}\label{eqn: proof of lipschitz constant2}
			\forall \bm{x},\bm{y} \in \R_+^d: \quad \lvert h(\bm{x}) - h(\bm{y}) \rvert \leq \max_{1\leq j \leq J} \{ \lVert \bm{a}_j' \rVert_\infty \} \cdot  \sqrt{d}\lVert \bm{x} - \bm{y} \rVert_2.
		\end{equation}
		Let $\bm{x},\bm{y} \in \R_+^d$ be fixed. Consider the line segment from $\bm{x}$ to $\bm{y}$ given by the set $\Gamma := \{\gamma(t) := \bm{x} + t(\bm{y}-\bm{x}): t \in [0,1]\}$. If the line segment $\Gamma$ lies entirely in one of polyhedron $\Omega_{j_*}$, then by linearity of $h$ in $\Omega_{j_*}$ and the H\"older's inequality, it follows that 
		\begin{equation} 
			\lvert h(\bm{x}) - h(\bm{y}) \rvert = \lvert \bm{a}_{j_*}' \cdot (\bm{x}-\bm{y}) \rvert \leq \lVert \bm{a}_{j_*}'\rVert_\infty  \lVert \bm{x} - \bm{y} \rVert_1 \leq \max_{1\leq j \leq J} \{ \lVert \bm{a}_j' \rVert_\infty \} \cdot  \sqrt{d} \lVert \bm{x} - \bm{y} \rVert_2. 
			\end{equation}
		In the general case, the line segment $\Gamma$ may be contained in some $n\geq 2$ polyhedrons $\Omega_{j_1},\ldots,\Omega_{j_n}$, such that there are $n+1$ points $\{ \gamma(t_0),\gamma(t_1),\ldots,\gamma(t_n)\} \subset \Gamma$ with a partition $\{0 =: t_0 < t_1 < \cdots < t_n :=1\}$, satisfying $\gamma(t_m) \in \Omega_{j_m} \cap \Omega_{j_{m+1}}$ for $m=1,\ldots,n-1$. Again by the same argument, it holds for all $m=1,\ldots,n-1$ that 
		\begin{equation} 
			\lvert h(\gamma(t_m)) - h(\gamma(t_{m-1})) \rvert = \lvert \bm{a}_{j_m}' \cdot (\gamma(t_{m})-\gamma(t_{m-1})) \rvert \leq \lVert \bm{a}_{j_m}'\rVert_\infty  (t_m - t_{m-1})\sqrt{d}\lVert \bm{x} - \bm{y} \rVert_2. 
			\end{equation}
		Hence,  summing over $m$, we have 
		\begin{equation}
			\begin{split}
			\lvert h(\bm{x}) - h(\bm{y}) \rvert &\leq \sum_{m=1}^{n} \lVert \bm{a}_{j_m}' \rVert_\infty (t_j - t_{j-1}) \sqrt{d} \lVert \bm{x} - \bm{y} \rVert_2 \\
            &\leq \max_{1\leq j \leq J} \{ \lVert \bm{a}_j' \rVert_\infty \} \cdot \sqrt{d}\lVert \bm{x} - \bm{y} \rVert_2.
		\end{split}
		\end{equation}
        Thus, the claim \eqref{eqn: proof of lipschitz constant2} is proven. Next, by \eqref{eqn: proof of lipschitz constant1}, it follows that 
        \begin{equation}
            \max_{1\leq j \leq J} \{ \lVert \bm{a}_j' \rVert_\infty \} \leq \sum_{k=1}^K \max_{1\leq l \leq I_k} \{\Vert \bm{a}_{k,l}\Vert_\infty\}.
        \end{equation}
        This, \eqref{eqn: proof of lipschitz constant2} and Assumption \ref{assumption: CPWA} concludes the lemma. 
	\end{proof}
\end{lemma}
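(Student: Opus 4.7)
The plan is to unwind the three-layer compositional structure of CPWA functions — affine pieces, then pointwise maximum, then signed sum — establishing Lipschitz continuity layer by layer. I do not expect any genuine obstacle; the key is simply careful bookkeeping of which norm ($\|\cdot\|_\infty$, $\|\cdot\|_1$, or $\|\cdot\|_2$) appears at each step, so that the final constant involves precisely the factor $\sqrt{d}$ appearing in \eqref{eqn: Lipchitz constant}.

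At the innermost layer, each affine piece $\bm{x}\mapsto \bm{a}_{k,l}\cdot\bm{x}+b_{k,l}$ is trivially Lipschitz; I would apply H\"older's inequality in the form $|\bm{a}_{k,l}\cdot(\bm{x}-\bm{y})|\leq \|\bm{a}_{k,l}\|_\infty\|\bm{x}-\bm{y}\|_1$ together with the norm comparison $\|\bm{v}\|_1\leq \sqrt{d}\|\bm{v}\|_2$ to obtain the Lipschitz constant $\sqrt{d}\,\|\bm{a}_{k,l}\|_\infty$ with respect to the Euclidean norm. At the middle layer, I would use the elementary pointwise inequality
\begin{equation*}
    \Bigl|\max_{1\leq l \leq I_k} f_l(\bm{x})-\max_{1\leq l \leq I_k} f_l(\bm{y})\Bigr|\leq \max_{1\leq l \leq I_k}|f_l(\bm{x})-f_l(\bm{y})|,
\end{equation*}
valid for any finite family of real-valued functions, and apply it to $f_l(\cdot)=\bm{a}_{k,l}\cdot(\cdot)+b_{k,l}$. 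This lifts the Lipschitz estimate to the inner component $\bm{x}\mapsto\max_{l}\{\bm{a}_{k,l}\cdot\bm{x}+b_{k,l}\}$, yielding the Lipschitz constant $\sqrt{d}\max_{l}\|\bm{a}_{k,l}\|_\infty$.

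At the outer layer, since $|\xi_k|=1$ for each $k$, the triangle inequality applied to the representation \eqref{eqn: CPWA payoff} immediately produces the Lipschitz constant $L=\sqrt{d}\sum_{k=1}^K\max_{l}\|\bm{a}_{k,l}\|_\infty$ for $h$, matching the definition of $L$ in \eqref{eqn: Lipchitz constant}. To obtain the dimension-explicit bound $L\leq C_2^2 d^{3/2}$ and conclude \eqref{eqn: lin growth1}, I would invoke Assumption~\ref{assumption: CPWA}: the coefficient bound \eqref{eqn: bound on cpwa coefficients} gives $\max_{l}\|\bm{a}_{k,l}\|_\infty\leq C_2$ for every $k$, while the product bound $K\cdot\max\{I_1,\ldots,I_K\}\leq C_2 d$ forces $K\leq C_2 d$ (since each $I_k\geq 1$). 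Multiplying these estimates yields $\sum_{k=1}^K\max_{l}\|\bm{a}_{k,l}\|_\infty \leq K C_2\leq C_2^2 d$, and combining with the factor $\sqrt{d}$ closes the argument.
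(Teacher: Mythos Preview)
Your proof is correct and takes a genuinely different, more elementary route than the paper. The paper first invokes a structural result (from \cite{neufeld2020modelfree}) that a CPWA function is piecewise affine on a polyhedral partition $\Omega_1,\ldots,\Omega_J$ of $\R_+^d$, and then establishes the Lipschitz estimate by a line-segment argument: when $\bm{x}$ and $\bm{y}$ lie in different cells, the segment joining them is broken at the cell boundaries and the triangle inequality is applied across the pieces. You instead bypass the partition entirely by working directly with the compositional structure of \eqref{eqn: CPWA payoff}, using the sublinear inequality $|\max_l f_l(\bm{x})-\max_l f_l(\bm{y})|\leq \max_l |f_l(\bm{x})-f_l(\bm{y})|$ (which the paper in fact isolates later as Lemma~\ref{lemma: sublinear property}). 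Your approach is shorter and avoids the external reference; the paper's approach, on the other hand, makes the piecewise-affine geometry of $h$ explicit, which can be useful if one needs finer information about the local gradients $\bm{a}_j'$. Both arrive at the identical constant $L=\sqrt{d}\sum_{k}\max_l\|\bm{a}_{k,l}\|_\infty$ and close with the same application of Assumption~\ref{assumption: CPWA}.
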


\begin{lemma}[Linear growth of continuous piecewise affine functions]\label{lemma: lin growth}
	Let $h:\R^d_+ \to \R$ be a continuous piecewise affine function given by \eqref{eqn: CPWA payoff}. Let Assumption \ref{assumption: CPWA} hold with corresponding constant $C_2 \in [1,\infty)$. Then, for all $\bm{x} \in \R_+^d$, it holds that
	\begin{equation}\label{eqn: lin growth2}
		\lvert h(\bm{x}) \rvert \leq C_2^2d^{\frac{3}{2}}(1 + \lVert \bm{x} \rVert_2). 
	\end{equation}
	
	\begin{proof} Note that Lemma \ref{lemma: lipschitz constant} and an application of the triangle equality shows that
        \begin{equation}\label{eqn: proof of lin growth}
            \lvert h(\bm{x}) \rvert \leq  \lvert h(\bm{x}) - h(0) \rvert + \lvert h(0) \rvert \leq C_2^2d^{\frac{3}{2}} \lVert \bm{x} \rVert_2 + \vert h(0) \vert .
        \end{equation}
		Moreover, by \eqref{eqn: CPWA payoff} and Assumption \ref{assumption: CPWA}, we have for all $k=1,\ldots,K$ that 
        \begin{equation}
            \vert h(0)\vert  \leq K \cdot \max\{\vert b_{k,l}\vert : l=1,\ldots,I_k\} \leq C_2^2d.
        \end{equation}
        This and \eqref{eqn: proof of lin growth} implies \eqref{eqn: lin growth2}.
	\end{proof}
\end{lemma}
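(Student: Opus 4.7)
The plan is to reduce the linear growth bound to the Lipschitz estimate already established in Lemma \ref{lemma: lipschitz constant}, plus an explicit control of the value of $h$ at the origin. More precisely, for any $x \in \R_+^d$, I would split via the triangle inequality
\begin{equation}
	\lvert h(x) \rvert \leq \lvert h(x) - h(0) \rvert + \lvert h(0) \rvert,
\end{equation}
and then apply Lemma \ref{lemma: lipschitz constant} to the first term, which directly yields $\lvert h(x) - h(0) \rvert \leq C_2^2 d^{3/2} \Vert x \Vert_2$.

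The remaining task is to bound $\lvert h(0) \rvert$ in terms of $C_2$ and $d$. Starting from the CPWA representation \eqref{eqn: CPWA payoff}, evaluation at $x=0$ gives
\begin{equation}
	h(0) = \sum_{k=1}^{K} \xi_k \max\{b_{k,l}:l=1,\ldots,I_k\},
\end{equation}
so that the triangle inequality together with $\lvert \xi_k \rvert =1$ yields $\lvert h(0) \rvert \leq K \cdot \max_{k,l} \lvert b_{k,l} \rvert$. Invoking Assumption \ref{assumption: CPWA}, which provides both $\max_{k,l}\lvert b_{k,l}\rvert \leq C_2$ and $K \leq K \cdot \max_k\{I_k\} \leq C_2 d$, I obtain $\lvert h(0) \rvert \leq C_2^2 d$.

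Combining the two estimates and using the trivial bound $C_2^2 d \leq C_2^2 d^{3/2}$ (valid since $d \geq 1$) then gives
\begin{equation}
	\lvert h(x) \rvert \leq C_2^2 d^{3/2}\Vert x \Vert_2 + C_2^2 d^{3/2} = C_2^2 d^{3/2}(1+\Vert x \Vert_2),
\end{equation}
which is the desired inequality. There is no real obstacle here; the only point requiring mild care is to use the combined bound $K\cdot\max_k\{I_k\} \leq C_2 d$ from Assumption \ref{assumption: CPWA} (rather than separate bounds on $K$ and on $\max_k\{I_k\}$) in order to keep the constant of the form $C_2^2$ instead of a worse power of $C_2$.
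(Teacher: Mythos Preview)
Your proposal is correct and follows essentially the same approach as the paper: split $|h(x)|$ via the triangle inequality, apply the Lipschitz estimate from Lemma~\ref{lemma: lipschitz constant} to $|h(x)-h(0)|$, and bound $|h(0)|$ by $C_2^2 d$ using Assumption~\ref{assumption: CPWA}. Your write-up is in fact slightly more explicit than the paper's, spelling out the step $C_2^2 d \leq C_2^2 d^{3/2}$ and the use of $K \leq K\cdot\max_k I_k \leq C_2 d$.
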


\subsection{Step 1: Truncation error bounds}
\begin{lemma}\label{lemma: bounds on log-normal 1}
	 Let $Y$ be a log-normal random variable with parameters $\mu \in \R$ and  $\sigma^2 \in (0,\infty)$. Let $Z$ be a standard normal random variable. Then it holds that
	\begin{enumerate}[(i)]
		\item for all $k \in \N$,
		\begin{equation}
			\E[Y^k] = e^{k\mu + \frac{k^2\sigma^2}{2}},
		\end{equation}
		\item for all $y \in [0,\infty)$ that
		\begin{equation}
			\Prob(Z \geq y)  \leq \frac{1}{2}e^{-\frac{y^2}{2}}.
		\end{equation}
	\end{enumerate}
	\begin{proof}
		Item (i) is proven in \cite[Chapter 2.3]{aitchisonbrown}, and Item (ii) is proved in \cite[Eq. 6]{Tailbounds}. 
	\end{proof}
\end{lemma}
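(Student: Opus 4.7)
Both items are classical, and the authors cite standard references, so the work lies in recognizing the right tools and applying them cleanly rather than in any subtle argument.

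For part (i), the plan is to write $Y = e^{\mu + \sigma Z}$ where $Z \sim \mathcal{N}(0,1)$, so that $Y^k = e^{k\mu + k\sigma Z}$ and
\begin{equation}
\E[Y^k] = e^{k\mu}\,\E[e^{k\sigma Z}].
\end{equation}
The identity then reduces to the standard Gaussian moment-generating function $\E[e^{tZ}] = e^{t^2/2}$ evaluated at $t = k\sigma$, which follows by completing the square in the Gaussian integral. There is no real obstacle here; it is a routine Gaussian computation.

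For part (ii), the challenge is that a naive Chernoff/Markov bound $\Prob(Z \geq y) \leq e^{-ty}\E[e^{tZ}] = e^{-ty + t^2/2}$, optimized at $t=y$, gives only $e^{-y^2/2}$ and loses the factor $\tfrac{1}{2}$ one needs at $y=0$. The plan is instead to introduce the auxiliary function
\begin{equation}
h(y) := e^{y^2/2}\,\Prob(Z \geq y), \qquad y \geq 0,
\end{equation}
and show that $h$ is non-increasing. We have $h(0) = \tfrac{1}{2}$, and differentiating yields $h'(y) = e^{y^2/2}\bigl[y\,\Prob(Z \geq y) - \phi(y)\bigr]$, where $\phi$ denotes the standard normal density. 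Since $x \geq y$ on the domain of integration,
\begin{equation}
y\,\Prob(Z \geq y) = \int_y^\infty y\,\phi(x)\,dx \leq \int_y^\infty x\,\phi(x)\,dx = \phi(y),
\end{equation}
where the last identity uses $\tfrac{d}{dx}[-\phi(x)] = x\phi(x)$. Consequently $h'(y) \leq 0$, so $h(y) \leq h(0) = \tfrac{1}{2}$, which rearranges to the claim.

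The only mild obstacle is spotting the right monotonicity trick that preserves the constant $\tfrac{1}{2}$; once the function $h$ is identified, the estimate $y\,\Prob(Z \geq y) \leq \phi(y)$ (a one-line consequence of $x \geq y$) does all the work. An alternative would be to study $f(y) := \tfrac{1}{2}e^{-y^2/2} - \Prob(Z \geq y)$ directly via its derivative and its asymptotics at $0$ and $\infty$, but the route through $h$ is cleaner and avoids any case analysis.
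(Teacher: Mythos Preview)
Your proof is correct. The paper itself does not give an argument here at all; it simply defers to the literature (Aitchison--Brown for the log-normal moments and a tail-bound reference for the Gaussian inequality). Your treatment is therefore strictly more informative: part (i) is the standard MGF computation, and your monotonicity argument for part (ii) via $h(y)=e^{y^2/2}\,\Prob(Z\geq y)$ together with the one-line estimate $y\,\Prob(Z\geq y)\leq\int_y^\infty x\,\phi(x)\,dx=\phi(y)$ is a clean, self-contained way to retain the sharp constant $\tfrac{1}{2}$ that a naive Chernoff bound would lose.
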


\begin{proposition}[Truncation error]\label{prop: truncation error}
	Let $\eps \in (0,1) $,  $d \in \N$, $r,T \in (0,\infty)$, and $(t,\bm{x}) \in [0,T) \times \R_+^d $. Let $h:\R^d_+ \to \R$ be a continuous piecewise affine function given by \eqref{eqn: CPWA payoff}. Let Assumption \ref{assumption: cov matrix} and Assumption \ref{assumption: CPWA} hold with respective constants $C_1,C_2 \in [1,\infty)$. Let $p(\cdot,T;\bm{x},t):\R_+^d \to \R_+$ be the log-normal transition density function given by \eqref{eqn: density formula}. Let $M_{d,\eps} \in [1,\infty)$ satisfy
	\begin{equation}\label{eqn: def M}
		M_{d,\eps} = 2C_2^2 d^{\frac{5}{2}}\eps^{-1} e^{4C_1^2T^2}e^{2rT}\max_{i=1,\ldots,d}\{1,x_i^2\}.
	\end{equation}
	Let $u: [0,T] \times \R_+^d \to \R$ be the option price given by 
	\begin{equation}\label{eqn: prop_truncation sol}
		u(t,\bm{x}) := e^{-r(T-t)}\int_{\R_+^d}h(\bm{y})p(\bm{y},T;\bm{x},t)\,d\bm{y},
	\end{equation}
	and for every $M\geq M_{d,\eps}$, let $\bar{u}_{M,t,\bm{x}} \in \R$ be the truncated solution given by 
	\begin{equation} \label{eqn: prop_truncation truncated sol}
		\bar{u}_{M,t,\bm{x}} := e^{-r(T-t)}\int_{[0,M]^d}h(\bm{y})p(\bm{y},T;\bm{x},t)\, d\bm{y}.
	\end{equation}
	Then, the truncation solution satisfies the following estimate
	\begin{equation}
		\vert u(t,\bm{x}) - \bar{u}_{M,t,\bm{x}} \vert  \leq \eps. \label{eqn: truncation error}
	\end{equation}
	
	\begin{proof}
		Let $\bm{\Sigma} \equiv \bm{\Sigma_d} := (T-t)\bm{C}_d \in \R^{d \times d}$, and $\bm{\mu} \equiv \bm{\mu_d} \in \R^d$ denote the log-covariance and log-mean parameters for the multivariate log-normal random variable $\bm{Y}$ with the probability density function $p(\cdot,T;\bm{x},t)$  given by Lemma~\ref{lemma: density}. 
		Using Lemma \ref{lemma: lin growth}, \eqref{eqn: prop_truncation sol}, \eqref{eqn: prop_truncation truncated sol}, the fact that $e^{-r(T-t)} \leq 1$, and Cauchy-Schwarz inequality, 
		\begin{equation}
			\begin{split}
				\lvert u(t,\bm{x}) - \bar{u}_{M,t,\bm{x}} \rvert &= e^{-r(T-t)}\left\lvert \int_{\R_+^d}h(\bm{y})p(\bm{y},T;\bm{x},t) \, d\bm{y} - \int_{[0,M]^d}h(\bm{y})p(\bm{y},T;\bm{x},t)\,d\bm{y} \right\rvert \\
				&\leq C_2^2 d^{\frac{3}{2}} \E[(1+\Vert \bm{Y} \Vert_2)^2]^{1/2}\Prob(\bm{Y} \notin [0,M]^d)^{1/2},
			\end{split}
		\end{equation}
		where 
		\begin{equation}
			\Prob(\bm{Y} \notin [0,M]^d) = \int_{\R_+^d\setminus [0,M]^d} p(\bm{y},T;\bm{x},t)\,d\bm{y}.
			\end{equation}
		Let $\bm{X} = (X_1,\ldots,X_d)\sim \mathcal{N}(\bm{\mu},\bm{\Sigma})$ be the multivariate Gaussian random variable and recall that $X_i = \ln(Y_i)$ for $i=1,\ldots,d$. Using Lemma \ref{lemma: bounds on log-normal 1} (i), we have 
		\begin{equation}
			\E[\Vert \bm{Y} \Vert_2^2] = \sum_{i=1}^d \E[Y_i^2] = \sum_{i=1}^d \E[e^{2X_i}] = \sum_{i=1}^d e^{2\mu_i + 2\sigma_{ii}^2},
		\end{equation}
		where $\mu_i = \ln(x_i) + (r-\frac{1}{2}\sigma_{ii}^2)(T-t)$ and $\sigma_{ii} = \bm{\Sigma}_{i,i}$. We use the bound $e^{2\mu_i} \leq \max_{i=1,\ldots,d}\{1,x_i^2\}e^{2rT}$ and $e^{2\sigma_{ii}^2}\leq e^{2C_1^2T^2}$ by Assumption \ref{assumption: cov matrix} and conclude that 
		\begin{equation}\label{second-mom-multi-log-normal}
			\E[\Vert \bm{Y}\Vert_2^2] \leq d e^{2C_1^2T^2}e^{2rT}\max_{i=1,\ldots,d}\{1,x_i^2\}.
		\end{equation}
		By Cauchy-Schwarz inequality, we have $(1+ \Vert \bm{Y} \Vert_2)^2 \leq 2(1+ \Vert \bm{Y} \Vert_2^2)$. Also, note that $\max_i\{1,x_i^2\}^{1/2} = \max_i\{1,\vert x_i \vert\}$ and  $1 \leq d e^{2C_1^2T^2}e^{2rT}\max_{i}\{1,x_i^2\}$. Combining the above bounds, we arrive at 
		\begin{equation}
			\E[(1+\Vert \bm{Y} \Vert_2)^2]^{1/2} \leq 2d^{1/2} e^{C_1^2T^2} e^{rT} \max_{i=1,\ldots,d}\{1,\vert x_i\vert \}.
		\end{equation}
		Hence, we have 
		\begin{equation}\label{eqn: proof truncation1}
			\vert u(t,\bm{x}) - \bar{u}_{M,t,\bm{x}} \vert \leq 2C_2^2 d^{2} e^{C_1^2T^2}e^{rT}\max_{i=1,\ldots,d}\{1,\vert x_i\vert \}\cdot \Prob(\bm{Y} \notin [0,M]^d)^{1/2}.
		\end{equation}
		Moreover, for all $i=1,\ldots,d$, we have the inclusions
		\begin{equation}
			\{Y_i \leq M\} = \{X_i \leq \ln (M) \} \supseteq \{\vert X_i - \mu_i \vert  \leq \ln (M) - \mu_i\}.
		\end{equation}
        Note that since $M \geq M_{d,\eps}\geq e^{\mu_i}$, we have that $\ln (M) \geq \mu_i$. 
		By Sidak's correlation inequality \cite[Corollary 1]{sidak}, we have
		\begin{equation}
			\Prob\left(\bigcap_{i=1}^d\{\vert X_i - \mu_i \vert  \leq \ln (M) - \mu_i\}\right) \geq \prod_{i=1}^d \Prob(\{\vert X_i - \mu_i \vert  \leq \ln (M) - \mu_i\}),
		\end{equation}
		and hence, 
		\begin{equation}
			\Prob(\bm{Y} \notin [0,M]^d) = 1 - \Prob\left(\bigcap_{i=1}^d \{Y_i \leq M\}\right)\leq 1 -  \prod_{i=1}^d \Prob(\{\vert X_i - \mu_i \vert  \leq \ln (M) - \mu_i\}).
		\end{equation}
		Denote by $Z$  the standard normal random variable. Using Lemma \ref{lemma: bounds on log-normal 1} (ii) and Assumption \ref{assumption: cov matrix}, we have 
		\begin{equation}
			\Prob(\{\vert X_i - \mu_i \vert  \leq \ln (M) - \mu_i\}) = 1 - 2\Prob(Z \geq \frac{\ln(M) - \mu_i}{\sigma_{ii}})\geq 1 - e^{-\frac{(\ln(M)-\mu_i)^2}{2\sigma_{ii}^2}} \geq 1-e^{-\frac{(\ln(M)-\mu_i)^2}{2C_1^2T^2}}.
		\end{equation}
		Moreover, using $\ln(M)>0$ and $M \geq M_{d,\eps}\geq \max_{i=1,\ldots,d}\{1,x_i^2\}e^{2rT}e^{4C_1^2T^2} \geq e^{2\mu_i}e^{4C_1^2T^2}$, we have 
		\begin{equation}
            \begin{split}
                & \ln(M) \geq 2\mu_i + 4C_1^2T^2\\
                &\Longleftrightarrow (\ln(M))^2 \geq (2\mu_i + 4C_1^2T^2)\ln(M)\\
                &\Longleftrightarrow (\ln(M))^2-2\mu_i\ln(M) +\mu_i^2\geq 4C_1^2T^2\ln(M) + \mu_i^2 \\ 
                &\Longrightarrow (\ln(M)-\mu_i)^2 \geq 4C_1^2T^2\ln(M)\\
                &\Longleftrightarrow-\frac{(\ln(M)-\mu_i)^2}{2C_1^2T^2} \leq -2 \ln(M)\\
                &\Longleftrightarrow e^{-\frac{(\ln(M)-\mu_i)^2}{2C_1^2T^2}}\leq M^{-2}.
            \end{split}
		\end{equation}
		Using Bernoulli's inequality, that is  $(1+z)^d \geq 1+dz$ for any $z \in [-1,\infty)$, and the fact that $-M^{-2}\in [-1,0)$, we have 
		\begin{equation}\label{eqn: proof truncation2}
			\Prob(\bm{Y} \notin [0,M]^d) \leq 1-(1-M^{-2})^d \leq dM^{-2}.
		\end{equation} 
		Thus, by \eqref{eqn: proof truncation1}, \eqref{eqn: proof truncation2}, and \eqref{eqn: def M}, we conclude that 
		\begin{equation}
			\vert u(t,\bm{x}) - \bar{u}_{M,t,\bm{x}} \vert  \leq 2C_2^2 d^{\frac{5}{2}} e^{C_1^2T^2}e^{rT}\max_{i=1,\ldots,d}\{1,\vert x_i\vert \}M^{-1} \leq  2C_2^2 d^{\frac{5}{2}} e^{C_1^2T^2}e^{rT}\max_{i=1,\ldots,d}\{1,\vert x_i\vert \}M_{d,\eps}^{-1} \leq \eps. 
		\end{equation}
	\end{proof}
\end{proposition}

\subsection{Step 2: Quadrature error bounds}
\begin{proposition}[Quadrature error]\label{prop: quadrature error}
	Let $d \in \N$, $r,T\in (0,\infty)$, $n \in \N \cap \{2,3,\ldots\}$, $m \in \N$,  and $(t,\bm{x}) \in [0,T) \times \R_+^d$. Let $M \in [1,\infty)$ be defined by $M := 2^{n-1}$. Let $h:\R^d_+ \to \R$ be the continuous piecewise affine function given by \eqref{eqn: CPWA payoff}. Let Assumption \ref{assumption: CPWA} hold with corresponding constant $C_2 \in [1,\infty)$. Let $p(\cdot,T;\bm{x},t):\R_+^d \to \R_+$ be the log-normal transition density given by \eqref{eqn: density formula}. Let $\widetilde{u}_{n,t,\bm{x}} \in \R$ be the truncated solution given by
	\begin{equation}\label{eqn: quad-prop truncated solution}
		\widetilde{u}_{n,t,\bm{x}} := e^{-r(T-t)}\int_{[0,M]^d}h(\bm{y}) p(\bm{y},T;\bm{x},t)\, d\bm{y},
	\end{equation}
	and let $\widetilde{u}_{n,m,t,\bm{x}} \in \R$ be the truncated quadrature solution given by
	\begin{equation}\label{eqn: quad solution}
		\widetilde{u}_{n,m,t,\bm{x}} := e^{-r(T-t)}\sum_{\bm{j} \in \mathbb{K}_{n,m,+}^d} h(\bm{j})p_{\bm{j},m},
	\end{equation}
	where for $\bm{j} = (j_1,\ldots,j_d)\in \mathbb{K}_{n,m,+}^d$ (c.f Definition \ref{def: two complement}),
	\begin{equation}
		p_{\bm{j},m} := \int_{Q_{\bm{j},m}} p(\bm{y},T;\bm{x},t)\,d\bm{y}, \quad\text{and}\quad Q_{\bm{j},m} := [j_1,j_1+2^{-m}) \times \cdots \times [j_d,j_d+2^{-m}).
	\end{equation}
	Then,
	\begin{equation}
		\lvert \widetilde{u}_{n,t,\bm{x}}  - \widetilde{u}_{n,m,t,\bm{x}} \rvert  \leq C_{2}^2 d^{2} 2^{-m}.
	\end{equation}
	\begin{proof}
		Let $[ \cdot ]_m:\R_+^d \to (2^{-m}\Z)^d$ be a function defined by
		\begin{equation}
			[\bm{y}]_m = \left(\frac{\lfloor 2^m y_1 \rfloor}{2^m},\ldots,\frac{\lfloor 2^m y_d \rfloor}{2^m}\right),
		\end{equation}
		where $\lfloor \cdot \rfloor$ is the floor function. With this function, it holds for every $\bm{j} \in \mathbb{K}_{n,m,+}^d$ that
		\begin{equation}
			\forall \bm{y} \in Q_{\bm{j},m}, \quad [\bm{y}]_m = \bm{j}.
		\end{equation}
		Moreover, since
		\begin{equation}
			[0,M)^d = \bigsqcup_{\bm{j} \in \mathbb{K}_{n,m,+}^d} Q_{\bm{j},m}
		\end{equation}
		is a disjoint union of sets, it follows that 
		\begin{equation}\label{eqn: quad-prop grid solution}
			\widetilde{u}_{n,m,t,\bm{x}} = e^{-r(T-t)}\sum_{\bm{j} \in \mathbb{K}_{n,m,+}^d} \int_{Q_{\bm{j},m}} h([\bm{y}]_m) p(\bm{y},T;\bm{x},t)\,d\bm{y} = e^{-r(T-t)}\int_{[0,M]^d} h([\bm{y}]_m) p(\bm{y},T;\bm{x},t)\,d\bm{y}.
		\end{equation}
		Furthermore, observe that 
		\begin{equation}\label{eqn: quad-prop dyadic approx}
			\forall \bm{y} \in \R_+^d ,\quad \Vert \bm{y} - [\bm{y}]_m \Vert_1 \leq d 2^{-m}.
		\end{equation}
		Hence, by definition of $\widetilde{u}_{n,t,\bm{x}}$ in \eqref{eqn: quad-prop truncated solution}, \eqref{eqn: quad-prop grid solution}, by the fact that $e^{-r(T-t)} \leq 1$, by the Lipschitz continuity of $h(\cdot)$ in Lemma \ref{lemma: lin growth}, by Assumption \ref{assumption: CPWA}, \eqref{eqn: quad-prop dyadic approx}, and by the fact that $p(\bm{y},T;\bm{x},t)$ is a probability density function supported on $\R_+^d$, we conclude that 
		\begin{equation}
			\begin{split}
				\vert \widetilde{u}_{n,t,\bm{x}} - \widetilde{u}_{n,m,t,\bm{x}} \vert &\leq e^{-r(T-t)} \int_{[0,M]^d} \vert h(\bm{y}) - h([\bm{y}]_m) \vert p(\bm{y},T;\bm{x},t)\,d\bm{y}\\
				&\leq \int_{[0,M]^d} C_2^2d \Vert \bm{y} - [\bm{y}]_m\Vert_1 p(\bm{y},T;\bm{x},t)\,d\bm{y} \\
				&\leq C_2^2 d^{2}2^{-m}.
			\end{split}
		\end{equation}
		
	\end{proof}
\end{proposition}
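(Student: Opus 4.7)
The plan is to reinterpret the quadrature sum $\widetilde{u}_{n,m,t,x}$ as an integral against a piecewise-constant approximation of $h$, and then estimate the pointwise difference between $h$ and this approximation via the CPWA Lipschitz bound of Lemma~\ref{lemma: lipschitz constant}. Concretely, I would introduce the coordinate-wise floor operator $[\cdot]_m\colon \R_+^d\to (2^{-m}\Z)^d$ defined by $([y]_m)_i := 2^{-m}\lfloor 2^m y_i\rfloor$. The crucial observation is that the half-open cubes $Q_{\bm j,m}$ with $\bm j\in \mathbb{K}_{n,m,+}^d$ form a disjoint partition of $[0,M)^d$, and $[y]_m=\bm j$ for every $y\in Q_{\bm j,m}$. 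Summing over $\bm j$ therefore converts the quadrature rule into the identity
\begin{equation*}
\widetilde{u}_{n,m,t,x} \;=\; e^{-r(T-t)}\int_{[0,M]^d} h([y]_m)\,p(y,T;x,t)\,dy,
\end{equation*}
where swapping $[0,M)^d$ for $[0,M]^d$ changes the domain only on a Lebesgue-null set and hence does not affect the value of the integral.

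Subtracting from \eqref{eqn: quad-prop truncated solution} and using $e^{-r(T-t)}\le 1$, the error reduces to
\begin{equation*}
|\widetilde{u}_{n,t,x}-\widetilde{u}_{n,m,t,x}|\;\le\;\int_{[0,M]^d}\bigl|h(y)-h([y]_m)\bigr|\,p(y,T;x,t)\,dy.
\end{equation*}
At this stage I would invoke Lemma~\ref{lemma: lipschitz constant} to dominate the integrand by $C_2^2 d^{3/2}\|y-[y]_m\|_2$. Since $[y]_m$ differs from $y$ by strictly less than $2^{-m}$ in each of the $d$ coordinates, we have $\|y-[y]_m\|_2\le \sqrt{d}\,2^{-m}$, so the integrand is uniformly bounded by $C_2^2 d^{2}\,2^{-m}$. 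Because $p(\cdot,T;x,t)$ is a probability density on $\R_+^d$, the remaining integral of $p$ over $[0,M]^d$ is at most one, which yields the claimed bound.

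The argument is largely routine: the only ingredients are the disjoint-partition identity (elementary bookkeeping) and the CPWA Lipschitz estimate of Lemma~\ref{lemma: lipschitz constant}, both of which are either immediate or already established. I therefore do not anticipate any serious obstacle. The only point requiring a bit of care is tracking the $d$-dependence, namely that the Lipschitz constant contributes the factor $d^{3/2}$ while the $\ell^2$-diameter of an $m$-level dyadic cube in $\R^d$ contributes $\sqrt{d}\cdot 2^{-m}$; it is the product of these two factors that produces precisely $C_2^2 d^{2}\,2^{-m}$, matching the stated bound.
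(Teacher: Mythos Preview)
Your proposal is correct and essentially identical to the paper's proof: both introduce the coordinate-wise floor map $[\cdot]_m$, use the disjoint partition of $[0,M)^d$ by the cubes $Q_{\bm j,m}$ to rewrite the quadrature sum as an integral of $h([y]_m)$, and then apply the CPWA Lipschitz bound together with $\int p \le 1$. The only cosmetic difference is that the paper carries the intermediate estimate in the $\ell^1$ norm (using $|h(y)-h([y]_m)|\le C_2^2 d\,\|y-[y]_m\|_1$ and $\|y-[y]_m\|_1\le d\,2^{-m}$), whereas you split the factor $d^2$ as $d^{3/2}\cdot d^{1/2}$ via the $\ell^2$ version of Lemma~\ref{lemma: lipschitz constant}; both routes give exactly $C_2^2 d^2 2^{-m}$.
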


\begin{corollary}[Truncation and quadrature errors]\label{corollary: trun-quad error} 
	Let $\eps \in (0,1)$, $d \in \N$, $r,T \in (0,\infty)$, and $(t,\bm{x}) \in [0,T) \times \R_+^d$. Let $u(t,\bm{x})$ be the option price given by \eqref{eqn: option price formula}. Let $h:\R^d_+ \to \R$ be the continuous piecewise affine function given by \eqref{eqn: CPWA payoff}. Let Assumption \ref{assumption: cov matrix} and Assumption \ref{assumption: CPWA} hold with respective constants $C_1,C_2 \in [1,\infty)$. For every $\eta\in(0,1)$, let $M_{d,\eta} \in [1,\infty)$ be given by \eqref{eqn: def M}. 
	For every $n,m \in \N$ satisfying 
	\begin{equation}\label{eqn: trun-quad def n}
		\begin{split}
			n\geq 1 + \log_2(M_{d,\eps/2} ), 
		\end{split}
	\end{equation}
	\begin{equation}\label{eqn: trun-quad def m}
		m \geq \log_2(C_2^2 d^{2}(\eps/2)^{-1} )
	\end{equation}
	let $\widetilde{u}_{n,m,t,\bm{x}}$ be the truncated quadrature solution given as in \eqref{eqn: quad solution}. Then
	\begin{equation}\label{eqn: trun-quad conclusion}
		\vert u(t,\bm{x}) - \widetilde{u}_{n,m,t,\bm{x}} \vert \leq \eps.
	\end{equation}
	\begin{proof}
		By \eqref{eqn: trun-quad def n}, it holds that 
        \begin{equation}
        	M := 2^{n-1} \geq 2^{n_{d,\eps}-1} \geq M_{d,\eps/2}. 
        	\end{equation}
        Let $\widetilde{u}_{n,t,\bm{x}}$ be the truncated solution given by \eqref{eqn: quad-prop truncated solution}.	By Proposition \ref{prop: truncation error} (with $\eps \leftarrow \eps/2$ and $M \leftarrow 2^{n-1}$ in the notation of Proposition \ref{prop: truncation error}), it follows that 
		\begin{equation}
			\vert u(t,\bm{x}) - \widetilde{u}_{n,t,\bm{x}} \vert \leq \eps/2.
		\end{equation}
		Moreover, by \eqref{eqn: trun-quad def m}, it holds that $C_2^2 d^{2} 2^{-m} \leq \eps/2$. Hence, by 
		Proposition \ref{prop: quadrature error}, it follows that
		\begin{equation}
			\vert \widetilde{u}_{n,t,\bm{x}} - \widetilde{u}_{n,m,t,\bm{x}} \vert \leq \eps/2.
		\end{equation}
		Thus, 
		\eqref{eqn: trun-quad conclusion} follows from triangle inequality. 
	\end{proof}
	
\end{corollary}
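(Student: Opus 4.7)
The plan is to split the total error $|u(t,x)-\widetilde{u}_{n,m,t,x}|$ via the triangle inequality into a pure truncation part and a pure quadrature part, and then apply the two preceding propositions with $\varepsilon$ replaced by $\varepsilon/2$ in each. Concretely, I would introduce the intermediate quantity $\widetilde{u}_{n,t,x}$ from \eqref{eqn: quad-prop truncated solution}, which is the integral restricted to $[0,M]^d$ with $M:=2^{n-1}$, and write
\begin{equation}
\vert u(t,x) - \widetilde{u}_{n,m,t,x} \vert \leq \vert u(t,x) - \widetilde{u}_{n,t,x} \vert + \vert \widetilde{u}_{n,t,x} - \widetilde{u}_{n,m,t,x} \vert.
\end{equation}

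For the first summand, I would verify that the condition $n\geq 1+\log_2(M_{d,\varepsilon/2})$ is exactly equivalent to $M=2^{n-1}\geq M_{d,\varepsilon/2}$, which is the hypothesis of Proposition \ref{prop: truncation error} applied with accuracy parameter $\varepsilon/2$; this yields $|u(t,x)-\widetilde{u}_{n,t,x}|\leq \varepsilon/2$. For the second summand, Proposition \ref{prop: quadrature error} gives the bound $|\widetilde{u}_{n,t,x}-\widetilde{u}_{n,m,t,x}|\leq C_2^2 d^2 2^{-m}$, and the assumption $m\geq \log_2(C_2^2 d^2(\varepsilon/2)^{-1})$ is equivalent to $C_2^2 d^2 2^{-m}\leq \varepsilon/2$. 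Combining the two halves yields the claimed bound \eqref{eqn: trun-quad conclusion}.

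There is no substantive obstacle here: the corollary is a bookkeeping combination of the two preceding propositions, so the only thing to be careful about is to check that the logarithmic lower bounds on $n$ and $m$ in the hypothesis translate cleanly into the quantitative hypotheses $2^{n-1}\geq M_{d,\varepsilon/2}$ and $2^{-m}\leq \varepsilon/(2C_2^2 d^2)$ required to invoke Propositions \ref{prop: truncation error} and \ref{prop: quadrature error} respectively. The only minor subtlety is that Proposition \ref{prop: quadrature error} requires $n\in\N\cap\{2,3,\dots\}$, which is automatic since $M_{d,\varepsilon/2}\geq 1$ forces $n\geq 2$ under \eqref{eqn: trun-quad def n}.
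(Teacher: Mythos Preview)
Your proposal is correct and matches the paper's proof essentially line for line: introduce the intermediate truncated integral $\widetilde{u}_{n,t,x}$, apply Proposition~\ref{prop: truncation error} with $\varepsilon\leftarrow\varepsilon/2$ using $2^{n-1}\geq M_{d,\varepsilon/2}$, apply Proposition~\ref{prop: quadrature error} using $C_2^2 d^2 2^{-m}\leq\varepsilon/2$, and conclude by the triangle inequality. Your remark that \eqref{eqn: trun-quad def n} forces $n\geq 2$ (needed for Proposition~\ref{prop: quadrature error}) is a nice extra check that the paper leaves implicit.
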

\subsection{Step 3: Approximation error bounds for payoff function}
\begin{lemma}[Sublinear property of the $\max$-function]\label{lemma: sublinear property} 
    Let $\eps>0$, $n \in \N$, and let $(a_i)_{i=1}^n, (\tilde{a}_i)_{i=1}^n \subset \R$ satisfy for all $i=1,\ldots,n$:
    \begin{equation}\label{eqn: sublinear1}
        \vert a_i - \tilde{a}_i \vert \leq \eps. 
    \end{equation}
    Then,
    \begin{equation}\label{eqn: sublinear2}
        \left\vert \max_{i=1,\ldots,n}\{a_i\} - \max_{i=1,\ldots,n}\{\tilde{a}_i\} \right\vert \leq \eps.
    \end{equation}
\begin{proof}
    Let $i \in \{1,\ldots,n\}$ be arbitrary. The fact that $\tilde{a}_i \leq \max_{j=1,\ldots,n} \{\tilde{a}_j\}$ and \eqref{eqn: sublinear1} imply that 
    \begin{equation}
        a_i - \max_{j=1,\ldots,n}\{\tilde{a}_j\} \leq a_i - \tilde{a}_i \leq \vert a_i - \tilde{a}_i\vert \leq \eps. 
    \end{equation}
    Since $i$ was arbitrarily chosen, taking maximum over $i=1,\ldots,n$ yields
    \begin{equation}
        \max_{i=1,\ldots,n} \{a_i\} - \max_{j=1,\ldots,n}\{\tilde{a}_j\}  \leq \eps.
    \end{equation}
    Repeating the argument by symmetry concludes \eqref{eqn: sublinear2}.
\end{proof}
\end{lemma}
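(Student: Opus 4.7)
The plan is to establish the two-sided bound directly from the pointwise hypothesis by exploiting the defining property of the maximum, namely that each $\tilde a_i$ (respectively $a_i$) is dominated by $\max_j \tilde a_j$ (respectively $\max_j a_j$). This reduces the claim to an elementary chain of inequalities and avoids any case analysis over which index attains the maximum on either side.

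Concretely, I would fix an arbitrary $i \in \{1,\ldots,n\}$ and chain the hypothesis $a_i - \tilde a_i \leq \varepsilon$ with the trivial inequality $\tilde a_i \leq \max_{j}\{\tilde a_j\}$ to obtain
\begin{equation*}
    a_i \leq \tilde a_i + \varepsilon \leq \max_{j=1,\ldots,n}\{\tilde a_j\} + \varepsilon.
\end{equation*}
Since the right-hand side is independent of $i$, taking the maximum over $i$ on the left yields $\max_i\{a_i\} - \max_j\{\tilde a_j\} \leq \varepsilon$. The symmetric argument, swapping the roles of $(a_i)$ and $(\tilde a_i)$ and using $|\tilde a_i - a_i| = |a_i - \tilde a_i| \leq \varepsilon$, gives $\max_i\{\tilde a_i\} - \max_j\{a_j\} \leq \varepsilon$. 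Combining both bounds delivers \eqref{eqn: sublinear2}.

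There is essentially no obstacle here: the lemma is a standard $1$-Lipschitz property of the max functional on $\R^n$ with respect to $\|\cdot\|_\infty$, and the only subtlety is to present the symmetry step cleanly rather than invoking it implicitly. I would make the symmetry explicit by writing out the second inequality in full, so that the final step is simply an application of the identity $|x| = \max\{x,-x\}$ to the two inequalities obtained.
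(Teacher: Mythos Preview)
Your proposal is correct and follows essentially the same approach as the paper: fix an arbitrary index, combine the pointwise hypothesis with the trivial bound $\tilde a_i \leq \max_j \tilde a_j$, take the maximum over $i$, and then invoke symmetry for the reverse inequality. The only cosmetic difference is that you write the chain as $a_i \leq \tilde a_i + \varepsilon \leq \max_j \tilde a_j + \varepsilon$ whereas the paper subtracts $\max_j \tilde a_j$ first, but the logic is identical.
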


\begin{proposition}[Approximating payoff function]\label{prop: quad-payoff} Let $d \in \N$, $r,T \in (0,\infty)$, $(t,\bm{x}) \in [0,T) \times \R_+^d$, $n_1 \in \N \cap \{2,3,\ldots\}$, and  $m_1 \in \N_0$. Let $h:\R^d_+ \to \R$ be the continuous piecewise affine function given by \eqref{eqn: CPWA payoff}. i.e.
	\begin{equation}
		 h(\bm{x}) = \sum_{k=1}^{K} \xi_k \max\{\bm{a}_{k,l} \cdot \bm{x} + b_{k,l}: l = 1,\ldots,I_k\}.
		 \end{equation}
Let Assumption \ref{assumption: CPWA} hold with corresponding constant $C_2 \in [1,\infty)$, and let $n_{2} \in \N \cap \{2,3,\ldots\}$ be defined by
\begin{equation}
	n_2 := 1 + \lceil \log_2(C_2)\rceil .
\end{equation}
For every $m_2 \in \N_0$ define the function $[\cdot]_{m_2}:\R \to 2^{-m_2}\Z$ by $	[x]_{m_2} := \frac{\lfloor 2^{m_2}x\rfloor}{2^{m_2}}$, $x\in  \R$. 
For every $m_2 \in \N_0$, 
$k=1,\ldots,K$, 
and $l=1,\ldots,I_k$, 
define  $\widetilde{\bm{a}}_{n_2,m_2,k,l} = (\widetilde{\bm{a}}_{n_2,m_2,k,l,1},\cdots,\widetilde{\bm{a}}_{n_2,m_2,k,l,d})\in \R^d$ and $\widetilde{b}_{n_2,m_2,k,l} \in \R$ by  
\begin{equation}\label{eqn: approx payoff1}
	\begin{split}
		&\widetilde{\bm{a}}_{n_2,m_2,k,l,i} := [(\bm{a}_{k,l})_i]_{m_2}, \quad i=1,\ldots,d,\\
		&\widetilde{b}_{n_2,m_2,k,l} := [b_{k,l}]_{m_2},
	\end{split}
\end{equation}
and define the function $\widetilde{h}_{n_2,m_2}:\R^d \to \R$ by 
\begin{equation}\label{eqn: approx payoff2}
	\widetilde{h}_{n_2,m_2}(\bm{x}) := \sum_{k=1}^K \xi_k \max \{\widetilde{\bm{a}}_{n_2,m_2,k,l} \cdot \bm{x} + \widetilde{b}_{n_2,m_2,k,l}: l=1,\ldots,I_k\}.
\end{equation}
Let $\widetilde{u}_{n_1,m_1,t,\bm{x}} \in \R$ be the truncated quadrature solution given as in \eqref{eqn: quad solution}. i.e.
\begin{equation}
	\widetilde{u}_{n_1,m_1,t,\bm{x}} := e^{-r(T-t)}\sum_{\bm{j} \in \mathbb{K}_{n_1,m_1,+}^d} h(\bm{j})p_{\bm{j},m_1}.
	\end{equation}
For every $m_2 \in \N_0$ let $\widetilde{u}_{n_1,m_1,n_2,m_2,t,\bm{x}} \in \R$ be the truncated quadrature solution with an approximated payoff function  given by
\begin{equation}\label{eqn: approx payoff3}
	\widetilde{u}_{n_1,m_1,n_2,m_2,t,\bm{x}} := e^{-r(T-t)}\sum_{\bm{j} \in \mathbb{K}_{n_1,m_1,+}^d} \widetilde{h}_{n_2,m_2}(\bm{j})p_{\bm{j},m_1}
\end{equation}
Then, for every $m_2 \in \N_0$ it holds that $(\widetilde{\bm{a}}_{n_2,m_2,k,l},\widetilde{b}_{n_2,m_2,k,l}) \in \mathbb{K}_{n_2,m_2}^{d+1}$ and that 
\begin{equation}\label{eqn: approx payoff4}
\vert \widetilde{u}_{n_1,m_1,t,\bm{x}} - \widetilde{u}_{n_1,m_1,n_2,m_2,t,\bm{x}}\vert \leq C_2d^22^{n_1-m_2}.
\end{equation}
\begin{proof}
	By Assumption \ref{assumption: CPWA}, it holds that
	\begin{equation}
		\forall k=1,\ldots,K, \,
		\forall l= 1,\ldots,I_K,\,
		 \forall i=1,\ldots,d:\quad 		\vert (\bm{a}_{k,l})_i\vert, \vert b_{k,l} \vert \leq C_2.
	\end{equation}
    Hence, by Definition \ref{def: two complement}, \eqref{eqn: approx payoff1},  and \eqref{eqn: approx payoff2}, it follows that $(\widetilde{\bm{a}}_{n_2,m_2,k,l},\widetilde{b}_{n_2,m_2,k,l}) \in \mathbb{K}_{n_2,m_2}^{d+1}$. Furthermore, observe that
    \begin{equation}
        \forall c>0: \quad \vert c-[c]_{m_2} \vert \leq 2^{-m_2}.
    \end{equation}
    Hence, for all $k=1,\ldots,K$, $l=1,\ldots,I_k$, and $\bm{x} \in \R_+^d$, it follows that
    \begin{equation}
        \vert \bm{a}_{k,l} \cdot \bm{x} + b_{k,l} - (\widetilde{\bm{a}}_{n_2,m_2,k,l} \cdot \bm{x} + \widetilde{b}_{n_2,m_2,k,l})\vert \leq (d+1)2^{-m_2} \max\{1,\max\{(\bm{x})_i:i=1,\ldots,d\}\}.
    \end{equation}
    Moreover, since $\mathbb{K}_{n_1,m_1} \subset [-2^{n_1-1},2^{n_1-1}]$ (c.f. Definition \ref{def: two complement}), it follows for all $k=1,\ldots,K$, $l=1,\ldots,I_k$ that 
    \begin{equation}
        \forall \bm{j} \in \mathbb{K}_{n_1,m_1,+}^d: \quad \vert  \bm{a}_{k,l} \cdot \bm{j} + b_{k,l} - (\widetilde{\bm{a}}_{n_2,m_2,k,l} \cdot \bm{j} + \widetilde{b}_{n_2,m_2,k,l})\vert \leq (d+1)2^{n_1-1}2^{-m_2}.
    \end{equation}
    Hence, by Lemma \ref{lemma: sublinear property}, it holds for all $k=1,\ldots,K$ that 
    \begin{equation}
        \vert \max\{\bm{a}_{k,l} \cdot \bm{j} + b_{k,l}: l =1,\ldots,I_k\} - \max\{\widetilde{\bm{a}}_{n_2,m_2,k,l} \cdot \bm{j} + \widetilde{b}_{n_2,m_2,k,l}: l=1,\ldots,I_k\}\vert \leq (d+1)2^{n_1-1}2^{-m_2}.
    \end{equation}
    Hence, by Assumption \ref{assumption: CPWA}, it follows that 
    \begin{equation}\label{eqn: step3-approximate h}
        \forall \bm{j} \in \mathbb{K}_{n_1,m_1,+}^d: \quad \vert h(\bm{j}) - \widetilde{h}_{n_2,m_2}(\bm{j}) \vert \leq K(d+1)2^{n_1-1}2^{-m_2} \leq C_2d^2 2^{n_1-m_2}.
    \end{equation}
    Furthermore, since $e^{-r(T-t)}\leq 1$ and 
    \begin{equation}
    	0<\sum_{\bm{j} \in \mathbb{K}_{n_1,m_1,+}^d}  p_{\bm{j},m_1} = \int_{[0,2^{n_1 - 1}]^d} p(\bm{y},T;\bm{x},t)\,d\bm{y} \leq 1,
    	\end{equation}
    we conclude that 
    \begin{equation}
        \vert \widetilde{u}_{n_1,m_1,t,\bm{x}} - \widetilde{u}_{n_1,m_1,n_2,m_2,t,\bm{x}}\vert \leq C_2d^22^{n_1-m_2+1}e^{-r(T-t)}\cdot\sum_{\bm{j} \in \mathbb{K}_{n_1,m_1,+}^d} p_{\bm{j},m_1} \leq C_2d^22^{n_1-m_2}.
    \end{equation}
\end{proof}
\end{proposition}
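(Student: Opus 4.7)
The proposition has two parts. First I would verify that the rounded coefficients $(\widetilde{\bm{a}}_{n_2,m_2,k,l},\widetilde{b}_{n_2,m_2,k,l})$ lie in $\mathbb{K}_{n_2,m_2}^{d+1}$. Since Assumption \ref{assumption: CPWA} gives $|(\bm{a}_{k,l})_i|,|b_{k,l}|\leq C_2$, and since by definition $n_2 = 1+\lceil\log_2(C_2)\rceil$ implies $C_2 \leq 2^{n_2-1}$, the floor-type rounding $[c]_{m_2} = \lfloor 2^{m_2} c\rfloor/2^{m_2}$ lands in $2^{-m_2}\Z \cap [-2^{n_2-1},2^{n_2-1}-2^{-m_2}]$, which is exactly $\mathbb{K}_{n_2,m_2}$ by Definition \ref{def: two complement}. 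This part is essentially bookkeeping.

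The main work is the quantitative estimate \eqref{eqn: approx payoff4}. The plan is to reduce it step by step to a pointwise bound on $|h(\bm{j}) - \widetilde{h}_{n_2,m_2}(\bm{j})|$ on the grid $\mathbb{K}_{n_1,m_1,+}^d$, then sum against the probability weights $p_{\bm{j},m_1}$. The pointwise bound is built from the inside out: (i) each rounded coefficient satisfies $|c - [c]_{m_2}| \leq 2^{-m_2}$; (ii) for any $\bm{j}\in \mathbb{K}_{n_1,m_1,+}^d$ each coordinate obeys $|j_i|\leq 2^{n_1-1}$, so by the triangle inequality
\begin{equation}
\bigl|(\bm{a}_{k,l}\cdot\bm{j}+b_{k,l})-(\widetilde{\bm{a}}_{n_2,m_2,k,l}\cdot\bm{j}+\widetilde{b}_{n_2,m_2,k,l})\bigr|\leq (d+1)\,2^{n_1-1}\,2^{-m_2};
\end{equation}
(iii) by Lemma \ref{lemma: sublinear property} this uniform bound survives the inner $\max_{l=1,\dots,I_k}$; (iv) summing the $K$ outer terms (each with $|\xi_k|=1$) and using the assumption $K\max_k I_k\leq C_2 d$ from Assumption \ref{assumption: CPWA} yields
\begin{equation}
|h(\bm{j})-\widetilde{h}_{n_2,m_2}(\bm{j})|\leq K(d+1)\,2^{n_1-1}\,2^{-m_2}\leq C_2 d^2\,2^{n_1-m_2},
\end{equation}
after absorbing the factor $(d+1)\leq 2d$.

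Finally, to turn this pointwise estimate into the integral bound, I would substitute into the definitions \eqref{eqn: quad solution} and \eqref{eqn: approx payoff3}, apply the triangle inequality inside the sum, pull the factor $e^{-r(T-t)}\leq 1$ outside, and use that $\sum_{\bm{j}\in\mathbb{K}_{n_1,m_1,+}^d} p_{\bm{j},m_1} = \Prob(\bm{S}_T\in[0,2^{n_1-1})^d\mid \bm{S}_t=x)\leq 1$, which gives the claimed estimate. The only subtle point is the fourfold passage through (i)–(iv) above, but none of the individual steps is hard; the sublinearity Lemma \ref{lemma: sublinear property} does all the heavy lifting to commute the error with the max, and no tail estimate on the log-normal density is needed here because we are on the truncated grid. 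I do not anticipate any significant obstacle.
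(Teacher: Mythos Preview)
Your proposal is correct and follows essentially the same route as the paper: bound the coefficient rounding error, propagate it through the affine map on the grid $\mathbb{K}_{n_1,m_1,+}^d$ to get $(d+1)2^{n_1-1}2^{-m_2}$, pass through the max via Lemma~\ref{lemma: sublinear property}, sum over $K$ using $K\leq C_2 d$, and finish with $e^{-r(T-t)}\leq 1$ and $\sum_{\bm{j}}p_{\bm{j},m_1}\leq 1$. There is no meaningful difference between your argument and the paper's.
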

\begin{corollary}[Truncation and quadrature with approximated payoff errors]\label{corollary: step3-approximate payoff} Let $\eps \in(0,1)$, $d \in \N$, $r,T\in(0,\infty)$, and $(t,\bm{x}) \in [0,T)\times \R_+^d$. Let $u(t,\bm{x})$ be the option price given by \eqref{eqn: option price formula}. Let $h:\R^d_+ \to \R$ be the continuous piecewise affine function given by \eqref{eqn: CPWA payoff}. Let Assumption \ref{assumption: cov matrix} and Assumption \ref{assumption: CPWA} hold with respective constants $C_1,C_2 \in [1,\infty)$, and let $n_{2} := 1 + \lceil \log_2(C_2)\rceil$.	 For every $\eta \in (0,1)$, let $M_{d,\eta} \in [1,\infty)$ be given by \eqref{eqn: def M}.  
For every $n_1,m_1,m_2 \in \N$ satisfying
    \begin{align}
        n_1&\geq 1 + \log_2(M_{d,\eps/3}), \label{eqn: step3-n_1}\\
        %
        m_1 &\geq\log_2(C_2^2 d^{2}(\eps/3)^{-1} ),\label{eqn: step3-m_1}\\
        %
        m_2 &\geq
        1 + \log_2(M_{d,\eps/3}) 
        +  \log_2(C_2 d^2(\eps/3)^{-1}), \label{eqn: step3-m_2}
    \end{align}
    let $\widetilde{u}_{n_1,m_1,n_2,m_2,t,\bm{x}} \in \R$ be the truncated quadrature solution with an approximated payoff function be given by \eqref{eqn: approx payoff3}. Then we have that
    \begin{equation}
        \vert u(t,\bm{x}) - \widetilde{u}_{n_1,m_1,n_2,m_2,t,\bm{x}} \vert \leq \eps.
    \end{equation}
    \begin{proof}
        Let $\tilde{u}_{n_1,m_1,t,\bm{x}}$ be the quadrature solution given by \eqref{eqn: quad solution}. By \eqref{eqn: step3-n_1}, \eqref{eqn: step3-m_1}, and Corollary \ref{corollary: trun-quad error} (with $\eps \leftarrow 2\eps/3$ in the notation of Corollary \ref{corollary: trun-quad error}), it follows that
        \begin{equation}
            \vert u(t,\bm{x}) - \widetilde{u}_{n_1,m_1,t,\bm{x}} \vert \leq 2\eps/3.
        \end{equation} 
        Moreover, by Proposition \ref{prop: quad-payoff} (with $m_2 \leftarrow m_2$ in the notation of Proposition \ref{prop: quad-payoff}) and \eqref{eqn: step3-m_2}, it follows that 
        \begin{equation}
            \vert \widetilde{u}_{n_1,m_1,t,\bm{x}} - \widetilde{u}_{n_1,m_1,n_2,m_2,t,\bm{x}} \vert \leq C_2d^22^{n_1-m_2} \leq \eps/3.
        \end{equation}	
        Thus, the conclusion follows from the triangle inequality. 
        
    \end{proof}

\end{corollary}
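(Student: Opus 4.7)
The plan is to introduce the intermediate quantity $\widetilde{u}_{n_1,m_1,t,x}$ from \eqref{eqn: quad solution}, which represents the truncated quadrature approximation with the \emph{exact} CPWA payoff $h$, and then split the error by the triangle inequality:
\[
\vert u(t,x) - \widetilde{u}_{n_1,m_1,n_2,m_2,t,x} \vert \leq \vert u(t,x) - \widetilde{u}_{n_1,m_1,t,x} \vert + \vert \widetilde{u}_{n_1,m_1,t,x} - \widetilde{u}_{n_1,m_1,n_2,m_2,t,x} \vert.
\]
I would budget the target $\eps$ as $2\eps/3 + \eps/3$ across these two pieces, so that the first piece absorbs both the truncation and quadrature errors, while the second piece absorbs the error from rounding the CPWA coefficients on a dyadic grid.

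For the first piece, I would invoke Corollary \ref{corollary: trun-quad error} with its accuracy parameter set to $2\eps/3$. Unpacking its hypotheses (which carry an internal halving of the accuracy), one sees that the required conditions become $n_1 \geq 1 + \log_2(M_{d,\eps/3})$ and $m_1 \geq \log_2(C_2^2 d^2 (\eps/3)^{-1})$, i.e.\ exactly \eqref{eqn: step3-n_1} and \eqref{eqn: step3-m_1}. This yields $\vert u(t,x) - \widetilde{u}_{n_1,m_1,t,x}\vert \leq 2\eps/3$ without any further calculation, since Corollary \ref{corollary: trun-quad error} is itself just the triangle-inequality combination of Proposition \ref{prop: truncation error} and Proposition \ref{prop: quadrature error}.

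For the second piece, Proposition \ref{prop: quad-payoff} supplies the explicit bound $\vert \widetilde{u}_{n_1,m_1,t,x} - \widetilde{u}_{n_1,m_1,n_2,m_2,t,x}\vert \leq C_2 d^2 2^{n_1 - m_2}$. Forcing the right-hand side to be at most $\eps/3$ requires $m_2 \geq n_1 + \log_2(3 C_2 d^2 \eps^{-1})$; combining this with the lower bound $n_1 \geq 1 + \log_2(M_{d,\eps/3})$ coming from \eqref{eqn: step3-n_1} gives precisely the hypothesis \eqref{eqn: step3-m_2}. Assembling the two bounds via the triangle inequality completes the argument.

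Overall, this corollary is essentially bookkeeping: all of the analytic content has already been established in Propositions \ref{prop: truncation error}, \ref{prop: quadrature error}, and \ref{prop: quad-payoff}. There is no conceptual obstacle beyond correctly partitioning $\eps$ into three parts and verifying that the stated conditions on $n_1$, $m_1$, $m_2$ match what each invoked result demands; the only point to be careful about is that Corollary \ref{corollary: trun-quad error} internally halves its input accuracy, so one must track that factor when feeding in $2\eps/3$ to recover the $M_{d,\eps/3}$ appearing in \eqref{eqn: step3-n_1}.
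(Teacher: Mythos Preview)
Your proposal is correct and follows the same approach as the paper: introduce the intermediate $\widetilde{u}_{n_1,m_1,t,x}$, apply Corollary~\ref{corollary: trun-quad error} with accuracy $2\eps/3$ for the first piece (tracking the internal halving to land on $M_{d,\eps/3}$), and then invoke Proposition~\ref{prop: quad-payoff} for the second piece. The error budgeting, the results cited, and the final triangle-inequality assembly all match the paper's proof exactly.
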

\subsection{Step 4: Distribution  loading error bounds}
\begin{proposition}[Distribution loading errors]\label{prop: loading errors}
    Let $\eps \in(0,1)$, $d \in \N$, $r,T\in (0,\infty)$, and $(t,\bm{x}) \in [0,T)\times \R_+^d$. Let $u(t,\bm{x})$ be the option price given by \eqref{eqn: option price formula}. Let $h:\R^d_+ \to \R$ be the continuous piecewise affine function given by \eqref{eqn: CPWA payoff}. Let Assumption~\ref{assumption: cov matrix} and Assumption \ref{assumption: CPWA} hold with respective constants $C_1,C_2 \in [1,\infty)$, and let  $n_2:=1 + \lceil \log_2(C_2)\rceil$.  For every $\eta \in (0,1)$, let $M_{d,\eta} \in [1,\infty)$ be given by \eqref{eqn: def M} and
    for every $n_1,m_1 \in \N$
    let $\{\widetilde{p}_{\bm{j},\eta}: \bm{j} \in \mathbb{K}_{n_1,m_1,+}^d\} \subset [0,1]$ satisfy
    \begin{equation}
        \sum_{\bm{j} \in \mathbb{K}_{n_1,m_1,+}^d} \widetilde{p}_{\bm{j},\eta} = 1
    \end{equation}
    and 
    \begin{equation}\label{eqn: step4 approximate loading}
		\sum_{\bm{j} \in \mathbb{K}_{n_1,m_1,+}^d} \vert\widetilde{p}_{\bm{j},\eta} - \gamma^{-1}p_{{\bm{j}},m_1} \vert \leq \frac{\eta}{C_2^2d^22^{n_1+1}}, 
    \end{equation}
    where for all $\bm{j} = (j_1,\ldots,j_d) \in \mathbb{K}_{n_1,m_1,+}^d$,
    \begin{equation}
        p_{{\bm{j}},m_1} := \int_{Q_{\bm{j},m_1}} p(\bm{y},T;\bm{x},t)\,d\bm{y}, \quad Q_{\bm{j},m_1} := [j_1,j_1+2^{-m_1}) \times \cdots \times [j_d,j_d+2^{-m_1}),
    \end{equation}
    and
    \begin{equation}
        \gamma := \sum_{\bm{j} \in \mathbb{K}_{n_1,m_1,+}^d}  p_{{\bm{j}},m_1} \in (0,1).
    \end{equation}
Moreover, for every $n_1,m_1, m_2 \in \N$ satisfying
  \begin{align}
	n_1&\geq 1 + \log_2(M_{d,\eps/4}), \label{eqn: step4-n_1}\\
	%
	m_1&\geq   \log_2(C_2^2 d^{2}(\eps/4)^{-1}), \label{eqn: step4-m_1}
	\\
	%
	m_2 &\geq  1 + \log_2(M_{d,\eps/4}) 
	 + \log_2(C_2 d^2(\eps/4)^{-1}) \label{eqn: step4-m_2},
\end{align}
let $\tilde{h}_{n_2,m_2}:\R^d \to \R$ be given by \eqref{eqn: approx payoff2}, and let $\widetilde{u}_{n_1,m_1,n_2,m_2,p,t,\bm{x}} \in \R$ be the truncated quadrature solution with approximated payoff and loaded distribution given by 
    \begin{equation}\label{eqn: step4-approximate loading soln}
        \widetilde{u}_{n_1,m_1,n_2,m_2,p,t,\bm{x}} := \gamma e^{-r(T-t)}\sum_{\bm{j} \in \mathbb{K}_{n_1,m_1,+}^d} \widetilde{p}_{\bm{j},\eps/4} \widetilde{h}_{n_2,m_2}(\bm{j}),
    \end{equation}
    Then, 
    \begin{equation}
    	\vert         u(t,\bm{x}) - \widetilde{u}_{n_1,m_1,n_2,m_2,p,t,\bm{x}}\vert \leq \eps.
    \end{equation}
\begin{proof}Let $\widetilde{u}_{n_1,m_1,n_2,m_2,t,\bm{x}} \in \R$ be the quadrature solution with an approximated payoff function be given by \eqref{eqn: approx payoff3}. Note that by Corollary \ref{corollary: step3-approximate payoff} (with $\eps \leftarrow 3\eps/4$ in the notation of Corollary \ref{corollary: step3-approximate payoff}), it holds that 
    \begin{equation}
        \vert u(t,\bm{x}) - \widetilde{u}_{n_1,m_1,n_2,m_2,t,\bm{x}} \vert \leq 3\eps/4.
    \end{equation}
    Moreover, by Lemma \ref{lemma: lin growth} and \eqref{eqn: step3-approximate h}, it holds for every $\bm{j} \in \mathbb{K}_{n_1,m_1,+}^d$ that
    \begin{equation}\label{eqn: bound on tilde h}
        \begin{split}
            \vert \widetilde{h}_{n_2,m_2}(\bm{j}) \vert &\leq \vert h(\bm{j}) - \widetilde{h}_{n_2,m_2}(\bm{j}) \vert + \vert h(\bm{j})\vert \\
            &\leq C_2d^22^{n_1-m_2} + C_2^2d(1+ \Vert \bm{j}\Vert_1)\\
            &\leq C_2d^22^{n_1-m_2} + C_2^2d(1 + d 2^{n_1-1})\\
            &\leq C_2d^22^{n_1-m_2} + C_2^2d^22^{n_1}\\            
            &\leq C_2^2d^22^{n_1+1}.
        \end{split}
    \end{equation}
    Hence, by \eqref{eqn: approx payoff3} and \eqref{eqn: step4-approximate loading soln}, using \eqref{eqn: step4 approximate loading}, the fact that $0<\gamma, e^{-r(T-t)} \leq 1$, and the above estimate, we have
    \begin{equation}
        \begin{split}
            \vert \widetilde{u}_{n_1,m_1,n_2,m_2,t,\bm{x}} - \widetilde{u}_{n_1,m_1,n_2,m_2,p,t,\bm{x}}\vert &\leq \gamma e^{-r(T-t)}\cdot \max_{\bm{j} \in \mathbb{K}_{n_1,m_1,+}^d} \vert \widetilde{h}_{n_2,m_2}(\bm{j}) \vert \cdot \sum_{\bm{j} \in \mathbb{K}_{n_1,m_1,+}^d} \vert  \widetilde{p}_{\bm{j},\eps/4}- \gamma^{-1}p_{\bm{j},m_1}\vert \\
             &\leq \eps/4.
        \end{split}
    \end{equation}
    Thus, the conclusion follows from triangle inequality. 
\end{proof}
\end{proposition}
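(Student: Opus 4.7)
The plan is to decompose the error $|u(t,x) - \widetilde{u}_{n_1,m_1,n_2,m_2,p,t,x}|$ into two pieces via a triangle inequality using the auxiliary quantity $\widetilde{u}_{n_1,m_1,n_2,m_2,t,x}$ (the quadrature solution with exact normalized discretized density but approximated payoff) defined in \eqref{eqn: approx payoff3}. The first piece, $|u(t,x) - \widetilde{u}_{n_1,m_1,n_2,m_2,t,x}|$, is handled directly by invoking Corollary~\ref{corollary: step3-approximate payoff} with $\varepsilon \leftarrow 3\varepsilon/4$; the hypotheses \eqref{eqn: step4-n_1}, \eqref{eqn: step4-m_1}, \eqref{eqn: step4-m_2} are tailored to match conditions \eqref{eqn: step3-n_1}, \eqref{eqn: step3-m_1}, \eqref{eqn: step3-m_2} with the replacement $\varepsilon \to 3\varepsilon/4$, so this contributes at most $3\varepsilon/4$.

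For the second piece, $|\widetilde{u}_{n_1,m_1,n_2,m_2,t,x} - \widetilde{u}_{n_1,m_1,n_2,m_2,p,t,x}|$, I would first rewrite $\widetilde{u}_{n_1,m_1,n_2,m_2,t,x}$ by inserting $\gamma \cdot \gamma^{-1}$ in front of $p_{\bm{j},m_1}$, so that both quantities share the common prefactor $\gamma e^{-r(T-t)}$ and differ only in the coefficients multiplying $\widetilde{h}_{n_2,m_2}(\bm{j})$. The difference then bounds as
\begin{equation*}
  \gamma e^{-r(T-t)} \cdot \max_{\bm{j} \in \mathbb{K}_{n_1,m_1,+}^d} |\widetilde{h}_{n_2,m_2}(\bm{j})| \cdot \sum_{\bm{j} \in \mathbb{K}_{n_1,m_1,+}^d} |\widetilde{p}_{\bm{j},\varepsilon/4} - \gamma^{-1} p_{\bm{j},m_1}|,
\end{equation*}
where $\gamma \in (0,1)$ and $e^{-r(T-t)} \le 1$ drop out immediately.

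The main technical ingredient is a uniform bound on $|\widetilde{h}_{n_2,m_2}(\bm{j})|$ for $\bm{j}$ in the truncated grid. I would control this by writing $\widetilde{h}_{n_2,m_2}(\bm{j}) = h(\bm{j}) + (\widetilde{h}_{n_2,m_2}(\bm{j}) - h(\bm{j}))$, then applying the linear growth bound from Lemma~\ref{lemma: lin growth} to $h(\bm{j})$ together with the coefficient-rounding estimate \eqref{eqn: step3-approximate h} from the proof of Proposition~\ref{prop: quad-payoff}. Since $\bm{j} \in \mathbb{K}_{n_1,m_1,+}^d$ implies $\|\bm{j}\|_1 \le d\, 2^{n_1-1}$, both terms are controlled by a constant multiple of $C_2^2 d^2 2^{n_1}$; concretely one obtains $|\widetilde{h}_{n_2,m_2}(\bm{j})| \le C_2^2 d^2 2^{n_1+1}$. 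This is exactly the factor that the hypothesis \eqref{eqn: step4 approximate loading} is calibrated to cancel: multiplying this uniform bound by $\eta/(C_2^2 d^2 2^{n_1+1})$ with $\eta = \varepsilon/4$ yields the contribution $\varepsilon/4$. Adding the two pieces via triangle inequality gives $3\varepsilon/4 + \varepsilon/4 = \varepsilon$, as required.

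The only subtle step is establishing the uniform payoff bound with the correct prefactor $C_2^2 d^2 2^{n_1+1}$, since the error in \eqref{eqn: step4 approximate loading} was specifically calibrated against this quantity; everything else is routine triangle-inequality bookkeeping and reuse of Corollary~\ref{corollary: step3-approximate payoff}.
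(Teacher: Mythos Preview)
Your proposal is correct and follows essentially the same approach as the paper: split via the auxiliary $\widetilde{u}_{n_1,m_1,n_2,m_2,t,x}$, invoke Corollary~\ref{corollary: step3-approximate payoff} with $\eps \leftarrow 3\eps/4$ for the first piece, and for the second piece bound $|\widetilde{h}_{n_2,m_2}(\bm{j})|\le C_2^2 d^2 2^{n_1+1}$ via Lemma~\ref{lemma: lin growth} plus \eqref{eqn: step3-approximate h}, then multiply by the loading error \eqref{eqn: step4 approximate loading} with $\eta=\eps/4$. The paper's proof is the same argument.
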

\subsection{Step 5: Rotation error bounds}
\begin{proposition}[Rotation error]\label{prop: rotation error} 
	Let $\eps \in(0,1)$, $d \in \N$, $r,T\in (0,\infty)$, and  $(t,\bm{x}) \in [0,T)\times \R_+^d$. Let $u(t,\bm{x})$ be the option price given by \eqref{eqn: option price formula}. Let $h:\R^d_+ \to \R$ be the continuous piecewise affine function given by \eqref{eqn: CPWA payoff}. Let Assumption \ref{assumption: cov matrix} and Assumption \ref{assumption: CPWA} hold with respective constants $C_1,C_2 \in [1,\infty)$, and let  $n_2:=1 + \lceil \log_2(C_2)\rceil$. For every $\eta \in (0,1)$ let $M_{d,\eta} \in [1,\infty)$ be given by \eqref{eqn: def M},
	for every $n_1, m_1, m_2 \in \N$ 
    let $\{\widetilde{p}_{\bm{j},\eta}: \bm{j} \in \mathbb{K}_{n_1,m_1,+}^d\} \subset [0,1]$  satisfy
    \begin{equation}
        \sum_{\bm{j} \in \mathbb{K}_{n_1,m_1,+}^d} \widetilde{p}_{\bm{j},\eta} = 1
    \end{equation}
    and 
    \begin{equation}\label{eqn: step5 approximate loading}
        \sum_{\bm{j} \in \mathbb{K}_{n_1,m_1,+}^d} \vert\widetilde{p}_{\bm{j},\eta} - \gamma^{-1}p_{{\bm{j}},m_1} \vert \leq \frac{\eta}{C_2^2d^22^{n_1+1}}, 
    \end{equation}
    where for all $\bm{j} = (j_1,\ldots,j_d) \in \mathbb{K}_{n_1,m_1,+}^d$,
    \begin{equation}
        p_{{\bm{j}},m_1} := \int_{Q_{\bm{j},m_1}} p(\bm{y},T;\bm{x},t)\,d\bm{y}, \quad Q_{\bm{j},m_1} := [j_1,j_1+2^{-m_1}) \times \cdots \times [j_d,j_d+2^{-m_1}),
    \end{equation}
    and
    \begin{equation}
        \gamma := \sum_{\bm{j} \in \mathbb{K}_{n_1,m_1,+}^d}  p_{{\bm{j}},m_1} \in (0,1),
    \end{equation}
    let $\mathfrak{s}_{d,\eta} \in (0,\infty)$ be defined by 
	\begin{equation}
	\mathfrak{s}_{d,\eta} := \sqrt{\frac{\eta}{(C_2^2d^22^{n_1+1})^3}} \label{eqn: step5-s},
	\end{equation} 
	let $\tilde{h}_{n_2,m_2}:\R^d \to \R$ be given by \eqref{eqn: approx payoff2}, and let $a_{n_1,n_2,m_1,m_2,\eta} \in [0,1]$ be the amplitude given by 
	\begin{equation}\label{eqn: step5-a}
		a_{n_1,n_2,m_1,m_2,\eta} = \sum_{\bm{j} \in \mathbb{K}_{n_1,m_1,+}^d}\widetilde{p}_{\bm{j},\eta} \sin^2\left(\tfrac{\mathfrak{s}_{d,\eta} \widetilde{h}_{n_2,m_2}(\bm{j})}{2} + \frac{\pi}{4}\right).
	\end{equation}
Moreover,  
	for every $n_1, m_1, m_2 \in \N$ satisfying 
\begin{align}
	n_1&\geq 1 + \log_2(M_{d,\eps/5}), \label{eqn: step5-n_1}\\
	m_1 &\geq \log_2(C_2^2 d^{2}(\eps/5)^{-1}), \label{eqn: step5-m_1}\\
	m_2&\geq  1 + \log_2(M_{d,\eps/5}) 
	+ \log_2(C_2 d^2(\eps/5)^{-1}) \label{eqn: step5-m_2},
\end{align}
	let $\widetilde{u}_{n_1,m_1,n_2,m_2,p,a,t,\bm{x}} \in \R$ be the truncated quadrature solution with approximated payoff and loaded distribution with rotation given by 
	\begin{equation}\label{eqn: loading and rotation solution}
		\widetilde{u}_{n_1,m_1,n_2,m_2,p,a,t,\bm{x}} := \mathfrak{s}^{-1} \gamma e^{-r(T-t)} (2a -1),
	\end{equation}
where here 
\begin{equation}
	\mathfrak{s}:=\mathfrak{s}_{d,\eps/5}, \qquad \text{and}\qquad a=:a_{n_1,n_2,m_1,m_2,\eps/5}.
\end{equation}	
	Then, the following holds:
	\begin{enumerate}[(i)]
		\item \label{step5-item1}
		\begin{equation} \label{step5-item1-eq1}
			\widetilde{u}_{n_1,m_1,n_2,m_2,p,a,t,\bm{x}} = \mathfrak{s}^{-1} \gamma e^{-r(T-t)}\sum_{\bm{j} \in \mathbb{K}_{n_1,m_1,+}^d} \widetilde{p}_{\bm{j},\eps/5}\sin(\mathfrak{s}\widetilde{h}_{n_2,m_2}(\bm{j})).
		\end{equation}
		\item  \label{step5-item2}
		\begin{equation}
			\vert u(t,\bm{x}) - \widetilde{u}_{n_1,m_1,n_2,m_2,p,a,t,\bm{x}} \vert \leq \eps.
		\end{equation}
	\end{enumerate}
\begin{proof}
First, recall the trigonometric identity that for all $x\in \R$
\begin{equation}
    \sin^2(\frac{x}{2}+\frac{\pi}{4}) = 1 - \cos^2(\frac{x}{2}+\frac{\pi}{4}) = 1 - \frac{1}{2}(1+\cos(x+\pi/2)) = \frac{1}{2} + \frac{1}{2}\sin(x).
\end{equation}
This, \eqref{eqn: step5-a},\eqref{eqn: loading and rotation solution}, 
and the fact that $\sum_{\bm{j} \in \mathbb{K}_{n_1,m_1,+}^d} \widetilde{p}_{\bm{j},\eps/5} = 1$ imply that
\begin{equation}
	\begin{split}
	\widetilde{u}_{n_1,m_1,n_2,m_2,p,a,t,\bm{x}} 
	&=	 \mathfrak{s}^{-1} \gamma e^{-r(T-t)}\sum_{\bm{j} \in \mathbb{K}_{n_1,m_1,+}^d} \widetilde{p}_{\bm{j},\eps/5}\Big[2\sin^2\Big(\tfrac{\mathfrak{s} \widetilde{h}_{n_2,m_2}(\bm{j})}{2} + \frac{\pi}{4}\Big)-1\Big]
	\\
	&= \mathfrak{s}^{-1} \gamma e^{-r(T-t)}\sum_{\bm{j} \in \mathbb{K}_{n_1,m_1,+}^d} \widetilde{p}_{\bm{j},\eps/5}\sin(\mathfrak{s}\widetilde{h}_{n_2,m_2}(\bm{j})),
	\end{split}
\end{equation}
which proves Item~\ref{step5-item1}.
 Next, for the proof of Item \ref{step5-item2}, note that by Taylor expansion, one has for any $x \in [-1,1]$ the estimate
\begin{equation}\label{eqn:sinx-x}
    \vert \sin(x) -x \vert \leq \left(\tfrac{\vert x\vert ^3}{3!} + \tfrac{\vert x\vert ^5}{5!} +\tfrac{\vert x\vert ^7}{7!} + \cdots \right) 
    \leq 
    \vert x\vert^3 \big( e -[1+\tfrac{1}{1!}+\tfrac{1}{2!}] \big)
    \leq
    \vert x\vert^3.
\end{equation}
Moreover, since $\vert \widetilde{h}_{n_2,m_2}(\bm{j}) \vert \leq C_2^2d^2 2^{n_1+1}$ (c.f. \eqref{eqn: bound on tilde h}), by \eqref{eqn: step5-s}, it holds for all $\bm{j} \in \mathbb{K}_{n_1,m_1,+}^d$ that
\begin{equation}
    \vert \mathfrak{s}\widetilde{h}_{n_2,m_2}(\bm{j}) \vert \leq \sqrt{\eps/5}\leq 1.
\end{equation}
This, \eqref{eqn:sinx-x}, \eqref{eqn: step5-s}, and  the fact that $\vert \widetilde{h}_{n_2,m_2}(\bm{j}) \vert \leq C_2^2d^2 2^{n_1+1}$ hence ensure for all $\bm{j} \in \mathbb{K}_{n_1,m_1,+}^d$ that
\begin{equation}\label{step5-item1-eq1-est}
    \mathfrak{s}^{-1}\vert \sin(\mathfrak{s}\widetilde{h}_{n_2,m_2}(\bm{j})) - \mathfrak{s}\widetilde{h}_{n_2,m_2}(\bm{j}) \vert \leq \mathfrak{s}^{-1} \vert \mathfrak{s} \widetilde{h}_{n_2,m_2}(\bm{j}) \vert^3 = \mathfrak{s}^2\vert \widetilde{h}_{n_2,m_2}(\bm{j})\vert^3 \leq \eps/5.
\end{equation}
Let here $\widetilde{u}_{n_1,m_1,n_2,m_2,p,t,\bm{x}} \in \R$ be the truncated quadrature solution with approximated payoff and loaded distribution given by 
\begin{equation}
\widetilde{u}_{n_1,m_1,n_2,m_2,p,t,\bm{x}} := \gamma e^{-r(T-t)}\sum_{\bm{j} \in \mathbb{K}_{n_1,m_1,+}^d} \widetilde{p}_{\bm{j},\eps/5} \widetilde{h}_{n_2,m_2}(\bm{j}).
\end{equation}
This, \eqref{step5-item1-eq1-est}, \eqref{step5-item1-eq1}, and the fact that 
\begin{equation}
0<\gamma e^{-r(T-t)}\sum_{\bm{j} \in \mathbb{K}_{n_1,m_1,+}^d} \widetilde{p}_{\bm{j},\eps/5} \leq 1
\end{equation}
imply that 
\begin{equation}
    \vert \widetilde{u}_{n_1,m_1,n_2,m_2,p,t,\bm{x}} - \widetilde{u}_{n_1,m_1,n_2,m_2,p,a,t,\bm{x}} \vert \leq \eps/5.
\end{equation}
Furthermore, by Proposition \eqref{prop: loading errors} (with $\eps \leftarrow 4\eps/5$ in the notation of Proposition \eqref{prop: loading errors}), it holds that
\begin{equation}
    \vert u(t,\bm{x}) - \widetilde{u}_{n_1,m_1,n_2,m_2,p,t,\bm{x}}\vert \leq 4\eps/5.
\end{equation}
Hence, the conclusion follows from the triangle inequality. 
\end{proof}
\end{proposition}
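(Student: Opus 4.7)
The plan is to split the claim into its two sub-items. For Item~\ref{step5-item1}, the identity to exploit is the half-angle formula
$$\sin^2\!\big(\tfrac{x}{2}+\tfrac{\pi}{4}\big) = \tfrac{1}{2}(1+\sin(x)), \qquad x \in \mathbb{R}.$$
Substituting this into the definition \eqref{eqn: step5-a} of $a \equiv a_{n_1,n_2,m_1,m_2,\eps/5}$, and using that $\sum_{\bm{j}} \widetilde{p}_{\bm{j},\eps/5} = 1$, a direct computation yields $2a - 1 = \sum_{\bm{j}} \widetilde{p}_{\bm{j},\eps/5}\sin(\mathfrak{s}\widetilde{h}_{n_2,m_2}(\bm{j}))$. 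Multiplying by $\mathfrak{s}^{-1}\gamma e^{-r(T-t)}$ gives \eqref{step5-item1-eq1}.

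For Item~\ref{step5-item2}, I would insert the intermediate quantity
$$\widetilde{u}_{n_1,m_1,n_2,m_2,p,t,x} := \gamma e^{-r(T-t)} \sum_{\bm{j} \in \mathbb{K}_{n_1,m_1,+}^d} \widetilde{p}_{\bm{j},\eps/5}\,\widetilde{h}_{n_2,m_2}(\bm{j})$$
from Proposition~\ref{prop: loading errors}, and apply the triangle inequality:
$$\vert u(t,x) - \widetilde{u}_{n_1,m_1,n_2,m_2,p,a,t,x}\vert \leq \vert u(t,x) - \widetilde{u}_{n_1,m_1,n_2,m_2,p,t,x}\vert + \vert \widetilde{u}_{n_1,m_1,n_2,m_2,p,t,x} - \widetilde{u}_{n_1,m_1,n_2,m_2,p,a,t,x}\vert.$$
The first term is at most $4\eps/5$ by Proposition~\ref{prop: loading errors} applied with accuracy $4\eps/5$ (since the hypotheses \eqref{eqn: step5-n_1}--\eqref{eqn: step5-m_2} are exactly what is required there), so the entire task reduces to bounding the second (rotation) term by $\eps/5$.

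The key quantitative step is the classical Taylor bound $\vert\sin(x) - x\vert \leq \vert x\vert^3$ valid for $\vert x\vert \leq 1$. Using the uniform bound $\vert\widetilde{h}_{n_2,m_2}(\bm{j})\vert \leq C_2^2 d^2 2^{n_1+1}$ established in \eqref{eqn: bound on tilde h}, together with the definition \eqref{eqn: step5-s} of $\mathfrak{s}$, I verify that $\vert \mathfrak{s}\,\widetilde{h}_{n_2,m_2}(\bm{j})\vert \leq \sqrt{\eps/5}\leq 1$, so the Taylor bound applies termwise. It then gives
$$\mathfrak{s}^{-1}\big\vert \sin(\mathfrak{s}\widetilde{h}_{n_2,m_2}(\bm{j})) - \mathfrak{s}\widetilde{h}_{n_2,m_2}(\bm{j})\big\vert \leq \mathfrak{s}^2\vert\widetilde{h}_{n_2,m_2}(\bm{j})\vert^3 \leq \mathfrak{s}^2 (C_2^2 d^2 2^{n_1+1})^3 = \eps/5,$$
where the last equality is by construction of $\mathfrak{s}$. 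Combining this pointwise bound with $0 < \gamma e^{-r(T-t)}\sum_{\bm{j}}\widetilde{p}_{\bm{j},\eps/5} \leq 1$ bounds the second term by $\eps/5$, closing the argument.

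The only mildly delicate point is choosing $\mathfrak{s}$ small enough to make the Taylor remainder summable while still keeping $\mathfrak{s}^{-1}$ from blowing up elsewhere in the algorithm; this is precisely why $\mathfrak{s}$ is chosen to scale as the square root of $\eps$ divided by the cube of the uniform payoff bound. Everything else is routine triangle-inequality bookkeeping on top of the already-established truncation, quadrature, payoff-approximation, and distribution-loading estimates.
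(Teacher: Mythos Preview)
Your proposal is correct and follows essentially the same approach as the paper: the same half-angle identity for Item~\ref{step5-item1}, the same intermediate quantity $\widetilde{u}_{n_1,m_1,n_2,m_2,p,t,x}$, the same Taylor bound $|\sin(x)-x|\leq |x|^3$ combined with $|\widetilde{h}_{n_2,m_2}(\bm{j})|\leq C_2^2 d^2 2^{n_1+1}$ to control the rotation error by $\eps/5$, and the same invocation of Proposition~\ref{prop: loading errors} with $\eps\leftarrow 4\eps/5$ for the remaining piece. The only difference is the order in which you present the two terms of the triangle inequality, which is immaterial.
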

	
\subsection{Step 6: Quantum amplitude estimation error bounds}
\begin{proposition}[Combined errors]\label{proposition: final error estimate} Let $\eps \in(0,1)$, $d \in \N$, $r,T\in (0,\infty)$, and $(t,\bm{x}) \in [0,T)\times \R_+^d$. Let $u(t,\bm{x})$ be the option price given by \eqref{eqn: option price formula}. Let $h:\R^d_+ \to \R$ be the continuous piecewise affine function given by \eqref{eqn: CPWA payoff}. Let Assumption \ref{assumption: cov matrix} and Assumption \ref{assumption: CPWA} hold with respective constants $C_1,C_2 \in [1,\infty)$, 
	and let 
    \begin{equation}
        n_2:=1 + \lceil \log_2(C_2)\rceil
    \end{equation}
For every $\eta \in (0,1)$, let $M_{d,\eta} \in [1,\infty)$ be given by \eqref{eqn: def M}. Let $n_{1,d,\eps},m_{1,d,\eps},m_{2,d,\eps} \in (0,\infty)$ be defined by 
    \begin{align}
        n_{1,d,\eps} &:= 1 + \log_2(M_{d,\eps/6}), \label{eqn: step6-n_1}\\
        m_{1,d,\eps} &:=   \log_2(C_2^2 d^{2}(\eps/6)^{-1}), \label{eqn: step6-m_1}\\
        m_{2,d,\eps} &:= 1 + \log_2(M_{d,\eps/6}) 
         + \log_2(C_2 d^2(\eps/6)^{-1}) \label{eqn: step6-m_2}.
    \end{align}
Moreover, for every $n_1,m_1,m_2 \in \N$ satisfying $n_1 \geq n_{1,d,\eps}$,  $m_1 \geq m_{1,d,\eps}$, and $m_2 \geq m_{2,d,\eps}$, let $\{\widetilde{p}_{\bm{j},\eps/6}: \bm{j} \in \mathbb{K}_{n_1,m_1,+}^d\} \subset [0,1]$  satisfy
\begin{equation}
    \sum_{\bm{j} \in \mathbb{K}_{n_1,m_1,+}^d} \widetilde{p}_{\bm{j},\eps/6} = 1
\end{equation}
and 
\begin{equation}\label{eqn: step6 approximate loading}
    \sum_{\bm{j} \in \mathbb{K}_{n_1,m_1,+}^d} \vert\widetilde{p}_{\bm{j},\eps/6} - \gamma^{-1}p_{{\bm{j}},m_1} \vert \leq \frac{\eps}{6C_2^2d^22^{n_1+1}}, 
\end{equation}
where for all $\bm{j} = (j_1,\ldots,j_d) \in \mathbb{K}_{n_1,m_1,+}^d$,
\begin{equation}
    p_{{\bm{j}},m_1} := \int_{Q_{\bm{j},m_1}} p(\bm{y},T;\bm{x},t)\,d\bm{y}, \quad Q_{\bm{j},m_1} := [j_1,j_1+2^{-m_1}) \times \cdots \times [j_d,j_d+2^{-m_1}),
\end{equation}
and
\begin{equation}
    \gamma := \sum_{\bm{j} \in \mathbb{K}_{n_1,m_1,+}^d}  p_{{\bm{j}},m_1} \in (0,1), \label{eqn: step6-gamma}
\end{equation}
let $\mathfrak{s}\equiv \mathfrak{s}_{d,\eps/6} \in (0,\infty)$ be defined by 
\begin{equation}
    \mathfrak{s} \equiv \mathfrak{s}_{d,\eps/6} := \sqrt{\frac{\eps/6}{(C_2^2d^22^{n_1+1})^3}} \label{eqn: step6-s},
\end{equation}
   let $\tilde{h}_{n_2,m_2}:\R^d \to \R$ be given by \eqref{eqn: approx payoff2}, let $ a\equiv a_{n_1,n_2,m_1,m_2,\varepsilon/6} \in [0,1]$ be the amplitude given by 
\begin{equation}\label{eqn: error estimate amplititude}
    a \equiv a_{n_1,n_2,m_1,m_2,\mathfrak{s},\varepsilon/6} := \sum_{\bm{j} \in \mathbb{K}_{n_1,m_1,+}^d} \widetilde{p}_{\bm{j},\eps/6} \sin^2\left(\frac{\mathfrak{s}\widetilde h_{n_2,m_2}(\bm{j})}{2} + \frac{\pi}{4}\right),
\end{equation}
let $\hat{a}\in [0,1]$ satisfy
    \begin{equation}\label{eqn:step-6-grover-error}
        \vert a - \hat{a} \vert \leq \frac{\eps \mathfrak{s}}{12},
    \end{equation}
    and let $\widetilde{U}_{t,\bm{x}}$ be the approximated solution given by
    \begin{equation}
        \widetilde{U}_{t,\bm{x}} := \mathfrak{s}^{-1} \gamma e^{-r(T-t)}(2\hat{a} - 1).
    \end{equation}
    Then,
    \begin{equation}\label{eqn:step-6-approximation}
        \vert u(t,\bm{x}) - \widetilde{U}_{t,\bm{x}} \vert \leq \eps.
    \end{equation}
\begin{proof}
    Let here $\widetilde{u}_{n_1,m_1,n_2,m_2,p,a,t,\bm{x}} \in \R$ be the truncated quadrature solution with approximated payoff and loaded distribution with rotation given by 
        \begin{equation}   
        	  \widetilde{u}_{n_1,m_1,n_2,m_2,p,a,t,\bm{x}} := \mathfrak{s}^{-1} \gamma e^{-r(T-t)} (2a -1).
        	  \end{equation}
    By Proposition \ref{prop: rotation error} item (ii) (with $\eps \leftarrow 5\eps/6$ in the notation of Proposition \ref{prop: rotation error}), it holds that
    \begin{equation}\label{eqn:step-6-triangle1}
        \vert u(t,\bm{x}) - \widetilde{u}_{n_1,m_1,n_2,m_2,p,a,t,\bm{x}} \vert \leq 5\eps/6.
    \end{equation}
    Using  $0<\gamma e^{-r(T-t)} \leq 1$ and \eqref{eqn:step-6-grover-error}, it follows that 
    \begin{equation}
        \vert \widetilde{u}_{n_1,m_1,n_2,m_2,p,a,t,\bm{x}} - \widetilde{U}_{t,\bm{x}} \vert \leq 2\mathfrak{s}^{-1} \vert a - \hat{a} \vert \leq\eps/6.
    \end{equation}
Hence, we conclude \eqref{eqn:step-6-approximation}.
  \end{proof}
\end{proposition}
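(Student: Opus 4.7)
The plan is to introduce the intermediate quantity
\begin{equation*}
    \widetilde{u}_{n_1,m_1,n_2,m_2,p,a,t,x} := \mathfrak{s}^{-1} \gamma e^{-r(T-t)}(2a - 1),
\end{equation*}
which represents the value the algorithm would produce if the amplitude-estimation subroutine returned the exact amplitude $a$ rather than the approximation $\hat{a}$. Applying the triangle inequality gives
\begin{equation*}
    \lvert u(t,x) - \widetilde{U}_{t,x} \rvert
    \leq
    \lvert u(t,x) - \widetilde{u}_{n_1,m_1,n_2,m_2,p,a,t,x} \rvert
    +
    \lvert \widetilde{u}_{n_1,m_1,n_2,m_2,p,a,t,x} - \widetilde{U}_{t,x} \rvert,
\end{equation*}
so it suffices to bound the two summands by $5\eps/6$ and $\eps/6$, respectively.

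For the first summand, the strategy is to invoke Proposition \ref{prop: rotation error} with the substitution $\eps \leftarrow 5\eps/6$; this also forces the internal parameter $\eta$ in that proposition to be $(5\eps/6)/5 = \eps/6$. Under this substitution, the hypotheses of Proposition \ref{prop: rotation error} reduce to exactly the bounds $n_1 \geq 1 + \log_2(M_{d,\eps/6})$, $m_1 \geq \log_2(C_2^2 d^2 (\eps/6)^{-1})$, and $m_2 \geq 1 + \log_2(M_{d,\eps/6}) + \log_2(C_2 d^2 (\eps/6)^{-1})$, together with the loading-error tolerance $\sum_{\bm{j}} \lvert \widetilde{p}_{\bm{j},\eps/6} - \gamma^{-1} p_{\bm{j},m_1} \rvert \leq \frac{\eps/6}{C_2^2 d^2 2^{n_1+1}}$, all of which coincide with the hypotheses assumed here. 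Moreover the definitions of $\mathfrak{s}$ and $a$ in Proposition \ref{proposition: final error estimate} match those in Proposition \ref{prop: rotation error} under the same substitution, so part (ii) of Proposition \ref{prop: rotation error} yields $\lvert u(t,x) - \widetilde{u}_{n_1,m_1,n_2,m_2,p,a,t,x} \rvert \leq 5\eps/6$ immediately.

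For the second summand, a direct algebraic computation gives
\begin{equation*}
    \widetilde{u}_{n_1,m_1,n_2,m_2,p,a,t,x} - \widetilde{U}_{t,x}
    =
    \mathfrak{s}^{-1} \gamma e^{-r(T-t)} \bigl[(2a-1) - (2\hat{a}-1)\bigr]
    =
    2 \mathfrak{s}^{-1} \gamma e^{-r(T-t)} (a - \hat{a}).
\end{equation*}
Using $0 < \gamma \leq 1$, $0 < e^{-r(T-t)} \leq 1$ from \eqref{eqn: step6-gamma}, and the assumed amplitude-estimation tolerance $\lvert a - \hat{a} \rvert \leq \eps \mathfrak{s}/12$ from \eqref{eqn:step-6-grover-error}, this bound becomes $2\mathfrak{s}^{-1} \cdot (\eps\mathfrak{s}/12) = \eps/6$. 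Combining the two summand bounds via the triangle inequality then concludes the proof.

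The calculation here is short; there is no substantial obstacle. The only conceptual point worth flagging is the self-cancelling role of $\mathfrak{s}$ in the second step: although $\mathfrak{s} \to 0$ as $\eps$ shrinks (so that $\mathfrak{s}^{-1}$ diverges), the specified estimation tolerance $\eps \mathfrak{s}/12$ scales with $\mathfrak{s}$ in precisely the way needed to absorb this blow-up. This is exactly what motivates the form of the input hypothesis on $\hat{a}$, and it is also the source of the eventual $\eps^{-3}$ query-complexity scaling in the downstream main theorem, since the QAE algorithm must be run at accuracy of order $\eps \mathfrak{s} \sim \eps^{3}$ rather than $\eps$.
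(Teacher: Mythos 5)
Your proposal is correct and follows essentially the same route as the paper's proof: the same intermediate quantity $\widetilde{u}_{n_1,m_1,n_2,m_2,p,a,t,x} = \mathfrak{s}^{-1}\gamma e^{-r(T-t)}(2a-1)$, the same invocation of Proposition \ref{prop: rotation error} with $\eps \leftarrow 5\eps/6$ (so that the internal $\eta$ becomes $\eps/6$, matching all the stated hypotheses and the definitions of $\mathfrak{s}$ and $a$), and the same $\eps/6$ bound for the amplitude-estimation term via $0<\gamma e^{-r(T-t)}\leq 1$ and $\vert a-\hat{a}\vert\leq\eps\mathfrak{s}/12$. The only difference is that you spell out the parameter correspondences and the self-cancelling role of $\mathfrak{s}$ more explicitly than the paper does, which is fine.
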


\section{
	Proof of Theorem \ref{main theorem}}\label{sec:AlgoandProof}
In this section, we 
provide the proof of Theorem~\ref{main theorem}.

\begin{proof}[Proof of Theorem \ref{main theorem}.]
    First, let $n_1,n_2,m_1,m_2, N, \gamma, \mathfrak{s}$ be defined as 
    in line~2--3 of Algorithm~\ref{quantum algorithm}, and set $n := n_1 + n_2$ and $m := m_1 + m_2$. Let $\bm{h}: \mathbb{F}_{n_1,m_1}^d \to \mathbb{F}_{n+d+K+1+m}$ be defined as in \eqref{eqn: encoded payoff}, and let $p:= d(2n_2+2m_2+3)$ and $N,\{q_k\}_{k=1}^K \in \N$ be given by \eqref{eqn: def N, q_k} from Proposition \ref{prop: loading payoff circuit}. By Assumption \ref{assumption: distribution loading}, it holds that
   \begin{equation}
       \mathcal{P}\ket{0}_{d(n_1+m_1)} = \sum_{\bm{i} \in \mathbb{F}_{n_1,m_1,+}^d} \sqrt{\widetilde{p}_{i}}\ket{i_1}_{n_1+m_1}\cdots\ket{i_d}_{n_1+m_1}.
   \end{equation}
   Hence, together with Proposition \ref{prop: loading payoff circuit}, the circuit $\mathcal{A}:= \mathcal{R}_h (\mathcal{P} \otimes I_2^{\otimes(N-d(n_1+m_1))})$ satisfies 
   \begin{equation}
       \begin{split}
       \mathcal{A}\ket{0}_N &= \sum_{\bm{i} \in \mathbb{F}_{n_1,m_1,+}^d} \sqrt{\widetilde{p}_{i}}\ket{i_1}_{n_1+m_1}\cdots\ket{i_d}_{n_1+m_1}\ket{\anc}_{q_1+\cdots+q_K + 2K(n+m+d+5)-1}[\cos(\bar{f}(\bm{h}(\bm{i}))/2)\ket{0}+\sin(\bar{f}(\bm{h}(\bm{i}))/2)\ket{1}]\\
       &= \sum_{\bm{i} \in \mathbb{F}_{n_1,m_1,+}^d} \sqrt{\widetilde{p}_{i}}\cos(\bar{f}(\bm{h}(\bm{i}))/2)\ket{i_1}_{n_1+m_1}\cdots\ket{i_d}_{n_1+m_1}\ket{\anc}_{q_1+\cdots+q_K + 2K(n+m+d+5)-1}\ket{0}\\
       &\qquad +\sum_{\bm{i} \in \mathbb{F}_{n_1,m_1,+}^d} \sqrt{\widetilde{p}_{i}}\sin(\bar{f}(\bm{h}(\bm{i}))/2)\ket{i_1}_{n_1+m_1}\cdots\ket{i_d}_{n_1+m_1}\ket{\anc}_{q_1+\cdots+q_K + 2K(n+m+d+5)-1}\ket{1}\\
       &=: \sqrt{1-a}\ket{\psi_0}_{N-1}\ket{0} + \sqrt{a}\ket{\psi_1}_{N-1}\ket{1},
       \end{split}
   \end{equation} 
    where as in Proposition \ref{prop: loading payoff circuit}, $\bar{f}:\mathbb{F}_{n+d+K+1,m}\to\R$ is defined by $\bar{f}(i) = f \circ \mathrm{D}_{n+d+K+1,m}(i)$ and $f(x) = \mathfrak{s}x + \frac{\pi}{2}$, and $a \in [0,1]$ is given by 
    \begin{equation}
        a := \sum_{\bm{i} \in \mathbb{F}_{n_1,m_1,+}^d} {\widetilde{p}_{i}}\sin^2(\bar{f}(\bm{h}(\bm{i}))/2).
    \end{equation}
    By Proposition \ref{prop: loading payoff circuit} and Proposition \ref{prop: quad-payoff}, we note that the function  $\widetilde{h}_{n_2,m_2}$ given in \eqref{eqn: approx payoff2} coincides with the function $\mathrm{D}_{n+d+K+1,m}(\bm{h})$ when restricted to the domain $\mathbb{F}_{n_1,m_1}^d$. Using this and Proposition \ref{proposition: final error estimate} (with $\eps \leftarrow \tfrac{\eps}{6C_2^2d^22^{n_1+1}}$ in the notation of Assumption \ref{assumption: distribution loading}), we have 
    \begin{equation}
        a = \sum_{\bm{i} \in \mathbb{F}_{n_1,m_1,+}^d} {\widetilde{p}_{i}}\sin^2\left(\frac{\mathfrak{s} \mathrm{D}_{n+d+K+1,m}(\bm{h}(\bm{i}))}{2} + \frac{\pi}{4}\right) = a_{n_1,n_2,m_1,m_2,\mathfrak{s},\eps/6},
    \end{equation}
    where $a_{n_1,n_2,m_1,m_2,\mathfrak{s},\eps/6}$ is 
     defined in \eqref{eqn: error estimate amplititude}. By Proposition \ref{prop: m-IQAE}, the output $\widehat{a}$ from line~7 of Algorithm~\ref{quantum algorithm} satisfies the bound 
    \begin{equation}
        \vert a - \widehat{a} \vert \leq \eps\mathfrak{s}/12, \quad \text{with probability at least $1-\alpha$}.
    \end{equation}
    Thus, the estimate \eqref{eqn: theorem estimate} follows from Proposition \ref{proposition: final error estimate}. Next, we count the total number of qubits and elementary gates used to construct circuit $\mathcal{A}$.  Let $M_{d,\eps/6}$ be the constant given by \eqref{eqn: def M}, and note that 
    \begin{equation}
        M_{d,\eps/6} = 6\mathfrak{c} d^{\tfrac{5}{2}}\eps^{-1}.
    \end{equation}
    Moreover, recall that for any $v \in \R$, one has $\lceil v \rceil \leq v+1$. Hence, by \eqref{eqn: step6-n_1}-\eqref{eqn: step6-m_2} and the bound on $M_{d,\eps/6}$, we have the following bounds
    \begin{align}
        &n_1 = \lceil n_{1,d,\eps} \rceil \leq 2 + \log_2(M_{d,\eps/6}) = 2 + \log_2( 2 \cdot 3 \mathfrak{c}d^{\tfrac{5}{2}}\eps^{-1}),\label{eqn: bound on n_1}\\
        &    n_2 \leq 2 + \log_2(C_2)\leq 2 + \log_2(\mathfrak{c}^{\tfrac{1}{2}}),\label{eqn: bound on n_2}\\
        &        m_1 = \lceil m_{1,d,\eps}\rceil \leq 1 + \log_2(C_2^2d^2(\eps/6)^{-1})\leq 1 + \log_2(2 \cdot 3 \mathfrak{c}d^2\eps^{-1}),\label{eqn: bound on m_1}\\
        &
        m_2 = \lceil m_{2,d,\eps}\rceil \leq 2 + \log_2(M_{d,\eps/6}) + \log_2(C_2 d^2(\eps/6)^{-1}) \leq 2 + \log_2(2^2 3^2 \mathfrak{c}^{\tfrac{3}{2}}d^{\tfrac{9}{2}}\eps^{-2})\label{eqn: bound on m_2}.
    \end{align}
    Furthermore,
    \begin{equation}\label{eqn: bound on n+m+1}
        \begin{split}
            n+m+1 = n_1 + n_2 + m_1 + m_2 + 1 \leq 8 + \log_2(2^4 3^4\mathfrak{c}^4 d^{9}\eps^{-4})= \log_2(2^7 3^4\mathfrak{c}^4 d^{9}\eps^{-4}).
        \end{split}
    \end{equation}
   By Proposition \ref{prop: loading payoff circuit} and the fact that $\mathcal{A} = \mathcal{R}_h(\mathcal{P} \otimes I_2^{\otimes N - d(n_1+m_1)})$, the quantum circuit $\mathcal{A}$ uses $N$ qubits, where $N$ is given by \eqref{eqn: def N, q_k}. Using the upper bound \eqref{eqn: bound on N} for $N$, Assumption \ref{assumption: CPWA}, and \eqref{eqn: bound on n+m+1}, we thus have the following bound on the number of qubits used for the circuit $\mathcal{A}$
    \begin{equation}\label{eqn: bound on N qubits}
        \begin{split}
            N &\leq 24 K \cdot \max_{k=1,\ldots,K}I_k \cdot d (n+m+1)\\
            & \leq 24 C_2 d \cdot d (n+m+1)\\
            &\leq 24 C_2 d^2 \log_2(2^7 3^4\mathfrak{c}^4 d^{9}\eps^{-4})\\
            &\leq 24 C_2 d^2(\log_2(2^7 3^4) + 4 \log_2(\mathfrak{c}) + 9\log_2(d\eps^{-1}))\\
            &\leq 24 C_2 d^2(14 + 4 \log_2(\mathfrak{c}) + 9\log_2(d\eps^{-1}))\\
            &\leq 24 C_2 d^2\cdot 27 \log_2(\mathfrak{c})(1+\log_2(d\eps^{-1}))\\
            &\leq 648 C_2 \log_2(\mathfrak{c})  d^2(1+\log_2(d\eps^{-1}))\\
            &=: \mathfrak{C}_1 d^2 (1+\log_2(d\eps^{-1})).
        \end{split}
    \end{equation}
    Next, we count the number of elementary gates used to construct the quantum circuit $\mathcal{A}$, which will be denoted by $N_\mathcal{A}$. By Assumption \ref{assumption: distribution loading}, the number of elementary gates used to construct $\mathcal{P}$ is at most \eqref{eqn: bound on P_d,eps} with ($n \leftarrow n_1$, $m \leftarrow m_1$, $\eps\leftarrow \tfrac{\eps}{6C_2^2d^22^{n_1+1}}$ in the notation of Assumption \ref{assumption: distribution loading}). Hence, using \eqref{eqn: bound on n_1} and \eqref{eqn: bound on m_1}, the number of elementary gates used to construct $\mathcal{P}$ is bounded by 
    \begin{equation}\label{eqn: bound on gates of P_d,eps}
        \begin{split}
        &C_3 (n_1+m_1)^{C_3}d^{C_3}(\log_2(6\eps^{-1}C_2^2d^22^{n_1+1}))^{C_3}\\
        &= C_3 (n_1+m_1)^{C_3}d^{C_3}(\log_2(6C_2^2d^2\eps^{-1}) +  n_1 + 1))^{C_3}\\
        &\leq C_3 (3 + \log_2(2^2 3^2\mathfrak{c}^2d^{\tfrac{9}{2}}\eps^{-2}))^{C_3} d^{C_3}(\log_2(2 \cdot 3\mathfrak{c}d^2\eps^{-1}) + 3 + \log_2(2 \cdot 3 \mathfrak{c}d^{\tfrac{5}{2}}\eps^{-1}))^{C_3}\\
        &= C_3 d^{C_3} (\log_2(2^5 3^2 \mathfrak{c}^2 d^{\tfrac{9}{2}} \eps^{-2}))^{2C_3}.
        \end{split}
    \end{equation}
    By Proposition \ref{prop: loading payoff circuit}, Assumption \eqref{assumption: CPWA}, and the bound \eqref{eqn: bound on n+m+1}, the number of elementary gates used to construct $\mathcal{R}_h$ is estimated by at most
    \begin{equation}\label{eqn: bound on gates of R_h}
        \begin{split}
            16186 K^3 \big(\max\{I_1,\ldots,I_k\}\big)^3d^3(n+m+1)^3\leq 16186 (C_2 d)^3 d^3\left(\log_2(2^7 3^4\mathfrak{c}^4 d^{9}\eps^{-4})\right)^3.
        \end{split}
    \end{equation} 
    Hence, summing up \eqref{eqn: bound on gates of P_d,eps} and \eqref{eqn: bound on gates of R_h}, the number of elementary gates used to construct quantum circuit $\mathcal{A}$ is at most
        \begin{equation}\label{eqn: bound on N_A}
             N_\mathcal{A} \leq C_3 d^{C_3} (\log_2(2^5 3^2 \mathfrak{c}^2 d^{\tfrac{9}{2}} \eps^{-2}))^{2C_3} + 16186 (C_2 d)^3 d^3\left(\log_2(2^7 3^4\mathfrak{c}^4 d^{9}\eps^{-4})\right)^3.
        \end{equation}
    Next, we count the number of elementary gates used in line~7 of Algorithm~\ref{quantum algorithm}, which by
    Remark~\ref{remark: circuit Q^kA}~Item~2.\ coincides with 
    the number of elementary gates used to construct the quantum circuit $\mathcal{Q}^{k_t}\mathcal{A}$ in the Modified IQAE algorithm. Note that this number is also the number of elementary gates used in Algorithm~\ref{quantum algorithm}. Using \eqref{eqn: step6-s} and \eqref{eqn: bound on n_1}, the number $\mathfrak{s}^{-1}$ is bounded by 
    \begin{equation}\label{eqn: bound on s^-1}
        \mathfrak{s}^{-1} = \left(6\eps^{-1}(C_2^2d^22^{n_1+1})^3\right)^{\tfrac{1}{2}} \leq \sqrt{6}\eps^{-\tfrac{1}{2}}C_2d \left(2^{3+\log_2(6\mathfrak{c}d^{\tfrac{5}{2}}\eps^{-1})}\right)^{\frac{3}{2}} = 2^{\tfrac{13}{2}} 3^{2} C_2\mathfrak{c}^{\tfrac{3}{2}} d^{\tfrac{19}{4}}\eps^{-2}.
    \end{equation}
    Thus, by using Proposition \ref{prop: m-IQAE} Item 3. (with $\mathcal{A} \leftarrow \mathcal{A}$, $\eps \leftarrow \eps\mathfrak{s}/12$, $n \leftarrow N-1$, and $N \leftarrow N_\mathcal{A}$ in the Notation of Proposition \ref{prop: m-IQAE}) together with \eqref{eqn: bound on N qubits} and \eqref{eqn: bound on N_A}, we conclude that the number of elementary gates used in Algorithm~\ref{quantum algorithm} is bounded by
            \begin{equation}
        \begin{split}
            &\frac{\pi}{4\tfrac{\eps\mathfrak{s}}{12}}(8N^2+23+N_\mathcal{A})\\
            &\leq  2^{\tfrac{13}{2}} 3^{3}\pi C_2\mathfrak{c}^{\tfrac{3}{2}} d^{\tfrac{19}{4}}\eps^{-3}\big[8\big( 24 C_2 d^2 \log_2(2^7 3^4\mathfrak{c}^4 d^{9}\eps^{-4}) \big)^2 + 23 \\
            &\qquad + C_3 d^{C_3} (\log_2(2^5 3^2 \mathfrak{c}^2 d^{\tfrac{9}{2}} \eps^{-2}))^{2C_3} + 16186 (C_2 d)^3 d^3\left(\log_2(2^7 3^4\mathfrak{c}^4 d^{9}\eps^{-4})\right)^3\big]\\
            &\leq 2^{\tfrac{13}{2}} 3^{3}\pi C_2^4 C_3 d^{\max\{10.75,4.75+C_3\}}\eps^{-3}\big[8 \cdot 24^2 (\log_2(2^7 3^4\mathfrak{c}^4 d^{9}\eps^{-4}))^2 + 23 \\
            &\qquad + (\log_2(2^5 3^2 \mathfrak{c}^2 d^{\tfrac{9}{2}} \eps^{-2}))^{2C_3} + 16186  \left(\log_2(2^7 3^4\mathfrak{c}^4 d^{9}\eps^{-4})\right)^3\big]\\
            &\leq 2^{\tfrac{13}{2}} 3^{3}\pi C_2^4 C_3 d^{\max\{10.75,4.75+C_3\}}\eps^{-3}\Big[8 \cdot 24^2 \big(\log_2(2^7 3^4) + 4\log_2(\mathfrak{c}) + 9\big )^2\big(1+\log_2(d\eps^{-1}) \big)^2 + 23 \\
            &\qquad + \big(\log_2(2^5 3^2) + 2 \log_2( \mathfrak{c}) + \tfrac{9}{2} \big)^{2C_3}\big(1+\log_2 (d  \eps^{-1}))^{2C_3} \\
            &\qquad + 16186  \big(\log_2(2^7 3^4)+ 4\log_2(\mathfrak{c}) +9\big)^3 \big(1+\log_2(d\eps^{-1})\big)^3 \Big]\\
            &\leq 2^{\tfrac{13}{2}} 3^{3}\pi \Big[8 \cdot 24^2 \cdot \big((14 + 4 + 9 )\log_2(\mathfrak{c}) \big)^2 + 23 + \big((8.2 + 2  + 4.5)\log_2( \mathfrak{c}))^{2C_3} \\
            &\qquad + 16186  \big((14+ 4 +9) \log_2(\mathfrak{c})\big)^3  \Big]C_2^4 C_3  d^{\max\{10.75,4.75+C_3\}}\eps^{-3} (1+\log_2(d\eps^{-1}))^{\max\{3,2C_3\}}\\
            &\leq 2^{\tfrac{13}{2}} 3^{3}\pi \Big[8\cdot 24^2 + 23 + 1 + 16186  \Big] \\
            &\qquad \cdot  C_2^4 C_3\big(27\log_2(\mathfrak{c})\big)^{\max\{3,2C_3\}}d^{\max\{10.75,4.75+C_3\}}\eps^{-3} (1+\log_2(d\eps^{-1}))^{\max\{3,2C_3\}}\\
            &\leq (1.6\times 10^8)C_2^4 C_3\big(27\log_2(\mathfrak{c})\big)^{\max\{3,2C_3\}}d^{\max\{10.75,4.75+C_3\}}\eps^{-3} (1+\log_2(d\eps^{-1}))^{\max\{3,2C_3\}}\\
            &=: \mathfrak{C}_2 d^{\max\{10.75,4.75+C_3\}}\eps^{-3} (1+\log_2(d\eps^{-1}))^{\max\{3,2C_3\}}.
        \end{split}
    \end{equation}

    Lastly, we count the number of applications on $\mathcal{A}$.  Using \eqref{eqn: bound on queries of A in mIQAE} (with $\eps \leftarrow \eps \mathfrak{s}/12$ and $\alpha \leftarrow \alpha$ in the notation Proposition \ref{prop: m-IQAE}) and bound for $\mathfrak{s}^{-1}$ (c.f. \eqref{eqn: bound on s^-1}), the number of applications on $\mathcal{A}$ is at most 
\begin{equation}
    \tfrac{62\cdot 12}{\eps\mathfrak{s}}\ln\left(\tfrac{21}{\alpha}\right) \leq 62 \cdot 12 \cdot 2^{\tfrac{13}{2}} 3^{2} C_2\mathfrak{c}^{\tfrac{3}{2}} d^{\tfrac{19}{4}}\eps^{-3}\ln\left(\tfrac{21}{\alpha}\right) \leq (6.1\times 10^5) C_2\mathfrak{c}^{\tfrac{3}{2}}d^{4.75}\eps^{-3}\ln(\tfrac{21}{\alpha}) =: \mathfrak{C}_3 d^{4.75}\eps^{-3}\ln(\tfrac{21}{\alpha}) .
\end{equation}
\end{proof}

\section{Conclusion}
In this paper we have developed with Algorithm~\ref{quantum algorithm} a quantum Monte Carlo algorithm to approximately solve multidimensional 
Black-Scholes PDEs. The contributions of this paper are the following. 

 First, our algorithm allows the payoff function to be of general form and is only required to be continuous piecewise affine. From a financial point of view, this is not very restrictive, as most European options are continuous piecewise affine, see also the various relevant examples provided in Example~\ref{example: call options}. This extends the existing quantum algorithms which typically require the continuous piecewise affine payoff function to be either one-dimensional or to be a basket option. 
 
 Moreover, we provided a mathematical rigorous error and complexity analysis of Algorithm~\ref{quantum algorithm}, which we see as our main contribution of the paper. This allows us to prove that the computational complexity of the algorithm only grows polynomially in the space dimension $d$ of the PDE and the prescribed reciprocal of the accuracy $\varepsilon$. In addition, we see that for continuous piecewise affine payoff functions which are uniformly bounded, the computational running time of Algorithm~\ref{quantum algorithm} scales $O(\eps^{-\nicefrac{3}{2}})$. Therefore, compared to classical (i.e.\ non quantum-based) Monte Carlo algorithms which scale $O(\eps^{-2})$, we indeed have proved that Algorithm~\ref{quantum algorithm} provides a speed-up.
 
 Furthermore, we have developed a package we named \texttt{qfinance} within the \texttt{Qiskit} framework which can be used to run Algorithm~\ref{quantum algorithm} on a computer  for the case $d=1,2$. The \texttt{OptionPricing} class within this package enables the user to input  all  parameters of the underlying stocks, to choose the class of continuous piecewise affine payoff functions within the ones presented in   Example~\ref{example: call options}, as well as to specify to error tolerance level. We numerically demonstrated the applicability of our algorithm in this low-dimensional setting. Moreover, we have discussed the scalability of our algorithm by explaining how one could extend our code for the general $d$-dimensional setting.
 %
 	We also highlighted that the limitation of the numerical simulation are not caused by our developed Algorithm~\ref{quantum algorithm}, but due to the limited quantum computing hardware currently available.
 
 We emphasize that the outline of Algorithm~\ref{quantum algorithm}, namely to approximate the solution of the Black-Scholes PDE via its Feynman-Kac representation by first uploading the transition probability of the underlying (log-normally distributed) SDE, followed by the uploading of the payoff function, and then applying a Quantum amplitude estimation algorithm to estimate the solution of the PDE is not new and has been already applied, e.g., in \cite{chakrabarti2021threshold,QC5_Patrick,QC4_optionpricing}.
 However, so far, no mathematical rigorous error and complexity analysis of such a quantum Monte Carlo algorithm to solve  Black-Scholes PDEs, or any quantum based algorithm to solve PDEs, has been provided in the literature, 
   which was the main goal of this paper.

\section*{Acknowledgment}
Financial support by the 
Nanyang Assistant Professorship Grant (NAP Grant) \textit{Machine Learning based Algorithms in Finance and Insurance} 
and the grant NRF2021-QEP2-02-P06 
 is
gratefully acknowledged.

\printbibliography
\end{document}